\def\bR{\mathbb{R}}
\def\bN{\mathbb{N}}
\def\bZ{\mathbb{Z}}
\def\cC{\mathcal{C}}
\def\cQ{\mathcal{Q}}
\def\cD{\mathcal{D}}
\def\cA{\mathcal{A}}
\def\cM{\mathcal{M}}
\def\cV{\mathcal{V}}
\def\cO{\mathcal{O}}
\def\cF{\mathcal{F}}
\def\cG{\mathcal{G}}
\def\cL{\mathcal{L}}
\def\cJ{\mathcal{J}}
\def\cN{\mathcal{N}}
\def\cE{\mathcal{E}}
\def\cK{\mathcal{K}}
\def\cH{\mathcal{H}}
\def\eps{\varepsilon}
\def\ph{\varphi}
\def\wt{\widetilde}
\def\indic{\hbox{\raise-2pt \hbox{\indbf 1}}}
\let\dpr=\partial
\def\*{{\hfill\break\null\hfill\break}}
\def\bmedia#1{{\bigl\langle#1\bigr\rangle}}
\def\tende#1{\,\vtop{\ialign{##\crcr\rightarrowfill\crcr
             \noalign{\kern-1pt\nointerlineskip}
             \hskip3.pt${\scriptstyle #1}$\hskip3.pt\crcr}}\,}
\def\otto{\,{\kern-1.truept\leftarrow\kern-5.truept\to\kern-1.truept}\,}
\def\tr{{\rm tr}}
\newtheorem{theorem}{Theorem}[section]  
\newtheorem{cor}[theorem]{Corollary}
\newtheorem{prop}[theorem]{Proposition}
\newtheorem{lemma}[theorem]{Lemma}
\numberwithin{equation}{section}
\def\tl#1{{\tilde{#1}}}
\def\be{\begin{equation}}
\def\ee{\end{equation}}
\newcommand{\hc}{\mbox{h.c.}}
\let\a=\alpha     \let\g=\gamma     \let\d=\delta     
        \let\k=\kappa     \let\l=\lambda
                  \let\p=\pi        \let\r=\rho
\let\s=\sigma \let\t=\tau     \let\f=\phi    \let\ph=\varphi   
\let\ps=\psi        
\let\G=\Gamma \let\D=\Delta       \let\L=\Lambda
\begin{document}

\title{Bogoliubov Theory in the Gross-Pitaevskii Limit} 

\author{Chiara Boccato$^1$, Christian Brennecke$^2$, Serena Cenatiempo$^3$, Benjamin Schlein$^2$ \\
\\
IST Austria, Am Campus 1\\
3400 Klosterneuburg,
Austria$^1$\\
\\
Institute of Mathematics, University of Zurich\\
Winterthurerstrasse 190, 8057 Zurich, 
Switzerland$^2$ \\
\\
Gran Sasso Science Institute, Viale Francesco Crispi 7 \\ 
67100 L'Aquila, Italy$^3$}

\maketitle

\begin{abstract}	
We consider Bose gases consisting of $N$ particles trapped in a box with volume one and interacting through a repulsive potential with scattering length of the order $N^{-1}$(Gross-Pitaevskii regime). We determine the ground state energy and the low-energy excitation spectrum, up to errors vanishing as $N \to \infty$. Our results confirm Bogoliubov's predictions.
\end{abstract}

\section{Introduction and main result}

We consider systems of $N$ bosons in the three dimensional box $\Lambda = [-1/2 ; 1/2]^{\times 3}$  with periodic boundary conditions. In the Gross-Pitaevskii regime, the Hamilton operator has the form 
\begin{equation}\label{eq:Ham0} H_N = \sum_{j=1}^N -\Delta_{x_j} +  \sum_{i<j}^N N^2 V (N (x_i -x_j)) \end{equation}
and acts on the Hilbert space $L^2_s (\Lambda^N)$, the subspace of $L^2 (\Lambda^N)$ consisting of functions that are symmetric with respect to permutations of the $N$ particles. We require $V \in L^3 (\bR^3)$ to be non-negative, radial, compactly supported and to have scattering length~$\frak{a}_0 $. 

Recall that the scattering length of the interaction potential is defined through the zero-energy scattering equation 
\begin{equation}\label{eq:0en} \left[ - \Delta + \frac{1}{2} V (x)  \right] f (x) = 0 \end{equation}
with the boundary condition $f (x) \to 1$, as $|x| \to \infty$. For $|x|$ large enough (outside the support of $V$), we have 
\[ f(x) = 1- \frac{\frak{a}_0 }{|x|} \]
for an appropriate constant $\frak{a}_0 $, which is known as the scattering length of $V$. Equivalently, $\frak{a}_0 $ can be recovered by 
\begin{equation}\label{eq:a0}  8 \pi \frak{a}_0  =  \int V(x) f(x) dx \,  \end{equation}
By scaling, we obtain from (\ref{eq:0en}) that 
\[ \left[ - \Delta + \frac{1}{2} N^2 V (Nx) \right] f (Nx) = 0 \]
and therefore that the scattering length of $N^2 V (N x)$ is given by $\frak{a}_0 /N$.  

It follows from the works of Lieb, Yngvason \cite{LY}, of Lieb, Yngvason and Seiringer \cite{LSY} and, more recently, of Nam, Rougerie and Seiringer \cite{NRS}, that the ground state energy of the Hamilton operator (\ref{eq:Ham0}) is given, to leading order in $N$, by
\begin{equation}\label{eq:LSY} E_N = 4 \pi \frak{a}_0  N + o (N) \, 
\end{equation}
In \cite{LS}, Lieb and Seiringer also showed that the ground state of (\ref{eq:Ham0}) exhibits complete Bose-Einstein condensation. In other words, if $\gamma^{(1)}_N$ denotes the one-particle reduced density associated with a normalized ground state wave function $\psi_N \in L^2_s (\Lambda^N)$, then it was shown in \cite{LS} that 
\begin{equation}\label{eq:BEC0}  \gamma_N^{(1)} \to |\ph_0 \rangle \langle \ph_0 | \end{equation}
as $N \to \infty$, where $\ph_0 (x) = 1$ for all $x \in \Lambda$ is the one-particle zero-momentum mode. This convergence was then extended by Lieb and Seiringer \cite{LS2} and, with different techniques, by Nam, Rougerie and Seiringer \cite{NRS}, to sequences of approximate ground states, ie. states whose energy satisfies (\ref{eq:LSY}) (more precisely, the results of \cite{LSY,LS,LS2,NRS} were not 
restricted to bosons moving in a box $\Lambda = [0;1]^3$ with periodic boundary conditions; they 
applied instead to systems trapped by an external confining potential). The interpretation of 
(\ref{eq:BEC0}) is straightforward: all particles in the system, up to a fraction that vanishes in the limit of large $N$, occupy the same one-particle state $\ph_0$. It is, however, important to observe that (\ref{eq:BEC0}) does not mean that the factorized wave function $\ph_0^{\otimes N}$ is a good approximation for the ground state of (\ref{eq:Ham0}). In fact, a simple computation shows that 
\begin{equation}\label{eq:wrong} \langle \ph^{\otimes N}_0, H_N \ph^{\otimes N}_0 \rangle = \frac{(N-1)}{2}  \widehat{V} (0) \end{equation} 
which is always much larger than (\ref{eq:LSY}), with an error of order $N$ (this follows from (\ref{eq:a0}), because $f < 1$ on the support of $V$). In contrast to $\ph_0^{\otimes N}$, the ground state of (\ref{eq:Ham0}) is characterized by a correlation structure, varying on the length scale $N^{-1}$, which is responsible for lowering the energy to (\ref{eq:LSY}). 

We recently improved (\ref{eq:LSY}) and (\ref{eq:BEC0}) in \cite{BBCS3}, where we showed that 
\begin{equation}\label{eq:gse} E_N = 4 \pi \frak{a}_0  N + \cO (1) 
\end{equation}
and that the one-particle reduced density $\gamma^{(1)}_N$ associated with the ground state of (\ref{eq:Ham0}) is such that   
\begin{equation}\label{eq:BBCS1} 1- \langle \ph_0 , \gamma^{(1)}_N \ph_0 \rangle \leq C N^{-1} \end{equation}
for a constant $C > 0$ (we previously obtained these results in \cite{BBCS1}, for sufficiently small interaction potentials). Eq. (\ref{eq:gse}) determines the ground state energy up to an error of order one, independent of $N$. As for Eq. (\ref{eq:BBCS1}), it shows that the number of excitations of the Bose-Einstein condensate remains bounded, uniformly in $N$. Although (\ref{eq:gse}) and (\ref{eq:BBCS1}) substantially improve previous results, these bounds are still not enough to resolve low-lying excited eigenvalues of (\ref{eq:Ham0}), which play a fundamental role in the understanding of the low-temperature  physics of Bose gases.

In this paper, we go beyond (\ref{eq:gse}), computing the ground state energy and the low-lying excitation spectrum of (\ref{eq:Ham0}), up to errors vanishing in the limit $N \to \infty$. This is the content of our main theorem.
\begin{theorem}\label{thm:main}
Let $V \in L^3 (\bR^3)$ be non-negative, spherically symmetric and compactly supported. Then, in the limit $N \to \infty$, the ground state energy $E_N$ of the Hamilton operator $H_N$ defined in (\ref{eq:Ham0}) is given by 
\begin{equation}
\begin{split}\label{1.groundstate}
E_{N} = \; &4\pi (N-1) \frak{a}_0 + e_\Lambda \frak{a}_0^2 \\ & - \frac{1}{2}\sum_{p\in\Lambda^*_+} \left[ p^2+8\pi \frak{a}_0  - \sqrt{|p|^4 + 16 \pi \frak{a}_0  p^2} - \frac{(8\pi \frak{a}_0 )^2}{2p^2}\right] + \cO (N^{-1/4})
    \end{split}
    \end{equation}      
 Here we introduced the notation $\Lambda_+^* = 2\pi \bZ^3 \backslash \{0 \}$ and we defined    \begin{equation}\label{eq:eLambda0}
   e_\Lambda = 2 - \lim_{M \to \infty} \sum_{\substack{p \in \bZ^3 \backslash \{ 0 \} : \\ |p_1|, |p_2|, |p_3| \leq M}} \frac{\cos (|p|)}{p^2} \end{equation}
where, in particular, the limit exists. Moreover, the spectrum of $H_N-E_{N}$ below a threshold $\zeta$ consists of eigenvalues given, in the limit $N \to \infty$, by 
    \begin{equation}
    \begin{split}\label{1.excitationSpectrum}
    \sum_{p\in\Lambda^*_+} n_p \sqrt{|p|^4+ 16 \pi \frak{a}_0  p^2}+ \cO (N^{-1/4} (1+ \zeta^3))
    \end{split}
    \end{equation}
Here $n_p \in \bN$ for all $p\in\Lambda^*_+$ and $n_p \not = 0$ for finitely many $p\in \Lambda^*_+$ only.  
\end{theorem}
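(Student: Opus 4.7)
My plan is to reduce the analysis of $H_N$ on $L^2_s(\Lambda^N)$ to that of an effective quadratic Hamiltonian on a Fock space of excitations, following the excitation-Fock-space approach initiated by Lewin--Nam--Serfaty--Solovej. The first step is to introduce the unitary $U_N : L^2_s(\Lambda^N) \to \cF_+^{\leq N}$, where $\cF_+^{\leq N}$ denotes the truncated bosonic Fock space over $\{\ph_0\}^\perp$, which factors out the condensate. The conjugated operator $\cL_N = U_N H_N U_N^*$ contains, besides the expected constant $4\pi(N-1)\frak{a}_0$, quadratic, cubic and quartic terms in modified creation/annihilation operators $b_p^*, b_p$. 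Because the naive expectation $\langle \ph_0^{\otimes N}, H_N \ph_0^{\otimes N}\rangle$ differs from the true ground state energy at leading order (see (\ref{eq:wrong})), $\cL_N$ is not close to a Bogoliubov Hamiltonian: the short-scale two-body correlations, living on length $N^{-1}$, must be extracted explicitly.

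The central step is therefore to conjugate $\cL_N$ with a generalized Bogoliubov transformation $e^{B}$ of the form $B = \tfrac{1}{2}\sum_{p\in \Lambda_+^*} \eta_p (b_p^* b_{-p}^* - \text{h.c.})$, where the kernel $\eta_p$ is chosen as the Fourier coefficients of the solution $f_N$ of the Neumann scattering equation on a ball of radius of order one, cut off and scaled so that $\eta_p \sim -8\pi \frak{a}_0/p^2$ for large $p$. The transformation $e^B$ implements the correlation structure at scale $N^{-1}$ at the level of Fock space: the singular quartic contribution from the potential combines with the commutator between the kinetic energy and $B$ to produce, to leading order, the correct energy $4\pi (N-1)\frak{a}_0 + e_\Lambda \frak{a}_0^2$ plus lower-order quadratic and cubic terms. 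To control the residual cubic terms (which are not small but have the structure of interactions between one excitation and two others plus a condensate particle), I will apply a further cubic conjugation $e^{A}$, with $A$ cubic in $b^{\sharp}$'s, as developed in the authors' earlier work on the ground state energy (\ref{eq:gse}); this removes the cubic contributions up to errors controlled by the kinetic energy and the number of excitations.

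After these two renormalization steps, one is left with an effective operator of the form
\[ \cM_N = C_N + \sum_{p\in \Lambda_+^*} F_p\, b_p^* b_p + \tfrac{1}{2}\sum_{p\in \Lambda_+^*} G_p (b_p^* b_{-p}^* + b_p b_{-p}) + \cE_N, \]
where $C_N$ is the constant appearing in (\ref{1.groundstate}), $F_p = p^2 + 8\pi \frak{a}_0 + \cO(p^{-2})$, $G_p = 8\pi \frak{a}_0 + \cO(p^{-2})$, and $\cE_N$ is a remainder bounded, on states with bounded excitation number and bounded energy, by $N^{-1/4}(\cN_++1)(\cH_N + 1)$ or similar. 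A final standard (block-diagonal) Bogoliubov rotation $e^{\widetilde B}$ diagonalizes the quadratic form, replacing $F_p, G_p$ by the dispersion $\varepsilon_p = \sqrt{|p|^4 + 16\pi \frak{a}_0 p^2}$ and producing exactly the subtracted constant in (\ref{1.groundstate}). The spectral statements (\ref{1.groundstate}) and (\ref{1.excitationSpectrum}) then follow by applying the min-max principle to $\cM_N$ and comparing its eigenvalues with those of the pure Bogoliubov operator $\sum_p \varepsilon_p a_p^* a_p$, using that eigenvectors of energy $\leq \zeta$ satisfy $\langle \cN_+^k \rangle, \langle \cH_N \cN_+^k\rangle \leq C(1+\zeta^k)$, thanks to the a priori bounds derived from (\ref{eq:gse}) and (\ref{eq:BBCS1}).

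The main obstacle, and where most of the work will concentrate, is the quantitative control of the remainders $\cE_N$, in particular after the cubic conjugation $e^A$. The transformation $e^B$ is not unitarily implementable in the standard sense because the $b^{\sharp}$ are not genuine CCR operators on $\cF_+^{\leq N}$; controlling its action on the quartic interaction, and showing that the non-quadratic commutators it generates are $N^{-1/4}$-small on the relevant low-energy subspace, requires delicate operator estimates. The scattering equation must be solved on a finite ball with Neumann boundary conditions rather than on $\bR^3$, producing the correction $e_\Lambda \frak{a}_0^2$ after resummation; identifying precisely this combinatorial constant, and showing that the limit in (\ref{eq:eLambda0}) indeed exists, will be a non-trivial but clean computation. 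Once these bounds are in place, the extraction of the excitation spectrum via min-max is essentially standard.
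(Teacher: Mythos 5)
Your proposal follows the same route as the paper: factor out the condensate with $U_N$, conjugate by a generalized Bogoliubov transformation $e^{B(\eta)}$ with $\eta$ built from the Neumann scattering solution on a ball, conjugate by a cubic map $e^A$, and finish with a second (diagonalizing) Bogoliubov rotation and a min-max comparison against $\sum_p \varepsilon_p a_p^* a_p$, using a priori bounds on $\langle (\cN_++1)^k(\cH_N+1)\rangle$ along low-energy eigenvectors. This is precisely the paper's architecture, down to the order of magnitude $N^{-1/4}$ of the error.

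There is, however, one claim in your description of the renormalized operator $\cM_N$ that is wrong and would, if taken literally, make the final step fail. You assert that after the cubic conjugation the off-diagonal quadratic coefficient satisfies $G_p = 8\pi\frak{a}_0 + \cO(p^{-2})$, i.e.\ tends to a nonzero constant. In fact $|G_p|\leq C/p^2$ (Lemma~\ref{lm:FpGp}(ii)): the scattering equation \eqref{eq:eta-scat0} forces $2p^2\eta_p + \widehat V(p/N) + N^{-1}(\widehat V*\eta)_p$ to equal $2N^3\lambda_\ell(\widehat\chi_\ell*\widehat f_{\ell,N})(p)=\cO(p^{-2})$, so the naive $8\pi\frak a_0$ in $G_p$ is exactly cancelled. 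This cancellation is not cosmetic. If $G_p$ really tended to $8\pi\frak a_0$, the diagonalizing kernel $\tau_p$ would decay only like $1/p^2$ --- as singular as the original $\eta_p$, with $\tau\in\ell^2\setminus\ell^1$ and $\|\tau\|$ of order one --- and the constant produced by diagonalization, $\tfrac12\sum_p\big[-F_p+\sqrt{F_p^2-G_p^2}\,\big]$, would diverge like $-\sum_p(8\pi\frak a_0)^2/(4p^2)$. That divergence must be compensated by the counter-term $(8\pi\frak a_0)^2/(2p^2)$ visible in \eqref{1.groundstate}; in the paper this counter-term is already produced in the constant $C_{\cJ_N}$ by the $e^A$-conjugation together with the scattering equation (see \eqref{eq:tildeCN} and the computation around \eqref{eq:EbogN}), not by the last rotation. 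With the correct $G_p=\cO(p^{-2})$ one has $\tau_p=\cO(p^{-4})\in\ell^1$, so the final Bogoliubov rotation is mild and contributes only a convergent, $N$-uniform, $\cO(1)$ correction, which is what lets Lemma~\ref{lm:diago} and Lemma~\ref{lm:err-con} close. Without this cancellation your last step would be as hard as the first $e^B$ and the argument would not terminate.
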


{\it Remark:} The term $e_\Lambda \frak{a}_0^2$ in (\ref{1.groundstate}) arises as a correction to 
the scattering length $\frak{a}_0$ (defined in (\ref{eq:0en}), through an equation on $\bR^3$), due to the finiteness of the box $\Lambda$. For small interaction potential, we can define a finite volume scattering length $a_N$ through the convergent Born series 
\[  8\pi \frak{a}_N =   \widehat{V} (0)
+\sum_{k=1}^{\infty}\frac{(-1)^{k}}{(2N)^{k}} \sum_{p_1, \dots, p_{k}\in\Lambda^*_+ } \frac{\widehat{V} (p_1 /N)}{p_1^2} \left(\prod_{i=1}^{k-1}\frac{\widehat{V} ((p_i-p_{i+1})/N)}{p^2_{i+1}}\right)  \widehat{V} (p_{k}/N) \]
In this case, one can check that   
\[ \lim_{N \to \infty} 4\pi (N-1) \left[ \frak{a}_0 -  \frak{a}_N \right] = e_\Lambda \frak{a}_0^2 \] 

Theorem \ref{thm:main} gives precise information on the low-lying eigenvalues of (\ref{eq:Ham0}). Our approach, combined with standard arguments, also gives information on the corresponding eigenvectors. In (\ref{eq:eigvec}) and (\ref{eq:gs-appro}) we provide a norm approximation of eigenvectors associated with the low-energy spectrum of (\ref{eq:Ham0}) (we postpone the precise statement of this result, because it requires additional notation that will be introduced in the next sections). As an application, we can compute the  condensate depletion, i.e. the expected number of excitations of the condensate, in the ground state $\psi_N$ of (\ref{eq:Ham0}); if $\gamma_N^{(1)}$ denotes the one-particle reduced density associated with $\psi_N$, we find
\begin{equation}\label{eq:depl0} 1 - \langle \ph_0 , \gamma_N^{(1)} \ph_0 \rangle =  \frac{1}{N} \sum_{p \in \Lambda^*_+} \left[ \; \frac{p^2  + 8\pi \frak{a}_0  - \sqrt{p^4 + 16 \pi \frak{a}_0  p^2}}{2 \sqrt{p^4 + 16 \pi \frak{a}_0  p^2}}  \; \right] +  \cO(N^{-9/8})\,. \end{equation}
The proof of (\ref{eq:depl0}), which is based on the approximation (\ref{eq:gs-appro}) of the ground state vector and on some additional bounds from Section \ref{sec:GN}, is deferred to Appendix \ref{sec:deple}.

Multiplying lengths by $N$, it is easy to check that the Gross-Pitaevskii regime considered in this paper is equivalent (up to a trivial rescaling) to an extended gas of $N$ particles moving in a box with volume $\text{Vol} = N^3$ and interacting through a fixed potential $V$ with scattering length $\frak{a}_0$ of order one, independent of $N$. In this sense, the Gross-Pitaevskii regime describes an  ultradilute gas of $N$ particles, with density $\rho = N / \text{Vol} = N^{-2}$ converging to zero, as $N \to \infty$. It is interesting to compare the Gross-Pitaevskii regime with the thermodynamic limit, where a Bose gas of $N$ particles interacting through a fixed potential with scattering length $\frak{a}_0$ is confined in a box with volume $\text{Vol}$ so that $N, \text{Vol} \to \infty$ while the density $\rho = N/ \text{Vol}$ is kept fixed.  At low-density, we expect the ground state energy $E (N , \text{Vol})$, divided by the number of particle $N$, to converge, in the thermodynamic limit, towards the famous Lee-Huang-Yang formula 
\begin{equation}\label{eq:LHY}  \lim_{\substack{N,\text{Vol} \to \infty \\ \rho = N / \text{Vol}}} \frac{E (N,\text{Vol})}{N} = 4\pi \rho \frak{a}_0 \left[ 1 + \frac{128}{15 \sqrt{\pi}} (\rho \frak{a}_0^3)^{1/2} + o ((\rho \frak{a}_0^3)^{1/2} ) \right] \end{equation}
where the error is of lower order in the density $\rho$, as $\rho \to 0$ (the order of the limit is important: first, we let $N, \text{Vol} \to \infty$ keeping $\rho = N/ \text{Vol}$ fixed and only afterwards we focus on small $\rho$). 

To date, there is no mathematically rigorous proof of (\ref{eq:LHY}). There are, however, some partial results. In \cite{Dy}, Dyson gave an upper bound to  the ground state energy matching the leading contribution $4\pi \rho \frak{a}_0$ in (\ref{eq:LHY}) (for particles interacting through a hard-sphere potential). About forty years later, Lieb and Yngvason showed in \cite{LY} the corresponding lower bound, establishing the validity of the leading term on the r.h.s. of (\ref{eq:LHY}) (for general  repulsive potentials). More recently, Yau and Yin proved in \cite{YY} an upper bound for the ground state energy per particles in the thermodynamic limit coinciding with (\ref{eq:LHY}) up to second order. They reached this goal by modifying a previous trial state constructed in \cite{ESY} by Erd\H os-Schlein-Yau, reproducing the Lee-Huang-Yang prediction up to errors that are sub-leading for small potentials. 

Eq. (\ref{eq:LHY}) can be compared with our result (\ref{1.groundstate}) for the ground state energy in the Gross-Pitaevskii regime where, as explained above, $\rho = N^{-2}$ (and where the energy has to be multiplied by an additional factor $N^2$ to make up for rescaling lengths). To leading order, the two formulas give the same result. The second order corrections do not agree. It should be observed, however, that if in (\ref{1.groundstate}) we replaced sums over discrete momenta $p \in \Lambda^*_+$ by integrals over continuous variables $p \in \bR^3$, we would obtain exactly (\ref{eq:LHY}). Theorem \ref{thm:main} establishes therefore the analogon of the Lee-Huang-Yang formula for the ground state energy in the Gross-Pitaevskii regime. 

The properties of low-energy states in dilute Bose gases have already been studied in the pioneering work of Bogoliubov, see \cite{B}. Bogoliubov rewrote the Hamilton operator (\ref{eq:Ham0}) in momentum space, using the formalism of second quantization. Since he expected low-energy states to exhibit Bose-Einstein condensation, he replaced all creation and annihilation operators associated with the zero-momentum mode $\ph_0$ by factors of $N^{1/2}$. The resulting Hamiltonian contains constant terms (describing the interaction among particles in the condensate), terms that are quadratic in creation and annihilation operators associated with modes with momentum $p \not = 0$ (describing the kinetic energy of the excitations as well as the interaction between excitations and the condensate) and terms that are cubic and quartic (describing interactions among excitations). Neglecting all cubic and quartic contributions, Bogoliubov obtained a quadratic Hamiltonian that he could diagonalize explicitly. At the end, he recognised (famously, with the help of Landau) that certain expressions  appearing in his formulas were just the first and second Born approximations of the scattering length and he replaced them by $\frak{a}_0 $. This procedure led him essentially to the Lee-Huang-Yang formula (\ref{eq:LHY}) and to expressions similar to (\ref{1.excitationSpectrum}) and to (\ref{eq:depl0}) (but with continuous momenta $p \in \bR^3$, since he considered the thermodynamic limit rather than the Gross-Pitaevskii regime) for the excitation spectrum and the condensate depletion of the dilute Bose gas. 

Let us stress the fact that replacing first and second Born approximations with the full scattering length, we produce an error to the ground state energy that is comparable with the leading term in (\ref{eq:LHY}). In the Gross-Pitaevskii regime, we already discussed this issue; the difference between (\ref{eq:wrong}) and (\ref{eq:LSY}) (where the first Born approximation $\widehat{V} (0)$ is replaced by $8\pi \frak{a}_0$) is of the order $N$, i.e. of the same order as the full ground state energy. The reason why Bogoliubov nevertheless ended up with correct results is that his final replacement compensated exactly for all  terms (cubic and quartic in creation and annihilation operators) that he neglected in his analysis.  
 
Mathematically, the validity of Bogoliubov's approach in three dimensional Bose gases has been first established by Lieb and Solovej for the computation of the ground state energy of bosonic jellium in \cite{LSo} and of the two-component charged Bose gas in \cite{LSo2} (upper bounds were  later given by Solovej in \cite{So}). Extending the ideas of \cite{LSo,LSo2}, Giuliani and Seiringer established in \cite{GiuS} the validity of the Lee-Huang-Yang formula (\ref{eq:LHY}) for Bose gases interacting through potentials scaling with the density to approach a simultaneous weak coupling and high density limit. This result has been recently improved by Brietzke and Solovej in \cite{BriS} to include a certain class of weak coupling and low density limits. In the regimes considered in \cite{LSo, LSo2, GiuS, BriS} the difference between first and second Born approximation and the full scattering length is small and it only gives negligible contributions to the energy; this is crucial to make Bogoliubov's approach rigorous.  An ambitious long-term project consisting in proving Bose-Einstein condensation and the validity of Bogoliubov theory for dilute Bose gases in the thermodynamic limit by means of renormalization group analysis is currently being pursued by Balaban-Feldman-Kn\"orrer-Trubowitz; see \cite{Balaban} for recent progress. 

In the last years, rigorous versions of Bogoliubov's approach have been used to establish ground state energy and excitation spectrum for mean-field models describing systems of $N$ trapped bosons interacting weakly through a potential whose range is comparable with the size of the trap. 
The first results in this direction have been obtained by Seiringer in \cite{Sei} for the translation invariant case (where particles are confined in a box with volume one and periodic boundary conditions are imposed). In \cite{GS}, Grech and Seiringer extended this result to mean-field systems confined by non-homogeneous external potentials. In this paper, they also conjectured the form of the excitation spectrum in the Gross-Pitaevskii regime (the expression in \cite[Conjecture 1]{GS} coincides with (\ref{1.excitationSpectrum}) in the translation invariant case). Lewin, Nam, Serfaty and Solovej obtained in \cite{LNSS} an alternative derivation of the low-energy spectrum of mean-field bosons (the techniques of this paper play a central role in our analysis). A different approach, valid in a combined mean-field and infinite volume limit, was proposed by Derezinski and Napiorkowski in \cite{DN}. Furthermore, in \cite{P1,P2,P3}, Pizzo obtained an expansion of the ground state energy and of the ground state function, for a mean field Hamiltonian, imposing an ultraviolet cutoff. 

Recently, these results have been extended in \cite{BBCS2} to systems of $N$ bosons in the box 
$\Lambda = [-\frac12;\frac12]^{\times 3}$, described by the Hamilton operator
\begin{equation}\label{eq:ham-beta}  H_N^\beta = \sum_{j=1}^N -\Delta_{x_j} + \frac{\kappa}{N} \sum_{i<j}^N N^{3\beta} V (N^\beta (x_i -x_j)) \end{equation}
for $\beta \in (0;1)$, a sufficiently small coupling constant $\kappa > 0$ and a short range potential $V \geq 0$. Hamilton operators of the form (\ref{eq:ham-beta}) interpolate between the mean-field regime associated with $\beta = 0$ and the Gross-Pitaevskii Hamiltonian (\ref{eq:Ham0}) corresponding to $\beta = 1$. In \cite{BBCS2}, the dispersion law of the excitations has the form \begin{equation}\label{eq:epsp} \eps_\beta (p) = \sqrt{|p|^4 + 2\kappa \widehat{V} (0) p^2}, \end{equation} independently of $\beta \in (0;1)$, because the difference between the scattering length of the interaction in (\ref{eq:ham-beta}) and its first Born approximation $\kappa \widehat{V} (0)$ is of the order $N^{\beta -1}$ and vanishes in the limit $N \to \infty$. Moreover, a simple computation shows that, in the regime described by the Hamilton operator (\ref{eq:ham-beta}), replacing first and second Born approximations with the scattering length produces an error in the ground state energy of the order $N^{2\beta-1}$. Hence, while Bogoliubov's approximation can be rigorously justified in the mean field limit $\beta =0$ (as done in \cite{Sei,GS,LNSS,DN}), for $\beta > 1/2$ it certainly fails, complicating the analysis of \cite{BBCS2}. 

Our goal here is to extend the results of \cite{BBCS2} to the physically more interesting and mathematically more challenging Gross-Pitaevskii regime, where $\beta = 1$. The fact that the dispersion relation of the excitations changes from (\ref{eq:epsp}) to $\eps (p) = \sqrt{|p|^4 + 16 \pi \frak{a}_0 p^2}$ in (\ref{1.excitationSpectrum}) is a first hint to the fact that the step from 
$0 < \beta < 1$ to $\beta =1$ is quite delicate; as we will explain below, it requires completely new ideas and it makes the analysis substantially more involved. 

It is worth noticing that the expression \eqref{1.excitationSpectrum} for the excitation spectrum of \eqref{eq:Ham0} has important consequences from the point of view of physics. It shows that the dispersion of bosons described by \eqref{eq:Ham0} is linear for small momenta, in sharp contrast with the quadratic dispersion of free particles. This observation was used by Bogoliubov in \cite{B} to
explain the emergence of superfuidity, via the so-called Landau criterion \cite{Lan}.

Let us now briefly sketch the main ideas in the proof of Theorem \ref{thm:main}. We start with the observation, due to Lewin, Nam, Serfaty and Solovej in \cite{LNSS}, that every $N$-particle wave function $\psi_N \in L^2_s (\Lambda^N)$ can be decomposed uniquely as 
\[ \psi_N =  \sum_{n=0}^N \alpha^{(n)} \otimes_s \ph_0^{\otimes (N-n)} 
\]
where $\alpha^{(n)} \in L^2_\perp (\Lambda)^{\otimes_s n}$ for all $n =1 , \dots , N$, $L^2_\perp (\Lambda)$ is the orthogonal complement of the condensate wave function $\ph_0$ and where $\otimes_s$ denotes the symmetric tensor product. Recall that the symmetric tensor product of $\psi_k \in L^2 (\Lambda)^{\otimes k}$ and $\psi_\ell \in L^2 (\Lambda)^{\otimes \ell}$ is defined by 
\begin{multline*}
\psi_k\otimes_s\psi_\ell \, (x_1,...,x_{k+\ell}) \\
=\frac{1}{\sqrt{k!\ell!(k+\ell)!}}\sum_{\sigma\in\mathfrak{S}_{k+\ell}}\psi_k(x_{\sigma(1)},...,x_{\sigma(k)})\psi_\ell(x_{\sigma(k+1)},...,x_{\sigma(k+\ell)}).
\end{multline*}
This remark allows us to define a unitary map 
\begin{equation}\label{eq:UN-intro} U_N : L^2_s (\Lambda^N) \to \cF_+^{\leq N} = \bigoplus_{n=0}^N L^2_\perp (\Lambda)^{\otimes_s n} \end{equation}
through $U_N \psi_N = \{ \alpha^{(0)}, \alpha^{(1)}, \dots , \alpha^{(N)} \}$. Here $\cF_+^{\leq N}$ is a truncated Fock space, constructed on the orthogonal complement of $\ph_0$ in $L^2 (\Lambda)$. The map $U_N$ factors out the condensate and allows us to focus on its orthogonal excitations. 

With the map $U_N$ we can construct the excitation Hamiltonian $\cL_N = U_N H_N U_N^* : \cF_+^{\leq N} \to \cF_+^{\leq N}$. As we will discuss in Section \ref{sec:ex}, conjugation with $U_N$ is reminiscent of the Bogoliubov approximation described above; it produces constant contributions and also terms that are quadratic, cubic and quartic in creation and annihilation operators $a_p^*, a_p$ associated with momenta $p \in \L^*_+ = 2\pi \bZ^3 \backslash \{ 0 \}$. In contrast with what Bogoliubov did and in contrast with what was done in \cite{LNSS} in the mean-field regime, here we cannot neglect cubic and quartic terms resulting from conjugation with $U_N$; they are large and they have to be taken into account to obtain a rigorous proof of Theorem \ref{thm:main}. 

The reason why, in the Gross-Pitaevskii regime, cubic and quartic terms are still important is that conjugation with $U_N$ factors out products of the condensate wave function $\ph_0$, while it does not affect correlations. Hence, the correlation structure that, as discussed around (\ref{eq:wrong}),  carries an energy of order $N$ and characterizes all low-energy states $\psi_N \in L^2_s (\Lambda^N)$ is left in the corresponding excitation vector $U_N \psi_N \in \cF^{\leq N}_+$. To extract the large contributions to the energy that are still hidden in cubic and quartic terms, we have to conjugate the excitation Hamiltonian $\cL_N$ with a unitary map generating the correct correlation structure. To reach this goal, we will introduce so-called generalized Bogoliubov transformations having the form 
\begin{equation}\label{eq:defT} T = \exp \left[ \frac{1}{2} \sum_{p \in \Lambda^*_+} \eta_p  \left( b_p^* b_{-p}^* -  b_p b_{-p} \right) \right] \end{equation}
where, for $p \in \Lambda^*_+$, $b_p^*, b_p$ are modified creation and annihilation operators 
acting on $\cF_+^{\leq N}$ by creating and, respectively, annihilating a particle with momentum 
$p$ while preserving the total number of particles $N$ (by removing or adding a particle in the condensate). The normalization of the operators $b_p^*, b_p$ is chosen so that, on states exhibiting Bose-Einstein condensation, their action is close to that of the standard creation and annihilation operators. Hence, although the action of $T$ on creation and annihilation operators is not explicit, we will show that 
\begin{equation}\label{eq:TbT} \begin{split} T^* b_p T &= \cosh (\eta_p) b_p + \sinh (\eta_p) b_{-p}^* + d_p \\ T^* b_p^* T &= \cosh (\eta_p) b_p^* + \sinh (\eta_p) b_{-p} + d^*_{p} \end{split} \end{equation}
for remainder operators $d_p$ that are small on states with few excitations. 

Using the generalized Bogoliubov transformation $T$, we can define a new, renormalized, excitation Hamiltonian $\cG_N = T^* \cL_N T = T^* U_N H_N U_N^* T : \cF^{\leq N}_+ \to \cF^{\leq N}_+$. With the appropriate choice of the coefficients $\eta_p$ (related with a modification of the solution of the zero-energy scattering equation (\ref{eq:0en})), we find that   
\begin{equation}\label{eq:GN-intro1} \cG_N = E_N + \cH_N + \Delta_N \end{equation}
where $\cH_N = \cK+ \cV_N$ is the Hamiltonian $H_N$ restricted on the excitation space $\cF_+^{\leq N}$, with 
\[ \cK = \sum_{p \in \L^*_+} p^2 a_p^* a_p, \qquad \cV_N = \frac{1}{2N} \sum_{p,q \in \L_+^* ,r \in \L^* : r \not = -p,-q}  \widehat{V} (r) a^*_{p+r} a_q^* a_{q+r} a_p \, , \] 
indicating the kinetic and, respectively, the potential energy operators, while $\Delta_N$ is an error term with the property that, for every $\delta > 0$ there exists $C > 0$ with 
\begin{equation}\label{eq:pmEN-intro} \pm \Delta_N \leq \delta \cH_N + C (\cN_+ + 1) \end{equation}
where $\cN_+$ is the number of particles operator on $\cF_+^{\leq N}$ (it measures the number of excitations). While this bound is enough to show Bose-Einstein condensation with optimal rates for sufficiently small potentials (because the constant $C$ on the r.h.s. of (\ref{eq:pmEN-intro}) 
can be chosen proportionally to the size of the interaction), in general it is not. So, a crucial ingredient in our proof of Theorem \ref{thm:main} is the additional estimate
\begin{equation}\label{eq:BBCS3} \cG_N - E_N \geq c \cN_+ - C \end{equation}
which follows from the analysis in \cite{BBCS3}, and also makes use of  the result (\ref{eq:BEC0}) from \cite{LS,NRS}. Eq. (\ref{eq:BBCS3}) controls the number of excitations in terms of the excess energy $\cG_N - E_N$. Combined with (\ref{eq:GN-intro1}), it also allows us to control the energy of the excitations, showing that
\begin{equation}\label{eq:GNHN} \cG_N - E_N \geq c \cH_N - C \end{equation}
for appropriate constants $C,c > 0$. In fact, combining (\ref{eq:GNHN}) with a bound similar to (\ref{eq:GN-intro1}) for commutators of $\cG_N$ with $\cN_+$, we can go one step further and show that, if $\psi_N \in L^2_s (\Lambda^N)$ is such that $\psi_N =\chi (H_N -E_N \leq \zeta) \psi_N$ (i.e. if $\psi_N$ belongs to a low-energy spectral subspace of $H_N$),  the corresponding excitation vector $\xi_N = T^* U_N \psi_N$ satisfies the strong a-priori bound 
\begin{equation}\label{eq:HNk-intro} \left\langle \xi_N , \left[ (\cH_N +1) (\cN_+ + 1) + (\cN_+ + 1)^3 \right]  \xi_N \right\rangle \leq C (1+ \zeta^3) 
\end{equation}
uniformly in $N$.  

Armed with this estimate, we can have a second look at the renormalized excitation Hamiltonian $\cG_N$ and we can prove that several terms contributing to $\cG_N$ are negligible on low-energy states. We find that 
\begin{equation}\label{eq:GNdec-intro} \cG_N = C_{\cG_N} + \cQ_{\cG_N} + \cC_N + \cH_N + \cE_{\cG_N} \end{equation}
where $C_{\cG_N}$ is a constant, $\cQ_{\cG_N}$ is quadratic in (generalized) creation and annihilation operators, 
$\cC_N$ is the cubic term 
\begin{equation} \label{eq:cCN-intro} \cC_N = \frac{1}{\sqrt{N}} \sum_{p,q \in \Lambda^*_+ : q \not = -p} \widehat{V} (p/N) \left[ b_{p+q}^* b_{-p}^* ( b_q \cosh (\eta_q) + b_{-q}^* \sinh (\eta_q)) + \text{h.c.} \right]  \end{equation}
and $\cE_{\cG_N}$ is an error term that can be estimated by  
\begin{equation}\label{eq:theta-intro} \pm \cE_{\cG_N} \leq C N^{-1/2} \Big[ (\cH_N + 1) (\cN_+ + 1)  + (\cN_+ + 1)^3 \Big] \end{equation}
and thus, by (\ref{eq:HNk-intro}), is negligible on low-energy states. 

The presence, in (\ref{eq:GNdec-intro}), of the cubic term $\cC_N$ and of the quartic interaction $\cV_N$ (hidden in $\cH_N$) is one of the main new challenges, compared with our analysis in \cite{BBCS2}, where we determined the ground state energy and the low-energy excitation spectrum for the Hamiltonian (\ref{eq:ham-beta}), for $0 < \beta < 1$ (and for sufficiently small interaction potentials). For $\beta < 1$, these terms were small (on low-energy states) and they could be included in the error $\cE_{\cG_N}$. For $\beta = 1$, this is no longer the case; it is easy to find normalized $\xi_N \in \cF_+^{\leq N}$ satisfying (\ref{eq:HNk-intro}), and with $\langle \xi_N, \cC_N \xi_N \rangle$ and $\langle \xi_N , \cV_N \xi_N \rangle$ of order one (not vanishing in the limit $N \to \infty$). 

It is important to notice that cubic and quartic terms do not improve with different choices of the coefficients $\eta_p$. This is related with the observation, going back to the work of Erd\H os-Schlein-Yau in \cite{ESY} and more recently to the papers \cite{NRS1,NRS2} of Napiorkowski-Reuvers-Solovej that quasi-free states can only approximate the ground state energy of a dilute Bose gas in the Gross-Pitaevskii regime up to errors of order one (to be more precise, \cite{ESY,NRS1,NRS2} study the ground state energy of an extended dilute Bose gas in the thermodynamic limit, but it is clear how to translate those results to the Gross-Pitaevskii regime). 

To extract the missing energy from the cubic and quartic terms in  (\ref{eq:GNdec-intro}), we are  going to conjugate the excitation Hamiltonian $\cG_N$ with a unitary operator of the form $S = e^{A}$, where $A$ is an antisymmetric operator, cubic in (generalized) creation and annihilation operators. Observe that a similar idea, formulated however with a different language and in a different setting, was used by Yau-Yin in \cite{YY} to find an upper bound to the ground state energy of a dilute Bose gas in the thermodynamic limit matching the Lee-Huang-Yang prediction up to second order.

With $S$, we define yet another (cubically renormalized) excitation Hamiltonian 
\[ \cJ_N = S^* \cG_N S = S^* T^* U_N H_N U_N^* T S : \cF_+^{\leq N} \to \cF_+^{\leq N} \]
With the appropriate choice of $A$, we show that 
\begin{equation}\label{eq:cJN-intro} \cJ_N = C_{\cJ_N} + \cQ_{\cJ_N} + \cV_N + \cE_{\cJ_N} \end{equation}
where $C_{\cJ_N}$ and $\cQ_{\cJ_N}$ are new constant and quadratic terms, while $\cE_{\cJ_N}$ is an error term, satisfying an estimate similar to (\ref{eq:theta-intro}) and thus negligible on low-energy states. The important difference with respect to (\ref{eq:GNdec-intro}) is that now, on the r.h.s. of  (\ref{eq:cJN-intro}), there is no cubic term! The quartic interaction term $\cV_N$ is still there, but this is a positive operator and therefore it can be ignored, at least for proving lower bounds. 

Let us quickly explain the mechanism we use to eliminate the cubic term $\cC_N$. 
Expanding to second order, we find 
\begin{equation}\label{eq:mech} \cJ_N = S^* \cG_N S = e^{-A} \cG_N e^A \simeq \cG_N + [\cG_N , A] + \frac{1}{2} [[ \cG_N , A ] , A ] + \dots \end{equation}
From the canonical commutation relations (ignoring the fact that $A$ is cubic in generalized, rather than standard, field operators) we conclude that $[\cK, A]$ and $[\cV_N, A]$ are cubic and quintic in creation and annihilation operators, respectively. Some of the terms contributing to $[\cV_N, A]$ are not in normal order, i.e. they contain creation operators lying to the right of annihilation operators. When we rearrange creation and annihilation operators to restore normal order, we generate an additional cubic contribution. There are therefore two cubic contributions arising from the first commutator $[\cG_N, A]$ on the r.h.s. of (\ref{eq:mech}). We choose $A$ so that these two terms renormalize the cubic operator (\ref{eq:cCN}) in $\cG_N$, making it small on low energy states. While the generalized Bogoliubov transformation $T$ used in the definition of $\cG_N$ described scattering processes involving two excitations with momenta $p$ and $-p$ and two particles in the condensate (i.e. two particles with zero momentum), we find that the appropriate choice of the cubic operator $A$ corresponds to processes involving two excitations with large momenta $p$ and $-p+v$, an excitation with small momentum $v$ and a particle in the condensate. It turns out that, with this choice of $A$, the only other terms generated through conjugation with $S=e^A$ that give a non-negligible contribution to $\cJ_N$ are $[\cC_N, A]$ and the second commutator $[[\cH_N, A], A]$. These two terms produce constant and quadratic contributions that transform $C_{\cG_N}$ and $\cQ_{\cG_N}$ in (\ref{eq:GNdec-intro}) into $C_{\cJ_N}$ and $\cQ_{\cJ_N}$ on the r.h.s. of (\ref{eq:cJN-intro}) (in fact, $\cQ_{\cJ_N}$ also absorbs the kinetic energy operator that was excluded from $\cQ_{\cG_N}$). 

Conjugating $\cJ_N$ with a last generalized Bogoliubov transformation $R$ to diagonalize the quadratic operator $\cQ_{\cJ_N}$, we obtain a final excitation Hamiltonian 
\[ \cM_N = R^* \cJ_N R = R^* S^* T^* U_N H_N U_N^* T S R : \cF_+^{\leq N} \to \cF_+^{\leq N}  \]
which can be written as 
\begin{equation}\label{eq:cMN-intro} \begin{split} \cM_N = \; &4 \pi  (N-1)\frak{a}_0 +e_\Lambda\mathfrak{a}_0^2- \frac{1}{2} \sum_{p \in \Lambda^*_+} \left[ p^2 + 8\pi \frak{a}_0  - \sqrt{|p|^4 + 16 \pi \frak{a}_0  p^2} - \frac{(8\pi \frak{a}_0 )^2}{2p^2} \right] \\ &+ \sum_{p \in \Lambda^*_+} \sqrt{|p|^4 + 16 \pi \frak{a}_0  p^2} \, a_p^* a_p + \cV_N  + \cE_{\cM_N} \end{split} \end{equation}
with an error term $\cE_{\cM_N}$ which satisfies   
\[ \pm \cE_{\cM_N} \leq C N^{-1/4} \Big[ (\cH_N + 1) (\cN_+ + 1)  + (\cN_+ + 1)^3 \Big] \]
and is therefore negligible on low-energy states. With (\ref{eq:cMN-intro}), Theorem \ref{thm:main} follows comparing the eigenvalues of $\cM_N$ with those of its quadratic part, by means of the min-max principle. To prove lower bounds, we can ignore the quartic interaction $\cV_N$. To prove upper bounds, on the other hand, it is enough to control the values of $\cV_N$ on low-energy eigenspaces of the quadratic operator; they turn out to be negligible. 

The plan of the paper is as follows. In Section \ref{sec:fock}, we briefly review the formalism of second quantization. In particular, we define and study the properties of generalized Bogoliubov transformations that play a very important role in our analysis. In Section~\ref{sec:ex} we introduce the excitation Hamiltonian $\cL_N$, the renormalized excitation Hamiltonian $\cG_N$ and the excitation Hamiltonian $\cJ_N$ with renormalized cubic term, and we study their properties. In particular, Prop. \ref{prop:gene} provides important bounds on $\cG_N$ while Prop.~\ref{thm:tGNo1} gives a precise description of $\cJ_N$. In Section \ref{sec:BEC}, we prove estimates for the excitation vectors associated with low-energy many-body wave functions. Section \ref{sec:diag} is devoted to the diagonalization of the quadratic part of $\cJ_N$ and Section \ref{sec:proof} applies the min-max principle to conclude the proof of Theorem \ref{thm:main}. Finally, Section \ref{sec:GN} and Section \ref{sec:wtGN} contain the proof of Prop. \ref{prop:gene} and, respectively, of Prop. \ref{thm:tGNo1}. \\

{\it Acknowledgements.} B.S. gratefully acknowledge support from the NCCR SwissMAP and from the Swiss National Foundation of Science through the SNF Grant ``Effective equations from quantum dynamics'' and the SNF Grant ``Dynamical and energetic properties of Bose-Einstein condensates''.

\section{Fock space}
\label{sec:fock}

The bosonic Fock space over $L^2 (\Lambda)$ is defined as 
\[ \cF = \bigoplus_{n \geq 0} L^2_s (\Lambda^{n}) = \bigoplus_{n \geq 0} L^2 (\Lambda)^{\otimes_s n} \]
where $L^2_s (\Lambda^{n})$ is the subspace of $L^2 (\Lambda^n)$ consisting of wave functions that are symmetric w.r.t. permutations. 
The vacuum vector in $\cF$ will be indicated with $\Omega = \{ 1, 0, \dots \} \in \cF$. 

For $g \in L^2 (\Lambda)$, the creation operator $a^* (g)$ and the annihilation operator $a(g)$ are defined by  
\[ \begin{split} 
(a^* (g) \Psi)^{(n)} (x_1, \dots , x_n) &= \frac{1}{\sqrt{n}} \sum_{j=1}^n g (x_j) \Psi^{(n-1)} (x_1, \dots , x_{j-1}, x_{j+1} , \dots , x_n) 
\\
(a (g) \Psi)^{(n)} (x_1, \dots , x_n) &= \sqrt{n+1} \int_\Lambda  \bar{g} (x) \Psi^{(n+1)} (x,x_1, \dots , x_n) \, dx \end{split} \]
Observe that $a^* (g)$ is the adjoint of $a(g)$ and that the canonical commutation relations
\begin{equation*}
[a (g), a^* (h) ] = \langle g,h \rangle , \quad [ a(g), a(h)] = [a^* (g), a^* (h) ] = 0 \end{equation*}
hold true for all $g,h \in L^2 (\Lambda)$ ($\langle g,h \rangle$ is the inner product on $L^2 (\Lambda)$). 

Due to translation invariance of our system, it will be convenient to work in the momentum space $\Lambda^* = 2\pi \bZ^3$. For $p \in \Lambda^*$, we consider the plane wave $\ph_p  (x) = e^{-ip\cdot x}$ in $L^2 (\Lambda)$. We define the operators 
\begin{equation*}
a^*_p = a^* (\ph_p), \quad \text{and } \quad  a_p = a (\ph_p) \end{equation*} 
creating and, respectively, annihilating a particle with momentum $p$. 

For some parts of our analysis, we will switch to position space (where it is easier to use the positivity of the potential $V(x)$). To this end, we introduce operator valued distributions $\check{a}_x, \check{a}_x^*$ defined by 
\begin{equation*}
 a(f) = \int \bar{f} (x) \,  \check{a}_x \, dx , \quad a^* (f) = \int f(x) \, \check{a}_x^* \, dx  \end{equation*}

On $\cF$, we introduce the number of particles operator $\cN$ defined by $(\cN\Psi)^{(n)} = n \Psi^{(n)}$. Notice that 
\[ \cN = \sum_{p \in \Lambda^*} a_p^* a_p  = \int \check{a}^*_x \check{a}_x \, dx \,  \]
It is useful to notice that creation and annihilation operators are bounded by the square root of the number of particles operator, i.e.   
\begin{equation}\label{eq:abd} 
\| a (f) \Psi \| \leq \| f \| \| \cN^{1/2} \Psi \|, \quad \| a^* (f) \Psi \| \leq \| f \| \| (\cN+1)^{1/2} \Psi \| 
\end{equation}
for all $f \in L^2 (\Lambda)$. 

\medskip

Recall that $\ph_0$ denotes the zero-momentum mode in $L^2 (\Lambda)$, defined by $\ph_0 (x) = 1$ for all $x \in \Lambda$. We define $L^2_{\perp} (\Lambda)$ to be the orthogonal complement of the one dimensional space spanned by $\ph_0$ in $L^2 (\Lambda)$. The Fock space over $L^2_\perp (\Lambda)$ will be denoted by  
\[ \cF_{+} = \bigoplus_{n \geq 0} L^2_{\perp} (\Lambda)^{\otimes_s n} \,  \]
This Hilbert space is generated by creation and annihilation operators $a_p^*, a_p$, with $p \in \Lambda^*_+ := 2\pi \bZ^3 \backslash \{ 0 \}$. On $\cF_+$, the number of particles operator will be denoted by  
\[ \cN_+ = \sum_{p \in \Lambda^*_+} a_p^* a_p \]
For $N \in \bN$, we also define the truncated Fock space  
\[ \cF_{+}^{\leq N} = \bigoplus_{n=0}^N L^2_{\perp} (\Lambda)^{\otimes_s n} \,  \]
On $\cF_+^{\leq N}$, we introduce modified creation and annihilation operators. For $f \in L^2_\perp (\Lambda)$, we set 
\[ b (f) = \sqrt{\frac{N- \cN_+}{N}} \, a (f), \qquad \text{and } \quad  b^* (f) = a^* (f) \, \sqrt{\frac{N-\cN_+}{N}} \]
We have $b(f), b^* (f) : \cF_+^{\leq N} \to \cF_+^{\leq N}$. The interpretation of these fields becomes clear if we conjugate them with the unitary map $U_N$ defined in (\ref{eq:UN-intro}). We find 
\begin{equation*}
U_N a^* (f) \frac{a(\ph_0)}{\sqrt{N}} U_N^* = a^* (f) \sqrt{\frac{N-\cN_+}{N}}  =  b^* (f) 
\end{equation*}
which means that $b^* (f)$ excites a particle from the condensate into its orthogonal complement and, similarly, that $b_p$ annihilates an excitation back into the condensate. Compared with the standard fields $a^*,a$, the modified operators $b^*,b$ have an important advantage; they create (or annihilate) excitations but, at the same time, they preserve the total number of particles. As a consequence, again in contrast with the standard fields $a^*, a$, the modified operators $b^*, b$ leave the truncated Fock space $\cF_+^{\leq N}$ invariant. 

It is also convenient to define modified creation and annihilation operators in momentum space and operator valued modified creation and annihilation operators in position space, putting 
\begin{equation}\label{eq:bp-de} b_p = \sqrt{\frac{N-\cN_+}{N}} \, a_p, \qquad \text{and } \quad  b^*_p = a^*_p \, \sqrt{\frac{N-\cN_+}{N}} \end{equation}
for all $p \in \Lambda^*_+$, and   
\[ \check{b}_x = \sqrt{\frac{N-\cN_+}{N}} \, \check{a}_x, \qquad \text{and } \quad  \check{b}^*_x = \check{a}^*_x \, \sqrt{\frac{N-\cN_+}{N}} \]
for all $x \in \Lambda$. 

Modified creation and annihilation operators satisfy the commutation relations 
\begin{equation}\label{eq:comm-bp} \begin{split} [ b_p, b_q^* ] &= \left( 1 - \frac{\cN_+}{N} \right) \delta_{p,q} - \frac{1}{N} a_q^* a_p 
\\ [ b_p, b_q ] &= [b_p^* , b_q^*] = 0 
\end{split} \end{equation}
and, in position space, 
\begin{equation*}
\begin{split}  [ \check{b}_x, \check{b}_y^* ] &= \left( 1 - \frac{\cN_+}{N} \right) \delta (x-y) - \frac{1}{N} \check{a}_y^* \check{a}_x \\ 
[ \check{b}_x, \check{b}_y ] &= [ \check{b}_x^* , \check{b}_y^*] 
= 0 
\end{split} \end{equation*}
Furthermore 
\begin{equation*}
\begin{split}
[\check{b}_x, \check{a}_y^* \check{a}_z] &=\delta (x-y)\check{b}_z, \qquad 
[\check{b}_x^*, \check{a}_y^* \check{a}_z] = -\delta (x-z) \check{b}_y^*
\end{split} \end{equation*}
It follows that $[ \check{b}_x, \cN_+ ] = \check{b}_x$, $[ \check{b}_x^* , \cN_+ ] = - \check{b}_x^*$ and, in momentum space, $[b_p , \cN_+] = b_p$, $[b_p^*, \cN_+] = - b_p^*$. With (\ref{eq:abd}), we obtain 
\begin{equation*}
\begin{split} 
\| b(f) \xi \| &\leq \| f \| \left\| \cN_+^{1/2} \left( \frac{N+1-\cN_+}{N} \right)^{1/2} \xi \right\| \leq \| f \| \| \cN_+^{1/2} \xi \|  \\ 
\| b^* (f) \xi \| &\leq \| f \| \left\| (\cN_+ +1)^{1/2} \left( \frac{N-\cN_+ }{N} \right)^{1/2} \xi \right\| \leq \| f \| \| (\cN_+ + 1)^{1/2} \xi \| 
\end{split} \end{equation*}
for all $f \in L^2_\perp (\Lambda)$ and $\xi \in \cF^{\leq N}_+$. Since $\cN_+  \leq N$ on $\cF_+^{\leq N}$, $b(f), b^* (f)$ are bounded operators with $\| b(f) \|, \| b^* (f) \| \leq (N+1)^{1/2} \| f \|$. 

\medskip

Next, we introduce generalized Bogoliubov transformations and we discuss their properties. For $\eta \in \ell^2 (\Lambda^*_+)$ with $\eta_{-p} = \eta_{p}$ for all $p \in \Lambda^*_+$, we define 
\begin{equation}\label{eq:defB} 
B(\eta) = \frac{1}{2} \sum_{p\in \Lambda^*_+}  \left( \eta_p b_p^* b_{-p}^* - \bar{\eta}_p b_p b_{-p} \right) \, \end{equation}
and we consider 
\begin{equation}\label{eq:eBeta} 
e^{B(\eta)} = \exp \left[ \frac{1}{2} \sum_{p \in \Lambda^*_+}   \left( \eta_p b_p^* b_{-p}^* - \bar{\eta}_p  b_p b_{-p} \right) \right] 
\end{equation}
We refer to unitary operators of the form (\ref{eq:eBeta}) as generalized Bogoliubov transformations, in analogy with the standard Bogoliubov transformations 
\begin{equation}\label{eq:wteBeta} e^{\wt{B} (\eta)} = \exp \left[  \frac{1}{2} \sum_{p\in \Lambda^*_+}  \left( \eta_p a_p^* a_{-p}^* - \bar{\eta}_p a_p a_{-p} \right) \right] \end{equation}
defined by means of the standard creation and annihilation operators. In this paper, we will work with (\ref{eq:eBeta}), rather than (\ref{eq:wteBeta}), because the generalized Bogoliubov transformations, in contrast with the standard transformations, leave the truncated Fock space $\cF_+^{\leq N}$ invariant. 
The price we will have to pay is the fact that, while the action of standard Bogoliubov transformation 
on creation and annihilation operators is explicitly given by 
\begin{equation}\label{eq:act-Bog} e^{-\wt{B} (\eta)} a_p e^{\wt{B} (\eta)}  = \cosh (\eta_p) a_p + \sinh (\eta_p) a_{-p}^* \,   \end{equation}
there is no such formula describing the action of generalized Bogoliubov transformations. An important part of our analysis is therefore devoted to the control of the action of (\ref{eq:eBeta}). A first important observation in this direction is the following lemma, whose proof can be found in \cite[Lemma 3.1]{BS} (a similar result has been previously established in \cite{Sei}).
\begin{lemma}\label{lm:Ngrow}
For every $n \in \bN$ there exists a constant $C > 0$ such that, on $\cF_+^{\leq N}$, 
\begin{equation}\label{eq:bd-Beta} e^{-B(\eta)} (\cN_+ +1)^{n} e^{B(\eta)} \leq C e^{C \| \eta \|} (\cN_+ +1)^{n}  
\end{equation}
for all $\eta \in \ell^2 (\L^*)$.
\end{lemma}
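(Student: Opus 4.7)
The plan is a Grönwall-type argument. For $\xi \in \cF_+^{\leq N}$, I set
\[ f(s) := \langle \xi, e^{-sB(\eta)}(\cN_+ + 1)^n e^{sB(\eta)}\xi\rangle, \quad s \in [0,1]. \]
Since $B(\eta)$ is anti-hermitian, $e^{sB(\eta)}$ is unitary, and differentiating yields $f'(s) = \langle \xi_s, [(\cN_+ + 1)^n, B(\eta)]\xi_s\rangle$ with $\xi_s := e^{sB(\eta)}\xi$. It therefore suffices to establish the quadratic form bound
\[ \pm\,\big[(\cN_+ + 1)^n, B(\eta)\big] \;\leq\; C_n \|\eta\|\,(\cN_+ + 1)^n \qquad \text{on } \cF_+^{\leq N}, \]
because then $|f'(s)| \leq C_n\|\eta\| f(s)$, and Grönwall's lemma gives $f(1) \leq e^{C_n\|\eta\|}f(0)$, which is the claim after taking a supremum over $\xi$.

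The commutator is readily evaluated using the ladder identity $h(\cN_+)\,b_p^* b_{-p}^* = b_p^* b_{-p}^*\,h(\cN_+ + 2)$, valid for any function $h$, and one finds
\[ \big[(\cN_+ + 1)^n, B(\eta)\big] = \tfrac{1}{2}\sum_{p \in \Lambda_+^*} \eta_p\, b_p^* b_{-p}^* g_n(\cN_+) \;+\; \text{h.c.}, \]
where $g_n(\cN_+) := (\cN_+ + 3)^n - (\cN_+ + 1)^n \leq C_n (\cN_+ + 1)^{n-1}$. To control this sum, I will decompose $\psi \in \cF_+^{\leq N}$ into particle-number sectors $\psi = \sum_m \psi_m$ and use
\[ \Big\langle\psi,\, \sum_p \eta_p\, b_p^* b_{-p}^* g_n(\cN_+)\psi\Big\rangle = \sum_m g_n(m) \sum_p \eta_p \langle \psi_{m+2},\, b_p^* b_{-p}^* \psi_m\rangle. \]
For each fixed $m$, Cauchy-Schwarz in $p$ combined with the kinematic estimate $\sum_p \|b_{-p} b_p \psi_{m+2}\|^2 \leq (m+2)(m+1) \|\psi_{m+2}\|^2$---which follows from $\|b_p\phi\| \leq \|a_p\phi\|$ on $\cF_+^{\leq N}$ together with the normal-ordering identity $\sum_{p,q} a_p^* a_q^* a_q a_p = \cN_+(\cN_+ - 1)$---gives the per-sector bound $|\sum_p \eta_p \langle \psi_{m+2}, b_p^* b_{-p}^* \psi_m\rangle| \leq \|\eta\|(m+2)\|\psi_m\|\|\psi_{m+2}\|$. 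Inserting $g_n(m)(m+2) \leq C_n (m+1)^n$ and then applying Cauchy-Schwarz in $m$, using the index-shift identity $\sum_m (m+1)^n \|\psi_{m+2}\|^2 \leq \langle\psi, (\cN_+ + 1)^n \psi\rangle$, will close the estimate.

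The main obstacle I foresee is resisting the most natural approach: a symmetric splitting $g_n = g_n^{1/2} \cdot g_n^{1/2}$ inserted on both sides of $b_p^* b_{-p}^*$, followed by Cauchy-Schwarz in $p$. That route only produces a bound of the form $C\|\eta\|\,\|(\cN_+ + 1)^{(n-1)/2}\psi\|\,\|(\cN_+ + 1)^{(n+1)/2}\psi\|$, whose product \emph{dominates} $\|\eta\|\,\langle\psi, (\cN_+ + 1)^n \psi\rangle$ by Cauchy-Schwarz rather than being dominated by it---the wrong direction. Passing to particle-number sectors \emph{before} applying Cauchy-Schwarz preserves the crucial structural fact that the matrix element $\langle\psi_{m+2}, b_p^* b_{-p}^* \psi_m\rangle$ couples only components differing by exactly two excitations, and it is precisely this that lets the final Cauchy-Schwarz in $m$ be applied in the right direction and return $\langle\psi, (\cN_+ + 1)^n\psi\rangle$ directly.
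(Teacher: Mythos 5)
Your proof is correct, and it follows the same Gr\"onwall-plus-commutator strategy as the cited source; the paper itself does not reproduce the argument but refers to \cite{BS} (Lemma 3.1), with a similar result in \cite{Sei}. All of your steps check out: the ladder identity gives the commutator $[(\cN_++1)^n, B(\eta)] = \frac12\sum_p\eta_p\, b_p^* b_{-p}^*\, g_n(\cN_+) + \text{h.c.}$ exactly as you write it, the kinematic bound $\sum_p\|b_{-p}b_p\psi_{m+2}\|^2 \leq (m+2)(m+1)\|\psi_{m+2}\|^2$ follows from $b_p = \sqrt{(N-\cN_+)/N}\,a_p$ (whose prefactors are at most one on the relevant sectors of $\cF_+^{\leq N}$) together with $\sum_{p,q} a_p^* a_q^* a_q a_p = \cN_+(\cN_+-1)$, and the closing Cauchy--Schwarz in $m$ together with the index shift returns $\|\eta\|\,\langle\psi,(\cN_++1)^n\psi\rangle$ as needed. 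Where you differ in flavour from the usual presentation is in passing to particle-number sectors. Your caution about the even split $g_n = g_n^{1/2}\cdot g_n^{1/2}$---which yields $\|(\cN_++1)^{(n-1)/2}\psi\|\,\|(\cN_++1)^{(n+1)/2}\psi\|$ and goes the wrong way---identifies a genuine pitfall, but the operator-level cure does not require sectors: split \emph{asymmetrically}, writing $g_n(\cN_+) = h_n(\cN_+)\,(\cN_++1)^{n/2}$ with $h_n(\cN_+) := g_n(\cN_+)(\cN_++1)^{-n/2} \leq C_n(\cN_++1)^{(n-2)/2}$, pull $h_n$ to the bra, commute it past $b_{-p}b_p$ (producing the comparable $h_n(\cN_++2)$), and apply Cauchy--Schwarz in $p$ to obtain $C_n\|\eta\|\,\|(\cN_++1)\,h_n(\cN_++2)\psi\|\,\|(\cN_++1)^{n/2}\psi\| \leq C_n\|\eta\|\,\|(\cN_++1)^{n/2}\psi\|^2$. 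This is the right direction precisely because the extra power of $\cN_+$ that the pair annihilation contributes has been budgeted for in the split; it is the operator-level mirror image of the way you distribute $(m+1)^n$ evenly between $\psi_m$ and $\psi_{m+2}$ in the final Cauchy--Schwarz in $m$. The two routes are equivalent in strength; your sectoral version has the merit of making the two-step ladder structure of $B(\eta)$ completely explicit.
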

Unfortunately, controlling the change of the number of particles operator is not enough for our purposes. To obtain more precise information we expand, for any $p \in \Lambda^*_+$,  
\[\begin{split} e^{-B(\eta)} \, b_p \, e^{B(\eta)} &= b_p + \int_0^1 ds \, \frac{d}{ds}  e^{-sB(\eta)} b_p e^{sB(\eta)} \\ &= b_p - \int_0^1 ds \, e^{-sB(\eta)} [B(\eta), b_p] e^{s B(\eta)} \\ &= b_p - [B(\eta),b_p] + \int_0^1 ds_1 \int_0^{s_1} ds_2 \, e^{-s_2 B(\eta)} [B(\eta), [B(\eta),b_p]] e^{s_2 B(\eta)} \end{split} \]
Iterating $m$ times, we find 
\begin{equation}\label{eq:BCH} \begin{split} 
e^{-B(\eta)} b_p e^{B(\eta)} = &\sum_{n=1}^{m-1} (-1)^n \frac{\text{ad}^{(n)}_{B(\eta)} (b_p)}{n!} \\ &+ \int_0^{1} ds_1 \int_0^{s_1} ds_2 \dots \int_0^{s_{m-1}} ds_m \, e^{-s_m B(\eta)} \text{ad}^{(m)}_{B(\eta)} (b_p) e^{s_m B(\eta)} \end{split} \end{equation}
where we recursively defined \[ \text{ad}_{B(\eta)}^{(0)} (A) = A \quad \text{and } \quad \text{ad}^{(n)}_{B(\eta)} (A) = [B(\eta), \text{ad}^{(n-1)}_{B(\eta)} (A) ]  \]
We are going to expand the nested commutators $\text{ad}_{B(\eta)}^{(n)} (b_p)$ and   
$\text{ad}_{B(\eta)}^{(n)} (b^*_p)$. To this end, we need to introduce some additional notation. 
We follow here \cite{BS,BBCS1,BBCS2}. For $f_1, \dots , f_n \in \ell_2 (\Lambda^*_+)$, $\sharp = (\sharp_1, \dots , \sharp_n), \flat = (\flat_0, \dots , \flat_{n-1}) \in \{ \cdot, * \}^n$, we set 
\begin{equation}\label{eq:Pi2}
\begin{split}  
\Pi^{(2)}_{\sharp, \flat} &(f_1, \dots , f_n) \\ &= \sum_{p_1, \dots , p_n \in \Lambda^*}  b^{\flat_0}_{\alpha_0 p_1} a_{\beta_1 p_1}^{\sharp_1} a_{\alpha_1 p_2}^{\flat_1} a_{\beta_2 p_2}^{\sharp_2} a_{\alpha_2 p_3}^{\flat_2} \dots  a_{\beta_{n-1} p_{n-1}}^{\sharp_{n-1}} a_{\alpha_{n-1} p_n}^{\flat_{n-1}} b^{\sharp_n}_{\beta_n p_n} \, \prod_{\ell=1}^n f_\ell (p_\ell)  \end{split} \end{equation}
where, for $\ell=0,1, \dots , n$, we define $\alpha_\ell = 1$ if $\flat_\ell = *$, $\alpha_\ell =    -1$ if $\flat_\ell = \cdot$, $\beta_\ell = 1$ if $\sharp_\ell = \cdot$ and $\beta_\ell = -1$ if $\sharp_\ell = *$. In (\ref{eq:Pi2}), we require that, for every $j=1,\dots, n-1$, we have either $\sharp_j = \cdot$ and $\flat_j = *$ or $\sharp_j = *$ and $\flat_j = \cdot$ (so that the product $a_{\beta_\ell p_\ell}^{\sharp_\ell} a_{\alpha_\ell p_{\ell+1}}^{\flat_\ell}$ always preserves {} the number of particles, for all $\ell =1, \dots , n-1$). With this assumption, we find that the operator $\Pi^{(2)}_{\sharp,\flat} (f_1, \dots , f_n)$ maps $\cF^{\leq N}_+$ into itself. If, for some $\ell=1, \dots , n$, $\flat_{\ell-1} = \cdot$ and $\sharp_\ell = *$ (i.e. if the product $a_{\alpha_{\ell-1} p_\ell}^{\flat_{\ell-1}} a_{\beta_\ell p_\ell}^{\sharp_\ell}$ for $\ell=2,\dots , n$, or the product $b_{\alpha_0 p_1}^{\flat_0} a_{\beta_1 p_1}^{\sharp_1}$ for $\ell=1$, is not normally ordered) we require additionally that $f_\ell  \in \ell^1 (\Lambda^*_+)$. In position space, the same operator can be written as 
\begin{equation}\label{eq:Pi2-pos} \Pi^{(2)}_{\sharp, \flat} (f_1, \dots , f_n) = \int   \check{b}^{\flat_0}_{x_1} \check{a}_{y_1}^{\sharp_1} \check{a}_{x_2}^{\flat_1} \check{a}_{y_2}^{\sharp_2} \check{a}_{x_3}^{\flat_2} \dots  \check{a}_{y_{n-1}}^{\sharp_{n-1}} \check{a}_{x_n}^{\flat_{n-1}} \check{b}^{\sharp_n}_{y_n} \, \prod_{\ell=1}^n \check{f}_\ell (x_\ell - y_\ell) \, dx_\ell dy_\ell \end{equation}
An operator of the form (\ref{eq:Pi2}), (\ref{eq:Pi2-pos}) with all the properties listed above, will be called a $\Pi^{(2)}$-operator of order $n$.

For $g, f_1, \dots , f_n \in \ell_2 (\Lambda^*_+)$, $\sharp = (\sharp_1, \dots , \sharp_n)\in \{ \cdot, * \}^n$, $\flat = (\flat_0, \dots , \flat_{n}) \in \{ \cdot, * \}^{n+1}$, we also define the operator 
\begin{equation}\label{eq:Pi1}
\begin{split} \Pi^{(1)}_{\sharp,\flat} &(f_1, \dots , f_n;g) \\ &= \sum_{p_1, \dots , p_n \in \Lambda^*}  b^{\flat_0}_{\alpha_0, p_1} a_{\beta_1 p_1}^{\sharp_1} a_{\alpha_1 p_2}^{\flat_1} a_{\beta_2 p_2}^{\sharp_2} a_{\alpha_2 p_3}^{\flat_2} \dots a_{\beta_{n-1} p_{n-1}}^{\sharp_{n-1}} a_{\alpha_{n-1} p_n}^{\flat_{n-1}} a^{\sharp_n}_{\beta_n p_n} a^{\flat n} (g) \, \prod_{\ell=1}^n f_\ell (p_\ell) \end{split} \end{equation}
where $\alpha_\ell$ and $\beta_\ell$ are defined as above. Also here, we impose the condition that, for all $\ell = 1, \dots , n$, either $\sharp_\ell = \cdot$ and $\flat_\ell = *$ or $\sharp_\ell = *$ and $\flat_\ell = \cdot$. This implies that $\Pi^{(1)}_{\sharp,\flat} (f_1, \dots , f_n;g)$ maps $\cF^{\leq N}_+$ back into $\cF_+^{\leq N}$. Additionally, we assume that $f_\ell \in \ell^1 (\Lambda^*_+)$ if $\flat_{\ell-1} = \cdot$ and $\sharp_\ell = *$ for some $\ell = 1,\dots , n$ (i.e. if the pair $a_{\alpha_{\ell-1} p_\ell}^{\flat_{\ell-1}} a^{\sharp_\ell}_{\beta_\ell p_\ell}$ is not normally ordered). In position space, the same operator can be written as
\begin{equation}\label{eq:Pi1-pos} \Pi^{(1)}_{\sharp,\flat} (f_1, \dots ,f_n;g) = \int \check{b}^{\flat_0}_{x_1} \check{a}_{y_1}^{\sharp_1} \check{a}_{x_2}^{\flat_1} \check{a}_{y_2}^{\sharp_2} \check{a}_{x_3}^{\flat_2} \dots  \check{a}_{y_{n-1}}^{\sharp_{n-1}} \check{a}_{x_n}^{\flat_{n-1}} \check{a}^{\sharp_n}_{y_n} \check{a}^{\flat n} (g) \, \prod_{\ell=1}^n \check{f}_\ell (x_\ell - y_\ell) \, dx_\ell dy_\ell \end{equation}
An operator of the form (\ref{eq:Pi1}), (\ref{eq:Pi1-pos}) will be called a $\Pi^{(1)}$-operator of order $n$. Operators of the form $b(f)$, $b^* (f)$, for a $f \in \ell^2 (\Lambda^*_+)$, will be called $\Pi^{(1)}$-operators of order zero. 

The next lemma gives a detailed analysis of the nested commutators $\text{ad}^{(n)}_{B(\eta)} (b_p)$ and $\text{ad}^{(n)}_{B(\eta)} (b^*_p)$ for $n \in \bN$; the proof can be found in \cite[Lemma 2.5]{BBCS1}(it is a translation to momentum space of \cite[Lemma 3.2]{BS}). 
\begin{lemma}\label{lm:indu}
Let $\eta \in \ell^2 (\Lambda^*_+)$ be such that $\eta_p = \eta_{-p}$ for all $p \in \ell^2 (\Lambda^*)$. To simplify the notation,  assume also $\eta$ to be real-valued (as it will be in applications). Let $B(\eta)$ be defined as in (\ref{eq:defB}), $n \in \bN$ and $p \in \Lambda^*$. Then the nested commutator $\text{ad}^{(n)}_{B(\eta)} (b_p)$ can be written as the sum of exactly $2^n n!$ terms, with the following properties. 
\begin{itemize}
\item[i)] Possibly up to a sign, each term has the form
\begin{equation}\label{eq:Lambdas} \Lambda_1 \Lambda_2 \dots \Lambda_i \, N^{-k} \Pi^{(1)}_{\sharp,\flat} (\eta^{j_1}, \dots , \eta^{j_k} ; \eta^{s}_p \ph_{\alpha p}) 
\end{equation}
for some $i,k,s \in \bN$, $j_1, \dots ,j_k \in \bN \backslash \{ 0 \}$, $\sharp \in \{ \cdot, * \}^k$, $ \flat \in \{ \cdot, * \}^{k+1}$ and $\alpha \in \{ \pm 1 \}$ chosen so that $\alpha = 1$ if $\flat_k = \cdot$ and $\alpha = -1$ if $\flat_k = *$ (recall here that $\ph_p (x) = e^{-ip \cdot x}$). In (\ref{eq:Lambdas}), each operator $\Lambda_w : \cF^{\leq N} \to \cF^{\leq N}$, $w=1, \dots , i$, is either a factor $(N-\cN_+ )/N$, a factor $(N-(\cN_+ -1))/N$ or an operator of the form
\begin{equation}\label{eq:Pi2-ind} N^{-h} \Pi^{(2)}_{\sharp',\flat'} (\eta^{z_1}, \eta^{z_2},\dots , \eta^{z_h}) \end{equation}
for some $h, z_1, \dots , z_h \in \bN \backslash \{ 0 \}$, $\sharp,\flat  \in \{ \cdot , *\}^h$. 
\item[ii)] If a term of the form (\ref{eq:Lambdas}) contains $m \in \bN$ factors $(N-\cN_+ )/N$ or $(N-(\cN_+ -1))/N$ and $j \in \bN$ factors of the form (\ref{eq:Pi2-ind}) with $\Pi^{(2)}$-operators of order $h_1, \dots , h_j \in \bN \backslash \{ 0 \}$, then 
we have
\begin{equation*}
 m + (h_1 + 1)+ \dots + (h_j+1) + (k+1) = n+1 \end{equation*}
\item[iii)] If a term of the form (\ref{eq:Lambdas}) contains (considering all $\Lambda$-operators and the $\Pi^{(1)}$-operator) the arguments $\eta^{i_1}, \dots , \eta^{i_m}$ and the factor $\eta^{s}_p$ for some $m, s \in \bN$, and $i_1, \dots , i_m \in \bN \backslash \{ 0 \}$, then \[ i_1 + \dots + i_m + s = n .\]
\item[iv)] There is exactly one term having of the form (\ref{eq:Lambdas}) with $k=0$ and such that all $\Lambda$-operators are factors of $(N-\cN_+ )/N$ or of $(N+1-\cN_+ )/N$. It is given by 
\begin{equation*}
\left(\frac{N-\cN_+ }{N} \right)^{n/2} \left(\frac{N+1-\cN_+ }{N} \right)^{n/2} \eta^{n}_p b_p 
\end{equation*}
if $n$ is even, and by 
\begin{equation*} 
- \left(\frac{N-\cN_+ }{N} \right)^{(n+1)/2} \left(\frac{N+1-\cN_+ }{N} \right)^{(n-1)/2} \eta^{n}_p b^*_{-p}  \end{equation*}
if $n$ is odd.
\item[v)] If the $\Pi^{(1)}$-operator in (\ref{eq:Lambdas}) is of order $k \in \bN \backslash \{ 0 \}$, it has either the form  
\[ \sum_{p_1, \dots , p_k}  b^{\flat_0}_{\alpha_0 p_1} \prod_{i=1}^{k-1} a^{\sharp_i}_{\beta_i p_{i}} a^{\flat_i}_{\alpha_i p_{i+1}}  a^*_{-p_k} \eta^{2r}_p  a_p \prod_{i=1}^k \eta^{j_i}_{p_i}  \]
or the form 
\[\sum_{p_1, \dots , p_k} b^{\flat_0}_{\alpha_0 p_1} \prod_{i=1}^{k-1} a^{\sharp_i}_{\beta_i p_{i}} a^{\flat_i}_{\alpha_i p_{i+1}}  a_{p_k} \eta^{2r+1}_p a^*_p \prod_{i=1}^k \eta^{j_i}_{p_i}  \]
for some $r \in \bN$, $j_1, \dots , j_k \in \bN \backslash \{ 0 \}$. If it is of order $k=0$, then it is either given by $\eta^{2r}_p b_p$ or by $\eta^{2r+1}_p b_{-p}^*$, for some $r \in \bN$. 
\item[vi)] For every non-normally ordered term of the form 
\[ \begin{split} &\sum_{q \in \Lambda^*} \eta^{i}_q a_q a_q^* , \quad \sum_{q \in \Lambda^*} \, \eta^{i}_q b_q a_q^* \\  &\sum_{q \in \Lambda^*} \, \eta^{i}_q a_q b_q^*, \quad \text{or } \quad \sum_{q \in \Lambda^*} \, \eta^{i}_q b_q b_q^*  \end{split} \]
appearing either in the $\Lambda$-operators or in the $\Pi^{(1)}$-operator in (\ref{eq:Lambdas}), we have $i \geq 2$.
\end{itemize}
\end{lemma}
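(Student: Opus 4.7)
The plan is to proceed by induction on $n$, exploiting the algebraic structure of $B(\eta)$ together with the canonical commutation relations (\ref{eq:comm-bp}). The base case $n=0$ is immediate: $\text{ad}^{(0)}_{B(\eta)}(b_p) = b_p$ is a single term of the form (\ref{eq:Lambdas}) with no $\Lambda$-factors and a $\Pi^{(1)}$-operator of order $0$ equal to $\eta_p^0 b_p$, matching the even-$n$ case of (iv); the remaining conditions hold vacuously.

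For the inductive step, I would write $B(\eta) = \frac{1}{2}\sum_q \eta_q(b_q^* b_{-q}^* - b_q b_{-q})$ and expand $[B(\eta), T]$ for each of the $2^n n!$ terms $T$ supplied by the inductive hypothesis, distributing the commutator across the factors $\Lambda_1,\dots,\Lambda_i$ and the $\Pi^{(1)}$-operator via the Leibniz rule. Each individual commutator is then evaluated through repeated application of (\ref{eq:comm-bp}) and of the analogous relations for $[a_p^\sharp, a_q^\flat]$. The commutator of $b_q^{(*)} b_{-q}^{(*)}$ with a single $b$ or $a$ inside $T$ generates two types of contribution: the Kronecker-$\delta$ part of (\ref{eq:comm-bp}) consumes the $q$-summation, binding the new $\eta_q$ to an existing momentum and therefore raising either the exponent $s$ on $\eta_p$ or one of the $j_\ell$ by one unit, while producing either a new $\Lambda$-factor $(N-\cN_+)/N$ or a structural change of an existing $\Pi$-operator; the $-N^{-1} a_q^* a_p$ correction, instead, keeps the $q$-summation alive, so that the surviving $a^*$ or $a$ enlarges an existing $\Pi^{(2)}$ (or the $\Pi^{(1)}$) by one order, with the fresh $\eta_q^1$ now attached to the new momentum. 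In every case the outcome is a term of the form (\ref{eq:Lambdas}) at level $n+1$ with the invariant in (ii) and the $\eta$-count in (iii) both incremented by exactly one; the constraints (v)--(vi) propagate, since any newly produced non-normally-ordered pair arises from a contraction with an already normally-ordered partner, and therefore acquires an $\eta$-exponent of at least $2$.

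The counting $2^{n+1}(n+1)!$ arises as $2^n n! \times 2(n+1)$: the factor of $2$ comes from the two summands in $B(\eta)$, and the factor $n+1$ counts the sites in $T$ with which $b_q^{(*)} b_{-q}^{(*)}$ can be contracted, this being precisely the value of the invariant $m + \sum_j (h_j+1) + (k+1)$ at level $n$. The distinguished term in (iv) is tracked explicitly: always contracting $b_q^* b_{-q}^*$ (if $n$ is even, so that $\Pi^{(1)} = b_p$) or $b_q b_{-q}$ (if $n$ is odd, so that $\Pi^{(1)} = b_{-p}^*$) with the single $\Pi^{(1)}$-factor produces the successive factors $(N-\cN_+)/N$ and $(N+1-\cN_+)/N$ dictated by (\ref{eq:comm-bp}) and flips the parity of the $\Pi^{(1)}$-part between $b_p$ and $b_{-p}^*$, matching the formulas of (iv) for $n+1$. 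The main technical obstacle is the systematic case analysis needed to verify that each of the possible contractions — depending on whether it hits a $b^\sharp$ or an $a^\sharp$, on which half of $B(\eta)$ it originates from, and on the normal-ordering status of the resulting pair — fits into precisely one of the allowed structural modifications, with the correct $\alpha_\ell, \beta_\ell$ sign assignments and the correct attribution of $\eta$-powers; this is notationally heavy but not conceptually deep, and is essentially the momentum-space transcription of the nested commutator computation carried out in \cite[Lemma 3.2]{BS} and \cite[Lemma 2.5]{BBCS1}.
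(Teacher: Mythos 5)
The paper does not prove this lemma in place; it cites \cite[Lemma 2.5]{BBCS1} and \cite[Lemma 3.2]{BS}, which establish it by exactly the inductive Leibniz-plus-case-analysis procedure you outline, so your sketch captures the correct route. One imprecision worth flagging in your counting heuristic: the claim that ``the factor of $2$ comes from the two summands in $B(\eta)$'' is not uniformly correct. Already in the step $n=0 \to n=1$ only $b_q^* b_{-q}^*$ has a non-vanishing commutator with $b_p$ (the $b_q b_{-q}$ piece commutes with $b_p$), and the two new terms instead come from the two pieces $(1-\cN_+/N)\delta_{p,q}$ and $-N^{-1}a_q^* a_p$ of the modified commutation relation (\ref{eq:comm-bp}). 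Moreover, the identification of the ``number of contraction sites'' with the invariant $m+\sum_j(h_j+1)+(k+1)=n+1$ is a useful mnemonic but is not self-evident: the $\tfrac12$ prefactor in $B(\eta)$ combines with $\eta_q=\eta_{-q}$ to double-count each momentum pair, and normal-ordering the non-commuting factors reshuffles $\cO(N^{-1})$ contributions between what naively look like separate terms (for instance the two half-pieces $(N-\cN_+)/N$ and $(N+1-\cN_+)/N$ appearing symmetrically on each side of $b_{-p}^*$ must be recombined with a contraction coming from the $N^{-1}a_q^*a_p$ correction to produce the single coefficient $(N-\cN_+)/N$ claimed in (iv) for $n=1$). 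This careful bookkeeping is precisely what makes the count exactly $2^n n!$ and makes (ii)--(vi) hold with the stated exponents, and it is the content of the cited lemmas rather than something one can dispatch by the site-counting heuristic alone.
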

With Lemma \ref{lm:indu}, it follows from (\ref{eq:BCH}) that, if $\| \eta \|$ is sufficiently small, 
\begin{equation}\label{eq:conv-serie}
\begin{split} e^{-B(\eta)} b_p e^{B (\eta)} &= \sum_{n=0}^\infty \frac{(-1)^n}{n!} \text{ad}_{B(\eta)}^{(n)} (b_p) \\
e^{-B(\eta)} b^*_p e^{B (\eta)} &= \sum_{n=0}^\infty \frac{(-1)^n}{n!} \text{ad}_{B(\eta)}^{(n)} (b^*_p) \end{split} \end{equation}
where the series converge absolutely (the proof is a translation to momentum space of \cite[Lemma 3.3]{BS}). 

In our analysis, we will use the fact that, on states with $\cN_+ \ll N$, the action of the generalized Bogoliubov transformation (\ref{eq:eBeta}) can be approximated by the action of the standard Bogoliubov transformation (\ref{eq:wteBeta}), which is explicitly given by (\ref{eq:act-Bog}) (from the definition (\ref{eq:bp-de}), we expect that $b_p \simeq a_p$ and $b_p^* \simeq a_p^*$ on states with $\cN_+ \ll N$). To make this statement more precise we define, under the assumption that $\| \eta \|$ is small enough,  the remainder operators 
\begin{equation} \label{eq:defD}
d_q =\sum_{m\geq 0}\frac{1}{m!} \Big [\text{ad}_{-B(\eta)}^{(m)}(b_q) - \eta_q^m b_{\alpha_m q}^{\sharp_m }  \Big],\hspace{0.5cm} d^*_q =\sum_{m\geq 0}\frac{1}{m!} \Big [\text{ad}_{-B(\eta)}^{(m)}(b^*_q) - \eta_q^m b_{\alpha_m q}^{\sharp_{m+1}}  \Big]\end{equation}
where $q \in \L^*_+$, $ (\sharp_m, \alpha_m) = (\cdot, +1)$ if $m$ is even and $(\sharp_m, \alpha_m) = (*, -1)$ if $m$ is odd. It follows then from (\ref{eq:conv-serie}) that 
\begin{equation}\label{eq:ebe} e^{-B(\eta)} b_q e^{B(\eta)} = \gamma_q  b_q +\s_q b^*_{-q} + d_q, \hspace{1cm} e^{-B(\eta)} b^*_q e^{B(\eta)} = \g_q b^*_q +\s_q b_{-q} + d^*_q  \end{equation} 
where we introduced the notation $\g_q = \cosh (\eta_q)$ and $\s_q = \sinh (\eta_q)$. It will also be useful to introduce remainder operators in position space. For $x \in \Lambda$, we define the operator valued distributions $\check{d}_x, \check{d}^*_x$ through
\[ e^{-B(\eta)} \check{b}_x e^{B(\eta)} = b ( \check{\g}_x)  +  b^* (\check{\s}_x) + \check{d}_x, \qquad 
e^{-B(\eta)} \check{b}^*_x e^{B(\eta)} = b^* ( \check{\gamma}_x)  +  b (\check{\s}_x) + \check{d}^*_x
\]
where $\check{\gamma}_x (y) = \sum_{q \in \Lambda^*} \cosh (\eta_q) e^{iq \cdot (x-y)}$ and $\check{\s}_x (y) = \sum_{q \in \Lambda^*} \sinh (\eta_q) e^{iq \cdot (x-y)}$.  

The next lemma confirms the intuition that remainder operators are small, on states with $\cN_+ \ll N$. 
This Lemma is the result that will be used in the rest of the paper (in particular in Section \ref{sec:GN}) to control the action of generalized Bogoliubov transformations. 
\begin{lemma}\label{lm:dp}
Let $\eta \in \ell^2 (\Lambda_+^*)$, $n \in \bZ$. Let the remainder operators be defined as in (\ref{eq:defD}). Then, if $\| \eta \|$ is small enough, there exists $C > 0$ such that  
\begin{equation}\label{eq:d-bds}
\begin{split} 
\| (\cN_+ + 1)^{n/2} d_p \xi \| &\leq \frac{C}{N} \left[ |\eta_p| \| (\cN_+ + 1)^{(n+3)/2} \xi \| + \| b_p (\cN_+ + 1)^{(n+2)/2} \xi \| \right], \\ 
\| (\cN_+ + 1)^{n/2} d_p^* \xi \| &\leq \frac{C}{N} \| (\cN_+ +1)^{(n+3)/2} \xi \| \end{split}  \end{equation}
for all $p \in \L^*_+$ and, in position space, such that 
\begin{equation}\label{eq:dxy-bds} \begin{split}  \| (\cN_+ + 1)^{n/2} \check{d}_x \xi \| \leq \; &\frac{C}{N} \left[ \| (\cN_+ + 1)^{(n+3)/2} \xi \| + \| \check{a}_x (\cN_+ + 1)^{(n+2)/2} \xi \| \right] \\
 \| (\cN_+ + 1)^{n/2} \check{a}_y \check{d}_x \xi \| \leq \; &\frac{C}{N} \left[ \| \check{a}_x (\cN_+ + 1)^{(n+1)/2} \xi \| + (1 + |\check{\eta} (x-y)|) \| (\cN_+ + 1)^{(n+2)/2} \xi \| \right. \\ &\hspace{1cm} \left. + \| \check{a}_y (\cN_+ + 1)^{(n+3)/2} \xi \| + \| \check{a}_x \check{a}_y (\cN_+ + 1)^{(n+2)/2} \xi \| 
 \right] \\ 
\| (\cN_+ + 1)^{n/2} \check{d}_x \check{d}_y \xi \| \leq \; &\frac{C}{N^2} \left[ \| (\cN_++ 1)^{(n+6)/2} \xi \| + |\check{\eta} (x-y)| \| (\cN_+ + 1)^{(n+4)/2}  \xi \| \right. \\ &\hspace{1cm} + \| \check{a}_x (\cN_+ + 1)^{(n+5)/2} \xi \| + \| \check{a}_y (\cN_+ + 1)^{(n+5)/2} \xi \| \\ &\hspace{1cm} \left. + \| \check{a}_x \check{a}_y (\cN_+ +  1)^{(n+4)/2} \xi \| \right] \end{split} \end{equation}
for all $x, y \in \Lambda$, in the sense of distributions.
\end{lemma}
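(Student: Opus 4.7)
The plan is to substitute the absolutely convergent series (\ref{eq:conv-serie}) into the definition (\ref{eq:defD}) of $d_p$ and to use Lemma \ref{lm:indu} to verify that every surviving term carries at least one explicit factor of $1/N$. Since $B(\eta)$ is linear in $\eta$, one has $\text{ad}^{(m)}_{-B(\eta)} = (-1)^m \text{ad}^{(m)}_{B(\eta)}$, so parts (i)--(iv) of Lemma \ref{lm:indu} decompose each $\text{ad}^{(m)}_{-B(\eta)}(b_p)$ into $2^m m!$ structural terms. Exactly one of them is the ``principal'' term from part (iv), which after the sign change coincides with $\eta_p^m b^{\sharp_m}_{\alpha_m p}$ \emph{except} for a prefactor of the form $(N-\cN_+)^a (N+1-\cN_+)^b/N^m$, with $a+b=m$. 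The subtraction in (\ref{eq:defD}) removes the version of this principal term with the prefactor replaced by $1$; the leftover difference is therefore $[(N-\cN_+)^a (N+1-\cN_+)^b/N^m - 1]\,\eta_p^m b^{\sharp_m}_{\alpha_m p}$, and the bracket is bounded in norm by $Cm\,\cN_+/N$ on $\cF^{\leq N}_+$. This single contribution produces the $|\eta_p| \|(\cN_+ + 1)^{(n+3)/2}\xi\|$-term (from odd $m$, where $b^{\sharp_m}_{\alpha_m p} = b_{-p}^*$) and the $\|b_p (\cN_+ + 1)^{(n+2)/2}\xi\|$-term (from even $m$, where $b^{\sharp_m}_{\alpha_m p} = b_p$) of (\ref{eq:d-bds}), after summation over $m \geq 0$.

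Every one of the remaining $2^m m! - 1$ structural terms factorizes, by part (i) of Lemma \ref{lm:indu}, through at least one explicit $N^{-1}$: either the internal $\Pi^{(1)}$-operator has order $k \geq 1$ contributing $N^{-k}$, or at least one $\Lambda_w$ is a $\Pi^{(2)}$-operator of order $h \geq 1$ contributing $N^{-h}$. A $\Pi^{(2)}$-operator of order $h$, acting on $\cF^{\leq N}_+$, is bounded by $C(C\|\eta\|)^{z_1+\cdots+z_h}(\cN_+ + 1)$ in operator norm, using (\ref{eq:abd}) for the outer $b$-operators, part (vi) of Lemma \ref{lm:indu} to guarantee $\|\eta\|$-summability of the inner non-normally-ordered sums, and $(\cN_+ + 1)/N \leq 1$ to absorb excess $N^{-1}$-factors if needed. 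The $\Pi^{(1)}$-operator of order $k$, whose precise form is given by part (v), acts on $(\cN_+ + 1)^{n/2}\xi$ with norm bounded by $C(C\|\eta\|)^{j_1+\cdots+j_k}$ times either $|\eta_p|^{s}\|(\cN_+ + 1)^{(n+3)/2}\xi\|$ or $|\eta_p|^{s} \|b_p(\cN_+ + 1)^{(n+2)/2}\xi\|$, according to the parity of the power $s$ of $\eta_p$. Putting these together and using part (iii) to collect the total power of $\eta$ into $m$, one bounds the full contribution of $\text{ad}^{(m)}_{-B(\eta)}(b_p)/m!$ by $(C/N)(C\|\eta\|)^m$ times the bracket on the right-hand side of (\ref{eq:d-bds}); the combinatorial factor $2^m$ is absorbed into the geometric series. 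Summing over $m \geq 0$ converges once $\|\eta\|$ is small, giving the first estimate. The bound on $d_p^*$ follows by applying the same argument to the structural analogue of Lemma \ref{lm:indu} for $b_p^*$, which is obtained by taking adjoints.

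The position-space estimates in (\ref{eq:dxy-bds}) are proved via the same expansion, now applied to the integral representations (\ref{eq:Pi1-pos}), (\ref{eq:Pi2-pos}). The only genuinely new feature is the factor $|\check\eta(x-y)|$ in the bounds on $\check a_y \check d_x$ and $\check d_x \check d_y$: it arises when, in order to bring an additional $\check a_y$ (respectively $\check a_x$) to the right of the $\Pi$-operators so that it may act on $\xi$, one must commute it past a $\check a^*$ inside a $\Pi$-structure, producing a $\delta$-function which, upon contraction with the kernel $\check\eta^{j}$ already present there, yields exactly $\check\eta(x-y)$. All other contributions are controlled pointwise in $x,y$ by the same arguments as in momentum space. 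The main technical obstacle throughout is the combinatorial bookkeeping of the $2^m m!$ structural terms produced by Lemma \ref{lm:indu} and of the way in which the factors $N^{-1}$, $\|\eta\|$ and $\cN_+ + 1$ combine in each; once this accounting is set up, the resummation over $m$ is a standard geometric series which converges for $\|\eta\|$ sufficiently small.
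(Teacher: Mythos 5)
Your proposal is correct and follows essentially the same route as the paper's proof: expand $d_p$ via the absolutely convergent series (\ref{eq:conv-serie}) and the structural decomposition of Lemma \ref{lm:indu}, bound the residual of the principal term from part (iv) by a factor $O(\cN_+/N)$, note that each of the remaining $2^m m!-1$ terms carries at least one explicit $N^{-1}$ from a $\Pi^{(1)}$- or $\Pi^{(2)}$-factor, and resum the geometric series for $\|\eta\|$ small. The paper organizes the bookkeeping slightly differently --- it always extracts a factor $|\eta_p|$ from the principal residual and attributes the $\|b_p(\cN_++1)^{(n+2)/2}\xi\|$-term to $\Pi^{(1)}$-operators with $\ell_1=0$, rather than splitting by the parity of $m$ --- but this is a cosmetic difference and both conventions yield (\ref{eq:d-bds}).
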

\begin{proof}
To prove the first bound in (\ref{eq:d-bds}), we notice that, from (\ref{eq:defD}) and from the triangle inequality (for simplicity, we focus on $n=0$, powers of $\cN_+$ can be easily commuted through the operators $d_p$), 
\begin{equation} \label{eq:d-sum}\| d_q \xi \| \leq \sum_{m \geq 0} \frac{1}{m!} \left\| \left[ \text{ad}^{(m)}_{-B(\eta)} (b_q) - \eta_q^m b^{\sharp_m}_{\alpha_m p} \right] \xi \right\| \end{equation} 
From Lemma \ref{lm:indu}, we can bound the norm $\| [ \text{ad}^{(m)}_{-B(\eta)} (b_q) - \eta_q^m b^{\sharp_m}_{\alpha_m p} ] \xi \|$ by the sum of one term of the form
\begin{equation}\label{eq:N-term} 
\left\| \left[ \left( \frac{N- \cN_+}{N} \right)^{\frac{m+ (1-\alpha_m)/2}{2}} \left( \frac{N+1-\cN_+}{N} \right)^{\frac{m-(1-\alpha_m)/2}{2}} - 1 \right] \eta_p^m b^{\sharp_m}_{\alpha_m p} \xi \right\| \end{equation}
and of exactly $2^m m! - 1$ terms of the form
\begin{equation}\label{eq:L-term} \left\| \Lambda_1 \dots \Lambda_{i_1} N^{-k_1} \Pi^{(1)}_{\sharp,\flat} (\eta^{j_1} , \dots , \eta^{j_{k_1}} ; \eta^{\ell_1}_p \ph_{\alpha_{\ell_1} p}) \xi \right\| \end{equation}
where $i_1, k_1, \ell_1 \in \bN$, $j_1, \dots , j_{k_1} \in \bN \backslash \{ 0 \}$ and where each $\Lambda_r$-operator is either a factor $(N-\cN_+ )/N$, a factor $(N+1-\cN_+ )/N$ or a $\Pi^{(2)}$-operator of the form 
\begin{equation}\label{eq:Pi2-ex}
N^{-h} \Pi^{(2)}_{\underline{\sharp}, \underline{\flat}} (\eta^{z_1} , \dots, \eta^{z_h}) 
\end{equation}
with $h, z_1, \dots , z_h \in \bN \backslash \{ 0 \}$. Furthermore, since we are considering the term (\ref{eq:N-term}) separately, each term of the form (\ref{eq:L-term}) must have either $k_1 > 0$ or it must contain at least one $\Lambda$-operator having the form (\ref{eq:Pi2-ex}). Since (\ref{eq:N-term}) vanishes for $m=0$, it is easy to bound
\[ \begin{split} &\left\| \left[ \left( \frac{N- \cN_+}{N} \right)^{\frac{m+ (1-\alpha_m)/2}{2}} \left( \frac{N+1-\cN_+}{N} \right)^{\frac{m-(1-\alpha_m)/2}{2}} - 1 \right] \eta_p^m b^{\sharp_m}_{\alpha_m p} \xi \right\| \\ & \hspace{8cm} \leq C^m \| \eta \|^{m-1} N^{-1} |\eta_p| \| (\cN_+ + 1)^{3/2} \xi \| \end{split} \]
On the other hand, distinguishing the cases $\ell_1 = 0$ and $\ell_1 > 0$, we can bound
\[  \begin{split}  &\left\| \Lambda_1 \dots \Lambda_{i_1} N^{-k_1} \Pi^{(1)}_{\sharp,\flat} (\eta^{j_1} , \dots , \eta^{j_{k_1}} ; \eta^{\ell_1}_p \ph_{\alpha_{\ell_1} p}) \xi \right\| \\ & \hspace{4cm} \leq C^m \| \eta \|^{m-1} N^{-1} \left[  |\eta_p| \| (\cN_+ + 1)^{3/2} \xi \| + \| b_p (\cN_+ + 1) \xi \| \right] \end{split} \]
Inserting the last two bounds in (\ref{eq:d-sum}) and summing over $m$ under the assumption that $\| \eta \| $ is small enough, we arrive at the first estimate (\ref{eq:d-bds}). The second estimate in (\ref{eq:d-bds}) can be proven similarly (notice that, when dealing with the second estimate in (\ref{eq:d-bds}), contributions of the form (\ref{eq:L-term}) with $\ell_1 = 0$, can only be bounded by $\| b_p^* (\cN_+ +1) \xi \| \leq \| (\cN_+ + 1)^{3/2} \xi \|$). Also the bounds in (\ref{eq:dxy-bds}) can be 
shown analogously, using \cite[Lemma 7.2]{BBCS2}. 
\end{proof}

\section{Excitation Hamiltonians} 
\label{sec:ex}

Recall the definition (\ref{eq:UN-intro}) of the unitary operator $U_N: L^2_s (\Lambda^N) \to \cF_+^{\leq N}$, first introduced in \cite{LNSS}.  In terms of creation and annihilation operators, $U_N$ is given by 
\[ U_N \, \psi_N = \bigoplus_{n=0}^N  (1-|\ph_0 \rangle \langle \ph_0|)^{\otimes n} \frac{a(\ph_0)^{N-n}}{\sqrt{(N-n)!}} \, \psi_N \]
for all $\psi_N \in L^2_s (\Lambda^N)$ (on the r.h.s. we identify $\psi_N \in L^2_s (\Lambda^N)$ with $\{ 0, \dots , 0, \psi_N, 0, \dots \} \in \cF$). The map $U_N^* : \cF_{+}^{\leq N} \to L^2_s (\Lambda^N)$ is given, on the other hand, by 
\[ U_N^* \, \{ \alpha^{(0)}, \dots , \alpha^{(N)} \} = \sum_{n=0}^N \frac{a^* (\ph_0)^{N-n}}{\sqrt{(N-n)!}} \, \alpha^{(n)} \]

It is instructive to compute the action of $U_N$ on products of a creation and an annihilation operator (products of the form $a_p^* a_q$ can be thought of as operators mapping $L^2_s (\Lambda^N)$ to itself). For any $p,q \in \Lambda^*_+ = 2\pi \bZ^3 \backslash \{ 0 \}$, we find (see \cite{LNSS}):
\begin{equation}\label{eq:U-rules}
\begin{split} 
U_N \, a^*_0 a_0 \, U_N^* &= N- \cN_+  \\
U_N \, a^*_p a_0 \, U_N^* &= a^*_p \sqrt{N-\cN_+ } \\
U_N \, a^*_0 a_p \, U_N^* &= \sqrt{N-\cN_+ } \, a_p \\
U_N \, a^*_p a_q \, U_N^* &= a^*_p a_q 
\end{split} \end{equation}
Writing (\ref{eq:Ham0}) in momentum space and using the formalism of second quantization, we find 
\begin{equation}\label{eq:Hmom} H_N = \sum_{p \in \Lambda^*} p^2 a_p^* a_p + \frac{1}{2N} \sum_{p,q,r \in \Lambda^*} \widehat{V} (r/N) a_{p+r}^* a_q^* a_{p} a_{q+r} 
\end{equation}
where 
\[ \widehat{V} (k) = \int_{\bR^3} V (x) e^{-i k \cdot x} dx \] 
is the Fourier transform of $V$, defined for all $k \in \bR^3$. With (\ref{eq:U-rules}), we can compute the excitation Hamiltonian $\cL_N = U_N H_N U_N^* : \cF_+^{\leq N} \to \cF_+^{\leq N}$. We obtain 
\begin{equation}\label{eq:cLN} \cL_N =  \cL^{(0)}_{N} + \cL^{(2)}_{N} + \cL^{(3)}_{N} + \cL^{(4)}_{N} \end{equation}
with
\begin{equation}\label{eq:cLNj} \begin{split} 
\cL_{N}^{(0)} =\;& \frac{(N-1)}{2N} \widehat{V} (0) (N-\cN_+ ) + \frac{\widehat{V} (0)}{2N} \cN_+  (N-\cN_+ ) \\
\cL^{(2)}_{N} =\; &\sum_{p \in \Lambda^*_+} p^2 a_p^* a_p + \sum_{p \in \Lambda_+^*} \widehat{V} (p/N) \left[ b_p^* b_p - \frac{1}{N} a_p^* a_p \right] \\ &+ \frac{1}{2} \sum_{p \in \Lambda^*_+} \widehat{V} (p/N) \left[ b_p^* b_{-p}^* + b_p b_{-p} \right] \\
\cL^{(3)}_{N} =\; &\frac{1}{\sqrt{N}} \sum_{p,q \in \Lambda_+^* : p+q \not = 0} \widehat{V} (p/N) \left[ b^*_{p+q} a^*_{-p} a_q  + a_q^* a_{-p} b_{p+q} \right] \\
\cL^{(4)}_{N} =\; & \frac{1}{2N} \sum_{\substack{p,q \in \Lambda_+^*, r \in \Lambda^*: \\ r \not = -p,-q}} \widehat{V} (r/N) a^*_{p+r} a^*_q a_p a_{q+r} 
\end{split} \end{equation}

Conjugation with $U_N$ extracts, from the original quartic interaction, some constant and quadratic contributions, collected in $\cL^{(0)}_N$ and $\cL^{(2)}_N$. In the Gross-Pitevskii regime, however, this is not enough; there are still important (order $N$) contributions to the ground state energy and to the energy of low-lying excitations that are hidden in the cubic and quartic terms. In other words, in contrast with the mean-field regime, here we cannot expect $\cL^{(3)}_N$ and $\cL^{(4)}_N$ to be small. To extract the relevant contributions from $\cL_N^{(3)}$ and $\cL_N^{(4)}$, we are going to conjugate $\cL_N$ with a generalized Bogoliubov transformation of the form (\ref{eq:eBeta}). 

To choose the function $\eta \in \ell^2 (\Lambda^*_+)$ entering the generalized Bogoliubov transformation (\ref{eq:eBeta}), we consider the ground state solution of the Neumann problem 
\begin{equation}\label{eq:scatl} \left[ -\Delta + \frac{1}{2} V \right] f_{\ell} = \lambda_{\ell} f_\ell \end{equation}
on the  ball $|x| \leq N\ell$ (we omit the $N$-dependence in the notation for $f_\ell$ and for $\lambda_\ell$; notice that $\lambda_\ell$ scales as $N^{-3}$), with the normalization $f_\ell (x) = 1$ if $|x| = N \ell$. It is also useful to define $w_\ell = 1-f_\ell$ (so that $w_\ell (x) = 0$ if $|x| > N \ell$). By scaling, we observe that $f_\ell (N.)$ satisfies the equation 
\[ \left[ -\Delta + \frac{N^2}{2} V (Nx) \right] f_\ell (Nx) = N^2 \lambda_\ell f_\ell (Nx) \]
on the ball $|x| \leq \ell$. We choose $0 < \ell < 1/2$, so that the ball of radius $\ell$ is contained in the box $\Lambda= [-1/2 ; 1/2]^3$. We extend then $f_\ell (N.)$ to $\Lambda$, by choosing $f_\ell (Nx) = 1$ for all $|x| > \ell$. Then  
\begin{equation}\label{eq:scatlN}
 \left[ -\Delta + \frac{N^2}{2} V (Nx) \right] f_\ell (Nx) = N^2 \lambda_\ell f_\ell (Nx) \chi_\ell (x) 
\end{equation}
where $\chi_\ell$ is the characteristic function of the ball of radius $\ell$. It follows that the functions $x \to f_\ell (Nx)$ and also $x \to w_\ell (Nx) = 1 - f_\ell (Nx)$ can be extended as periodic functions on the torus $\Lambda$. The Fourier coefficients of the function $x \to w_\ell (Nx)$ are given by 
\[  \int_{\Lambda} w_\ell (Nx) e^{-i p \cdot x} dx = \frac{1}{N^3} \widehat{w}_\ell (p/N) \]
where \[ \widehat{w}_\ell (p) = \int_{\bR^3} w_\ell (x) e^{-ip \cdot x} dx \] is the Fourier transform of the (compactly supported) function $w_\ell$. The Fourier coefficients of $x \to f_\ell (Nx)$ are then given by 
\begin{equation}\label{eq:fellN} \widehat{f}_{\ell,N} (p) := \int_\Lambda f_\ell (Nx) e^{-i p \cdot x} dx = \delta_{p,0} - \frac{1}{N^3} \widehat{w}_\ell (p/N) \end{equation}
for all $p \in \Lambda^*$. {F}rom (\ref{eq:scatlN}), we derive 
\begin{equation}\label{eq:wellp}
\begin{split}  - p^2 \widehat{w}_\ell (p/N) +  \frac{N^2}{2} \sum_{q \in \L^*} \widehat{V} ((p-q)/N) \widehat{f}_{\ell,N} (q) = N^5 \lambda_\ell \sum_{q \in \L^*} \widehat{\chi}_\ell (p-q) \widehat{f}_{\ell,N} (q) \end{split} \end{equation}

In the next lemma we collect some important properties of $w_\ell, f_\ell$. The proof of the lemma can be found in Appendix \ref{appx:sceq}. 
\begin{lemma} \label{3.0.sceqlemma}
Let $V \in L^3 (\bR^3)$ be non-negative, compactly supported and spherically symmetric. Fix $\ell> 0$ and let $f_\ell$ denote the solution of \eqref{eq:scatl}. 
\begin{enumerate}
\item [i)] We have 
\begin{equation}\label{eq:lambdaell} 
  \lambda_\ell = \frac{3\frak{a}_0 }{(\ell N)^3} \left(1 + \frac{9}{5}\frac{\frak{a}_0}{\ell N}+\mathcal{O} \big(\frak{a}_0^2  / (\ell N)^2\big) \right)
\end{equation}
\item[ii)] We have $0\leq f_\ell, w_\ell \leq 1$ and there exists a constant $C > 0$ such that 
\begin{equation} \label{eq:Vfa0} 
\left|  \int_{\bR^3}  V(x) f_\ell (x) dx - 8\pi \frak{a}_0 \Big(1 + \frac 3 2 \frac{\frak{a}_0}{ \ell N}  \Big)  \right| \leq \frac{C \frak{a}_0^3}{(\ell N)^2} \, . 
\end{equation}
for all $\ell \in (0;1/2)$, $N \in \bN$.
\item[iii)] There exists a constant $C>0$ such that 
	\begin{equation}\label{3.0.scbounds1} 
	w_\ell(x)\leq \frac{C}{|x|+1} \quad\text{ and }\quad |\nabla w_\ell(x)|\leq \frac{C }{x^2+1}. 
	\end{equation}
for all $x\in \bR^3$, $\ell \in (0;1/2)$ and $N \in \bN$ large enough. Moreover,
\be \label{eq:intw}
\Big| \frac{1}{(N \ell)^2} \int_{\bR^3} w_{\ell}(x) dx - \frac 2 5 \pi \frak{a}_0   \Big| \leq   \frac{C \mathfrak{a}_0^2}{N \ell}
\ee
for all $\ell \in (0;1/2)$ and $N \in \bN$ large enough.
\item[iv)] There exists a constant $C > 0$ such that 
\[ |\widehat{w}_\ell (p/N)| \leq \frac{CN^2}{p^2} \]
for all $p \in \Lambda^*_+$, $\ell \in (0;1/2)$ and $N \in \bN$ large enough. 
\end{enumerate}        
\end{lemma}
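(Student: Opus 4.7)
The plan is to exploit the fact that, since $V$ is compactly supported and $\lambda_\ell$ is small, $f_\ell$ is well-approximated by the zero-energy scattering solution $f_0(x)=1-\frak{a}_0/|x|$ of $(-\Delta+V/2)f_0=0$ on $\bR^3$, with a small correction due to the Neumann boundary condition at $|x|=N\ell$. The central tool is the integration identity obtained by integrating (\ref{eq:scatl}) over $B_{N\ell}$: since $f_\ell$ is the radial ground state of the Neumann problem, $\partial_\nu f_\ell$ vanishes on $\partial B_{N\ell}$ and the divergence theorem gives
\begin{equation*}
\tfrac{1}{2}\int_{B_{N\ell}} V(x)\, f_\ell(x)\, dx \; = \; \lambda_\ell \int_{B_{N\ell}} f_\ell(x)\, dx.
\end{equation*}
Combined with sufficiently sharp asymptotics for the two integrals, this identity yields (i).

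Next, I would solve the ODE in the region $r>R_0$ outside $\mathrm{supp}(V)$. There $u(r):=r f_\ell(r)$ satisfies $u''=-\lambda_\ell u$, so $f_\ell(r)=(C_1\sin(\sqrt{\lambda_\ell}r)+C_2\cos(\sqrt{\lambda_\ell}r))/r$; the two conditions $f_\ell(N\ell)=1$, $f_\ell'(N\ell)=0$ determine $C_1,C_2$, and Taylor expanding in $\sqrt{\lambda_\ell}\,N\ell=O(\sqrt{\frak{a}_0/(N\ell)})$ gives $C_1\sqrt{\lambda_\ell}=1+O(\frak{a}_0/(N\ell))$ and $C_2=-\frak{a}_0+O(\frak{a}_0^2/(N\ell))$. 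This produces the explicit formula
\begin{equation*}
w_\ell(r) \;=\; \frac{\frak{a}_0}{r}+\frac{\lambda_\ell r^2}{6}-\frac{\lambda_\ell(N\ell)^2}{2}+O\big(\lambda_\ell^2(N\ell)^4\big),\qquad R_0<r<N\ell,
\end{equation*}
from which (\ref{3.0.scbounds1}) is immediate in this region, the analogous bounds on the bounded set $\{|x|\le R_0\}$ following from elliptic regularity together with $0\le f_\ell\le 1$. The latter, and thus (ii), follows from the usual maximum-principle characterisation of the radial ground state together with the monotonicity of $f_\ell$ in $r$ outside $\mathrm{supp}(V)$, which is visible from the closed-form expression above.

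Integrating the displayed formula for $w_\ell$ in spherical coordinates and using $\lambda_\ell(N\ell)^3=3\frak{a}_0+O(\frak{a}_0^2/(N\ell))$, the three leading terms collapse to $\tfrac{2\pi\frak{a}_0}{5}(N\ell)^2$, which is (\ref{eq:intw}). For (\ref{eq:Vfa0}), I would compare $f_\ell$ with the renormalised scattering solution $f_0/f_0(N\ell)$ (which agrees with $f_\ell$ at the boundary): on $\mathrm{supp}(V)$ the difference is $O(\lambda_\ell)$, so $\int V f_\ell = (\int V f_0)/f_0(N\ell)+O(\lambda_\ell\|V\|_1)=8\pi\frak{a}_0(1+\tfrac{3\frak{a}_0}{2N\ell})+O(\frak{a}_0^3/(N\ell)^2)$, where I used $1/f_0(N\ell)=1+\frak{a}_0/(N\ell)+\dots$ together with a $\frac12 \frak{a}_0/(N\ell)$ contribution from the eigenvalue correction. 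Plugging both estimates into the integration identity and expanding the denominator produces $\tfrac{3}{2}+\tfrac{3}{10}=\tfrac{9}{5}$ as the sub-leading coefficient in (\ref{eq:lambdaell}). Finally, for (iv), the zero-extension of $w_\ell$ to $\bR^3$ lies in $H^1(\bR^3)$ (since both $w_\ell$ and $\partial_\nu w_\ell$ vanish on $\partial B_{N\ell}$) and satisfies $-\Delta w_\ell = \tfrac 12 V f_\ell - \lambda_\ell f_\ell \chi_{B_{N\ell}}$ distributionally on $\bR^3$; taking the Euclidean Fourier transform and using $\|Vf_\ell\|_1\le \|V\|_1$ together with $\lambda_\ell\int_{B_{N\ell}}|f_\ell|=O(1)$ yields $|\widehat w_\ell(p)|\le C/p^2$, hence $|\widehat w_\ell(p/N)|\le CN^2/p^2$.

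The main obstacle is the precise bookkeeping of error terms in the simultaneous expansion in $\frak{a}_0/(N\ell)$: the coefficient $\tfrac 9 5$ in (\ref{eq:lambdaell}) appears only by combining the $+\tfrac{3\frak{a}_0}{2N\ell}$ correction from renormalising $f_0\mapsto f_0/f_0(N\ell)$ in $\int V f_\ell$ with the $+\tfrac{3\frak{a}_0}{10N\ell}$ correction coming from $\int w_\ell$, and each of these requires retaining the $\lambda_\ell$-dependent terms in the radial expansion of $w_\ell$. Dropping either contribution, or conflating the boundary normalisation with the leading behaviour of $f_0$, produces a wrong sub-leading coefficient, so the calculation has to be done with uniform control of all $O(\frak{a}_0^2/(N\ell)^2)$ remainders.
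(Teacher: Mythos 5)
Your plan diverges from the paper's in two notable ways, and has one genuine gap. For part (i) the paper proves the asymptotics of $\lambda_\ell$ by a self-contained variational argument: an explicit trial function $\psi(r) = \sin(k m(r))/r$ built from the zero-energy scattering solution $m(r) = rf(r)$ gives the upper bound, and the decomposition $\phi = g\psi$ together with a finite-volume Hardy inequality gives the matching lower bound. The identity $\frac12\int V f_\ell = \lambda_\ell\int f_\ell$ that you lead with is used in the paper only in the \emph{other} direction, to deduce (\ref{eq:Vfa0}) from (\ref{eq:lambdaell}). Your proposal to solve that identity for $\lambda_\ell$ is a valid alternative route, but be aware that the expansions of both $\int V f_\ell$ and $\int f_\ell$ themselves involve $\lambda_\ell$ (through your formula for $w_\ell$), so the argument really produces an algebraic equation for $\mu = \lambda_\ell(N\ell)^3/3$ that has to be solved perturbatively; the paper sidesteps this bootstrapping. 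Your argument for (iv) --- take the distributional Laplacian $-\Delta w_\ell = \tfrac12 V f_\ell - \lambda_\ell f_\ell \chi_{B_{N\ell}}$, with no boundary singularity since $w_\ell$ and $\partial_\nu w_\ell$ vanish at $|x| = N\ell$, Fourier transform, divide by $p^2$ --- is clean and self-contained, where the paper cites an earlier reference.

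The gap is in your estimate of $\int V f_\ell$. You assert that $f_\ell$ differs from $f_0/f_0(N\ell)$ on $\mathrm{supp}(V)$ by $O(\lambda_\ell)$. That is not correct: the discrepancy is of order $\frak{a}_0/(N\ell)$, which is much larger than $\lambda_\ell \sim \frak{a}_0/(N\ell)^3$. The $\lambda_\ell$-term in the ODE is not a small perturbation on $(R,N\ell)$; your own formula $w_\ell(r) = \frak{a}_0/r + \tfrac{\lambda_\ell r^2}6 - \tfrac{\lambda_\ell(N\ell)^2}2 + \dots$ shows the last two terms are of order one near $r = N\ell$. Propagating the Neumann data inward gives $m_\ell'(R) = 1 + \tfrac{\lambda_\ell(N\ell)^2}{2} + O(\lambda_\ell R^2) = 1 + \tfrac{3\frak{a}_0}{2 N\ell}+\dots$, so the constant relating $m_\ell$ to $m_0 = rf_0$ on $\mathrm{supp}(V)$ is $1 + \tfrac{3\frak{a}_0}{2N\ell}$, not $1/f_0(N\ell) = 1+\tfrac{\frak{a}_0}{N\ell}$. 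Propagating your stated error $O(\lambda_\ell\|V\|_1)$ honestly would give coefficient $1$ instead of $\tfrac32$ in (\ref{eq:Vfa0}); the ``extra $\tfrac12$ from the eigenvalue correction'' you add by hand is an implicit admission that the stated error term is false. The fix is exactly the paper's expansion of $m_\ell$: retain the $\tfrac{\lambda_\ell(N\ell)^2}{2}r$ term when matching at $r = R$. This matters downstream for (i) as well: the $\tfrac95$ in (\ref{eq:lambdaell}) is precisely $\tfrac32 + \tfrac{3}{10}$, the two sub-leading contributions from $\int V f_\ell$ and $\int f_\ell$, so a wrong $\tfrac32$ produces a wrong $\tfrac95$.
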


We define $\eta : \Lambda^* \to \bR$ through  
\begin{equation}\label{eq:ktdef} \eta_p = - \frac{1}{N^2}  \widehat{w}_\ell (p/N) 
\end{equation}
{F}rom (\ref{eq:wellp}), we find that these coefficients satisfy the relation  
\begin{equation}\label{eq:eta-scat0}
\begin{split} 
p^2 \eta_p + \frac{1}{2} \widehat{V} (p/N) & + \frac{1}{2N} \sum_{q \in \Lambda^*} \widehat{V} ((p-q)/N) \eta_q = N^3 \lambda_\ell \sum_{q \in \L^*} \widehat{\chi}_\ell (p-q) \widehat{f}_{\ell,N} (q) 
\end{split} \end{equation}
or equivalently, expressing also the r.h.s. through the coefficients $\eta_p$, 
\begin{equation}\label{eq:eta-scat}
\begin{split} 
p^2 \eta_p + \frac{1}{2} \widehat{V} (p/N) & + \frac{1}{2N} \sum_{q \in \Lambda^*} \widehat{V} ((p-q)/N) \eta_q \\ &\hspace{2cm} = N^3 \lambda_\ell \widehat{\chi}_\ell (p) + N^2 \lambda_\ell \sum_{q \in \Lambda^*} \widehat{\chi}_\ell (p-q) \eta_q
\end{split} \end{equation}
With Lemma \ref{3.0.sceqlemma}, we can bound  
\begin{equation}\label{eq:etap} |\eta_p | \leq \frac{C}{p^2} \end{equation}
for all $p \in \Lambda^*_+ = 2\pi \bZ^3 \backslash \{ 0 \}$. Eq. (\ref{eq:etap}) implies that $\eta \in \ell^2 (\Lambda^*_+)$, with norm bounded uniformly in $N$. In fact, denoting by $\check{\eta} \in L^2 (\Lambda)$ the function with Fourier coefficients $\eta_p$, and using the first bound in (\ref{3.0.scbounds1}), we even find 
\begin{equation}\label{eq:ellbdeta} \| \eta \|^2 = \| \check{\eta} \|^2 = \| N w_\ell (N.) \| = N^2 \int_{|x| \leq \ell} |w_\ell (Nx)|^2  dx = \int_{|x| \leq \ell} \frac{1}{|x|^2} dx \leq C \ell  \end{equation}
which implies that $\| \eta \|$ can be made arbitrarily small, by choosing $\ell \in (0;1/2)$ small enough (this remark will allow us to use the expansions (\ref{eq:conv-serie}) and the bounds in Lemma \ref{lm:dp}). Notice that $\check{\eta} (x)$ has a singularity of the form $|x|^{-1}$ at $x=0$, regularized only on the scale $N^{-1}$. In particular, from (\ref{3.0.scbounds1}), we obtain that 
\begin{equation}\label{eq:checketainf} \| \check{\eta} \|_\infty \leq C N \end{equation}
and that 
\begin{equation}\label{eq:etapN} 
\| \nabla \check{\eta} \|_2^2 = \sum_{p \in \Lambda^*_+} p^2 |\eta_p|^2  \leq C N 
\end{equation}

We will mostly use the coefficients $\eta_p$ with $p \not = 0$. Sometimes, however, it will also be useful to have an estimate for $\eta_0$ (because the equation (\ref{eq:eta-scat}) involves $\eta_0$). {F}rom  Lemma~\ref{3.0.sceqlemma}, part iii), we find 
\begin{equation*}
 |\eta_0| \leq N^{-2} \int_{\bR^3} w_\ell (x) dx \leq C \ell^2 \end{equation*}

Because of (\ref{eq:ebe}), it will also be useful to have bounds for the quantities $\s_q = \sinh (\eta_q)$ and $\g_q = \cosh (\eta_q)$, and, in position space, for $\check{\s} (x) = \sum_{q \in \L^*} \sinh (\eta_q) e^{iq \cdot x}$ and $\check{\g} (x) = \sum_{q \in \L^*} \cosh (\eta_q) e^{iq \cdot x} = \delta (x) + \check{r} (x)$, with $\check{r} (x) = \sum_{q \in \L^*} [ \cosh (\eta_q)  - 1] \, e^{iq \cdot x}$. In momentum space, we find the pointwise bounds 
\begin{equation}\label{eq:bdsg-mom} |\s_q| \leq C  |q|^{-2} , \quad |\s_q - \eta_q| \leq C |q|^{-6}, \quad |\g_q| \leq C, \quad |\g_q - 1| \leq C |q|^{-4} \end{equation}
for all $q \in \L^*_+$. In position space, we obtain from (\ref{eq:checketainf}) the estimates  
\begin{equation}\label{eq:bdsg-x} \|  \check{\s} \|_2 \leq C , \quad \| \check{\s} \|_\infty \leq C N, \quad \| \check{\s} * \check{\g} \|_\infty \leq C N \end{equation}


With $\eta \in \ell^2 (\Lambda^*_+)$, we construct the generalized Bogoliubov transformation $e^{B(\eta)} : \cF_+^{\leq N} \to \cF^{\leq N}_+$, defined as in (\ref{eq:eBeta}). Furthermore, we define a new, renormalized, excitation Hamiltonian $\cG_N : \cF^{\leq N}_+ \to \cF^{\leq N}_+$ by setting 
\begin{equation}\label{eq:GN} \cG_N = e^{-B(\eta)} \cL_N e^{B(\eta)} = e^{-B(\eta)} U_N H_N U_N^* e^{B(\eta)} : \cF_+^{\leq N} \to \cF_+^{\leq N}
\end{equation}

In the next proposition, we collect some important properties of the renormalized excitation Hamiltonian $\cG_N$. Here and in the following, we will use the notation 
\begin{equation}\label{eq:KcVN}  
\cK = \sum_{p \in \Lambda^*_+} p^2 a_p^* a_p \qquad \text{and } \quad \cV_N = \frac{1}{2N} \sum_{p,q \in \Lambda_+^*, r \in \Lambda^* : r \not = -p, -q} \widehat{V} (r/N)  a_{p+r}^* a_q^* a_{q+r} a_p \end{equation}
for the kinetic and potential energy operators, restricted on $\cF_+^{\leq N}$. Furthermore, we will write $\cH_N = \cK + \cV_N$. 

\begin{prop}\label{prop:gene}
Let $V \in L^3 (\bR^3)$ be non-negative, compactly supported and spherically symmetric. Let $\cG_N$ be defined as in (\ref{eq:GN}), with $\ell \in (0;1/2)$ small enough. Let $E_N$ be the ground state energy of the Hamilton operator (\ref{eq:Hmom}). 
\begin{itemize}
\item[a)] We have 
\begin{equation}\label{eq:GDelta} \cG_N - E_N = \cH_N + \Delta_N \end{equation}
where the error term $\Delta_N$ is such that for every $\delta >0$, there exists $C > 0$ with 
\begin{equation}\label{eq:Delta-bd} \pm \Delta_N \leq \delta \cH_N + C (\cN_+ + 1) \end{equation}
Furthermore, for every $k \in \bN$ there exists a $C > 0$ such that 
\begin{equation}\label{eq:adkN}
\pm \text{ad}^{\, (k)}_{\, i\cN_+} (\cG_N) = \pm \text{ad}^{\, (k)}_{\, i\cN_+} (\Delta_N) = \pm \big[ i \cN_+ , \dots \big[ i \cN_+ , \Delta_N \big] \dots \big] \leq C (\cH_N + 1) \end{equation}
\item[b)] For $p \in \Lambda^*_+$, we use the notation, already introduced in (\ref{eq:ebe}), $\sigma_p = \sinh \eta_p$ and $\gamma_p = \cosh \eta_p$. Let 
\begin{equation}\label{eq:CN}
\begin{split} C_{\cG_N} =& \;\frac{(N-1)}{2} \widehat{V} (0) +\sum_{p\in\L^*_+}\Big[p^2\sigma_p^2+ \widehat{V}(p/N)\left(\s_p\g_p +\s_p^2\right)\Big]
\\ &+\frac{1}{2N} \sum_{p,q\in\Lambda_+^*}\widehat{V} ((p-q)/N)\sigma_q\gamma_q\sigma_p\gamma_p\, + \frac1N\sum_{p\in\Lambda^*}\Big[p^2 \eta_p^{2} + \frac{1}{2N}\big(\widehat{V} (\cdot/N)\ast \eta \big)_p \eta_p\Big]\;\\
        & - \frac 1 N \sum_{q \in \L^*} \hat V(q/N) \eta_q  \sum_{p \in \L_+^*} \s_p^2
\end{split} \end{equation}
For every $p \in \Lambda^*_+$, let   
\begin{equation}\label{eq:phi}
\begin{split}
  \Phi_p&=2p^2\s_p^2+\widehat{V}(p/N)\left(\g_p+\s_p\right)^2+\frac{2}{N}\g_p\s_p\sum_{q\in\L^*}\widehat{V}((p-q)/N) \eta_q\\
  &\quad-(\g_p^2+\s_p^2)\frac{1}{N}\sum_{q\in\L^*}\widehat{V}(q/N) \eta_q
\end{split}
\end{equation}
and
\begin{equation}\label{eq:gamma}
\begin{split} 
 \Gamma_p&=2p^2\s_p\g_p+\widehat{V}(p/N)(\g_p+\s_p)^2+(\g_p^2+\s_p^2)\frac{1}{N}\sum_{q\in\L^*}\widehat{V}((p-q)/N)\eta_q\\
 &\quad-2\g_p\s_p\frac{1}{N}\sum_{q\in\L^*}\widehat{V}(q/N)\eta_q
\end{split}
\end{equation}
Using $\Phi_{p},\G_{p}$ we construct the operator 
\begin{equation}\label{eq:QN}
 \cQ_{\cG_N} =\sum_{p\in\Lambda^*_+}\Phi_pb^*_pb_p+\frac{1}{2}\sum_{p\in\Lambda^*_+}\Gamma_p(b^*_pb^*_{-p}+b^*_pb^*_{-p})
\end{equation}
Moreover, we define
\be \label{eq:cCN}
\cC_N = \frac{1}{\sqrt N} \sum_{\substack{p,q \in \L^*_+ \\ q \neq -p}} \widehat V(p/N)\,\Big[ b^*_{p+q} b^*_{-p}  \big(  \g_q b_q + \s_q b^*_{-q} \big) + \hc \Big] 
\ee
Then, we have 
\be\label{eq:deco2-GN}
\cG_N = C_{\cG_N} + \cQ_{\cG_N} + \cH_N + \cC_N  + \cE_{\cG_N} 
\ee
with an error term $\cE_{\cG_N}$ satisfying, on $\cF_+^{\leq N}$, the bound   
\begin{equation}\label{eq:error}  \pm \cE_{\cG_N} \leq \frac{C}{\sqrt{N}} \,  (\cH_N+\cN_+^2+1)(\cN_++1)\end{equation}
\end{itemize}
\end{prop}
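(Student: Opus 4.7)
The plan is to compute $\cG_N = e^{-B(\eta)} \cL_N e^{B(\eta)}$ explicitly by conjugating each of the four pieces $\cL_N^{(0)}, \cL_N^{(2)}, \cL_N^{(3)}, \cL_N^{(4)}$ in (\ref{eq:cLNj}) separately, sorting the outcome into constant, quadratic, cubic, and quartic parts plus controllable error terms. The essential tools will be the expansion (\ref{eq:ebe}) for the action of the Bogoliubov transformation on modified creation and annihilation operators, the remainder bounds of Lemma \ref{lm:dp}, the a-priori estimates (\ref{eq:abd}) and Lemma \ref{lm:Ngrow}, and the scattering-type relation (\ref{eq:eta-scat}) satisfied by the coefficients $\eta_p$. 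Part (a) will be deduced from the finer decomposition in part (b).

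\smallskip

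\noindent\textbf{Computation of the conjugated pieces.} For the constant part $\cL_N^{(0)}$, which depends only on $\cN_+$, a Duhamel expansion together with $[B(\eta), \cN_+] = \sum_p \eta_p (b_p^* b_{-p}^* + b_p b_{-p})$ shows that $e^{-B(\eta)} \cN_+ e^{B(\eta)} = \cN_+ + 2 \sum_p \sigma_p^2 + (\cdots)$, with the remainder controllable via Lemma \ref{lm:Ngrow} and (\ref{eq:bdsg-mom}); this yields the leading constant $\tfrac{N-1}{2}\widehat V(0)$ and subleading pieces absorbed into $C_{\cG_N}$ and $\cE_{\cG_N}$. The core calculation is the conjugation of the quadratic $\cL_N^{(2)}$: substituting (\ref{eq:ebe}) into each factor, expanding, and normal-ordering produces the constant $\sum_p [p^2 \sigma_p^2 + \widehat V(p/N)(\gamma_p \sigma_p + \sigma_p^2)]$, a quadratic operator with partial coefficients $p^2(\gamma_p^2+\sigma_p^2) + \widehat V(p/N)(\gamma_p+\sigma_p)^2$ on the diagonal and $2 p^2 \gamma_p \sigma_p + \widehat V(p/N)(\gamma_p + \sigma_p)^2$ off-diagonal, plus $d_p$-remainders bounded by Lemma \ref{lm:dp}. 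Conjugation of $\cL_N^{(3)}$ gives, besides $\cC_N$ (obtained by replacing only the ``unpaired'' $a_q$ in (\ref{eq:cLNj}) via $\gamma_q b_q + \sigma_q b_{-q}^*$), further contractions contributing the off-diagonal pieces $N^{-1} \gamma_p \sigma_p \sum_q \widehat V((p-q)/N)\eta_q$ and $N^{-1}(\gamma_p^2 + \sigma_p^2)\sum_q \widehat V((p-q)/N)\eta_q$ in $\Phi_p, \Gamma_p$, together with constants that enter (\ref{eq:CN}). Conjugation of $\cL_N^{(4)}$ reproduces $\cV_N$ plus quartic $\Pi^{(2)}$-remainders, cubic pieces absorbed into $\cC_N$, and the negative constant $-N^{-1}\sum_q \widehat V(q/N)\eta_q \sum_p \sigma_p^2$ in (\ref{eq:CN}). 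The scattering equation (\ref{eq:eta-scat}) then consolidates the cubic-in-$\eta$ contributions into exactly the coefficients (\ref{eq:phi})--(\ref{eq:gamma}), with the pieces proportional to $N^3\lambda_\ell$ absorbed into $\cE_{\cG_N}$.

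\smallskip

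\noindent\textbf{Bounds and deduction of (a).} Every remainder will be rewritten as a $\Pi^{(1)}$- or $\Pi^{(2)}$-operator and estimated by Lemma \ref{lm:dp} together with (\ref{eq:abd}) and the bounds (\ref{eq:etap})--(\ref{eq:bdsg-x}); the smallness of $\|\eta\|$ from (\ref{eq:ellbdeta}) (controlled by choosing $\ell$ small) ensures absolute convergence of all series. To deduce (a) from (\ref{eq:deco2-GN}), I would write $\Delta_N = (C_{\cG_N} - E_N) + \cQ_{\cG_N} + \cC_N + \cE_{\cG_N}$ and estimate each summand: the constant is $\cO(1)$ by the ground state bound (\ref{eq:gse}) from \cite{BBCS3} combined with the explicit expression (\ref{eq:CN}) and (\ref{eq:bdsg-mom}); $\pm \cQ_{\cG_N} \leq C(\cN_++1)$ because $\Phi_p, \Gamma_p$ are uniformly bounded in $p, N$; $\pm \cC_N \leq \delta \cV_N + C\delta^{-1}(\cN_++1)$ by Cauchy--Schwarz in position space combined with (\ref{eq:abd}); and $\cE_{\cG_N}$ is controlled using (\ref{eq:error}) together with $\cN_+ \leq N$ on $\cF_+^{\leq N}$. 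The commutator bounds (\ref{eq:adkN}) will follow from the identities $[i\cN_+, b_p^\sharp] = \pm b_p^\sharp$: each $\text{ad}_{i\cN_+}^{(k)}$ acts on a monomial by multiplication by a bounded signed constant depending only on its degree, so the estimates on $\Delta_N$ transfer without change.

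\smallskip

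\noindent\textbf{Main obstacle.} The delicate point is the strong error bound (\ref{eq:error}): every discarded contribution must carry an explicit factor $N^{-1/2}$ relative to $(\cH_N + \cN_+^2 + 1)(\cN_+ + 1)$. Two sources will require care. First, the quartic remainders from conjugating $\cL_N^{(4)}$ have naive bounds only of order $\cO(1)\cN_+^2$; I would handle them in position space, dominating mixed contributions by $\cV_N$ up to lower-order pieces using the positivity of $V$, the representation $\cV_N = \tfrac{1}{2N}\int V(N(x-y)) \check a^*_x \check a^*_y \check a_y \check a_x \, dx\, dy$, and (\ref{eq:bdsg-x}). Second, the $d_p$-remainders must be paired so that each $d_p$ contributes an extra $N^{-1}$ from Lemma \ref{lm:dp}, typically by absorbing one leg into $\cK$ or into $\cN_+$. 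These manipulations are structurally parallel to those in \cite{BBCS2}, but at $\beta=1$ the $\cO(1)$-contributions from $\cL_N^{(4)}$ must now be retained explicitly inside $C_{\cG_N}, \cQ_{\cG_N}$ and $\cC_N$ rather than discarded, which is what makes the bookkeeping substantially more intricate than in the $\beta<1$ case.
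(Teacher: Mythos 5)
Your strategy for part (b) matches the paper's exactly: decompose $\cG_N = \sum_{j} e^{-B(\eta)}\cL_N^{(j)}e^{B(\eta)}$, conjugate each piece using (\ref{eq:ebe}), Lemma \ref{lm:Ngrow} and Lemma \ref{lm:dp}, normal-order, isolate the contributions to $C_{\cG_N}$, $\cQ_{\cG_N}$, $\cC_N$, $\cH_N$, absorb cubic-in-$\eta$ coefficients via the scattering relation (\ref{eq:eta-scat}), and bound every discarded term in position space against $\cV_N$. Your identification of the main obstacle (securing an explicit $N^{-1/2}$ in front of every error while retaining the $\cO(1)$ quartic contributions) is accurate. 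Two remarks on wording: for the off-diagonal piece of $\cQ_{\cG_N}$, uniform boundedness of $\Gamma_p$ is not enough to conclude $\pm\cQ_{\cG_N}\leq C(\cN_++1)$; you need $\Gamma\in\ell^2(\Lambda^*_+)$, which holds only after invoking (\ref{eq:eta-scat}) to rewrite $2p^2\eta_p+\widehat V(p/N)+N^{-1}(\widehat V(\cdot/N)\ast\eta)_p$ as a quantity decaying like $|p|^{-2}$.

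There is, however, a genuine gap in your plan for part (a). You propose to read (\ref{eq:Delta-bd}) off from the decomposition (\ref{eq:deco2-GN}), in particular bounding $\cE_{\cG_N}$ by combining (\ref{eq:error}) with $\cN_+\leq N$. This does not give a uniform-in-$N$ operator inequality of the form $\pm\cE_{\cG_N}\leq\delta\cH_N+C(\cN_++1)$. From (\ref{eq:error}), the worst piece is $N^{-1/2}(\cN_++1)^3$, and on $\cF_+^{\leq N}$ the inequality $\cN_+\leq N$ only yields $N^{-1/2}(\cN_++1)^3\leq CN^{3/2}(\cN_++1)$, a constant that blows up with $N$; similarly $N^{-1/2}\cH_N(\cN_++1)\leq CN^{1/2}\cH_N$. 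So the finer bound (\ref{eq:error}), which is designed to be small only on states with $\cN_+$ and $\cH_N$ of order one, cannot be coarsened back into the uniform estimate (\ref{eq:Delta-bd}). The paper avoids this entirely: part (a) is taken from \cite{BBCS1} (Propositions 4.2--4.5, 4.7, with $\kappa=1$), where the errors are tracked so as to satisfy $\delta\cH_N+C(\cN_++1)$ directly at every stage, without ever producing a $(\cN_++1)^3$ factor. To rescue your route you would need to carry a second, coarser bookkeeping alongside the $N^{-1/2}$-gain bookkeeping, bounding each discarded term by $\delta\cH_N+C(\cN_++1)$ before it is absorbed into $\cE_{\cG_N}$; the deduction from the final formula (\ref{eq:deco2-GN}) alone is not valid. (Your observation about (\ref{eq:adkN}) — that $\operatorname{ad}_{i\cN_+}$ multiplies each monomial by a bounded constant — is correct and is indeed how the paper argues, but it presupposes that the constituent monomials of $\Delta_N$ already satisfy the uniform bound, which is exactly what fails in your derivation of (a).)
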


For the Hamilton operator (\ref{eq:ham-beta}) with parameter $\beta \in (0;1)$, a result similar to Proposition \ref{prop:gene} has been recently established in Theorem 3.2 of \cite{BBCS2}. The main 
difference between Prop. \ref{prop:gene} and previous results for $\beta < 1$ is the emergence, in (\ref{eq:deco2-GN}), of a cubic and a quartic term in the generalized creation and annihilation operators (the quartic term $\cV_N$ is included in the Hamiltonian $\cH_N$). As explained in the introduction, for $\beta < 1$, the cubic and the quartic parts of $\cG_N$ were negligible and could be absorbed in the error $\cE_{\cG_N}$. In the Gross-Pitaevskii regime, on the other hand, this is not possible. It is easy to find normalized $\xi \in \cF_+^{\leq N}$ with bounded expectation of $(\cN_+ + 1) ( \cH_N + \cN_+^2 + 1)$ such that $\langle \xi, \cC_N \xi \rangle$ and $\langle \xi, \cV_N \xi \rangle$ are of order one and do not tend to zero, as $N \to \infty$. 

To extract the important contributions that are still hidden in the cubic and in the quartic terms on the r.h.s. of (\ref{eq:deco2-GN}), we conjugate the renormalized excitation Hamiltonian $\cG_N$ with a unitary operator obtained by exponentiating a cubic expression in creation and annihilation operators. 

More precisely, we define the skew-symmetric operator $A:\cF_+^{\leq N}\to\cF_+^{\leq N}$ by   
\begin{equation}\label{eq:defA}
        \begin{split}
       A &= \frac1{\sqrt{N}} \sum_{\substack{ r\in P_H, v\in P_L }} \eta_r \big[ \sigma_v b^*_{r+v} b^*_{-r}b^*_{-v} +  \gamma_v b^*_{r+v} b^*_{-r}b_{v} - \text{h.c.} \big] \\
        &=: A_\sigma +A_\gamma - \text{h.c.}
        \end{split}
\end{equation}
where $P_L = \{\, p \in \L_+^* : |p|\leq N^{1/2} \}$ corresponds to low momenta and $P_H =\L_+^* \setminus P_L$ to high momenta (by definition $r + v \not = 0$). The coefficients $\eta_p$ are defined in (\ref{eq:ktdef}); they are the same as those used in the definition of the generalized Bogoliubov transformation $\exp (B(\eta))$ appearing in $\cG_N$. We then define the cubically renormalized excitation Hamiltonian
\begin{equation}\label{eq:cJN-def}
\cJ_N := e^{-A} e^{-B(\eta)} U_N H_N U_N^* e^{B(\eta)} e^{A} = e^{-A} \mathcal{G}_N e^{A} : \cF_+^{\leq N} \to \cF_+^{\leq N}.
\end{equation}
In the next proposition, we collect important properties of 
$\cJ_N $.  
\begin{prop}\label{thm:tGNo1}
 Let $V \in L^3 (\bR^3)$ be non-negative, compactly supported and spherically symmetric. Let $\cJ_N$ be defined as in (\ref{eq:cJN-def}). For $p \in \Lambda^*_+$, we use again the notation $\sigma_p = \sinh (\eta_p)$, $\gamma_p = \cosh (\eta_p)$ and we recall the notation $\widehat{f}_{\ell,N}$ from (\ref{eq:fellN}). Let 
\begin{equation}
        \begin{split}  \label{eq:tildeCN}
       C_{\cJ_N} :=&\;\frac{(N-1)}{2}  \widehat{V} (0) +\sum_{p\in\L^*_+}\Big[p^2\sigma_p^2+ \widehat{V}(p/N)\sigma_p\g_p + \big(\widehat{V} (\cdot/N)\ast\widehat{f}_{\ell,N}\big)_p\sigma_p^2\Big]\\
       &+ \frac{1}{2N}\sum_{p,q\in\Lambda_+^*}\widehat{V} ((p-q)/N)\sigma_q\gamma_q\sigma_p\gamma_p + \frac1N\sum_{p\in\Lambda^*}\Big[p^2 \eta_p^{\;2} + \frac{1}{2N}\big(\widehat{V} (\cdot/N)\ast\eta\big)_p \eta_p\Big]
           \end{split}
        \end{equation}
        Moreover, for every $p \in \L_+^*$ we define
        \begin{equation} \label{eq:defFpGp}
\begin{split}
F_p:=&\;p^2(\sigma_p^2+\gamma_p^2)+ \big(\widehat{V} (\cdot/N)\ast\widehat{f}_{\ell,N}\big)_p(\gamma_p+\sigma_p)^2;\\
G_p:=&\;2p^2\sigma_p\gamma_p+ \big(\widehat{V} (\cdot/N)\ast\widehat{f}_{\ell,N}\big)_p(\gamma_p+\sigma_p)^2
\end{split}
\end{equation}
With the coefficients $F_p$ and $G_p$, we construct the operator
        \begin{equation*}
\begin{split}
       \cQ_{\cJ_N} :=\;\sum_{p\in\L^*_+}\Big[F_pb^*_pb_p+\frac{1}{2}G_p\big(b^*_pb^*_{-p}1+b_pb_{-p}\big)\Big]
\end{split}
\end{equation*}
quadratic in the $b,b^*$-fields. Then, we have
\begin{equation*}
        \begin{split}
       \cJ_N = C_{\cJ_N} + \cQ_{\cJ_N}  + \cV_N + \cE_{\cJ_N} 
        \end{split}
        \end{equation*}
for an error term $\cE_{\cJ_N}$ satisfying, on $\cF_+^{\leq N}$,
         \be \label{eq:tildeEN}
         \pm \cE_{\cJ_N}  \leq C N^{-1/4} \Big[(\cH_N+1)(\cN_++1) + (\cN_++1)^3\Big]\,.
         \ee
\end{prop}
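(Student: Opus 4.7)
Starting from the decomposition $\cG_N = C_{\cG_N} + \cQ_{\cG_N} + \cH_N + \cC_N + \cE_{\cG_N}$ given by Proposition~\ref{prop:gene}, the plan is to expand $\cJ_N = e^{-A} \cG_N e^A$ through the Duhamel identity
\[ e^{-A} X e^A = X + \int_0^1 \!ds\, e^{-sA}[X,A] e^{sA}, \]
iterated once more for those summands whose first-order commutator with $A$ is not yet negligible. The scalar $C_{\cG_N}$ is invariant. For $\cQ_{\cG_N}$ and $\cE_{\cG_N}$ the first-order expansion suffices; for $\cC_N$ it is needed to generate the new constant and quadratic corrections; and for $\cH_N = \cK + \cV_N$ the expansion must be carried to second order, because $[\cH_N, A]$ itself contains a cubic piece of the same size as $\cC_N$ which has to be combined with $\cC_N$ \emph{before} any estimation.

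\textbf{Renormalization of the cubic term.} The central algebraic identity is that $[\cK, A]$, computed via $[\cK, b^\sharp_p] = \pm p^2 b^\sharp_p$, together with the cubic contributions produced by Wick-ordering $[\cV_N, A]$, yields an operator of the schematic form
\[ \frac{1}{\sqrt N}\!\!\!\!\sum_{\substack{r\in P_H,\, v \in P_L}}\!\! \Big[\, 2r^2 \eta_r + \widehat V(r/N) + \tfrac{1}{N}(\widehat V(\cdot/N)\ast \eta)_r \Big]\, b^*_{r+v} b^*_{-r} (\gamma_v b_v + \sigma_v b^*_{-v}) + \text{h.c.} \]
up to quintic and higher remainders. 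By the scattering equation (\ref{eq:eta-scat}), the bracket equals $2 N^3 \lambda_\ell [\widehat{\chi}_\ell(r) + N^{-1}(\widehat{\chi}_\ell \ast \eta)_r]$, which is of lower order for $r \in P_H$. Together with $\cC_N$, which carries precisely the opposite combination, this cancels the cubic operator up to an error controlled on low-energy states, provided that the sum in $\cC_N$ outside the domain $P_H \times P_L$ is itself negligible; the threshold $|r|\leq N^{1/2}$ is exactly what produces the final rate $N^{-1/4}$.

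\textbf{Quadratic and constant corrections.} The further non-negligible contributions come from $[\cC_N, A]$ and $\tfrac{1}{2}[[\cH_N, A], A]$. Wick-contracting $[\cC_N, A]$ produces a constant correction which replaces $\widehat V(p/N)(\sigma_p\gamma_p + \sigma_p^2)$ in $C_{\cG_N}$ by $\widehat V(p/N)\sigma_p\gamma_p + (\widehat V(\cdot/N)\ast \widehat f_{\ell, N})_p \sigma_p^2$ in $C_{\cJ_N}$, together with a quadratic term which, combined with $\cQ_{\cG_N}$ and the quadratic fluctuations of $\tfrac12[[\cK, A], A]$, produces $\cQ_{\cJ_N}$ with the coefficients $F_p, G_p$ built from the convolution $\widehat V(\cdot/N)\ast\widehat f_{\ell,N}$. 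The quartic interaction $\cV_N$ is preserved: what survives of $[\cV_N, A]$ after subtraction of the cubic Wick contractions used above, and all iterated commutators involving $\cV_N$, are quintic or higher and absorbed into $\cE_{\cJ_N}$.

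\textbf{Main obstacle.} The hardest part is the quantitative control of $\cE_{\cJ_N}$ by $N^{-1/4}\big[(\cH_N+1)(\cN_++1)+(\cN_++1)^3\big]$. This rests on three ingredients: (i) a priori bounds for the growth of $\cN_+, \cH_N$ under the cubic conjugation $e^{sA}$, proved by a Gr\"onwall argument using $\pm[i\cN_+, A] \leq C(\cN_++1)$ and analogous commutator estimates for $\cH_N$; (ii) exploitation of the momentum split via $\|\eta\, \chi_{P_H}\|_2^2 \leq C N^{-1/2}$ together with the pointwise bound $|\eta_r|\leq C r^{-2}$ from (\ref{eq:etap}); and (iii) Lemma~\ref{lm:dp} and Cauchy--Schwarz applied to the many combinations of $b^\sharp, a^\sharp$ generated by the nested commutators. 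The most delicate single contribution is the quintic remainder of $[\cV_N, A]$, because $\cV_N$ couples directly to the singular correlation structure encoded by $\eta$; to absorb this singularity one must estimate using the full potential operator $\cV_N$, and not only the kinetic part $\cK$, on the right-hand side. These detailed estimates are the content of Section~\ref{sec:wtGN}.
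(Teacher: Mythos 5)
Your proposal follows the same route as the paper: conjugate each piece of the decomposition $\cG_N = C_{\cG_N} + \cQ_{\cG_N} + \cC_N + \cH_N + \cE_{\cG_N}$ by $e^A$, identify the cubic piece of $[\cH_N,A]$ that combines with $\cC_N$ on the sector $P_H\times P_L$, invoke the scattering equation (\ref{eq:eta-scat}) to show the combination is $2N^3\lambda_\ell[\widehat\chi_\ell(r)+N^{-1}(\widehat\chi_\ell\ast\eta)_r]$ and hence negligible, and collect the quadratic and constant corrections from $[\cC_N,A]$ and $\tfrac12[[\cH_N,A],A]$ into $\cQ_{\cJ_N}$ and $C_{\cJ_N}$. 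Your three ingredients for controlling $\cE_{\cJ_N}$ are exactly the paper's Propositions~\ref{prop:cN} and~\ref{prop:cNcH} (Gr\"onwall estimates for the growth of $\cN_+$ and $(\cN_++1)(\cH_N+1)$ under $e^{sA}$), the momentum cutoff at $|p|=N^{1/2}$ giving $\|\eta\chi_{P_H}\|_2^2\leq CN^{-1/2}$, and the Cauchy--Schwarz/Lemma~\ref{lm:dp} machinery, and you correctly locate where the rate $N^{-1/4}$ enters. One small wording issue: you describe the bracket $2r^2\eta_r+\widehat V(r/N)+N^{-1}(\widehat V(\cdot/N)\ast\eta)_r$ as arising from $[\cK,A]$ together with Wick-ordering $[\cV_N,A]$, but $[\cH_N,A]$ alone produces only $2r^2\eta_r+N^{-1}(\widehat V(\cdot/N)\ast\eta)_r$; the $\widehat V(r/N)$ term is precisely the contribution of $\cC_N$, and it is this sum that the scattering relation makes small. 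Your subsequent sentence about ``together with $\cC_N$, which carries precisely the opposite combination'' then double-counts $\cC_N$ if read literally; the intended cancellation is between $\cC_N$ (coefficient $+\widehat V(r/N)$) and the leading piece $\Theta_0$ of $[\cH_N,A]$ (coefficient $-\widehat V(r/N)$, after applying (\ref{eq:eta-scat})), exactly as in the paper's Lemma~\ref{lm:commHA}. This is a presentational slip rather than a mathematical gap: the strategy matches the paper, including the second-order treatment of $\cH_N$ via conjugating $\Theta_0$ (Lemma~\ref{lm:commT0A}) and the $\widehat f_{\ell,N}$-convolution appearing because $(\widehat V(\cdot/N)\ast\widehat f_{\ell,N})_p = \widehat V(p/N)+N^{-1}(\widehat V(\cdot/N)\ast\eta)_p$.
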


The proof of Proposition \ref{prop:gene} is deferred to Section \ref{sec:GN}. Proposition \ref{thm:tGNo1} will then be proved in Section \ref{sec:wtGN}. In the next three sections, 
on the other hand, we show how to use these two propositions to complete the proof of 
Theorem \ref{thm:main}.

\section{Bounds on excitations vectors}
\label{sec:BEC}

To make use of the bounds (\ref{eq:error}) and (\ref{eq:tildeEN}), we need to prove that excitation vectors associated with many-body wave functions $\psi_N \in L^2_s (\Lambda^N)$ with small excitation energy, defined either as $e^{B(\eta)} U_N \, \psi_N$ (if we want to apply (\ref{eq:error})) or as $e^A e^{B(\eta)} U_N \, \psi_N$ (if we want to apply (\ref{eq:tildeEN})) have finite expectations of the operator $(\cH_N + 1) ( \cN_+ + 1) + (\cN_+ + 1)^3$. This is the goal of this section. 

We start with estimates on the excitation vector $\xi_N = e^{B(\eta)} U_N \, \psi_N$, that are relevant to bound errors arising before conjugation with the cubic exponential $\exp (A)$.
\begin{prop}\label{prop:hpN}
Let $V \in L^3 (\bR^3)$ be non-negative, compactly supported and spherically symmetric. Let $E_N$ be the ground state energy of the Hamiltonian $H_N$ defined in (\ref{eq:Hmom}) (or, equivalently, in (\ref{eq:Ham0})). Let $\psi_N \in L^2_s (\Lambda^N)$ with $\| \psi_N \| = 1$ belong to the spectral subspace of $H_N$ with energies below $E_N + \zeta$, for some $\zeta > 0$, i.e. 
\begin{equation}\label{eq:psi-space} \psi_N = {\bf 1}_{(-\infty ; E_N + \zeta]} (H_N) \psi_N 
\end{equation}
Let $\xi_N = e^{-B(\eta)} U_N \psi_N$ be the renormalized excitation vector associated with $\psi_N$. Then, for any $k\in\mathbb N$ there exists a constant $C > 0$ such that 
\begin{equation}\label{eq:hpN} \langle \xi_N, (\cN_+ +1)^k (\cH_N+1) \xi_N \rangle \leq C (1 + \zeta^{k+1}) \end{equation}
\end{prop}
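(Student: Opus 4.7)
I would prove the bound by induction on $k$, using four ingredients already available in the paper:
\begin{itemize}
\item[(a)] \emph{Unitary intertwining.} Since $V := e^{-B(\eta)} U_N$ is unitary and $\cG_N = V H_N V^*$, the vector $\xi_N = V \psi_N$ satisfies
\[\langle \xi_N, (\cG_N - E_N + C_0)^j \xi_N\rangle = \langle \psi_N, (H_N - E_N + C_0)^j \psi_N \rangle \leq (\zeta+C_0)^j\]
for every $j\in\NN$ and every $C_0\geq 0$, by the spectral hypothesis \eqref{eq:psi-space}.
\item[(b)] \emph{Comparison of $\cH_N$ and $\cG_N$.} Combining Prop.~\ref{prop:gene}(a) with \eqref{eq:BBCS3} and the elementary bound $\cN_+ \leq (2\pi)^{-2}\cK \leq c\,\cH_N$ (using $|p|\geq 2\pi$ for $p\in\Lambda^*_+$), one obtains $\cH_N + 1 \leq C(\cG_N - E_N + C_0)$ for $C_0$ large enough.
\item[(c)] \emph{Commutators.} The iterated commutator bounds $\pm\,\mathrm{ad}^{(j)}_{i\cN_+}(\cG_N) \leq C_j(\cH_N+1)$ from \eqref{eq:adkN}.
\item[(d)] \emph{Commutativity.} $[\cN_+,\cH_N]=0$, since every monomial in $\cV_N$ preserves $\cN_+$.
\end{itemize}

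The base case $k=0$ is immediate: combining (a) with $j=1$ and (b) gives $\langle\xi_N,(\cH_N+1)\xi_N\rangle \leq C\langle\xi_N,(\cG_N-E_N+C_0)\xi_N\rangle \leq C(1+\zeta)$.

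For the inductive step, set $F_k := \langle\xi_N, (\cN_++1)^k (\cH_N+1)\xi_N\rangle$ and assume the bound for all $j<k$. Using (d), I symmetrize $F_k = \langle\xi_N, (\cN_++1)^{k/2} (\cH_N+1)(\cN_++1)^{k/2}\xi_N\rangle$, then sandwich the operator inequality of (b) by the self-adjoint $(\cN_++1)^{k/2}$ to obtain
\[F_k \leq C\,\langle\xi_N, (\cN_++1)^{k/2}(\cG_N-E_N+C_0)(\cN_++1)^{k/2}\xi_N\rangle.\]
Next, I move one factor of $(\cN_++1)^{k/2}$ across $\cG_N$ via a Leibniz-type expansion in the nested commutators $\mathrm{ad}^{(j)}_{i\cN_+}(\cG_N)$ for $j\geq 1$; by (c) each such iterated commutator is bounded by $C(\cH_N+1)$, and the resulting commutator corrections assemble into a sum of quantities of the form $F_{k-j}$ with $j\geq 1$, all controlled by the inductive hypothesis by $C(1+\zeta^k)$. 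The leading term that survives is
\[\langle\xi_N, (\cG_N - E_N + C_0)(\cN_++1)^k \xi_N\rangle.\]
On this term I apply Cauchy--Schwarz with the positive square root $(\cG_N-E_N+C_0)^{1/2}$, using (a) to control the first factor by $(\zeta+C_0)^{1/2}$ and leaving $\langle\xi_N, (\cN_++1)^k (\cG_N-E_N+C_0)(\cN_++1)^k \xi_N\rangle$ under the second square root. A further round of commutator manipulation reduces this, up to lower-order $F_j$ corrections controlled by induction, to the same structure with $k$ replaced by $2k$. Iterating the Cauchy--Schwarz/commutator step $\ell$ times gains a cumulative factor of $(\zeta+C_0)^{1-2^{-\ell}}$; the residual high-moment expectation is handled through the a priori bound $\cN_+\leq N$ on $\cF_+^{\leq N}$ raised to the vanishing fractional power $2^{-\ell}$. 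Choosing $\ell$ large (independent of $N$) yields $F_k \leq C_k(1+\zeta^{k+1})$.

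\textbf{Main obstacle.} The principal technical difficulty lies not in any single step but in the combinatorial bookkeeping: one must verify that each commutator correction produced by moving $(\cN_++1)^{m/2}$ across $\cG_N$, and by each successive Cauchy--Schwarz step, genuinely collapses into $F_j$ with $j<k$ and that the constants appearing in the iterative refinement remain uniform in $N$. The fact that $\cG_N$ decomposes as $\cH_N+\Delta_N$ with $\pm\Delta_N\leq\delta\cH_N+C(\cN_++1)$, together with the joint commutativity $[\cN_+,\cH_N]=0$, is what ultimately permits this collapse and keeps the argument self-contained within the bounds of Prop.~\ref{prop:gene}(a).
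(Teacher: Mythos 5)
Your setup (the unitary intertwining, the comparison $\cH_N+1\leq C(\cG_N-E_N+C_0)$, the commutator bounds from \eqref{eq:adkN}, and $[\cN_+,\cH_N]=0$) and your base case are correct and match the paper's. The symmetrization and the idea of expanding the commutator $[\cG_N,(\cN_++1)^{m}]$ in terms of $\mathrm{ad}^{(j)}_{i\cN_+}(\cG_N)$ is also the right move. The gap lies in how you close the inductive step on the leading term $\langle\xi_N,(\cG_N-E_N+C_0)(\cN_++1)^k\xi_N\rangle$.

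Your iterated Cauchy--Schwarz does not terminate in a bound uniform in $N$. After $\ell$ rounds the power of $\cN_+$ in the residual expectation has grown to $2^\ell k$, and raising the crude bound $\cN_+\leq N$ on $\cF_+^{\leq N}$ to the fractional power $2^{-\ell}$ produces $(N+1)^{2^\ell k\cdot 2^{-\ell}}=(N+1)^k$. The $N$-dependence does not vanish as $\ell\to\infty$; one arrives at $F_k\lesssim(\zeta+C_0)(N+1)^k$ for every $\ell$, which is not the claimed bound. The doubling of the moment order and the halving of the exponent exactly cancel, so the bootstrap goes nowhere.

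The device the paper uses to avoid this is to prove a \emph{uniform} estimate over the entire spectral subspace $Q_\zeta=\mathrm{ran}\,\mathbf 1_{(-\infty;E_N+\zeta]}(\cG_N)$, i.e.\ to take as the inductive hypothesis
\[
\sup_{\wt\xi\in Q_\zeta\setminus\{0\}}\frac{\langle\wt\xi,(\cN_++1)^{k}(\cH_N+1)\wt\xi\rangle}{\|\wt\xi\|^2}\leq D_k(1+\zeta)^{k+1},
\]
rather than a bound for a fixed vector. Then a \emph{single} Cauchy--Schwarz, written in the weighted inner product $\langle\cdot,(\cN_++1)^{k+1}\cdot\rangle$ between $\xi_N$ and $(\cG_N'+C)\xi_N$, produces two factors at moment order $k+1$; trading one power of $(\cN_++1)$ against $(\cH_N+1)$ (using $\cN_+\leq(2\pi)^{-2}\cK\leq\cH_N$ and $[\cN_+,\cH_N]=0$) brings both back to level $k$. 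Crucially $(\cG_N'+C)\xi_N$ again lies in $Q_\zeta$ with $\|(\cG_N'+C)\xi_N\|\leq(\zeta+C)\|\xi_N\|$, so the level-$k$ uniform hypothesis applies to it, and the induction closes without ever climbing to moment order $2k$. Without this strengthening of the inductive statement, the step from $k$ to $k+1$ does not go through. (A secondary, more technical point: the paper handles the fractional power $(\cN_++1)^{(k+1)/2}$ in the commutator term via the integral representation $z^{-1/2}=\pi^{-1}\int_0^\infty t^{-1/2}(t+z)^{-1}\,dt$, which is needed to make rigorous sense of the expansion you describe informally as a ``Leibniz-type expansion.'')
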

{\it Remark:} As shown in \cite{BBCS3}, the bound $\langle \xi_N, \cN_+ \xi_N \rangle \leq C (\zeta +1)$ which follows from (\ref{eq:hpN}) taking $k =0$ (because $\cN_+ \leq C \cH_N$), immediately implies that normalized many-body wave functions $\psi_N \in L^2_s (\L^N)$ satisfying (\ref{eq:psi-space}) exhibit  complete Bose-Einstein condensation in the zero-momentum mode $\ph_0$ with optimal rate. In other words, it implies that the one-particle reduced density $\gamma^{(1)}_N = \tr_{2, \dots , N} |\psi_N \rangle \langle \psi_N|$ associated with $\psi_N$ is such that 
\begin{equation}\label{eq:BEC-opt} 1 - \langle \ph_0 , \gamma^{(1)}_N \ph_0 \rangle \leq \frac{C ( \zeta + 1)}{N} \end{equation}
\begin{proof}[Proof of Prop. \ref{prop:hpN}]
First of all, taking the vacuum expectation, (\ref{eq:GDelta}), (\ref{eq:Delta-bd}) imply that there exists a constant $C > 0$ such that $E_N \leq 4 \pi \frak{a}_0 N + C$ for all $N$ large enough. Let now $Q_\zeta = \text{ran } {\bf 1}_{(-\infty ; E_N + \zeta]} (\cG_{N})$. We claim that, for every $k \in \bN$, there exists a constant $D_k > 0$ with 
\begin{equation}\label{eq:claimk}
\sup_{\xi_N \in Q_\zeta\backslash \{0\}} \frac{\langle \xi_N , (\cN_+ + 1)^k (\cH_N + 1) \xi_N \rangle}{\| \xi_N \|^2}  \leq D_k (1+\zeta)^{k+1}   \end{equation}
for all $\zeta > 0$. Clearly, (\ref{eq:claimk}) implies (\ref{eq:hpN}). We prove (\ref{eq:claimk}) by induction over $k \in \bN$. Let us first consider $k=0$. To show (\ref{eq:claimk}) with $k=0$ we combine (\ref{eq:GDelta}), (\ref{eq:Delta-bd}) with the results of \cite{BBCS3}. In \cite{BBCS3}, we consider the excitation Hamiltonian $\cG_{N,\ell} = e^{-B(\eta_H)} U_N H_N U_N^* e^{B(\eta_H)}$, renormalized through a generalized Bogoliubov transformation with coefficients $\eta_H (p) = \eta_p \chi (|p| \geq \ell^{-\a})$ for all $p \in \L_+^*$, where $\eta_p$ is defined as in (\ref{eq:ktdef}) (and $\ell$ is, as in (\ref{eq:scatl}), the radius of the ball on which we solve the Neumann problem used to define $\eta$). Using \cite[Prop. 6.1]{BBCS3} and the observation $E_N \leq 4 \pi \frak{a}_0 N + C$, we conclude that for each $\a > 3$ and $\ell \in (0;1/2)$ small enough, there exist constants $c,C > 0$ with 
\[ \cG_{N,\ell} \geq E_N + c \cN_+ - C \]
Using Lemma \ref{lm:Ngrow} (and the fact that $\| \eta_H \| \leq \| \eta \| \leq C$, uniformly in $N$), we can translate this bound to an analogous estimate for the  excitation Hamiltonian $\cG_N$ defined in (\ref{eq:GN}). We obtain
\[ \begin{split} \cG_N &= e^{-B(\eta)} e^{B(\eta_H)} \cG_{N,\ell} e^{-B(\eta_H)} e^{B(\eta)} \geq E_N + c\,  \cN_+ - C \end{split} \]
with new constants $c,C> 0$.  Combining the last equation with (\ref{eq:GDelta}), (\ref{eq:Delta-bd}), we also find constants $c,C > 0$ such that  
\begin{equation}\label{eq:condH2} \cG_N - E_N \geq c \cH_N - C \end{equation}
for all $N$ large enough. Thus, for any $\xi_N \in Q_\zeta$ we have
\[ \langle \xi_N, (\cH_N +1) \xi_N \rangle \leq C \langle \xi_N, (\cG_N - E_N) \xi_N \rangle + C \| \xi_N \|^2 \leq C (1+\zeta) \| \xi_N \|^2 \]
which implies (\ref{eq:claimk}), for $k =0$. 

Let us now consider the induction step. We assume (\ref{eq:claimk}) holds true for a $k \in \bN$ and we show it for $k$ replaced by $(k+1)$. We use the shorthand notation $\cG'_{N} = \cG_{N} - E_N$. 
Let $\xi_N \in Q_\zeta$. {F}rom \eqref{eq:condH2}, we find 
        \begin{equation}\label{eq:bHG}\begin{split}
\langle \xi_N, (\cN_+ + 1)^{k+1} &(\cH_N + 1) \xi_N \rangle \\ = \; &   \langle \xi_N, (\cN_++1)^{(k+1)/2} (\cH_N+1) (\cN_+ + 1)^{(k+1)/2}\xi_N\rangle \\
\leq \; &C \langle \xi_N, (\cN_++1)^{(k+1)/2}(\cG'_{N}+C) (\cN_++1)^{(k+1)/2} \xi_N\rangle
\\ = \; &C\langle  \xi_N,(\cN_++1)^{(k+1)} (\cG'_{N} + C) \xi_N \rangle \\ &+C  \langle \xi_N, (\cN_++1)^{(k+1)/2}\Big[\cG'_{N}, (\cN_++1)^{(k+1)/2}\Big] \xi_N \rangle \\ =: \; &\text{I} + \text{II} 
        \end{split}\end{equation}
We can bound the first term by 
\begin{equation}\label{eq:I-apri} \begin{split} 
\text{I} \leq \; &C \langle \xi_N , (\cN_+ + 1)^{k+1} \xi_N \rangle^{1/2} \langle (\cG'_{N} + C) \xi_N, (\cN_+ + 1)^{k+1} (\cG'_{N} + C) \xi_N \rangle^{1/2} \\ \leq \; &C \langle \xi_N , (\cN_+ + 1)^{k} (\cH_N + 1) \xi_N \rangle^{1/2} \\ &\hspace{3cm} \times \langle (\cG'_{N} + C) \xi_N, (\cN_+ + 1)^{k} (\cH_N + 1) (\cG'_{N} + C) \xi_N \rangle^{1/2} \\ \leq \; &C \| \xi_N \| \| (\cG'_{N} +C)\xi_N \|  \sup_{\wt\xi_N \in Q_\zeta\backslash \{0\}} \frac{\langle \wt\xi_N, (\cN_+ + 1)^k (\cH_N + 1) \wt\xi_N \rangle}{\| \wt\xi_N \|^2} \\ \leq \; &C (1+\zeta) \| \xi_N \|^2  \sup_{\wt\xi_N \in Q_\zeta} \frac{\langle \wt\xi_N, (\cN_+ + 1)^k (\cH_N + 1) \wt\xi_N \rangle}{\| \wt\xi_N \|^2} \\ \leq \; &C  D_k  (1+\zeta)^{k+2} \| \xi_N \|^2 \end{split} \end{equation}
using the induction assumption and the fact that $(\cG'_{N}+C) \xi_N \in Q_\zeta$ with $\| (\cG'_{N}+C) \xi_N \| \leq (\zeta +C) \| \xi_N \|$ for all $\xi_N \in Q_\zeta$. To bound the second term on the r.h.s. of (\ref{eq:bHG}), we use 
        \[ \frac{1}{\sqrt{z}} = \frac{1}{\pi} \int_0^\infty   \frac{dt}{\sqrt{t}} \, \frac{1}{t+z}   \]
to write 
\begin{equation*}
\begin{split} 
\text{II} &= \frac{1}{\pi} \int_0^\infty  dt \, \sqrt{t} \;  \Big\langle \xi_N, \frac{(\cN_++1)^{(k+1)/2}}{t+(\cN_+ +1)^{k+1}} \, \Big[\cG'_{N}, (\cN_++1)^{k+1}\Big] \, \frac{1}{t+(\cN_+ +1)^{k+1}} \xi_N \Big\rangle    
        \end{split}\end{equation*} 
With the identity 
\[ \Big[\cG'_{N}, (\cN_++1)^{k+1}\Big]  = - \sum_{j = 1}^{k+1} {k+1 \choose j} \text{ad}^{\, (j)}_{\cN_+} (\cG'_{N}) \, (\cN_+ + 1)^{k+1-j} \]
which can be proven by induction over $k$, we obtain 
         \[\begin{split} \text{II} & = \frac{1}{\pi} \sum_{j=1}^{k+1} (-i)^{j+1} {k+1 \choose j}  \\ &\hspace{2cm} \times \int_0^\infty  dt \, \sqrt{t} \; \Big \langle \xi_N, \frac{(\cN_++1)^{(k+1)/2}}{t+(\cN_+ +1)^{k+1}} \, \operatorname{ad}_{i \cN_+}^{\, (j)} (\cG'_{N}) \, \frac{(\cN_++1)^{k+1-j}}{t+(\cN_+ +1)^{k+1}} \xi_N \Big\rangle 
        \end{split}\]
From (\ref{eq:adkN}) in Prop. \ref{prop:gene} we know that $ \cA_j: = (\cH_N+1)^{-1/2}\operatorname{ad}_{i\cN_+ }^{(j)}(\cG'_{N})(\cH_N+1)^{-1/2} $ is a self-adjoint operator on $\cF_+^{\leq N}$, with norm bounded uniformly in $N$. Hence, we find 
        \[\begin{split} |\text{II}| \leq \; &C \sum_{j=1}^{k+1}\int_0^\infty  \frac{dt}{(t+1)^{1/2}} \,\Big\|(\cH_N+1)^{1/2}(\cN_++1)^{(k+1)/2} \xi_N\Big\| \\ &\hspace{6cm} \times \Big\|(\cH_N+1)^{1/2} \frac{(\cN_++1)^{k+1-j}}{t+(\cN_+ +1)^{k+1}} \xi_N\Big\| \\
        \leq \; &C_k  \int_0^\infty  \frac{dt}{(t+1)^{\frac{2k+3}{2k+2}} } \, \Big\|(\cN_++1)^{(k+1)/2}(\cH_N+1)^{1/2} \xi_N\Big\| \Big\|(\cN_++1)^{k/2}(\cH_N+1)^{1/2} \xi_N \Big\|  \\ 
        \leq \; &C_k \Big\|(\cN_++1)^{(k+1)/2}(\cH_N+1)^{1/2} \xi_N\Big\| \Big\|(\cN_++1)^{k/2}(\cH_N+1)^{1/2} \xi_N \Big\| 
        \end{split}\]
With the induction assumption we conclude that 
        \begin{equation}\label{eq:bt2}
        \begin{split}
       |\text{II}| \leq \;& \frac{1}{2} \langle \xi_N, (\cN_++1)^{k+1}(\cH_N+1)\xi_N\rangle + C_k D_k (1+\zeta)^{k+1} \| \xi_N \|^2 
        \end{split}
        \end{equation}
with a constant $C_k$ (different from line to line), depending on $k$. Inserting (\ref{eq:I-apri}) and the last bound in (\ref{eq:bHG}), we obtain
\[ \langle \xi_N , (\cN_+ + 1)^{k+1} (\cH_N + 1) \xi_N \rangle \leq D_{k+1} (1+\zeta)^{k+2} \| \xi_N \|^2 \]
for all $\xi_N \in Q_\zeta$ and for an appropriate constant $D_{k+1}$ (one can take $D_{k+1} = 2 (C+C_k) D_k$, if $C$ and $C_k$ are as on the r.h.s. of (\ref{eq:I-apri}) and (\ref{eq:bt2})). We conclude that 
\[ \sup_{\xi_N \in Q_\zeta\backslash \{0\}} \frac{\langle \xi_N , (\cN_+ + 1)^{k+1} (\cH_N + 1) \xi_N \rangle}{\| \xi_N \|^2} \leq D_{k+1} (1+\zeta)^{k+2} \]
This completes the proof of (\ref{eq:claimk}).
\end{proof}

Next, we control the growth of powers of $\cN_+$ and of the product $(\cN_+ + 1)(\cH_N+1)$ under conjugation with the operator $\exp (A)$. These bounds are needed to apply Prop.~\ref{thm:tGNo1}. 
First, we focus on the growth of powers of the number of particles operator.
\begin{prop}\label{prop:cN} 
Suppose that $A$ is defined as in (\ref{eq:defA}). For any $k\in \bN$, there exists $C >0$ such that, on $\cF_+^{\leq N}$, we have the operator inequality  
\[ e^{-A} (\cN_++1)^k e^{A} \leq C (\cN_+ +1)^k   \]
\end{prop}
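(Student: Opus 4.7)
The plan is to establish the estimate via Grönwall's inequality. For $\xi \in \cF_+^{\leq N}$ fixed, define
\[ f(s) := \langle \xi, e^{-sA}(\cN_+ + 1)^k e^{sA} \xi\rangle, \quad s \in [0,1], \]
and aim to show $|f'(s)| \leq C f(s)$ uniformly in $s$, with $C = C(k)$ independent of $N$ and $\xi$. Integrating this differential inequality would yield $f(1) \leq e^C f(0)$, which is exactly the operator bound asserted in the proposition. Since $A$ is skew-symmetric, differentiating gives $f'(s) = \langle \psi_s, [(\cN_+ + 1)^k, A] \psi_s\rangle$ with $\psi_s = e^{sA}\xi$.

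To handle the commutator, I would exploit the key structural fact that each of the four summands in $A = A_\sigma + A_\gamma - A_\sigma^* - A_\gamma^*$ shifts $\cN_+$ by a fixed integer ($A_\sigma$ by $+3$, $A_\gamma$ by $+1$, and their adjoints by the opposite amounts). This yields the exact commutation relation $(\cN_+ + 1)^k A_\sigma = A_\sigma (\cN_+ + 4)^k$ with analogous identities for the other pieces, so that
\[ [(\cN_+ + 1)^k, A_\sigma] = A_\sigma\, Q_k^{(\sigma)}(\cN_+) \]
and similarly for $A_\gamma, A_\sigma^*, A_\gamma^*$, where $Q_k^{(\sigma)}, Q_k^{(\gamma)}$ are polynomials of degree at most $k-1$ satisfying the operator bound $\pm Q_k^{(\cdot)}(\cN_+) \leq C_k (\cN_+ + 1)^{k-1}$.

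The core analytic input is an operator estimate on $A$ and its adjoint. Using Cauchy-Schwarz on the triple sums defining $A_\sigma$ and $A_\gamma$ in (\ref{eq:defA}), the uniform bounds $\|\eta\|, \|\sigma\| \leq C$ from (\ref{eq:ellbdeta})--(\ref{eq:bdsg-mom}), the pointwise bound $|\gamma_v| \leq C$, and the three-body identity $\sum_{p,q,s \in \Lambda^*_+}\|a_p a_q a_s \phi\|^2 \leq \langle\phi, \cN_+(\cN_+-1)(\cN_+-2)\phi\rangle$, I would prove
\[ \|A_\sigma^* \phi\|^2 + \|A_\gamma^*\phi\|^2 + \|A_\sigma\phi\|^2 + \|A_\gamma\phi\|^2 \leq C N^{-1}\|(\cN_+ + 1)^{3/2}\phi\|^2 \]
for all $\phi \in \cF_+^{\leq N}$. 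Using $\cN_+ \leq N$ on the truncated Fock space, one may then absorb the $N^{-1/2}$ prefactor against one factor of $(\cN_++1)^{1/2}$, yielding the workable bound $\|A^{(\sharp)}\phi\| \leq C\|(\cN_++1)\phi\|$.

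The main obstacle is to combine these estimates to produce $|f'(s)| \leq C f(s)$ with constant $C$ uniform in $N$. A direct Cauchy-Schwarz of $|\langle \psi_s, A_\sigma Q_k^{(\sigma)}(\cN_+)\psi_s\rangle|$ against $\|A_\sigma^*\psi_s\|\cdot\|Q_k^{(\sigma)}(\cN_+)\psi_s\|$ naively produces moments of $\cN_+$ of order $k+1$, which are not controlled by $f(s)$ uniformly in $N$ (since moments of $\cN_+$ are log-convex, not log-concave). The resolution is to symmetrically split the polynomial as $Q_k^{(\sigma)}(\cN_+) = Q_k^{(\sigma)}(\cN_+)^{1/2}\cdot Q_k^{(\sigma)}(\cN_+)^{1/2}$, commute one half past $A_\sigma$ using the shift identity $Q(\cN_+)^{1/2}A_\sigma = A_\sigma Q(\cN_+ + 3)^{1/2}$, and then apply the $N^{-1/2}$ bound above together with $\cN_+ \leq N$ to reduce all terms to the same moment $\langle\psi_s, (\cN_++1)^k \psi_s\rangle$. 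Performing this bookkeeping for each of the four summands of $A$ yields the Grönwall estimate and completes the proof.
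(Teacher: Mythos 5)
Your strategy is exactly the paper's: Gronwall on $s\mapsto\langle\xi, e^{-sA}(\cN_++1)^k e^{sA}\xi\rangle$, the observation that each piece of $A$ shifts $\cN_+$ by a fixed integer so that $[(\cN_++1)^k,A_\sigma]=A_\sigma\,Q_k(\cN_+)$ with $Q_k(\cN_+):=(\cN_++4)^k-(\cN_++1)^k$, and control of the two-index $(r,v)$-sums through $\|\eta\|_2$, $\|\sigma\|_2$ and $\cN_+\le N$. Your diagnosis that a na\"ive Cauchy--Schwarz against $\|A_\sigma^*\psi_s\|\,\|Q_k(\cN_+)\psi_s\|$ fails because of log-convexity of moments is also right.

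The proposed fix does not close the loop, however. Write $M=\cN_++1$. After the symmetric split $Q_k=Q_k^{1/2}Q_k^{1/2}$, the shift identity $A_\sigma Q_k^{1/2}(\cN_+)=Q_k^{1/2}(\cN_+-3)A_\sigma$, your operator bound $\|A_\sigma\phi\|\le CN^{-1/2}\|M^{3/2}\phi\|$, and $\cN_+\le N$ (so $N^{-1/2}M^{1/2}\le C$), you are left with $|\langle\psi_s,A_\sigma Q_k\psi_s\rangle|\le C\|M^{(k-1)/2}\psi_s\|\,\|M^{(k+1)/2}\psi_s\|$. These exponents differ by one: $A_\sigma$ costs $M^{3/2}$ while $N^{-1/2}$ together with $\cN_+\le N$ only repays $M^{1/2}$, so the side carrying $A_\sigma$ ends up one full power heavier. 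The product $\bigl(\langle M^{k-1}\rangle\langle M^{k+1}\rangle\bigr)^{1/2}$ is, by Cauchy--Schwarz in the spectral measure, always $\ge\langle M^k\rangle$, and the ratio is not uniform in $N$: for $\psi_s$ a superposition of the vacuum and an $\cN_+=N$ eigenstate with weight $\sim N^{-k}$ on the latter it is of order $\sqrt N$. So the terms are not reduced to ``the same moment,'' and the Gronwall constant diverges as $N\to\infty$.

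The fix is an asymmetric split with a tilt of one half power: for instance write $Q_k=[M^{1/2}Q_k^{1/2}]\cdot[M^{-1/2}Q_k^{1/2}]$ with the lighter factor on the $A_\sigma$ side, so that after the $A$-bound and $\cN_+\le N$ both Cauchy--Schwarz factors are $\|M^{k/2}\psi_s\|$ and the product is exactly $\langle M^k\rangle$. The paper realizes this in a slightly different parametrization: inside the $(r,v)$-sum the powers are distributed as $(\cN_++1)^{-3/4+(k-1)/2}b_{r+v}b_{-r}b_{-v}$ against $(\cN_++1)^{3/4+(k-1)/2}$; the $\ell^2$-bound over $(r,v)$ then adds exactly $3/2$ to the exponent of the first factor, so both sides end up at $(2k+1)/4$, giving $CN^{-1/2}\langle(\cN_++1)^{k+1/2}\rangle\le C\langle(\cN_++1)^k\rangle$. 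With this one-half-power tilt added to your split the argument would go through; as written it does not.
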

\begin{proof} Let $\xi\in\cF_+^{\leq N}$ and define $\varphi_{\xi}:\mathbb R\to \mathbb R$ by 
        \[\varphi_{\xi}(s):= \langle \xi, e^{-sA} (\cN_+ + 1)^k e^{sA} \xi \rangle \]
Then we have, using the decomposition $A = A_\s + A_\g - \text{h.c.}$ from (\ref{eq:defA}),   
        \[ \partial_s\varphi_{\xi}(s) = 2 \text{Re } \langle \xi, e^{-sA} \big[(\cN_+ + 1)^k, A_\sigma \big]  e^{sA} \xi \rangle + 2 \text{Re } \langle \xi, e^{-sA} \big[(\cN_+ + 1)^k, A_\gamma \big]  e^{sA} \xi \rangle \]
We start by controlling the commutator with $A_\sigma$. We find 
\[\begin{split}
\langle \xi, e^{-sA} &\big[(\cN_+ + 1)^k, A_\sigma \big]  e^{sA} \xi \rangle \\ &= \frac{1}{\sqrt{N}} \sum_{ r\in P_H, v\in P_L } \eta_r \sigma_v \langle e^{sA} \xi , b^*_{r+v} b^*_{-r}b^*_{-v} \big[(\cN_+ + 4)^k - (\cN_+ + 1)^k \big]  e^{sA} \xi  \rangle
        \end{split}\]
With the mean value theorem, we find a function $\theta:\mathbb N\to (0;3)$ such that 
        \[ (\cN_+ + 4)^k - (\cN_+ + 1)^k= k (\cN_+ +  \theta(\cN_+) +1)^{k-1} \]
Since $b_p \cN_+ =(\cN_+ + 1) b_p$ and $b_p^* \cN_+ = (\cN_+ - 1) b_p^*$, we obtain, using Cauchy-Schwarz and the boundedness of $\theta$,
        \begin{equation}\label{eq:N+k-As} \begin{split} 
        &\Big| \langle \xi, e^{-sA} \big[(\cN_+ + 1)^k, A_\sigma \big]  e^{sA} \xi \rangle  \Big|\\
        &\hspace{0.2cm} \leq   \frac{C}{\sqrt{N}}  \sum_{ r\in P_H, v\in P_L } |\eta_r| |\sigma_v| \big\|(\cN_+ +1)^{-3/4 +(k-1)/2}b_{r+v} b_{-r}b_{-v} e^{sA} \xi \big\| \\ &\hspace{8cm} \times \big\|  (\cN_+ +1)^{3/4 +(k-1)/2} e^{sA} \xi \big\| \\
        &\hspace{0.2cm} \leq \frac{C}{\sqrt{N}} \|\eta\|_2\|\sigma\|_2\big\|  (\cN_+ +1)^{3/4 +(k-1)/2} e^{sA} \xi \big\|^2 \\   &\hspace{0.2cm}\leq   \frac{C}{\sqrt{N}} \langle e^{sA} \xi, (\cN_+ +1)^{ k +1/2 } e^{sA} \xi \rangle\\
        &\hspace{0.2cm} \leq C \langle e^{sA} \xi, (\cN_+ +1)^k e^{sA} \rangle 
        \end{split}\end{equation}
        for a constant $C>0$ depending on $k$.  
        Similarly, the commutator with $A_\gamma$ is bounded by 
\begin{equation}\label{eq:N+k-Ag}\begin{split}
        &\Big| \langle \xi, e^{-sA} \big[(\cN_+ + 1)^k, A_\gamma \big]  e^{sA} \xi \rangle \Big|\\
        &\hspace{0.2cm} \leq  \frac{C}{\sqrt{N}} \sum_{ r\in P_H, v\in P_L } |\eta_r|  \big\|(\cN_+ +1)^{-1/4 +(k-1)/2}b_{r+v} b_{-r} e^{sA} \xi \big\| \\ &\hspace{7cm} \times \big\|  (\cN_+ +1)^{1/4 +(k-1)/2}b_{-v} e^{sA} \xi \big\| \\
        &\hspace{0.2cm} \leq  \frac{C}{\sqrt{N}} \|\eta\|_2 \big\|  (\cN_+ +1)^{3/4 +(k-1)/2} e^{sA} \xi \big\|^2 \\ &\hspace{0.2cm} \leq C  \langle \xi,e^{-sA} (\cN_+ +1)^k e^{sA}\xi \rangle
        \end{split}\end{equation}
This proves that \[ \partial_s\varphi_{\xi}(s) \leq C \varphi_{\xi}(s) \] so that, by Gronwall's lemma, we find a constant $C$ (depending on $k$) with
        \[\langle \xi, e^{-A} (\cN_+ + 1)^k e^{A}   \xi \rangle = C \langle \xi, (\cN_+ + 1)^k  \xi \rangle \, . \]
       \end{proof}

To control the growth of the product $(\cH_N +1)(\cN_++1)$ with respect to conjugation by $e^A$, we will use the following lemma.
\begin{lemma}\label{lm:commHA} Let $V \in L^3 (\bR^3)$ be non-negative, compactly supported and spherically symmetric. Let  $A$ and $\cH_N$ be defined as in \eqref{eq:defA} and, respectively, after (\ref{eq:KcVN}). Then,  
\be \label{eq:commHA}
[\cH_N, A] =  \sum_{j=0}^9 \Theta_j  + \emph{h.c.} 
\end{equation}
where
\[ \begin{split} \Theta_0 &= \Theta^{(1)}_{0} + \Theta^{(2)}_{0} = 
-\frac{1}{\sqrt{N}} \sum_{\substack{r \in P_H,\, v \in P_L }}   \widehat{V}(r/N)\, b^*_{r+v} b^*_{-r} \big( \g_v b_v + \s_v b^*_{-v} \big)  \\
\Theta_1 &= \Theta_1^{(1)} + \Theta_1^{(2)} = \frac{2}{\sqrt{N}} \sum_{r \in P_H, v \in P_L} \eta_r b_{r+v}^* b_{-r}^* \, \left[ r \cdot v \, \gamma_v \, b_v + (v^2 +r \cdot v) \, \sigma_v b^*_{-v}  \right] \\
\Theta_2 &= \Theta_2^{(1)} + \Theta_2^{(2)} =\frac{1}{N^{3/2}} \sum_{\substack{r \in P_H \\ v \in P_L}} 
\sum_{\substack{q \in \Lambda^*_+, u \in \Lambda^*: \\ u \not= -q, -r-v}} \widehat{V} (u/N) \eta_r b^*_{r+v+u} b^*_{-r} a_q^* a_{q+u} (\gamma_v b_v + \sigma_v b_{-v}^*)  \\
\Theta_3 &= \Theta_3^{(1)} + \Theta_3^{(2)} = \frac{1}{N^{3/2}} \sum_{\substack{r \in P_H \\ v \in P_L}} \sum_{\substack{q \in \Lambda^*_+ , u \in \Lambda^* :\\ u \not = -q ,r}} \widehat{V} (u/N) \eta_r b^*_{r+v} b^*_{-r+u} a^*_q a_{q+u} (\gamma_v b_v + \sigma_v b_{-v}^* ) \\
\Theta_4 & = \Theta_4^{(1)} + \Theta_4^{(2)} =  \frac{1}{N^{3/2}} \sum_{\substack{r \in P_H \\ v \in P_L}} \sum_{\substack{q \in \Lambda^*_+ , u \in \Lambda^* :\\ u \not = -q ,r}} \widehat{V} (u/N) \eta_r  b_{r+v}^* b_{-r}^* \\ & \hspace{7cm} \times \left( - \gamma_v a_q^* a_{q+u} b_{-u +v} + \sigma_v  b^*_{-v+u} a_q^* 
a_{q+u}\right)  \\
\end{split} \]
and 
\[ \begin{split} 
\Theta_5 &= \Theta_5^{(1)} + \Theta_5^{(2)} =- \frac{1}{N^{3/2}} \sum_{p \in P_H, v \in P_L} \sum_{r \in P_L}  \widehat{V} ((p-r)/N) \eta_r b^*_{p+v} b^*_{-p} (\gamma_v b_v + \sigma_v b_{-v}^*) \\ 
\Theta_6 &= \Theta_6^{(1)} + \Theta_6^{(2)} =- \frac{1}{N^{3/2}} \sum_{p \in P_H, v \in P_L}   \widehat{V} (p/N) \, \eta_0 \, b^*_{p+v} b^*_{-p} (\gamma_v b_v + \sigma_v b_{-v}^*) \\ 
\Theta_7 &= \Theta_7^{(1)} + \Theta_7^{(2)} =\frac{1}{N^{3/2}} \sum_{r \in P_H , v \in  P_L} \sum_{p \in P_L : p \not = -v}  \widehat{V} ((p-r)/N) \eta_r b^*_{p+v} b^*_{-p} (\gamma_v b_v + \sigma_v b_{-v}^*) \\
\Theta_8 &=\Theta_8^{(1)} + \Theta_8^{(2)} = 2N^2 \sqrt{N}  \l_{\ell} \sum_{r \in P_H , v \in P_L }  \widehat{\chi}_\ell(r)\, b^*_{r+v} b^*_{-r} \big( \g_v b_v + \s_v b^*_{-v} \big)  \\
\Theta_9 &=\Theta_9^{(1)} + \Theta_9^{(2)} = 2N \sqrt{N} \l_{\ell} \sum_{\substack{r \in P_H,\\ v \in P_L}}  \sum_{q \in \L^*} \widehat \chi_\ell(r-q) \eta_q\, b^*_{r+v} b^*_{-r} \big( \g_v b_v + \s_v b^*_{-v} \big)
\end{split} \]
We have
\begin{equation}\label{eq:EH1} |\langle \xi_1, \Theta^{(i)}_j \xi_2 \rangle| \leq C  \Big[\; \bmedia{\xi_1, \big( \cH_N + (\cN_++1)^2 \big)\xi_1 } +  \bmedia{\xi_2, \big( \cH_N + (\cN_++1)^2 \big)\xi_2 }\; \Big]  \end{equation}
for a constant $C > 0$, all $\xi_1, \xi_2 \in \cF_+^{\leq N}$, $i=1,2$ and all $j=0,1, \dots , 9$, and 
\begin{equation}\label{eq:EH2} \pm \big( \Theta^{(i)}_j + \text{h.c.} \big) \leq C N^{-1/4} \Big[  
(\cN_++1) (\cK+1)+   (\cN_++1)^{3} \Big]  \end{equation}
for $i=1,2$ and all $j=1,\dots , 9$ (but not for $j=0$). 
\end{lemma}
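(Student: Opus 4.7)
The plan is to expand $[\cH_N, A] = [\cK, A] + [\cV_N, A]$ explicitly and then reorganize the result using the modified scattering equation (\ref{eq:eta-scat}). Since $A = A_\sigma + A_\gamma - \text{h.c.}$ splits into a pure creation cubic and a mixed cubic, each $\Theta_j$ decomposes as $\Theta_j^{(1)} + \Theta_j^{(2)}$, where the superscript $(1)$ records the $\gamma_v b_v$ piece coming from $A_\gamma$ and $(2)$ records the $\sigma_v b_{-v}^*$ piece coming from $A_\sigma$; throughout, all indices satisfy $r \in P_H$, $v \in P_L$.

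For $[\cK, A]$, one uses $[\cK, b_p^*] = p^2 b_p^*$ and $[\cK, b_p] = -p^2 b_p$. The total coefficient coming from $\cK$ acting on the cubic fields is $|r+v|^2 + r^2 - v^2 = 2 r^2 + 2\, r \cdot v$ in the $A_\gamma$ case and $|r+v|^2 + r^2 + v^2 = 2 r^2 + 2\, r \cdot v + 2 v^2$ in the $A_\sigma$ case. The $v$-dependent parts assemble into $\Theta_1$, while the $r^2 \eta_r$ pieces are set aside for later combination with $[\cV_N, A]$. For $[\cV_N, A]$, I write $\cV_N$ in normal-ordered form and distinguish two types of commutators with the creation pair $b^*_{r+v} b^*_{-r}$ of $A$: \emph{(a)} double contractions, in which both annihilation operators of $\cV_N$ pair with the two $b^*$'s, leaving a scalar convolution coefficient and a cubic operator of the form $b^*_{p+v} b^*_{-p}(\gamma_v b_v + \sigma_v b_{-v}^*)$ --- splitting the resulting momentum sum according to whether the surviving label $p$ lies in $P_L$ or $P_H$, and isolating the $\eta_0$ contribution, gives $\Theta_5, \Theta_6, \Theta_7$; and \emph{(b)} single contractions, in which exactly one leg of $\cV_N$ contracts with the creation pair in $A$ or with the $b^{(\#)}_v$ leg, producing quintic expressions that, after restoring normal order, become $\Theta_2, \Theta_3, \Theta_4$.

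Collecting the retained $r^2 \eta_r$ term from $[\cK, A]$ with the fully paired $\tfrac12 \widehat V(r/N) + \tfrac{1}{2N}(\widehat V(\cdot/N) \ast \eta)_r$ contribution from $[\cV_N, A]$ gives exactly the left-hand side of (\ref{eq:eta-scat}). Substituting the right-hand side produces $N^3 \lambda_\ell \widehat\chi_\ell(r)$, which becomes $\Theta_8$ after reinserting the cubic field, and $N^2 \lambda_\ell \sum_q \widehat\chi_\ell(r-q)\eta_q$, which becomes $\Theta_9$. The portion of $\widehat V(r/N)$ that remains (i.e., is not absorbed via the scattering equation) is exactly the cubic interaction $\Theta_0$.

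For the operator bounds, (\ref{eq:EH1}) follows by standard Cauchy--Schwarz applied termwise, using Lemma \ref{lm:dp} together with position-space representations of $\Theta_2$--$\Theta_4$ (so that nonnegativity of $V(N\cdot)$ allows control of the quartic parts by $\cV_N$), and the fact that $\sum_r |\eta_r|^2 < \infty$. For (\ref{eq:EH2}) the gain $N^{-1/4}$ comes from the restriction $r \in P_H$: since $|\eta_r| \leq C/r^2$, we have $\sum_{r \in P_H} |\eta_r|^2 \leq C N^{-1/2}$. Distributing this via an asymmetric Cauchy--Schwarz --- absorbing a factor $|r|$ into a kinetic estimate bounded by $\cK$ and keeping the other $|r|^{-1}$ with the $\ell^2(P_H)$ sum --- yields the desired $N^{-1/4}$ improvement. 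The main obstacle is the careful bookkeeping in $[\cV_N, A]$: the many pairings and non-normally-ordered rearrangements must be tracked exactly and matched to the nine $\Theta_j$ above, and in particular $\Theta_0$ does not appear directly in either $[\cK, A]$ or $[\cV_N, A]$ but emerges only after the scattering equation has been used to cancel $r^2 \eta_r$ against the potential convolution.
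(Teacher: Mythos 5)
Your proposal follows the same route as the paper: compute $[\cK, A]$ via $[\cK, b_p^{\#}] = \pm p^2 b_p^{\#}$ (and your coefficient count $|r+v|^2 + r^2 \mp v^2$ giving the $v$-dependent part of $\Theta_1$ and the $2r^2\eta_r$ piece to set aside is exactly right), expand $[\cV_N, A]$ by Leibniz and normal-ordering to isolate the cubic piece and the quintic $\Theta_2, \Theta_3, \Theta_4$, extend the high-momentum $r$-sum to all of $\Lambda^*$ at the cost of $\Theta_5, \Theta_6, \Theta_7$, and then substitute the scattering relation (\ref{eq:eta-scat}) to produce $\Theta_0, \Theta_8, \Theta_9$. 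One correction to your sketch of the bounds: Lemma \ref{lm:dp} is not used here, since no $\Theta_j$ contains the $d_p$ remainder operators; the paper controls $\Theta_2$--$\Theta_4$ by passing to position space and invoking $\|\check\eta_H\|_2 \leq C N^{-1/4}$, $\|\check\gamma_L\|_2 \leq C N^{3/4}$, $\|\check\sigma_L\|_2 \leq C$ together with a $\delta$-weighted Cauchy--Schwarz against $\cV_N$, which is the mechanism you should cite for exploiting the nonnegativity of $V$.
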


\begin{proof} 
We use the formulas 
\begin{equation}\label{eq:form-co} [ a_p^* a_q , b_r^* ] = \delta_{qr} b_p^*, \qquad [ a_p^* a_q , b_r ] = - \delta_{pr} b_q \end{equation}
to compute 
\begin{equation}\label{eq:cKA-comm} \begin{split} \big[ \cK , A \big] = \; &\frac{1}{\sqrt{N}} \sum_{p \in \Lambda_+^*} \sum_{r \in P_H, v \in P_L} \eta_r p^2  \left\{ \delta_{p, r+v} b_p^* b_{-r}^* (\gamma_v b_v + \sigma_v b_{-v}^*)  \right. \\ &\hspace{.2cm}  \left.  + \delta_{p,-r} b_{r+v}^* b_p^* (\gamma_v b_v + \sigma_v b_{-v}^*) -  \gamma_v \delta_{v,p} b_{r+v}^* b_{-r}^* b_p + \sigma_v \delta_{-v,p} b_{r+v}^* b_{-r}^* b_p^* \right\} + \text{h.c.}  \\
= \; & \frac{1}{\sqrt{N}} \sum_{r \in P_H, v \in P_L} 2r^2 \eta_r b_{r+v}^* b_{-r}^* (\gamma_v b_v + \sigma_v b_{-v}^*)  + \Theta_1 + \text{h.c.}  
\end{split} \end{equation}

Writing 
\begin{equation}\label{eq:quar-de} a_{p+u}^* a_q^* a_{q+u} a_p = a_{p+u}^* a_p a_q^* a_{q+u} - \delta_{p,q} a_{p+u}^* a_{p+u} \, , \end{equation}
using (\ref{eq:form-co}) to commute the r.h.s. of (\ref{eq:quar-de}) with $b_{r+v}^*, b_{-r}^*, b_v$ and, respectively, with $b_{-v}^*$, and normal ordering the operators appearing to the left of the factor $(\gamma_v b_v + \sigma_v b_{-v}^*)$ leads to 
\begin{equation}\label{eq:cVA-comm}
\begin{split}  \big[ \cV_N , A \big] =\; & \frac{1}{N^{3/2}} \sum_{\substack{r \in P_H, v \in P_L , u \in \Lambda^* :\\ u \not = -r, -r-v}} \widehat{V} (u/N) \eta_r b^*_{r+v+u} b^*_{-r-u} (\gamma_v b_v + \sigma_v b_{-v}^*)  \\ &+ \sum_{j=2}^4 \Theta_j + \text{h.c.} \end{split} \end{equation}
The first term on the r.h.s. of the last equation can be further decomposed as
\begin{equation}\label{eq:cV11A-comm} \begin{split} 
\frac{1}{N^{3/2}} &\sum_{\substack{r \in P_H, v \in P_L , u \in \Lambda^* :\\ u \not = -r, -r-v}} \widehat{V} (u/N) \eta_r b^*_{r+v+u} b^*_{-r-u} (\gamma_v b_v + \sigma_v b_{-v}^*) \\
= \; &\frac{1}{N^{3/2}} \sum_{r \in P_H, v \in  P_L} \sum_{p \in \Lambda^*_+ : p \not = -v} \widehat{V} ((p-r)/N) \eta_r b_{p+v}^* b_{-p}^* (\gamma_v b_v + \sigma_v b_{-v}^*) \\ = \; &\frac{1}{N^{3/2}} \sum_{r \in P_H , v \in  P_L} \sum_{p \in P_H}  \widehat{V} ((p-r)/N) \eta_r b^*_{p+v} b^*_{-p} (\gamma_v b_v + \sigma_v b_{-v}^*) \\ &+ \frac{1}{N^{3/2}} \sum_{r \in P_H , v \in  P_L} \sum_{p \in P_L : p \not = -v}  \widehat{V} ((p-r)/N) \eta_r b^*_{p+v} b^*_{-p} (\gamma_v b_v + \sigma_v b_{-v}^*) \\
=   \; &\frac{1}{N^{3/2}} \sum_{p \in P_H, v \in P_L} \sum_{r \in \Lambda^*}  \widehat{V} ((p-r)/N) \eta_r b^*_{p+v} b^*_{-p} (\gamma_v b_v + \sigma_v b_{-v}^*)  + \sum_{j=5}^7 \Theta_j 
\end{split} \end{equation}
The first term on the r.h.s. of (\ref{eq:cV11A-comm}) can be combined with the first term on the r.h.s. of (\ref{eq:cKA-comm}); with the relation (\ref{eq:eta-scat}), we obtain 
\[ \begin{split} \frac{1}{\sqrt{N}} &\sum_{r \in P_H, v \in P_L} 2r^2 \eta_r b_{r+v}^* b_{-r}^* (\gamma_v b_v + \sigma_v b_{-v}^*) \\ &+ \frac{1}{N^{3/2}} \sum_{p \in P_H, v \in P_L} \sum_{r \in \Lambda^*}  \widehat{V} ((p-r)/N) \eta_r b^*_{p+v} b^*_{-p} (\gamma_v b_v + \sigma_v b_{-v}^*) = \Theta_0 + \Theta_8 + \Theta_9 \end{split} \]
Combining (\ref{eq:cKA-comm}), (\ref{eq:cVA-comm}) and (\ref{eq:cV11A-comm}) with the last equation, we obtain the decomposition (\ref{eq:commHA}). Now, we prove the bounds (\ref{eq:EH1}), (\ref{eq:EH2}). 
First of all, using (\ref{eq:etap}), we observe that 
\[ \begin{split} | \langle \xi_1,  \Theta_1^{(1)} \, \xi_2 \rangle | \leq \; 
&\frac{2}{\sqrt{N}} \sum_{r \in P_H, v \in P_L} |\eta_r| |r| |v|  \| b_{-r} b_{r+v} \xi_1 \| \| b_v \xi_2 \| 
\\ \leq \; & C N^{-1/2} \| (\cK+1)^{1/2} (\cN_+ + 1)^{1/2} \xi_1 \| \| (\cK + 1)^{1/2} \xi_2 \| 
\end{split} \]
The term $\Theta_1^{(2)}$ can be estimated similarly as 
\[ \begin{split} | \langle \xi_1,  \Theta_1^{(2)} \, \xi_2 \rangle | \leq \; 
&\frac{2}{\sqrt{N}} \sum_{r \in P_H, v \in P_L}|\eta_r| |\sigma_v||v| |r+v|  \| b_{-r} b_{r+v}b_v (\cN_++1)^{-1}\xi_1 \| \| (\cN_++1)  \xi_2 \| 
\\ \leq \; & C N^{-1/2}\bigg(\sum_{r \in P_H, v \in P_L}|\eta_r|^2 |\sigma_v|^2|v|^2 \bigg)^{1/2} \| (\cK+1)^{1/2}  \xi_1 \| \| (\cN+ 1)\xi_2 \| \\
 \leq \; & C N^{-1/2} \| (\cK+1)^{1/2}  \xi_1 \| \| (\cN + 1)\xi_2 \|
\end{split} \]
This implies, on the one hand, that
\begin{equation*}
| \langle \xi_1,\big( \Theta_1^{(1)}+\Theta_1^{(2)} \big) \xi_2 \rangle | \leq C \left[ \langle \xi_1 , (\cK + 1) \xi_1 \rangle + \langle \xi_2, (\cK + 1) \xi_2 \rangle \right] \end{equation*}
and, on the other hand, taking $\xi_1 = \xi_2$, that 
\begin{equation*}
\pm \left( \Theta^{(1)}_1 +\Theta_1^{(2)}+ \text{h.c.} \right) \leq C N^{-1/2} (\cK + 1) (\cN_+ + 1)  \end{equation*}

Next, we consider the quintic terms $\Theta_2, \Theta_3, \Theta_4$. Switching to position space, we find
\begin{equation}\label{eq:wtV-bd} \langle \xi_1, \Theta_2^{(i)} \xi_2 \rangle  =  \int dx dy \, N^{3/2} V(N(x-y)) \langle \xi_1, \check{b}_x^* b^* (\check{\eta}_{H,x}) \check{a}_y^* \check{a}_y  b^{\sharp_i} (\check{\mu}_{L,x}) \xi_2 \rangle \end{equation}
where $\check{\eta}_{H,x} (z) = \check{\eta}_H (z-x)$ with $\check{\eta}_H$ being the function with Fourier coefficients $\eta_H (p) = \eta_p \chi (p \in P_H)$ and where $\mu = \gamma$ and $\sharp_i=\cdot$, if $i=1$, and $\mu = \sigma$ and $\sharp_i=*$ if $i=2$, with $\check{\gamma}_L$, $\check{\sigma}_L$ defined similarly as $\check{\eta}_H$ (but in this case, with the characteristic function of the set $P_L$). {F}rom (\ref{eq:wtV-bd}), and using that, by definition of the sets $P_H, P_L$, $\| \eta_H \|_2 \leq C N^{-1/4}$, $\| \gamma_L \|_2 \leq C N^{3/4}$, $\| \sigma_L \|_2 \leq \| \sigma \|_2 \leq C$, we obtain that 
\begin{equation*}
\begin{split} 
|\langle \xi_1, \Theta^{(i)}_2 \xi_2 \rangle| &\leq C \int dx dy N^2 V(N(x-y)) \| \check{a}_x \check{a}_y \xi_1 \| \| \check{a}_y (\cN_+ + 1) \xi_2 \| 
\\ &\leq C \delta \langle \xi_1, \cV_N \xi_1 \rangle + C \delta^{-1} N^{-1} \langle \xi_2 , (\cN_+ + 1)^3 \xi_2 \rangle \end{split} \end{equation*}
for all $\delta > 0$ and for $i=1,2$. Choosing $\delta = 1$ and $\delta = N^{-1/2}$ we obtain (\ref{eq:EH1}) and, respectively, (\ref{eq:EH2}), with $j=2$ and $i=1,2$. The bounds (\ref{eq:EH1}), (\ref{eq:EH2}) for $j=3,4$ can be proven analogously. 

As for the terms $\Theta_5, \Theta_6, \Theta_7$, we can proceed as follows:
\[ \begin{split} 
|\langle \xi_1, \Theta_5^{(1)} \, \xi_2 \rangle | \leq \; &\frac{1}{N^{3/2}}  \sum_{p \in P_H, r,v \in P_L} |\widehat{V} ((p-r)/N)| |\eta_r| \| b_{p+v} b_{-p} \xi_1 \|  \| b_v \xi_2 \| \\ 
\leq \; &\frac{1}{N^{3/2}} 
\Big( \sum_{p \in P_H , r,v \in P_L} |\eta_r| \, p^2 \| b_{-p} b_{p+v} \xi_1 \|^2 \Big)^{1/2}
\\ &\hspace{1cm} \times  \Big( \sum_{p \in P_H , r,v \in P_L} 
\frac{|\widehat{V} ((p-r)/N)|^2 |\eta_r|}{p^2}  \| b_v \xi_2 \|^2  \Big)^{1/2} \\
\leq \; & \frac{1}{\sqrt{N}} \| (\cK+1)^{1/2} (\cN_+ + 1)^{1/2} \xi_1 \| \,  \| (\cN_+ + 1)^{1/2} \xi_2 \|   
\end{split} \]
which immediately implies (\ref{eq:EH1}), (\ref{eq:EH2}) for $j=5$ and $i=1$. The contribution $\Theta_5^{(2)}$ can be bounded analogously, replacing $\|b_v \xi_2\|$ by $|\sigma_v| \| (\cN_+ + 1)^{1/2} \xi_2 \|$. 
The term $\Theta_6^{(i)}$  can be bounded similarly. As for $\Theta_7^{(1)}$ (a similar bound holds for $\Theta_7^{(2)}$) we find:
\[ \begin{split} 
|\langle \xi_1, \Theta_7^{(1)} \, \xi_2 \rangle | 
\leq \; &\frac{1}{N^{3/2}} 
\Big( \sum_{r \in P_H , p,v \in P_L} |\widehat{V} ((p-r)/N)||\eta_r| \, p^2 \| b_{-p} b_{p+v} \xi_1 \|^2 \Big)^{1/2}
\\ &\hspace{1cm} \times  \Big( \sum_{r \in P_H , p,v \in P_L} 
\frac{|\widehat{V} ((p-r)/N)| |\eta_r| }{p^2}  \| b_v \xi_2 \|^2  \Big)^{1/2} \\
\leq \; & N^{-1/4} \| (\cK+1)^{1/2} (\cN_+ + 1)^{1/2} \xi_1 \| \,  \| (\cN_+ + 1)^{1/2} \xi_2 \|   
\end{split} \]

Finally, let us consider the terms $\Theta_8, \Theta_9$. Since $\| \widehat{\chi}_\ell  \|_2 \leq C$ (for a constant $C$ depending only on $\ell$), we have 
\[ \begin{split} |\langle \xi_1, \Theta_8^{(1)} \xi_2 \rangle| \leq \; &\frac{1}{\sqrt{N}} \sum_{r \in P_H, v \in P_L} |\widehat{\chi}_\ell (r)| \| b_{r+v} b_{-r} \xi_1 \| \| b_v \xi_2 \|  \\ \leq \; &\frac{1}{\sqrt{N}} \| (\cN_+ + 1) \xi_1 \| \| (\cN_+ + 1)^{1/2} \xi_2 \| 
\end{split} \]
which implies (\ref{eq:EH1}) and (\ref{eq:EH2}) for $j=8$ and $i=1$. The bounds for $j=8$ and $i=2$ follow as usual replacing $\| b_v \xi_2 \|$ by $|\sigma_v| \| (\cN_+ + 1)^{1/2} \xi_2 \|$, and using the boundedness of $\| \sigma \|_2$. Also the estimates for $j=9$ can be proven analogously, since also $\| \widehat{\chi}_\ell * \eta \|_2 = \| \chi_\ell \check{\eta} \|_2 \leq \| \check{\eta} \|_2 = \| \eta \|_2$ is finite, uniformly in $N$.  

To conclude the proof of the lemma, we still have to show that $\Theta^{(i)}_0$ satisfies (\ref{eq:EH1}), for $i=1,2$. To this end, we observe that 
\[ \begin{split} |\langle \xi_1, \Theta_0^{(1)} \xi_2 \rangle | &\leq \frac{1}{\sqrt{N}} \sum_{r \in P_H, v \in P_L} |\widehat{V} (r/N)| \| b_{r+v} b_{-r} \xi_1 \| \| b_v \xi_2 \| \\ &\leq C \left[ \sum_{ r \in P_H, v \in P_L} r^2 \| b_{r+v} b_{-r} \xi_1 \|^2 \right]^{1/2} \left[ \frac{1}{N} \sum_{r \in P_H, v \in P_L} \frac{|\widehat{V} (r/N)|^2}{r^2} \| b_v \xi_2 \|^2 \right]^{1/2} \\ &\leq \| (\cK + 1)^{1/2} (\cN_+ + 1)^{1/2} \xi_1 \| \| (\cN_+ + 1)^{1/2} \xi_2 \| \end{split} \]
and that a similar estimate holds for $\Theta_0^{(2)}$. Here, we used the fact that 
\[ \frac{1}{N} \sum_{r \in P_H} \frac{|\widehat{V} (r/N)|^2}{r^2} \leq \frac{1}{N} \sum_{r \in \Lambda^*_+} \frac{|\widehat{V} (r/N)|^2}{r^2} \leq C \]
uniformly in $N$.
\end{proof}

With the bounds on the commutator $[\cH_N , A]$ established in Lemma \ref{lm:commHA}, we can now control the growth of $(\cH_N + 1) ( \cN_+ + 1)$ under the action of the $e^A$. 
\begin{prop}\label{prop:cNcH} Let $V \in L^3 (\bR^3)$ be non-negative, compactly supported and spherically symmetric. Let $A$ and $\cH_N$ be defined as in (\ref{eq:defA}) and, respectively, after (\ref{eq:KcVN}). Then there exists a constant $C>0$ such that for all $s\in[0;1]$ we have on $\cF_+^{\leq N}$ the operator inequality 
\[ e^{-sA} (\cN_++1)(\cH_N +1) e^{sA} \leq C  (\cN_++1)(\cH_N +1) +C (\cN_++1)^{3}  \]
\end{prop}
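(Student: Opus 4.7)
Following the strategy of Proposition \ref{prop:cN}, we set $\xi_s := e^{sA}\xi$ for $\xi \in \cF_+^{\leq N}$, and consider
\[
\Psi(s) := \bmedia{\xi_s, (\cN_++1)(\cH_N+1) \xi_s} + K \bmedia{\xi_s, (\cN_++1)^3 \xi_s}
\]
for a large constant $K>0$ to be fixed below. The claim follows from Gronwall's lemma once we establish $|\Psi'(s)| \leq C \Psi(s)$ uniformly on $s \in [0;1]$. Since $A^* = -A$, $\Psi'(s) = \bmedia{\xi_s, [(\cN_++1)(\cH_N+1) + K (\cN_++1)^3, A]\xi_s}$; the commutator with $(\cN_++1)^3$ contributes at most $C\bmedia{\xi_s, (\cN_++1)^3 \xi_s}$ by the argument of Proposition \ref{prop:cN}. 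For the remaining part, since both $\cK$ and $\cV_N$ preserve particle number, $[\cN_+, \cH_N] = 0$, and
\[
\bigl[(\cN_++1)(\cH_N+1), A\bigr] = (\cN_++1)[\cH_N, A] + [\cN_+, A](\cH_N+1).
\]

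For the first summand, Lemma \ref{lm:commHA} expresses $[\cH_N, A] = \sum_{j=0}^9 (\Theta_j^{(1)} + \Theta_j^{(2)} + \text{h.c.})$. Each $\Theta_j^{(i)}$ is a normal-ordered monomial shifting $\cN_+$ by a fixed integer $k_j^{(i)} \in \{1, 3\}$, so $(\cN_++1)\Theta_j^{(i)} = \Theta_j^{(i)}(\cN_++1 + k_j^{(i)})$; the adjoint contribution is handled via the complex-conjugate identity $\bmedia{\xi_s, (\cN_++1) \Theta_j^{(i)*} \xi_s} = \overline{\bmedia{\xi_s, \Theta_j^{(i)} (\cN_++1) \xi_s}}$. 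The individual Cauchy--Schwarz estimates appearing in the proof of Lemma \ref{lm:commHA} all carry a factor $N^{-\alpha_j}$ with $\alpha_0 = 0$ and $\alpha_j \geq 1/4$ for $j \geq 1$; applied with left argument $\xi_s$ and right argument $(\cN_++1+k_j^{(i)})\xi_s$, combined with $\cN_+ \leq N$ on $\cF_+^{\leq N}$ (which converts an extra factor $(\cN_++1)^{1/2}$ into $(N+1)^{1/2}$, pairing with $N^{-\alpha_j}$) and with $2\sqrt{ab}\leq a+b$, they yield
\[
\bigl|\bmedia{\xi_s, (\cN_++1) (\Theta_j^{(i)} + \text{h.c.}) \xi_s}\bigr| \leq C \bmedia{\xi_s, (\cN_++1)(\cH_N+1) \xi_s} + C \bmedia{\xi_s, (\cN_++1)^3 \xi_s}.
\]

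For the second summand, $[\cN_+, A_\sigma] = 3 A_\sigma$ and $[\cN_+, A_\gamma] = A_\gamma$ give $[\cN_+, A] = 3(A_\sigma + A_\sigma^*) + (A_\gamma + A_\gamma^*)$. Expressions of the form $\bmedia{\xi_s, A_\sharp (\cH_N+1) \xi_s}$ are estimated analogously: the coefficient $\eta_r$ of $A_\sigma, A_\gamma$ satisfies $|\eta_r| \leq C/r^2$, in contrast with the $O(1)$ coefficient $\widehat V(r/N)$ of $\Theta_0$, and this extra $r^{-2}$ decay is exactly what is needed to control the sum and to distribute one factor $(\cH_N+1)^{1/2}$ on each side of the inner product, with the commutator corrections $[(\cH_N+1)^{1/2}, A_\sharp]$ reducing, via Lemma \ref{lm:commHA}, to $\Theta_j$-type contributions already controlled in the previous step. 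Collecting all bounds and choosing $K$ large delivers $|\Psi'(s)| \leq C \Psi(s)$ on $[0;1]$, and Gronwall's lemma concludes.

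The main obstacle is the $\Theta_0$ contribution, which carries no $N^{-1/4}$ smallness. It is controlled precisely because the Cauchy--Schwarz bound for $\Theta_0$ in the proof of Lemma \ref{lm:commHA} is asymmetric, placing $(\cK+1)^{1/2}(\cN_++1)^{1/2}$ on one side and only $(\cN_++1)^{1/2}$ on the other: multiplying by the outer $(\cN_++1)$ merely promotes the lighter side from $(\cN_++1)^{1/2}$ to $(\cN_++1)^{3/2}$, and AM--GM reproduces exactly the two allowed right-hand side terms $(\cN_++1)(\cH_N+1)$ and $(\cN_++1)^3$---with no quadratic power of $\cH_N$ nor any factor of $N$ appearing. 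The analogous balance, driven by the $\eta_r$-decay, underlies the treatment of $[\cN_+, A](\cH_N+1)$.
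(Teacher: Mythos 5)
Your overall strategy coincides with the paper's: Gronwall for $\bmedia{\xi_s,(\cN_++1)(\cH_N+1)\xi_s}$, splitting the derivative into $(\cN_++1)[\cH_N,A]$ and $[\cN_+,A](\cH_N+1)$, reliance on Lemma \ref{lm:commHA}, and the key observation that the asymmetric Cauchy--Schwarz for $\Theta_0$ is precisely what lets the $N^0$-prefactor term close. But two steps, as you have written them, do not actually close.

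First, your substitution ``left argument $\xi_s$, right argument $(\cN_++1+k_j^{(i)})\xi_s$'' puts \emph{two} extra powers of $\cN_++1$ into the $\xi_2$-slot. For the quintic operators $\Theta_2,\Theta_3,\Theta_4$ the intermediate bound in Lemma \ref{lm:commHA} has the form $\delta\bmedia{\xi_1,\cV_N\xi_1}+\delta^{-1}N^{-1}\bmedia{\xi_2,(\cN_++1)^3\xi_2}$. With $\xi_2=(\cN_++1+k)\xi_s$ the second piece becomes of order $\delta^{-1}N^{-1}\bmedia{\xi_s,(\cN_++1)^5\xi_s}\leq \delta^{-1}N\bmedia{\xi_s,(\cN_++1)^3\xi_s}$, and no choice of $\delta$ removes the stray factor of $N$ while keeping the $\cV_N$-piece bounded. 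Your device ``$(\cN_++1)^{1/2}\leq (N+1)^{1/2}$ pairing with $N^{-\alpha_j}$'' absorbs one half power and needs $\alpha_j\geq 1/2$; here you would need to absorb a full power. The paper avoids this by using the fixed-shift property \emph{before} Cauchy--Schwarz: since $(\cN_++1)^{1/2}\Theta_j^{(i)}=\Theta_j^{(i)}(\cN_++1+k_j^{(i)})^{1/2}$, one has $(\cN_++1)\Theta_j^{(i)}=(\cN_++1)^{1/2}\Theta_j^{(i)}(\cN_++1+\ell_{ij})^{1/2}$ and applies (\ref{eq:EH1}) with $\xi_1=(\cN_++1)^{1/2}\xi_s$ and $\xi_2=(\cN_++1+\ell_{ij})^{1/2}\xi_s$, so that each slot carries only one half power. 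Once the two norms are decoupled by Cauchy--Schwarz, the imbalance cannot be repaired, so the symmetrization is not cosmetic.

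Second, your treatment of $[\cN_+,A](\cH_N+1)$ is too thin to be checked. Lemma \ref{lm:commHA} gives $[\cH_N,A]$, not $[(\cH_N+1)^{1/2},A_\sharp]$; passing to the square root requires a resolvent representation, and the claim that the result ``reduces to $\Theta_j$-type contributions already controlled'' is unjustified. More importantly, normal-ordering $A_\sigma\cH_N+\text{h.c.}$ and $A_\gamma\cH_N+\text{h.c.}$ produces genuinely new degree-seven operators (e.g.\ $N^{-1/2}\sum \eta_r\gamma_v\,b^*_{r+v}b^*_{-r}\cV_N b_v$ appearing in the paper's $\text{P}_{222}$, and the $b^*b^*b^*a^*a^*aa$ terms of $\text{P}_{212}$) whose control rests on the specific $\|\check\eta_H\|_2\leq CN^{-1/4}$, $\|\check\gamma_L\|_2\leq CN^{3/4}$ interplay. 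The $|\eta_r|\leq C/r^2$ decay alone does not dispose of these; dedicated estimates such as (\ref{eq:P211-bd}), (\ref{eq:P212-bd}), (\ref{eq:P2221}), (\ref{eq:P2222}) are needed, and your sketch does not supply them. The paper's decomposition $[\cN_+,A]\cH_N=(3A_\sigma\cH_N+\text{h.c.})+(A_\gamma\cH_N+\text{h.c.})+[A_\gamma^*,\cH_N]+3[A_\sigma^*,\cH_N]$, which isolates the commutators that Lemma \ref{lm:commHA} handles from the Hermitian pieces that must be normal-ordered by hand, is the essential ingredient your argument skips.
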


\begin{proof}
For a fixed $\xi \in \cF_+^{\leq N}$ we define $ \varphi_\xi:\mathbb R\to\mathbb R$ through 
        \[\varphi_\xi(s):= \langle \xi, e^{-sA} (\cN_++1)(\cH_N +1) e^{sA}\xi\rangle \]
Then, we have 
        \begin{equation}\label{eq:P1P2}\begin{split}\partial_s \varphi_\xi (s) &= \langle \xi, e^{-sA} \big[(\cN_++1)(\cH_N +1),A\big] e^{sA}\xi\rangle \\
        &=\langle \xi, e^{-sA}(\cN_++1) \big[\cH_N,A\big] e^{sA}\xi\rangle+\langle \xi, e^{-sA} \big[\cN_+,A\big](\cH_N +1) e^{sA}\xi\rangle \\ &=: \text{P}_1 +\text{P}_2
        \end{split}\end{equation}
        
We start by analysing $\text{P}_1$. {F}rom Lemma \ref{lm:commHA}, we have
\[ \begin{split} \text{P}_1 &= \sum_{j=0}^9 \sum_{i=1}^2 
\langle e^{sA} \xi, (\cN_+ + 1) \Theta_j^{(i)} e^{sA} \xi \rangle \\ &= \sum_{j=0}^9 \sum_{i=1}^2 
\langle e^{sA} \xi, (\cN_+ + 1)^{1/2} \Theta_j^{(i)} (\cN_+ + 1 + \ell_{ij})^{1/2} e^{sA} \xi \rangle 
\end{split} \]
for appropriate $\ell_{ij} \in \{ \pm 1, \pm 2, \pm 3 \}$. With (\ref{eq:EH1}) and with Proposition \ref{prop:cN}, we conclude that 
\begin{equation}\label{eq:P1}\begin{split}
        \big|\text{P}_{1}\big| &\leq  C\langle \xi, e^{-sA}(\cN_+ +1)(\cH_N + 1)e^{sA}\xi\rangle +C\langle \xi, e^{-sA}(\cN_+ +1)^3 e^{sA}\xi\rangle \\
        &\leq  C\langle \xi, e^{-sA}(\cN_+ +1)(\cH_N + 1)e^{sA}\xi\rangle +C\langle \xi, (\cN_+ +1)^3\xi \rangle  \end{split}
\end{equation}

Next we analyse $ \text{P}_2$. {F}rom (\ref{eq:N+k-As}) and (\ref{eq:N+k-Ag}), we have 
        \begin{equation}\label{eq:P2VK}\begin{split} |\text{P}_2 | &\leq C \big\langle \xi, (\cN_+ + 1) \, \xi\big\rangle + |\langle e^{sA} \xi, [ \cN_+ , A] \cH_N e^{sA} \xi \rangle| 
        \end{split}\end{equation}
With 
\[\begin{split}
        [\cN_+, A]  =\; & \frac 1{\sqrt N} \sum_{r \in P_H,\,v \in P_L} \eta_r \big(\,3 \sigma_v b^*_{r+v} b^*_{-r}b^*_{-v} +  \gamma_v b^*_{r+v} b^*_{-r}b_{v} +\text{h.c.} \big) \\
        =\; & 3 A_\sigma + A_\gamma + \text{h.c.} 
\end{split}\]
we write 
\begin{equation}\label{eq:comm-dec}\begin{split}\big[\cN_+,A\big]\cH_N &= 3 A_\sigma \cH_N+A_\gamma \cH_N+3A_\sigma^*\cH_N+A_\gamma^*\cH_N\\
        &= \big(3 A_\sigma \cH_N+\text{h.c.}\big) +\big(A_\gamma \cH_N+\text{h.c.}\big) + \big[A_\gamma^*, \cH_N\big] + 3\big[A_\sigma^*, \cH_N\big]\\
        &=:\text{P}_{21}+\text{P}_{22} +\big[A_\gamma^*, \cH_N\big] + 3\big[A_\sigma^*, \cH_N \big]
        \end{split}\end{equation}
Here, we introduced the normally ordered operators $\text{P}_{21}= \text{P}_{211} + \text{P}_{212}$, $\text{P}_{22} = \text{P}_{221} + \text{P}_{222}$, where 
        \begin{equation}\label{eq:P21P22K}
        \begin{split}
        &\text{P}_{211} := \frac1{\sqrt N} \sum_{\substack{p \in \L_+^* \\ r \in P_H,v \in P_L}} p^2\eta_r \sigma_v b^*_{r+v} b^*_{-r}b^*_{-v}a^*_pa_p +\text{h.c.};\\
        &\text{P}_{221} := \frac1{\sqrt N}\sum_{\substack{p \in \L_+^* \\ r \in P_H,v \in P_L}}p^2 \eta_r \gamma_v b^*_{r+v} b^*_{-r}a^*_p a_pb_{v} + \frac1{\sqrt N}\sum_{\substack{r \in P_H, \\ v \in P_L}} v^2 \eta_r \gamma_v b^*_{r+v} b^*_{-r}b_{v}+\text{h.c.}
        \end{split}
        \end{equation}
and, switching to position space, 
        \begin{equation}\label{eq:P21P22V}
        \begin{split}
        \text{P}_{212}:=\;& \frac1{2N^{3/2}} \sum_{\substack{r \in P_H, \\v \in P_L}}  \sum_{\substack{p,q,u \in \Lambda^*_+ , \\
         u \not = -p,-q}} \widehat{V}(u/N)\eta_r \sigma_v b^*_{r+v} b^*_{-r}b^*_{-v}a^*_{p+u}a^*_{q}a_{p}a_{q+u} +\text{h.c.} \\
         =& \frac12 \int_{\Lambda^3}dxdydz \, N^{3/2} V(N( x-y)) \check{b}^*_{x} \check{b}^*_{y} \check{b}^*_{z} a^*(\check{\eta}_{H,z}) a^*( \check{\sigma}_{L,z}) \check{a}_{x} \check{a}_{y} +\text{h.c.} ;\\
        \text{P}_{222}:=\;& \frac1{2N^{3/2}}  \sum_{\substack{r \in P_H, \\v \in P_L}}  \sum_{\substack{p,q,u \in \Lambda^*_+ , \\
         u \not = -p,-q}}  \widehat{V}(u/N) \eta_r \gamma_v b^*_{r+v} b^*_{-r}a^*_{p+u}a^*_{q}a_{p}a_{q+u}b_{v}\\
        & + \frac1{2N^{3/2}} \sum_{\substack{r \in P_H, \\v \in P_L}}  \sum_{\substack{p,u \in \Lambda^*_+ , \\
         u \not = -p,-v}}  \widehat{V}(u/N)\eta_r \g_v  b^*_{r+v} b^*_{-r}  \big( a^*_{p+u} a_p b_{v+u}+ a^*_{-p} a_{v+u}b_{-p-u} \big) +\text{h.c.}\\
         =&\frac{1}{N^{1/2}}\sum_{\substack{r \in P_H, \\v \in P_L}}   \eta_r \gamma_v b^*_{r+v} b^*_{-r}\cV_N b_{v}\\
        &+\int_{\Lambda^3}dxdydz \, N^{3/2}V(N(x- y))\check{\gamma}_{L} (x-z) \check{b}_{z}^* b^*( \check{\eta}_{H,z}) \check{a}^*_{y}\check{a}_{x}\check{b}_{y}  +  \text{h.c.} 
\end{split}
\end{equation}
where, as we did in (\ref{eq:wtV-bd}) in the proof of Lemma \ref{lm:commHA}, we introduced the 
notation $\check{\eta}_{H}, \check{\gamma}_L$ to indicate functions on $\Lambda$, with Fourier coefficients given by $\eta \chi_H$ and, respectively, by $\gamma \chi_L$, with $\chi_H$ and $\chi_L$ being characteristic functions of high ($|p| > N^{1/2}$) and low ($|p| < N^{1/2}$) momenta.
Since, with the notation introduced in Lemma \ref{lm:commHA} after (\ref{eq:commHA}), 
\[\begin{split} \big[ A_{\gamma}^*, \cH_N \big] = \sum_{j=0}^9 (\Theta_j^{(1)})^* , \qquad 
\big[ A_{\sigma}^*, \cH_N \big] = \sum_{j=0}^9 (\Theta_j^{(2)})^* \, , \end{split} \] 
it follows from (\ref{eq:EH1}) that
\begin{equation}\label{eq:commAgAsH} \begin{split} |\langle e^{sA} \xi , \big[ A_\gamma^*, \cH_N \big] e^{sA} \xi \rangle | &\leq C \langle e^{sA} \xi, \big( \cH_N + (\cN_+ + 1)^2 \big) e^{sA} \xi \rangle  \\
  |\langle e^{sA} \xi , \big[ A_\sigma^*, \cH_N \big] e^{sA} \xi \rangle | &\leq C \langle e^{sA} \xi, \big( \cH_N + (\cN_+ + 1)^2 \big) e^{sA} \xi \rangle \end{split} \end{equation}

Finally, we estimate the expectations of the operators (\ref{eq:P21P22K}), (\ref{eq:P21P22V}). The term $\text{P}_{211}$ defined in (\ref{eq:P21P22K}) is bounded by  
\begin{equation}\label{eq:P211-bd} \begin{split} \big| \langle e^{sA} \xi, 
\text{P}_{211}  e^{sA} \xi \rangle |   
        &\leq  \frac{1}{\sqrt{N}} \left[ \sum_{p, r,v \in \L_+^*} p^2 \, \|  b_{r+v} b_{-r} b_{-v}  a_{p} (\cN_+ + 1)^{-1} e^{sA} \xi \|^2 \right]^{1/2} \\ & \hspace{3cm} \times  \left[ \sum_{p,r,v \in \L_+^*} 
        \eta_r^2  \s_v^2 p^2 \|  a_{p}\, (\cN_+ + 1)  e^{sA} \xi \|^2 \right]^{1/2} 
         \\
        & \leq C \langle e^{sA} \xi, (\cN_+ + 1)(\cK + 1)e^{sA}\xi\rangle
        \end{split}\end{equation}
because $\| \eta \|_2 , \| \s \|_2$ are finite, uniformly in $N$. Similarly (using that $v^2 \leq 2 (r+v)^2 + 2 r^2$), we find  
\begin{equation}\label{eq:P221-bd} \begin{split} \big| \langle e^{sA} \xi, 
\text{P}_{221}  e^{sA} \xi \rangle |   
        &\leq  C \langle e^{sA} \xi, (\cN_+ + 1)(\cK + 1)e^{sA}\xi\rangle 
        \end{split} \end{equation}
The expectation of the operator $\text{P}_{212}$ in (\ref{eq:P21P22V}) can be bounded using its expression in position space by  
\begin{equation}\label{eq:P212-bd} \begin{split} \big| \langle &e^{sA} \xi, \text{P}_{212}  e^{sA} \xi \rangle |   
        \\ &\leq \int_{\Lambda^3}dx dy dz \, N^{3/2}V(N( x-y)) \big\|  a\big( \check{\eta}_{H,z} \big)\check{a}_z\check{a}_{x} \check{a}_{y} e^{sA} \xi\big\| \big\| a^*(\check{\sigma}_{L,z}) \check{a}_{x}\check{a}_{y} e^{sA} \xi\big\| \\ &\leq \int_{\Lambda^3}dx dy dz \, N^{5/4}V(N( x-y)) 
      \, \big\| \check{a}_z\check{a}_{x} \check{a}_{y} (\cN_+ + 1)^{1/2} e^{sA} \xi\big\| \big\| \check{a}_{x}\check{a}_{y} (\cN_+ + 1)^{1/2} 
        e^{sA} \xi\big\| \\
        &\leq \;C N^{-1/4} \langle e^{sA} \xi,(\cN_+ +1) \cV_N   e^{sA} \xi \rangle 
        \end{split}\end{equation}
where we used the estimates $\| \check{\eta}_{H,z} \|_2 \leq C N^{-1/4}$, $\| \check{\sigma}_{L,z} \|_2 \leq C$. As for the operator $\text{P}_{222}$ in (\ref{eq:P21P22V}), the expectation of the first term 
is controlled by  
 \begin{equation}\label{eq:P2221}\begin{split} \Big| \frac{1}{N^{1/2}}\sum_{\substack{r \in P_H, \\v \in P_L}}  & \eta_r \gamma_v \big \langle e^{sA} \xi, b^*_{r+v} b^*_{-r}\cV_N b_{v} e^{sA} \xi\big\rangle \Big| \\
        &\leq \;  \frac CN \sum_{r,v \in\Lambda_+^*} \big \| \cV_N^{1/2}  a_{r+v} a_{-r}e^{sA} \xi\big\|^2 +\sum_{r,v \in\Lambda_+^*} |\eta_r|^2 \big \| \cV_N^{1/2} a_{v}e^{sA} \xi\big\|^2 \\
        &\leq \;  \frac CN \int_{\Lambda^2}dxdy\;  \sum_{r,v \in\Lambda_+^*}N^{2}V(N(x-y)) \big \|   a_{v} a_{r}\check{a}_x\check{a}_y e^{sA} \xi\big\|^2 \\
        &\hspace{0.5cm}+C\int_{\Lambda^2}dxdy\;\sum_{v \in\Lambda_+^*}N^{2}V(N(x-y))   \big \|  a_{v}\check{a}_x\check{a}_y e^{sA} \xi\big\|^2 \\
        &\leq \; C\langle \xi, e^{-sA}(\cN_+ + 1)(\cV_N + 1)e^{sA}\xi\rangle
        \end{split}\end{equation}
while the expectation of the second term is bounded in position space by 
        \begin{equation}\label{eq:P2222} \begin{split} 
        \Big | \int_{\Lambda^3} dx &dy dz \, N^{3/2}V(N(x- y)) \check{\gamma}_{L}(x-z) 
        \big\langle e^{sA}\xi,\check{b}_z^* b^*(\check{\eta}_{H,z})\check{a}^*_{y}\check{a}_{x}\check{b}_{y} e^{sA}\xi\big\rangle\Big |\\
        & \leq\;\Big( \int_{\Lambda^3}dxdydz \, N^{3/2} V(N(x-y)) \| \check{\eta}_{H,z} \|_2^2 \, \big\| \check{a}_{y} \check{a}_z  (\cN_+ + 1)^{1/2} e^{sA} \xi\big\|^2 \Big)^{1/2}\\
        &\hspace{2cm}\times \Big(\int_{\Lambda^3} dx dy dz \, N^{3/2} V(N(x-y)) \, |\check{\gamma}_{L} (x-z)|^2 \, \big\|  \check{a}_{x}\check{a}_y  e^{sA} \xi\big\|^2\Big)^{1/2} \\
        &\leq \; C N^{-1}\|\check{\gamma}_{L} \|_2\|\check{\eta}_{H} \|_2 \langle \xi, e^{-sA}(\cN_++1)^3 e^{sA}\xi\rangle ^{1/2} \langle \xi, e^{-sA}\cV_N e^{sA}\xi\rangle ^{1/2} \\
        &\leq \; C \langle \xi, e^{-sA}\cV_N e^{sA}\xi\rangle + C \langle \xi, (\cN_++1)^2 \xi\rangle
        \end{split}\end{equation}
        because $\| \check{\gamma}_L \|_2 \leq C N^{3/4}$ and $\| \check{\eta}_{H,z}\|_2 = \| \check{\eta}_H \|_2 \leq C N^{-1/4}$ for all $z \in \Lambda$. 
   {F}rom (\ref{eq:P2221}) and (\ref{eq:P2222}), we obtain that
   \begin{equation}\label{eq:P222-bd}   \langle e^{sA} \xi , \text{P}_{222} e^{sA} \xi \rangle \leq C \langle e^{sA} \xi , (\cH_N + 1) ( \cN_+ + 1) e^{sA} \xi \rangle \end{equation}
   
Combining (\ref{eq:P2VK}) with (\ref{eq:comm-dec}), (\ref{eq:P21P22K}), (\ref{eq:P21P22V}), (\ref{eq:commAgAsH}), (\ref{eq:P211-bd}), (\ref{eq:P221-bd}), (\ref{eq:P212-bd}) and (\ref{eq:P222-bd}), we conclude that 
        \begin{equation*}\begin{split}
        \big|\text{P}_2\big|
        &\leq C\langle \xi, e^{-sA}(\cN_++  1)(\cH_N + 1)e^{sA}\xi\rangle +C\langle \xi, (\cN_+ + 1)^{3}\xi\rangle
        \end{split}
        \end{equation*}
Applying (\ref{eq:P1}) and the last bound on the r.h.s. of (\ref{eq:P1P2}), we arrive at
        \[\partial_s\varphi_\xi(s) \leq C\varphi_\xi(s)+C\langle \xi, (\cN_+ + 1)^{3}\xi\rangle \]
for some constant $C>0$, independent of $\xi\in\cF_+^{\leq N}$. By Gronwall's lemma, we conclude that there exists another constant $C > 0$ such that, for all $s\in [0;1]$, 
        \[ \begin{split} \langle e^{sA} \xi, (\cN_+ + 1) ( \cH_N + 1) e^{sA} \xi \rangle = \; &\varphi_\xi(s) \\ \leq \; &C \varphi_\xi(0)+ C \langle \xi, (\cN_+ + 1)^{3}\xi\rangle \\ = \; &C \langle \xi, (\cN_++1)(\cH_N +1)\xi\rangle +  C\langle \xi, (\cN_+ + 1)^{3}\xi\rangle \, . \end{split}\]
This concludes the proof of the proposition.
\end{proof}

We summarize the results of this section in the following corollary, which is a simple consequence of Prop. \ref{prop:hpN}, Prop. \ref{prop:cN} and Prop. \ref{prop:cNcH}. 
\begin{cor}\label{cor:apri}
Let $V \in L^3 (\bR^3)$ be non-negative, compactly supported and spherically symmetric. 
Let $E_N$ be the ground state energy of $H_N$, defined in (\ref{eq:Ham0}). Let $\psi_N \in L^2_s (\Lambda^N)$ with $\| \psi_N \| = 1$ belong to the spectral subspace of $H_N$ with energies below $E_N + \zeta$, for some $\zeta > 0$, i.e. 
\begin{equation*}
\psi_N = {\bf 1}_{(-\infty ; E_N + \zeta]} (H_N) \psi_N 
\end{equation*}
Let $\xi_N = e^{-A} e^{-B(\eta)} U_N \psi_N$ be the cubically renormalized excitation vector associated with $\psi_N$. Then, there exists a constant $C > 0$ such that 
\begin{equation*} \left\langle \xi_N, \left[  (\cN_+ +1)(\cH_N+1) + (\cN_+ +1)^3 \right] \xi_N \right\rangle  \leq C (1 + \zeta^{3}) \, . \end{equation*}
\end{cor}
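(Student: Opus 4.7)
The plan is to combine the three propositions in a direct chain, using the fact that the cubically renormalized vector $\xi_N$ is obtained from the Bogoliubov-renormalized vector $\wt\xi_N := e^{-B(\eta)} U_N \psi_N$ via $\xi_N = e^{-A} \wt\xi_N$. Proposition \ref{prop:hpN} controls $\wt\xi_N$, and Propositions \ref{prop:cN} and \ref{prop:cNcH} let us transfer this control through conjugation by $e^A$.

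First I would observe that since $A$ is antisymmetric and the proofs of Prop.~\ref{prop:cN} and Prop.~\ref{prop:cNcH} proceed via a Gronwall argument on the function $s \mapsto \langle \xi, e^{-sA} M e^{sA} \xi\rangle$ (where $M$ is the relevant moment), the same argument run with $A$ replaced by $-A$ (which has the same algebraic structure) yields the reverse-direction bounds
\[ e^{A} (\cN_+ + 1)^k e^{-A} \leq C (\cN_+ + 1)^k, \qquad e^{A} (\cN_++1)(\cH_N+1) e^{-A} \leq C \bigl[(\cN_++1)(\cH_N+1) + (\cN_++1)^3\bigr] . \]
Writing $\langle \xi_N, M \xi_N \rangle = \langle \wt\xi_N, e^{A} M e^{-A} \wt\xi_N \rangle$ for each of the two moments $M$ appearing in the statement, these inequalities reduce the corollary to controlling $(\cN_++1)(\cH_N+1)$ and $(\cN_++1)^3$ on $\wt\xi_N$.

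Next I would apply Prop.~\ref{prop:hpN} with $k=1$ and $k=2$ to get
\[ \langle \wt\xi_N, (\cN_+ + 1)(\cH_N + 1) \wt\xi_N\rangle \leq C(1+\zeta^2), \qquad \langle \wt\xi_N, (\cN_+ + 1)^2 (\cH_N + 1) \wt\xi_N\rangle \leq C(1+\zeta^3). \]
The second inequality together with the elementary bound $\cN_+ + 1 \leq C(\cH_N + 1)$ (which holds on $\cF_+^{\leq N}$ because $p^2 \geq (2\pi)^2$ for all $p \in \Lambda_+^*$ and $\cV_N \geq 0$, so $\cH_N \geq \cK \geq (2\pi)^2 \cN_+$), combined with $[\cN_+, \cH_N] = 0$, gives
\[ \langle \wt\xi_N, (\cN_+ + 1)^3 \wt\xi_N\rangle \leq C \langle \wt\xi_N, (\cN_+ + 1)^2 (\cH_N + 1) \wt\xi_N\rangle \leq C(1+\zeta^3). \]

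Chaining everything: the transferred bounds give $\langle \xi_N, (\cN_++1)^3 \xi_N\rangle \leq C \langle \wt\xi_N, (\cN_++1)^3 \wt\xi_N\rangle \leq C(1+\zeta^3)$, and $\langle \xi_N, (\cN_++1)(\cH_N+1) \xi_N \rangle \leq C \langle \wt\xi_N, [(\cN_++1)(\cH_N+1) + (\cN_++1)^3] \wt\xi_N\rangle \leq C(1+\zeta^3)$. Adding the two estimates finishes the proof. There is no substantial obstacle here beyond verifying the orientation of the $e^{\pm A}$-conjugation bounds; the only point to keep in mind is that the $(\cN_++1)^3$ remainder produced by Prop.~\ref{prop:cNcH} is exactly the reason we need Prop.~\ref{prop:hpN} up to $k=2$ (not only $k=1$), which is why the final bound carries $\zeta^3$ rather than $\zeta^2$.
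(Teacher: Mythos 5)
Your proof is correct and is essentially the argument the paper intends (the paper itself just asserts the corollary is ``a simple consequence'' of Propositions~\ref{prop:hpN}, \ref{prop:cN} and \ref{prop:cNcH} without spelling out the chain). You correctly identified the one non-trivial technical point that the paper glosses over: because $\xi_N = e^{-A}\wt\xi_N$ with $A$ anti-self-adjoint, one has $\langle \xi_N, M\xi_N\rangle = \langle \wt\xi_N, e^{A} M e^{-A}\wt\xi_N\rangle$, whereas Prop.~\ref{prop:cN} and Prop.~\ref{prop:cNcH} are stated for $e^{-A} M e^{A}$. Your remedy --- observing that the proofs of those propositions are Gronwall arguments whose differential inequalities depend only on $|\eta_r|$, $|\sigma_v|$, etc., hence apply verbatim with $A$ replaced by $-A$ --- is exactly right. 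The remaining steps (invoking $\cN_+ + 1 \leq C(\cH_N + 1)$, the commutativity $[\cN_+, \cH_N]=0$ since $\cV_N$ is number-preserving, Prop.~\ref{prop:hpN} at $k=1,2$, and the resulting $\zeta^3$) are all sound.
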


\section{Diagonalization of the Quadratic Hamiltonian} \label{sec:diag}

{F}rom Proposition \ref{thm:tGNo1} we can decompose the cubically renormalized excitation Hamiltonian $\cJ_N$ defined in (\ref{eq:cJN-def}) as 
\begin{equation}\label{eq:deftGN}
        \begin{split}
        \cJ_N = C_{\cJ_N} + \cQ_{\cJ_N}  + \cV_N + \cE_{\cJ_N}
        \end{split}
        \end{equation}
with the constant $C_{\cJ_N}$ given in \eqref{eq:tildeCN}, the quadratic part 
 \begin{equation}\label{def:CNQN}
        \begin{split} 
       \cQ_{\cJ_N}=&\;\sum_{p\in\Lambda^*_+}\Big[F_pb^*_pb_p+\frac{1}{2}G_p\big(b^*_pb^*_{-p}+b_pb_{-p}\big)\Big]
        \end{split}
        \end{equation}
with the coefficients $F_p$, $G_p$ as in \eqref{eq:defFpGp} and the error term  $\cE_{\cJ_N}$ satisfying
         \[\pm {\mathcal{E}}_{\cJ_N}  \leq C N^{-1/4} \big[\cH_N+ (\cN_++1)^2\big](\cN_++1)\]
as an operator inequality on $\cF_+^{\leq N}$. 

Our goal in this section is to diagonalise the quadratic operator $\cQ_{\cJ_N}$. To reach this goal, we need first to establish some bounds for the coefficients $F_p, G_p$ in (\ref{def:CNQN}).
\begin{lemma}\label{lm:FpGp}
Let $V \in L^3 (\bR^3)$ be non-negative, compactly supported and spherically symmetric. Let $F_p, G_p$ be defined as in  \eqref{eq:defFpGp}. Then there exists a constant $C > 0$ such that 
\begin{equation*}
i) \quad p^2 /2 \leq F_p \leq C (1+p^2)\;; \hspace{0.5cm}ii) \quad |G_p| \leq \frac{C}{p^2} \; ;  \hspace{0.5cm} iii) \quad |G_p| < F_p 
\end{equation*}  
for all $p \in \Lambda^*_+$. 
\end{lemma}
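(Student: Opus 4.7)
The strategy is to use the scattering equation \eqref{eq:eta-scat0} to reveal a cancellation that is otherwise hidden in $F_p$ and $G_p$. Abbreviate $V_p := (\widehat{V}(\cdot/N) \ast \widehat{f}_{\ell,N})_p$; by \eqref{eq:fellN} and \eqref{eq:ktdef}, $\widehat{f}_{\ell,N}(q) = \delta_{q,0} + \eta_q/N$, so \eqref{eq:eta-scat0} becomes
\[
p^2\eta_p + \tfrac{1}{2}V_p = N^3 \lambda_\ell\,\widehat{\chi}_\ell(p) + N^2 \lambda_\ell\,\eta_p,
\]
i.e. $V_p = -2p^2 \eta_p + 2r_p$ with $r_p := N^3 \lambda_\ell\,\widehat{\chi}_\ell(p) + N^2\lambda_\ell\,\eta_p$. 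Using $N^3\lambda_\ell = O(1)$ from Lemma~\ref{3.0.sceqlemma}(i), the volume bound $|\widehat{\chi}_\ell(p)| \leq C\ell^3$ combined with the decay $|\widehat{\chi}_\ell(p)| \leq C\ell/|p|^2$ for $|p|\ell \geq 1$, and $|\eta_p| \leq C/|p|^2$ from \eqref{eq:etap}, one deduces $|r_p| \leq C(1+|p|^2)^{-1}$ (with $C$ depending on the fixed small $\ell$). Integrating the pointwise estimate $w_\ell(x) \leq C/(|x|+1)$ from Lemma~\ref{3.0.sceqlemma}(iii) over $|x| \leq N\ell$ further yields $\|\eta\|_\infty \leq C\ell^2$, which I take arbitrarily small by choosing $\ell$ small.

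Using the identities $\sigma_p^2 + \gamma_p^2 = \cosh(2\eta_p)$, $2\sigma_p\gamma_p = \sinh(2\eta_p)$, and $(\gamma_p+\sigma_p)^2 = e^{2\eta_p}$, I rewrite
\[
F_p = p^2 \cosh(2\eta_p) + V_p e^{2\eta_p}, \qquad G_p = p^2 \sinh(2\eta_p) + V_p e^{2\eta_p},
\]
so that $F_p - G_p = p^2 e^{-2\eta_p}$ and $F_p^2 - G_p^2 = p^2(p^2 + 2V_p) = p^4(1-4\eta_p) + 4p^2 r_p$. Substituting the scattering identity gives
\[
G_p = p^2\bigl[\sinh(2\eta_p) - 2\eta_p e^{2\eta_p}\bigr] + 2r_p e^{2\eta_p}, \quad F_p = p^2\bigl[\cosh(2\eta_p) - 2\eta_p e^{2\eta_p}\bigr] + 2r_p e^{2\eta_p},
\]
and Taylor expansion at $0$ yields $\sinh(2x) - 2xe^{2x} = -4x^2 + O(x^3)$ and $\cosh(2x) - 2xe^{2x} = 1 - 2x + O(x^2)$.

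For (ii), the bracket in $G_p$ is bounded by $C\eta_p^2 \leq C/|p|^4$, so $|p^2[\cdots]| \leq C/|p|^2$; combined with $|r_p| \leq C/(1+|p|^2)$ this gives $|G_p| \leq C/|p|^2$. The upper bound in (i) follows from $|V_p| \leq \|Vf_\ell\|_{L^1} \leq C\mathfrak{a}_0$ and $\cosh(2\eta_p) \leq C$. For the lower bound, smallness of $\|\eta\|_\infty$ forces $\cosh(2\eta_p) - 2\eta_p e^{2\eta_p} \geq 3/4$ uniformly, so $F_p \geq (3/4)p^2 - C/(1+|p|^2) \geq p^2/2$ since $p^2 \geq 4\pi^2$ on $\Lambda^*_+$ and the remaining constant can be absorbed for $\ell$ small. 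For (iii), $F_p^2 - G_p^2 = p^4(1-4\eta_p) + 4p^2 r_p \geq p^4/2 - C > 0$ for $p \in \Lambda^*_+$ and $\ell$ small; combined with $F_p \geq p^2/2 > 0$ from (i), this forces $|G_p| < F_p$.

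\emph{Main obstacle.} The crux is the cancellation $V_p e^{2\eta_p} + p^2\sinh(2\eta_p) = O(|p|^{-2})$, invisible from the individual bounds $|V_p|, |p^2\sinh(2\eta_p)| = O(1)$; it is precisely what the scattering equation encodes, and would be destroyed if $\eta_p$ were defined from the free dispersion. The analogous cancellation in $F_p$ extracts the leading $p^2$ behavior and, through the identity $F_p + G_p = (p^2+2V_p)e^{2\eta_p}$, delivers the positivity needed for (iii).
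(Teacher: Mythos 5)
Your rewriting of $F_p,G_p$ via $\cosh(2\eta_p)$, $\sinh(2\eta_p)$, $(\gamma_p+\sigma_p)^2 = e^{2\eta_p}$ is clean and correct, and the factorizations $F_p-G_p = p^2 e^{-2\eta_p}$, $F_p+G_p = (p^2+2V_p)e^{2\eta_p}$ are exactly the identities the paper uses for part iii) (modulo a typo in the paper's display \eqref{eq:iii-step2}, which drops the factor $2$ in front of $(\widehat V\ast\widehat f_{\ell,N})_p$; your version is the correct one). Your treatment of part ii) and of the upper bound in part i) is sound and close in spirit to the paper's.

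There is, however, a genuine gap in the lower bound for $F_p$ and in part iii). The claim ``$|r_p|\leq C(1+|p|^2)^{-1}$ with $C$ depending on the fixed small $\ell$, and the remaining constant can be absorbed for $\ell$ small'' does not close. From $N^3\lambda_\ell \sim 3\mathfrak a_0/\ell^3$ and $|\widehat\chi_\ell(p)|\leq C\ell/p^2$ you get the constant in $|r_p|\leq C/(1+p^2)$ scaling like $\mathfrak a_0/\ell^2$: it \emph{grows} as $\ell\to 0$, so taking $\ell$ small works against you, not for you. In fact the honest behavior of $r_p$ is the opposite of small: for $|p|\ell\ll 1$ one has $N^3\lambda_\ell\widehat\chi_\ell(p)\to 4\pi\mathfrak a_0$, so $r_p$ is $O(\mathfrak a_0)$ there and does not decay in $p$; the decay only kicks in once $|p|\gtrsim 1/\ell$. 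Consequently the estimate $F_p\geq \tfrac34 p^2 - C/(1+p^2)$ followed by ``$\geq p^2/2$ since $p^2\geq 4\pi^2$'' fails for the small-momentum modes unless $\mathfrak a_0$ is itself small, which is not assumed. The same problem infects ``$F_p^2 - G_p^2 \geq p^4/2 - C > 0$'' in part iii). What would rescue your argument is the \emph{sign} of $r_p$: for $|p|\ell$ small one has $\widehat\chi_\ell(p)>0$, hence $r_p\approx 4\pi\mathfrak a_0 > 0$, which makes $p^2(1-4\eta_p)+4r_p>0$ directly. But this positivity is not invoked in the proposal. The paper avoids the issue entirely by \emph{not} using the scattering equation here: it estimates $(\widehat V(\cdot/N)\ast\widehat f_{\ell,N})_p = (\widehat V(\cdot/N)\ast\widehat f_{\ell,N})_0 + O(|p|/N)$ for $|p|\leq N^{1/2}$, uses $(\widehat V(\cdot/N)\ast\widehat f_{\ell,N})_0 \approx 8\pi\mathfrak a_0 > 0$ together with $\gamma_p^2+\sigma_p^2\geq 1$ to get $F_p\geq p^2 - CN^{-1/2}$, and for $|p|>N^{1/2}$ simply uses boundedness of the convolution against $p^2 > N$.

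A smaller point: the form of the scattering equation you write, $p^2\eta_p + \tfrac12 V_p = N^3\lambda_\ell\widehat\chi_\ell(p) + N^2\lambda_\ell\eta_p$, is not \eqref{eq:eta-scat0} (nor \eqref{eq:eta-scat}); the last term should be the convolution $N^2\lambda_\ell(\widehat\chi_\ell\ast\eta)_p$. It does not change the order of magnitude of $r_p$ and looks like a slip, but it should be fixed for correctness.
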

\begin{proof} We first show the lower bound in i). For $p \in \L^*_+$ with $|p| \leq N^{1/2}$ we use
\[
\Big| \big(\widehat V(\cdot/N) \ast \widehat f_{\ell,N}\big)(p) - \big(\widehat V(\cdot/N) \ast \widehat f_{\ell,N}\big)(0) \Big| \leq \frac C N |p|\,,
\]
Since $(\g_{p}^2 + \s_{p}^2)\geq 1$, we have
\[
F_{p} \geq p^2 + \big(\widehat V(\cdot/N) \ast \widehat f_{\ell,N}\big)(0)  (\g_{p}+ \s_{p})^2 - C N^{-1/2} \geq p^2 - C N^{-1/2} \geq  p^2 /2
\]
for all $p \in \L^*_+$  such that  $|p| \leq N^{1/2}$, if $N$ is large enough. On the other side, for $|p| \geq N^{1/2}$ the inequality is clear, being  $|(\widehat V(\cdot/N) \ast \widehat f_{\ell,N})_p|\leq C$. The upper bound $F_{p} \leq C (1 +p^2)$ follows easily from the definition, from the boundness of $(\widehat V(\cdot/N) \ast \widehat f_{\ell,N})_p$ and from the fact that $|\s_{p}|, \g_{p} \leq C$ for all $p \in \L_+^*$. 

The proof of part ii) makes use of the relation (\ref{eq:eta-scat}) for the coefficients $\eta_{p}$.  For any $p \in \L^*_+$ we have
\begin{equation}\label{eq:GP-scat} 
G_{p} = 2 p^2 \eta_p + \widehat{V} (p/N) + \frac{1}{N} \sum_{q \in \Lambda^*} \widehat{V} ((p-q)/N) {\eta}_q + \wt{G}_{p} 
\end{equation}
where $|\wt{G}_{p}| \leq C p^{-2}$ for all $p \in \L^*_+$. Here we used the fact that $|\eta_p| \leq C p^{-2}$, which implies 
\[ | \sigma_{p} \gamma_{p} - \eta_{p} | \leq C |p|^{-6}, \qquad |(\sigma_{p} + \gamma_{p})^2 - 1| \leq C |p|^{-2} \]
With the relation (\ref{eq:eta-scat}) we obtain 
\begin{equation}\label{eq:GNp2} G_{p} = 2N^3 \lambda_{\ell} \widehat{\chi}_\ell(p) + 2N^2 \lambda_{\ell} \sum_{q \in \Lambda^*} \widehat{\chi}_\ell (p-q) {\eta}_q + \wt{G}_{p} \end{equation}
From Lemma \ref{3.0.sceqlemma}, part i), we have $N^3 \lambda_{\ell} \leq C$. A simple computation shows that 
\begin{equation}\label{eq:chip} \widehat{\chi}_\ell (p) = \int_{|x| \leq \ell}  e^{-ip \cdot x} dx = \frac{4\pi}{|p|^2} \left( \frac{\sin (\ell |p|)}{|p|} - \ell \cos (\ell |p|) \right) \end{equation}
which, in particular, implies that $|\widehat{\chi}_\ell (p)| \leq C |p|^{-2}$. Similarly, we find
\[ N^2 \lambda_\ell \sum_{q \in \Lambda^*} \widehat{\chi}_\ell (p-q) {\eta}_q = -N^3 \lambda_{\ell} \int_\Lambda \chi_\ell (x) w_{\ell} (N x) e^{-ip\cdot x} dx = -N^3 \lambda_{\ell} \int_{|x| \leq \ell} w_{\ell} (Nx) e^{-ip \cdot x} dx \]
Switching to spherical coordinates and integrating by parts, we find (abusing slightly the notation by writing $w_{\ell} (N r)$ to indicate $w_{\ell} (N x)$ for $|x| = r$),   
\[  \begin{split} 
\int_{|x| \leq \ell} w_{\ell} (N x) e^{-ip \cdot x} dx &= 2\pi \int_0^\ell dr \, r^2 w_{\ell} (N r) \int_0^\pi d\theta \, \sin \theta \, e^{-i |p| r \cos \theta} \\ &= \frac{4\pi}{|p|} \int_0^\ell dr \, r w_{\ell} (\ell r) \sin (|p|r) \\ &= - \frac{4\pi}{|p|^2} \lim_{r \to 0} r w_{\ell} (Nr) + \frac{4\pi}{|p|^2} \int_0^\ell dr \,  \frac{d}{dr} (r w_{\ell} (Nr)) \cos (|p| r) \end{split}\]
With (\ref{3.0.scbounds1}) and using again the bound $N^3\lambda_{\ell} \leq C$, we conclude that there is a constant $C > 0$ such that 
\begin{equation}\label{eq:bd-convchi}  \Big|  N^2 \lambda_\ell \sum_{q \in \Lambda^*} \widehat{\chi}_\ell (p-q) {\eta}_q \Big| \leq C |p|^{-2} \end{equation}
for all $p \in \L^*_+$.  

Finally, we show iii). To this end, we notice that 
\begin{equation}\label{eq:iii-step1} F_p - G_p = p^2 (\g_p - \s_p)^2 > 0 \end{equation}
because $\g_p \not = \s_p$ for all $p \in \L^*_+$. Furthermore, arguing as we did in the proof of part i) to show that $F_p \geq p^2/2$, we find  
\begin{equation}\label{eq:iii-step2}  F_p + G_p = (\g_p + \s_p)^2 \left[ p^2 + (\widehat{V}(./N) * \widehat{f}_{\ell,N})_p \right] \geq (\g_p+\s_p)^2 p^2 /2 \end{equation}
for all $p \in \L^*_+$. Since $\g_p \not = -\s_p$, we conclude that $F_p + G_p > 0$ for all $p \in \L^*_+$; (\ref{eq:iii-step1}) and (\ref{eq:iii-step2}) give $|G_p| < F_p$ for all $p \in \L^*_+$, as claimed. 
\end{proof}

Lemma \ref{lm:FpGp} shows that, $|G_p|/ F_p < 1$ for all $p \in \Lambda^*_+$. Hence, we can introduce coefficients $\tau_p \in \bR$ such that 
\begin{equation}\label{eq:taup} \tanh (2\tau_p) = - \frac{G_p}{F_p}  \, \end{equation}
for all $p \in \Lambda^*_+$. Equivalently, \[    \tau_p = \frac{1}{4}  \log \frac{1-G_p/F_p}{1+G_p/F_p}. \] 
Using these coefficients, we define the generalized Bogoliubov transformation $e^{B(\tau)}: \cF_+^{\leq N}\to \cF_+^{\leq N}$ with
        \begin{equation*}
        B(\tau):=\frac{1}{2}\sum_{p\in\Lambda^*_+}\tau_{p}\big(b^*_{-p}b^*_p-b_{-p}b_p\big) 
 \end{equation*}
We are going to conjugate the excitation Hamiltonian $\cJ_N$ defined in (\ref{eq:deftGN}) with 
$e^{B(\tau)}$ to diagonalize its quadratic component $\cQ_{\cJ_N}$. In the next lemma we 
show that, up to small errors, the other terms in $\cJ_N$ are left unchanged by this transformation. Here, we use the fact that, from Lemma \ref{lm:FpGp}, $|\tau_p | \leq C |p|^{-4}$ for some constant $C > 0$ and all $p \in \L^*_+$. 

\begin{lemma}\label{lm:err-con}
Let $V \in L^3 (\bR^3)$ be non-negative, compactly supported and spherically symmetric. Let $\tau_p$ be defined as in (\ref{eq:taup}), with $F_p, G_p$ as in \eqref{eq:defFpGp} and $\cV_N, \cH_N$ be as defined in (\ref{eq:KcVN}). Then, there exists a constant $C > 0$ such that 
\begin{equation}\label{eq:err-con} e^{-B(\tau)} (\cN_+ +1)(\cH_N+1) e^{B(\tau)} \leq C  (\cN_+ +1)(\cH_N+1) \end{equation}
and 
\begin{equation}\label{eq:err-VN} \pm \big[e^{-B(\tau)} \cV_N e^{B(\tau)}-\cV_N \big]\leq C  N^{-1/2}(\cH_N+1)(\cN_+ +1) \end{equation}
\end{lemma}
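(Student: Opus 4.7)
Plan: Both estimates will follow from Grönwall-type arguments. The main input from Lemma \ref{lm:FpGp} is the pointwise bound $|\tau_p| \leq C|p|^{-4}$ for all $p \in \Lambda^*_+$ (coming from $\tau_p = \tfrac14 \log\frac{1-G_p/F_p}{1+G_p/F_p}$ and $|G_p|\leq C/p^2$, $F_p \geq p^2/2$). This implies $\sum_{p} |\tau_p|, \sum_{p} p^2 |\tau_p|^2, \|\check\tau\|_\infty < \infty$ uniformly in $N$, where $\check\tau(z) = \sum_p \tau_p e^{ipz}$. In particular, the series expansion (\ref{eq:conv-serie}) and Lemma \ref{lm:dp} are available for $B(\tau)$.

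For (\ref{eq:err-con}) I would follow the scheme of Prop.~\ref{prop:cNcH}: define $\varphi_\xi(s) := \langle \xi, e^{-sB(\tau)}(\cN_++1)(\cH_N+1)e^{sB(\tau)}\xi\rangle$ and differentiate to obtain a commutator, which via Leibniz splits as $(\cN_++1)[\cH_N,B(\tau)] + [\cN_+,B(\tau)](\cH_N+1)$. Since $[\cN_+, b_p^* b_{-p}^*] = 2 b_p^* b_{-p}^*$, the commutator $[\cN_+,B(\tau)] = \sum_p \tau_p(b_p^* b_{-p}^* + b_p b_{-p})$ is controlled by $C(\cN_++1)$ using $\sum_p |\tau_p|<\infty$. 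Analogously $[\cK,B(\tau)] = \sum_p p^2 \tau_p (b_p^* b_{-p}^* + b_p b_{-p})$ is bounded by $C(\cK + \cN_++1)$ via Cauchy--Schwarz and $\sum_p p^2 |\tau_p|^2 <\infty$. The delicate piece is $[\cV_N, B(\tau)]$; computing it in position space and normal-ordering produces cubic and quintic terms, all of which can be estimated in terms of $\cV_N(\cN_++1)$ and $(\cN_++1)^2$ following the techniques in the proof of Prop.~\ref{prop:cNcH}. Using $(\cN_++1)^2 \leq C(\cN_++1)(\cH_N+1)$ (since $\cN_+ \leq C\cK \leq C\cH_N$) to absorb lower-order contributions, I obtain $\partial_s \varphi_\xi(s) \leq C \varphi_\xi(s)$, and Grönwall yields (\ref{eq:err-con}).

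For (\ref{eq:err-VN}) I would write
\[ e^{-B(\tau)}\cV_N e^{B(\tau)} - \cV_N = \int_0^1 ds\, e^{-sB(\tau)}[\cV_N, B(\tau)] e^{sB(\tau)} \]
and show the key pointwise bound $\pm [\cV_N, B(\tau)] \leq C N^{-1/2}(\cH_N+1)(\cN_++1)$, after which (\ref{eq:err-VN}) follows by combining with (\ref{eq:err-con}) applied to $e^{\pm sB(\tau)}$. To bound $[\cV_N, B(\tau)]$, I would expand in position space using $[\check a_u, \check b_x^*] = \delta(u-x)\sqrt{(N-\cN_+)/N}$ up to $O(1/N)$ remainders, giving a leading term of schematic form $T \sim \int N^2 V(N(u-z)) \check a_u^* \check a_z^* \check\beta^*_z \check a_u \, du\, dz$ (plus symmetric variants and $O(1/N)$ corrections), where $\check\beta^*_z := \sum_p \tau_p e^{ipz} b_p^* = \int \check\tau(z-w)\, \check b_w^* \, dw$ satisfies $\|\check\beta^*_z \eta\| \leq C\|(\cN_++1)^{1/2}\eta\|$ uniformly in $z$. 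A Cauchy--Schwarz applied with the measure $N^2 V(N(u-z))du\, dz$ yields the first factor $\|\cV_N^{1/2}\xi\|$ (of order one) and the second factor $(\int N^2 V(N(u-z))\|\check\beta^*_z \check a_u \xi\|^2 du\, dz)^{1/2}$; the extra $N^{-1/2}$ comes from $\int N^2 V(N(u-z))dz \leq C\|V\|_1 / N$, leaving only $\|(\cN_++1)^{1/2}\check a_u \xi\|^2$ to be integrated over $u$, which produces $\|(\cN_++1)\xi\|$.

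The main technical obstacle is handling the distributional nature of $\check b_w^*$: one cannot directly integrate $\|\check b_w^*\eta\|^2$ over $w$ because of the $\delta(0)$-singularity in $[\check b_w, \check b_w^*]$. The remedy is to always pair $\check b_w^*$ with $\check\tau(z-w)$ before applying Cauchy--Schwarz, forming the bounded smeared operator $\check\beta^*_z$. The subleading $O(1/N)$ terms from $[\check a, \check b^*]$ and from the normal-ordering of $\check a_u^* \check a_z^* \check\beta^*_z \check a_u$ produce contributions of order $N^{-3/2}$ or $N^{-1}$ times $(\cH_N+1)(\cN_++1)$ and therefore do not affect the stated rate.
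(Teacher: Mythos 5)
Your approach matches the paper's: a Gr\"onwall argument for (\ref{eq:err-con}) built on the three commutators $[\cN_+,B(\tau)]$, $[\cK,B(\tau)]$, $[\cV_N,B(\tau)]$, and for (\ref{eq:err-VN}) the integral representation combined with (\ref{eq:err-con}) and the key bound $\pm[\cV_N,B(\tau)]\leq CN^{-1/2}(\cV_N+\cN_++1)(\cN_++1)$.

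Two points should be corrected, though neither is fatal. First, your preliminary claim that the expansion (\ref{eq:conv-serie}) and Lemma~\ref{lm:dp} ``are available for $B(\tau)$'' is unjustified: $\|\tau\|$ is uniformly bounded by your $|\tau_p|\leq C|p|^{-4}$ estimate but not \emph{small} (the constant grows with the size of $V$), and in fact the paper explicitly arranges the diagonalization step (Lemma~\ref{lm:diago}) so as to avoid any smallness assumption on $\|\tau\|$. Your proof plan never actually invokes Lemma~\ref{lm:dp}, so the false claim is inert here, but it should be dropped. Second, the commutator $[\cV_N,B(\tau)]$ contains, besides the cubic-in-creation quartic term $T$ you describe, a genuinely quadratic piece $\tfrac12\int dx\,dy\,N^2V(N(x-y))\check\tau(x-y)\big(\check b_x^*\check b_y^*+\check b_x\check b_y\big)$ coming from the double contraction; this is neither an $O(1/N)$ correction nor a ``symmetric variant'' of $T$ — it contributes at the same order $N^{-1/2}$. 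It is handled by the same Cauchy--Schwarz with weight $N^2V(N(x-y))$ together with $\|\check\tau\|_\infty\leq\|\tau\|_1<\infty$, yielding $CN^{-1/2}\|\xi\|\,\|\cV_N^{1/2}\xi\|$, so the stated conclusion survives, but the term must be written out and bounded.
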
 
\begin{proof}
The proof of (\ref{eq:err-con}) is similar to the one of \cite[Lemma 5.4]{BBCS2}; the only difference is the fact that, here, the potential energy $\cV_N$ scales differently with $N$. We review therefore the main steps of the proof, focussing on terms involving $\cV_N$.

We are going to apply Gronwall's lemma. For $ \xi \in\cF_+^{\leq N}$ and $s \in \bR$, we compute
        \[\begin{split} &\partial_s  \langle \xi, e^{-sB(\tau)}  (\cH_N +1 )(\cN_+ +1) e^{sB(\tau)} \xi \rangle = -\langle \xi, e^{-s B(\tau)} [B(\tau),  (\cH_N+1)(\cN_+ +1) ] e^{sB(\tau)} \xi \rangle \end{split} \]
By the product rule, we have
        \begin{equation}\label{eq:commBHN} \begin{split} 
        [ B(\tau),  &(\cH_N+1)(\cN_+ +1) ] \\ = \; &(\cH_N +1)  
        [B(\tau), \cN_+  ] + [ B(\tau),  \cK](\cN_+ +1) + [B(\tau), \cV_N] (\cN_+ +1)  \end{split}
        \end{equation}
The first term on the r.h.s. of (\ref{eq:commBHN}) can be written as  
        \begin{equation}\label{eq:commHBN} \begin{split}
        \langle \xi, e^{-sB(\tau)} (\cH_N + 1) &[B(\tau), \cN_+ ] e^{s B(\tau)} \xi \rangle \\ = \; &\sum_{p,q \in \Lambda^*_+} \tau_p q^2 \langle \xi,  e^{-sB(\tau)} a_q^* a_q (b_p b_{-p}+b_p^*b_{-p}^*) e^{sB(\tau)} \xi \rangle \\ &+ \sum_{p \in \Lambda^*_+} \tau_p \langle \xi,  e^{-sB(\tau)} \cV_N (b_p b_{-p}+b_p^*b_{-p}^*) e^{sB(\tau)} \xi \rangle \\ =: \; &\text{I} + \text{II} 
        \end{split} \end{equation}
From the proof of \cite[Lemma 5.4]{BBCS2}, we have
        \[ \begin{split} 
        | \text{I} |  &\leq C \langle e^{sB(\tau)} \xi , (\cN_+ +1) ( \cK+1) e^{sB(\tau)} \xi \rangle \end{split} \]
To estimate $\text{II}$, we switch to position space. We find 
        \[\begin{split} | \text{II} | &\leq \sum_{p \in \Lambda^*_+}  |\tau_p|  \int dx dy \, N^2 V(N(x-y)) \Big| \langle \check{a}_x \check{a}_y e^{sB(\tau)} \xi , \check{a}_x \check{a}_y (b_p b_{-p} + b_p^* b_{-p}^*) e^{sB(\tau)} \xi \rangle \Big| 
        \\ &\leq  \sum_{p \in \Lambda^*_+} |\tau_p | \int dx dy \, N^2 V(N(x-y)) \| \check{a}_x \check{a}_y (\cN_+ +1)^{1/2} e^{sB(\tau)} \xi \| \\ &\hspace{.3cm} \times \Big[ \| (b_p b_{-p} + b_p^* b_{-p}^*) (\cN_+ +1)^{-1/2} \check{a}_x \check{a}_y   e^{sB(\tau)} \xi \| + \| \check{a}_y e^{s B(\tau)} \xi \| + \| \check{a}_x e^{ sB(\tau)} \xi \| + \| \xi \| \Big] \\  &\leq C \langle \xi, e^{-sB(\tau)} (\cV_N + 1) (\cN_+ +1) e^{sB(\tau)} \xi \rangle \end{split} \]
since $\tau \in\ell^1(\Lambda_+^*)$, uniformly in $N$. {F}rom (\ref{eq:commHBN}), we obtain that 
        \begin{equation}\label{eq:BN-fin} \Big| \langle \xi, e^{-sB(\tau)} (\cH_N + 1) [B(\tau), \cN_+ ] e^{s B(\tau)} \xi \rangle \Big| \leq C 
        \langle \xi, e^{-sB(\tau)}  (\cH_N +1 )(\cN_+ +1) e^{sB(\tau)} \xi \rangle 
        \end{equation}

The second term on the r.h.s. of (\ref{eq:commBHN}) can be bounded as in \cite{BBCS2} by 
        \begin{equation}\label{eq:Kcom}  \begin{split} \Big|\langle \xi, &e^{-sB(\tau)} [B(\tau) ,  \cK ] (\cN_+ +1) e^{sB(\tau)} \xi \rangle \Big| \leq C \langle \xi, e^{-s B(\tau)}  (\cH_N+1)(\cN_+ +1) e^{sB(\tau)} \xi \rangle \end{split} \end{equation}
       
Finally, we analyse the third term on the r.h.s. of (\ref{eq:commBHN}). Again, it is convenient to switch to position space. We find
        \begin{equation}\label{eq:comVNBtau} \begin{split} 
        [ B(\tau), \cV_N ] (\cN_+ +1) =\; &\frac{1}{2} \int_{\Lambda \times \Lambda} dx dy \, N^{2} V(N (x-y)) \check{\tau} (x-y) (\check{b}_x^* \check{b}_y^* +  \check{b}_x \check{b}_y) (\cN_+ +1)   \\ &+  \int_{\Lambda \times \Lambda} dx dy\; N^{2} V(N (x-y))  \big[ b_x^* b_y^* a^* (\check{\tau}_y) \check{a}_x + \text{h.c.} \big] (\cN_+ +1)  \end{split} \end{equation}
where $\check{\tau} (x) = \sum_{p \in \Lambda^*_+} \tau_p e^{ip \cdot x}$. Using $\| \check{\tau} \|_\infty \leq \| \tau \|_1 \leq C < \infty$ as well as $\| \check{\tau}_y \| = \| \check{\tau} \| = \| \tau \| \leq C < \infty$ independently of $y \in \Lambda$ and of $N$, it is then simple to check that
       \[\Big|   \langle \xi, e^{-sB(\tau)}  [ B(\tau), \cV_N ] (\cN_+ +1)e^{sB(\tau)}\xi \rangle \Big| \leq C \langle \xi, e^{-sB(\tau)} (\cV_N +\cN_+ +1) (\cN_+ +1)e^{sB(\tau)} \xi \rangle   \]
Combining this bound with (\ref{eq:BN-fin}) and (\ref{eq:Kcom}), we obtain 
\[ \Big| \partial_s  \langle \xi, e^{-sB(\tau)}  (\cH_N +1 )(\cN_+ +1) e^{sB(\tau)} \xi \rangle \Big| \leq C   \langle \xi, e^{-sB(\tau)}  (\cH_N +1 )(\cN_+ +1) e^{sB(\tau)} \xi \rangle \]
By Gronwall's inequality and integrating over $s \in [0;1]$ we conclude (\ref{eq:err-con}).

To prove (\ref{eq:err-VN}), on the other hand, we write 
  \[\begin{split}
        e^{-B(\tau)} \cV_N e^{B(\tau)}-\cV_N = \int_0^1 ds\, e^{-sB(\tau)} [  \cV_N,B(\tau) ]e^{sB(\tau)} 
        \end{split}\]
With (\ref{eq:comVNBtau}), it is simple to check that
        \[\pm[ B(\tau), \cV_N ] \leq N^{-1/2} (\cV_N + \cN_+ +1)(\cN_++1) \]
By (\ref{eq:err-con}) and $\cN_+\leq \cK$, the last bound immediately implies
        \[ \pm \big[e^{-B(\tau)} \cV_N e^{B(\tau)}-\cV_N \big]\leq C  N^{-1/2}(\cH_N+1)(\cN_+ +1) \, .\] \end{proof} 
 
The next lemma shows that the generalized Bogoliubov transformation $e^{B(\tau)}$ diagonalizes the quadratic operator $\cQ_{\cJ_N}$, up to errors that are negligible in the limit of large $N$. A similar result was established in \cite[Lemma 5.2]{BBCS2}, but only under the additional assumption of small interaction potential which guaranteed $\| \tau \|$ to be sufficiently small and therefore allowed us to use the identity (\ref{eq:ebe}) and the bounds in Lemma \ref{lm:dp} to control the action of $e^{B(\tau)}$. Below, we provide a new proof which does not require smallness of $\| \tau \|$. 
\begin{lemma}\label{lm:diago}
Let $V \in L^3 (\bR^3)$ be non-negative, compactly supported and spherically symmetric. Let $\cQ_{\cJ_N}$ be defined as in (\ref{def:CNQN}) and $\tau_p$ as in (\ref{eq:taup}) with the coefficients $F_p, G_p$ as in  \eqref{eq:defFpGp}. Then   
\[ e^{-B(\tau)} \cQ_{\cJ_N} e^{B(\tau)} = \frac{1}{2} \sum_{p \in \Lambda^*_+} \left[ -F_p + \sqrt{F_p^2 - G_p^2} \right] + \sum_{p \in \Lambda^*_+} \sqrt{F_p^2 - G_p^2} \; a_p^* a_p + \delta_{N} \]
where the operator $\delta_{N}$ is such that, on $\cF_+^{\leq N}$, 
\begin{equation} \label{eq:pmdelta} 
\pm \delta_N \leq C N^{-1} (\cK + 1)(\cN_+ +1)  \end{equation}
\end{lemma}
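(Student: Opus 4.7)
\textbf{Proof plan for Lemma \ref{lm:diago}.}

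The plan is to reduce the problem to the explicitly solvable diagonalization by the \emph{standard} Bogoliubov transformation and then quantify the discrepancy introduced by the projection onto $\cF_+^{\leq N}$ encoded in the factors $\sqrt{(N-\cN_+)/N}$ that distinguish $b_p^\#$ from $a_p^\#$. First I would rewrite the quadratic operator in terms of the canonical modes: since $b_p^\# = \sqrt{(N-\cN_+)/N}\, a_p^\#$ and using $[\cN_+,a_p^\#] = \mp a_p^\#$, one splits
\[
\cQ_{\cJ_N} \;=\; \widetilde{\cQ} + \cR_0,
\qquad
\widetilde{\cQ} := \sum_{p\in\Lambda_+^*}\Big[F_p a_p^* a_p + \tfrac{1}{2}G_p\big(a_p^* a_{-p}^* + a_p a_{-p}\big)\Big],
\]
where $\cR_0$ collects all corrections containing at least one factor $\cN_+/N$. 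Using parts i)--ii) of Lemma \ref{lm:FpGp} (so that $F_p \leq C(1+p^2)$ and $|G_p|\leq C/p^2$), one checks that $\pm \cR_0 \leq C N^{-1}(\cK+1)(\cN_++1)$, which is already of the size allowed in \eqref{eq:pmdelta}.

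The main task is then to analyze $e^{-B(\tau)}\widetilde{\cQ} e^{B(\tau)}$. I would introduce the interpolating family $\Psi(s) := e^{-sB(\tau)}\widetilde{\cQ} e^{sB(\tau)}$ for $s\in[0;1]$ and compute
\[
\partial_s \Psi(s) \;=\; -\, e^{-sB(\tau)}\big[B(\tau),\widetilde{\cQ}\big]\, e^{sB(\tau)}.
\]
Using the modified commutation relations (\ref{eq:comm-bp}) one decomposes
\[
\big[B(\tau),\widetilde{\cQ}\big] \;=\; \big[\widetilde{B}(\tau),\widetilde{\cQ}\big] + \cE_\tau,
\]
where $\widetilde{B}(\tau)$ is the \emph{standard} Bogoliubov generator obtained from $B(\tau)$ by replacing $b_p^\#$ with $a_p^\#$, and $\cE_\tau$ collects all terms arising from the deviations $[b_p,b_q^*]-\delta_{p,q} = -\cN_+\delta_{p,q}/N - a_q^* a_p/N$. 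Every term in $\cE_\tau$ carries an explicit prefactor $1/N$; together with $|\tau_p|\leq C p^{-4}$ (Lemma \ref{lm:FpGp}) this yields $\pm \cE_\tau \leq C N^{-1}(\cK+1)(\cN_++1)$.

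The ``standard'' piece $[\widetilde{B}(\tau),\widetilde{\cQ}]$ is again a quadratic polynomial in the canonical $a,a^*$ operators whose $s$-dependent evolution under $e^{s\widetilde{B}(\tau)}$ is explicitly integrable via the classical formula (\ref{eq:act-Bog}). With the choice $\tanh(2\tau_p) = -G_p/F_p$, the standard Bogoliubov identity gives exactly
\[
e^{-\widetilde{B}(\tau)}\widetilde{\cQ} e^{\widetilde{B}(\tau)} \;=\; \tfrac{1}{2}\sum_{p\in\Lambda_+^*}\big[{-}F_p + \sqrt{F_p^{\,2}-G_p^{\,2}}\big] + \sum_{p\in\Lambda_+^*} \sqrt{F_p^{\,2}-G_p^{\,2}}\; a_p^* a_p.
\]
Writing the Duhamel identity relating $e^{-B(\tau)}\widetilde{\cQ} e^{B(\tau)}$ to $e^{-\widetilde{B}(\tau)}\widetilde{\cQ} e^{\widetilde{B}(\tau)}$ (through an interpolation that continuously swaps $\widetilde{B}(\tau)$ for $B(\tau)$), one picks up only an integrated remainder coming from $\cE_\tau$. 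Using Lemma \ref{lm:err-con} to propagate $(\cK+1)(\cN_++1)$ through $e^{sB(\tau)}$, this remainder is bounded by $C N^{-1}(\cK+1)(\cN_++1)$, giving the claimed $\delta_N$.

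\textbf{Main obstacle.} The previous argument in \cite[Lemma 5.2]{BBCS2} used the series expansion (\ref{eq:ebe}) and Lemma \ref{lm:dp} term-by-term, which forced $\|\tau\|$ to be small. Here $|\tau_p|\leq C p^{-4}$ gives only a finite $\ell^2$-norm with a constant depending on $V$; I cannot assume smallness. The delicate point is therefore the algebraic step of identifying the ``canonical'' commutator $[\widetilde{B}(\tau),\widetilde{\cQ}]$ inside $[B(\tau),\widetilde{\cQ}]$ and showing that \emph{all} discrepancies come with an explicit $1/N$ factor (so that they produce only the $N^{-1}$ remainder in $\delta_N$), without ever expanding in powers of $\tau$. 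Carrying this out cleanly, and combining it with the Gronwall/Duhamel step in such a way that the $e^{sB(\tau)}$-propagators can be absorbed into constants via Lemma \ref{lm:err-con}, is the heart of the proof.
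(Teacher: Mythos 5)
Your diagnosis of the main obstacle is correct: since $\|\tau\|$ is not small, one cannot use the absolutely convergent series expansion (\ref{eq:conv-serie}) and Lemma~\ref{lm:dp}, so a Duhamel-type argument is needed. Your bound on the preliminary remainder $\cR_0 = \cQ_{\cJ_N} - \widetilde{\cQ}$ is also fine and gives an error of the allowed size. However, the central step of your plan has a genuine gap.

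The ``swap'' interpolation that you invoke to relate $e^{-B(\tau)}\widetilde{\cQ}\,e^{B(\tau)}$ to $e^{-\widetilde{B}(\tau)}\widetilde{\cQ}\,e^{\widetilde{B}(\tau)}$ necessarily passes through operators built from the \emph{standard} Bogoliubov generator $\widetilde{B}(\tau)$. The unitary $e^{\widetilde{B}(\tau)}$ does not map $\cF_+^{\leq N}$ into itself, so the interpolating family is not well-defined on the truncated Fock space, and none of the estimates you want to use along the path (Lemma~\ref{lm:err-con}, Lemma~\ref{lm:Ngrow}) apply to $e^{s\widetilde{B}(\tau)}$. The Gronwall formulation $\partial_s\Psi(s)=-e^{-sB(\tau)}[B(\tau),\widetilde{\cQ}]e^{sB(\tau)}$ by itself is fine, but you cannot integrate the ``canonical'' piece $[\widetilde{B}(\tau),\widetilde{\cQ}]$ explicitly because the conjugation is by $e^{sB(\tau)}$, not by $e^{s\widetilde{B}(\tau)}$ --- you would have to iterate, and at that point you are no longer using the classical diagonalization formula as a black box. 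A second, more minor issue: having rewritten $\widetilde{\cQ}$ in terms of the canonical operators $a_p^\sharp$, the commutator $[B(\tau),\widetilde{\cQ}]$ requires $[b_p^*b_{-p}^*, a_q a_{-q}]$, which is algebraically awkward because $[b_p^*, a_q]$ is not among the clean modified relations; only $[a_q^*a_q, b_p^\sharp]$ and $[b_p, b_q^\sharp]$ (from (\ref{eq:form-co}) and (\ref{eq:comm-bp})) have the simple form that keeps the error structure transparent.

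The paper's actual argument avoids both problems by never leaving the $b$-operator framework and never invoking $e^{\widetilde{B}(\tau)}$: it iterates the Duhamel expansion at the level of individual operators, writing $e^{-B(\tau)}b_p e^{B(\tau)} = b_p + \int_0^1 ds\,e^{-sB(\tau)}[b_p,B(\tau)]e^{sB(\tau)}$, splitting $[b_p,B(\tau)] = \tau_p b_{-p}^* + (1/N\text{-corrections})$ via (\ref{eq:comm-bp}), and iterating the ``clean'' term. This builds up the Taylor series of $\cosh\tau_p$ and $\sinh\tau_p$ (convergent for each fixed $p$ with no smallness assumption) while packaging all $1/N$-corrections, conjugated by the intact $e^{-sB(\tau)}$, into a remainder $D_p$ that is controlled by Lemma~\ref{lm:Ngrow}. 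One then substitutes $e^{-B(\tau)}b_p e^{B(\tau)} = \cosh(\tau_p)b_p + \sinh(\tau_p)b_{-p}^* + D_p$ into $\cQ_{\cJ_N}$ and bounds the $D_p$-terms directly. If you wish to rescue your operator-level Gronwall scheme, you should keep $\cQ_{\cJ_N}$ in terms of $b_p^\sharp$, iterate the commutator with $B(\tau)$ always conjugating by $e^{sB(\tau)}$, and show the iterated canonical piece sums to the hyperbolic rotation --- but that is, in disguise, the paper's iterated Duhamel expansion.
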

\begin{proof}
Using the commutation relations (\ref{eq:comm-bp}), we expand 
\[ \begin{split}
e^{-B(\tau)} b_p e^{B(\tau)} =\; &  b_p + \int_0^1 ds \, e^{-s B(\t)} [b_p, B(\t)]   e^{s B(\t)}   \\
  = \; &b_p + \int_0^1 ds \, e^{-s B(\t)}  \t_{p} b^*_{-p}  e^{s B(\t)} \\
&- \int_0^1 ds \, e^{-s B(\t)} \Big( \t_{p} \frac{\cN_+} N b^*_{-p} + \frac 1 N \sum_{q \in \L^*_+} b^*_q a^*_{-q} a_p \t_{q} \Big)   e^{s B(\t)}  \\
=\; & b_p + \t_{p} b^*_{-p} +  \int_0^1 ds_1 \int_0^{s_1} ds_2 \, e^{-s_2 B(\t)} \t^2_p b_p  e^{s_2 B(\t)} \\
&- \int_0^1 ds \, e^{-s B(\t)} \Big( \t_{p} \frac{\cN_+} N b^*_{-p} + \frac 1 N \sum_{q \in \L^*_+} b^*_q a^*_{-q} a_p \t_{q} \Big)   e^{s B(\t)} \\
&- \int_0^1 ds_1 \int_0^{s_1} ds_2 \, e^{-s_2 B(\tau)} \left( \tau_p^2 \frac{\cN_+}{N} b_p  + \frac{\tau_p}{N} \sum_{q \in \L^*_+} \tau_q a^*_{-p} a_{-q} b_q \right) e^{s_2 B(\tau)}   
\end{split}\]
Iterating the expansion, and using Lemma \ref{lm:Ngrow} to control the error term, we get
\be \begin{split} \label{eq:deco-tau}
e^{-B(\tau)} b_p e^{B(\tau)} =\; & \cosh(\t_{p}) b_p + \sinh(\t_{p}) b^*_{-p} + D_{p}
\end{split}\ee
with the remainder operator 
\[ \begin{split}
D_{p} = \; & \sum_{n \geq 0}  \int_0^1 d s_1 \cdots \int_0^{s_{2n}} d s_{2n+1}\\
& \hskip 1.5cm \times e^{-s_{2n+1}B(\t)} \Big(- \t_{p}^{2n+1} \, \frac{\cN_+}{N} b^*_{-p} - \frac 1 N \,\t_{p}^{2n} \sum_{q \in \L^*_+} b^*_q a^*_{-q} a_p \t_{q} \Big) e^{s_{2n+1}B(\t)} \\
& + \; \sum_{n \geq 1}  \int_0^1 d s_1 \cdots \int_0^{s_{2n-1}} d s_{2n}\\
& \hskip 1.5cm \times e^{-s_{2n}B(\t)} \Big(- \t_{p}^{2n} b_p\, \frac{\cN_+}{N}  - \frac 1 N \,\t_{p}^{2n-1} \sum_{q \in \L^*_+} a^*_{-p} a_{-q} b_q \t_{q} \Big) e^{s_{2n}B(\t)} 
\end{split}\]
From $\|\t \|_1 \leq C$ and Lemma \ref{lm:Ngrow}, it follows that
\be \label{eq:DpBound}
\| (\cN+1)^{n/2} D_{p} \xi \| \leq  \frac{C}{N} |\t_{p}| \| \|(\cN_++1)^{(n+3)/2} \xi \| + \frac{C}{N} \int_0^1ds\; \| a_p (\cN_+ +1)^{(n+2)/2} e^{s B(\tau)} \xi \|
\ee

With (\ref{eq:deco-tau}) and using the shorthand notation $\wt{\gamma}_{p} = \cosh \tau_{p}, \wt{\sigma}_{p} = \sinh \tau_{p}$, we can write
\begin{equation}\label{eq:diag1} \begin{split}  
e^{-B(\tau)}& \cQ_{\cJ_{N}} e^{B(\tau)} \\
= \; & \sum_{p \in \Lambda^*_+} \big(F_{p} \wt{\sigma}_{p}^2 + G_{p} \wt{\gamma}_{p} \wt{\sigma}_{p} \big) + \sum_{p \in \Lambda^*_+} \Big[ F_{p} (\wt{\gamma}^2_{p} + \wt{\sigma}^2_{p}) + 2 G_{p} \wt{\sigma}_{p} \wt{\gamma}_{p} \Big] b_p^* b_p \\ &+ \frac{1}{2} \sum_{p \in \Lambda^*_+} \Big[ 2 F_p \wt{\gamma}_p \wt{\sigma}_p + G_p (\wt{\gamma}_p^2 + \wt{\sigma}^2_p) \Big] (b_p b_{-p} + b_p^* b_{-p}^* ) + {\delta}_{N} \end{split}  \end{equation}
where 
\begin{equation}\label{eq:delta1} \begin{split} {\delta}_{N} = \; &\sum_{p \in \Lambda_+^*} F_{p} D_{p}^* e^{-B(\tau)} b_{p} e^{B(\tau)} + \sum_{p \in \Lambda^*_+} F_{p} (\wt{\gamma}_{p} b_p^* + \wt{\sigma}_{p} b_p) D_{p} \\ &+ \frac{1}{2} \sum_{p \in \Lambda^*_+} G_{p} \Big[ D_{p}^* e^{-B(\tau)} b_{-p}^* e^{B(\tau)} + \text{h.c.} \Big] + \frac{1}{2} \sum_{p \in \Lambda^*_+} G_{p} \Big[ (\wt{\gamma}_{p} b_p^* + \wt{\sigma}_{p} b_{-p}) D_{-p}^* + \text{h.c.} \Big]  \end{split} \end{equation}
With (\ref{eq:taup}), a lengthy but straightforward computation leads to 
\[ e^{-B(\tau)} \cQ_{\cJ_{N}} e^{B(\tau)} = \; 
\frac{1}{2} \sum_{p \in \Lambda^*_+} \Big[ - F_{p} + \sqrt{F_{p}^2 - G_{p}^2} \Big] + \sum_{p \in \Lambda^*_+} \sqrt{F_{p}^2 - G_{p}^2} \; b_p^* b_p +{\delta}_{N} \]
{F}rom the bound $F_{p} \leq C (1+p^2)$ in Lemma \ref{lm:FpGp} we obtain  
\[ \begin{split} 
\Big| \sum_{p \in \Lambda^*_+} \sqrt{F_{p}^2 - G_{p}^2} \, \Big[ \langle \xi , b_p^* b_p \, \xi \rangle - \langle \xi , a_p^* a_p \xi \rangle \Big]  \Big| = &\; \Big| \frac{1}{N} \sum_{p \in \Lambda^*_+} \sqrt{F_{p}^2 - G_{p}^2} \, \langle \xi, a_p^* \cN_+  a_p \xi \rangle \Big| \\ \leq \; &\frac{C}{N} \sum_{p \in \Lambda^*_+} (p^2 + 1) \| a_p (\cN_+ +1)^{1/2} \xi \|^2 \\ = \; &\frac{C}{N} \langle \xi, (\cN_+ +1)(\cK+1) \xi \rangle \end{split} \]
for all $\xi \in \cF_+^{\leq N}$. Hence, the claim follows if we can show that the operator ${\delta}_{N}$ defined in (\ref{eq:delta1}) satisfies (\ref{eq:pmdelta}). Consider first the expectation of the first term on the r.h.s. of (\ref{eq:delta1}). Using \eqref{eq:DpBound}, the bounds $|F_{p}| \leq C (1 + p^2)$ and $|\tau_p| \leq C |p|^{-4}$, Lemma \ref{lm:Ngrow} and then also Lemma \ref{lm:err-con}, we arrive at  
\begin{equation*}
\begin{split} 
 \Big| &\sum_{p \in \Lambda^*_+} F_{p} \langle D_{p} \xi , e^{-B(\tau)} b_p e^{B(\tau)} \xi \rangle \Big| \\ 
&\leq  \frac C N  \sum_{p \in \Lambda^*_+} |F_p| \, \| (\cN_+ +1)^{-1/2} D_{p} \xi \| \| (\cN_+ +1)^{1/2}   e^{-B(\tau)} b_p e^{B(\tau)}   \xi \|  \\
& \leq  \frac C N  \sum_{p \in \Lambda^*_+} (1+ p^2 ) |\t_p|  \| (\cN_++1) \xi \|  \| a_p  (\cN_++1)^{1/2} e^{B(\t)} \xi \| \\
& \hspace{0.5cm} +  \frac C N  \sum_{p \in \Lambda^*_+} (1+ p^2 ) \int_0^1ds\; \| a_p (\cN_+ +1)^{1/2} e^{s B(\tau)} \xi \|   \| a_p  (\cN_++1)^{1/2} e^{B(\t)} \xi \| \\
& \leq \frac C N \bmedia{\xi, (\cH_N + 1)(\cN_++1) \xi }
\end{split} \end{equation*}
The expectation of the second term on the r.h.s. of (\ref{eq:delta1}) can be bounded similarly. As for the third term on the r.h.s. of (\ref{eq:delta1}), we estimate 
\[ \begin{split} 
\Big| \sum_{p \in \Lambda^*_+} G_{p} &\langle  \xi , D_{p}^* e^{-B(\tau)} b^*_{-p} e^{B(\tau)} \xi \rangle \Big| \\ 
=\; & \sum_{p \in \Lambda^*_+} |G_{p}| \| (\cN_+ +1)^{-1/2}  D_{p} \xi \| \| (\cN_+ +1)^{1/2} e^{-B(\tau)} b_{-p}^* e^{B(\tau)} \xi \| \\ 
\leq \; & \frac C N  \sum_{p \in \Lambda^*_+} |G_{p}|  \Big(  |\t_{p}|  \| (\cN_++1) \xi \| +  \int_0^1ds\; \| b_p (\cN_++1)^{1/2} e^{s B(\tau)} \xi \| \Big)  \| (\cN_++1) \xi \| \\
\leq \; & \frac C N   \| (\cN_++1) \xi \|^2
 \end{split} \]
where we used Lemma \ref{lm:Ngrow} and the bound $|G(p)| \leq C |p|^{-2}$ from Lemma \ref{lm:FpGp}. The last term on the r.h.s. of (\ref{eq:delta1}) can be controlled similarly. 
\end{proof}

It follows from Lemma \ref{lm:err-con} and Lemma \ref{lm:diago} that the new excitation Hamiltonian $\cM_N : \cF_+^{\leq N} \to \cF_+^{\leq N}$ defined by 
\begin{equation*} 
\cM_N =  e^{-B(\tau)} \cJ_N e^{B(\tau)} = e^{-B(\tau)} e^{-A} e^{-B(\eta)}U_N H_N U_N^* e^{B(\eta)} e^{A} e^{B(\tau)} \end{equation*} 
can be decomposed as \[ \cM_N = C_{\cM_N}  + \cQ_{\cM_N} +\cV_N + \cE_{\cM_N} \]
where 
        \begin{equation}\begin{split}\label{def:CNQNJN} C_{\cM_N}:=&\; C_{\cJ_N} +\frac{1}{2} \sum_{p \in \Lambda^*_+} \left[ -F_p + \sqrt{F_p^2 - G_p^2} \right];\hspace{0.5cm} \cQ_{\cM_N}:=\;\sum_{p \in \Lambda^*_+} \sqrt{F_p^2 - G_p^2} \; a_p^* a_p
        \end{split}
        \end{equation}
with $C_{\cJ_N}$ as in (\ref{eq:tildeCN}) and $F_p, G_p$ as in (\ref{eq:defFpGp}) 
and where  the error $ \cE_{\cM_N}$ is such that 
        \[ \pm \cE_{\cM_N}\leq CN^{-1/4} \big[(\cH_N+1) (\cN_++1)  + (\cN_++1)^3 \big] \]

To conclude this section, we are going to compute the constant $C_{\cM_{N}}$ and the diagonal coefficients $(F_{p}^2 - G_{p}^2)^{1/2}$ appearing in the quadratic operator $\cQ_{\cM_{N}}$, up to errors that are negligible in the limit $N \to \infty$. To this end, we introduce the notation 
\begin{equation}\label{def:Ebog} 
E_{\text{Bog}}:= \frac12\sum_{p\in\Lambda_+^*}\bigg[ \sqrt{p^4+16\pi \frak{a}_0   p^2} -p^2-8\pi \frak{a}_0  + \frac{(8\pi \frak{a}_0 )^2}{2p^2}   \bigg]
\end{equation}

\begin{lemma}\label{lm:CQJNorder1}
Let $V \in L^3 (\bR^3)$ be non-negative, compactly supported and spherically symmetric. \begin{enumerate}
\item[i)] The constant $C_{\cM_N}$ in (\ref{def:CNQNJN}) is given by 
        \[ C_{\cM_N} =  4\pi (N-1) \frak{a}_0 + e_\Lambda \mathfrak{a}_0^2 + E_{\emph{Bog}} + \cO (N^{-1}\log N) \] 
\begin{equation}\label{eq:eLambda} e_\Lambda = 2 -  \lim_{M\to\infty} \sum_{\substack{p\in \bZ^3 \backslash \{0 \}: \\ |p_1|,|p_2|,|p_3| \leq  M}}  \frac{4\cos(|p|)}{p^2}  \end{equation}
In particular, we will show that the limit exists.
\item[ii)] The quadratic operator $ \cQ_{\cM_N}$ in (\ref{def:CNQNJN}) is given by  
        \[ \cQ_{\cM_N} = \sum_{p \in \Lambda^*_+}\sqrt{p^4+16 \pi \frak{a}_0 p^2} \;a_p^* a_p + \widetilde{\delta}_N \]
where the error $\wt{\delta}_N$ is bounded by
        \[\pm \widetilde{\delta}_N \leq C N^{-1} (\cK +1)  \]
\end{enumerate} 
\end{lemma}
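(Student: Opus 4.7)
The argument splits cleanly into the essentially algebraic part (ii) and the more delicate identification of constants in part (i). I treat them in sequence.

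For part (ii), the starting point is the identity $\gamma_p^2 - \sigma_p^2 = 1$, which together with the definitions in (\ref{eq:defFpGp}) gives the explicit factorizations
\[
F_p - G_p = p^2 (\gamma_p - \sigma_p)^2, \qquad F_p + G_p = (\gamma_p + \sigma_p)^2 \bigl[\, p^2 + 2 V_p\,\bigr],
\]
where $V_p := (\widehat V(\cdot/N) \ast \widehat f_{\ell,N})_p$, so that $F_p^2 - G_p^2 = p^4 + 2 p^2 V_p$ exactly. Writing $V_p$ as the $p$-th Fourier coefficient on $\Lambda$ of $N^3 V(N\cdot) f_\ell(N\cdot)$ and rescaling gives $V_p = \int_{\bR^3} V(y) f_\ell(y) e^{-i(p/N)\cdot y}\,dy$. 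The pointwise bound $|e^{-i(p/N)\cdot y} - 1| \leq C |p|/N$ on the (compact) support of $V$, combined with (\ref{eq:Vfa0}), then yields $|V_p - 8\pi \frak{a}_0| \leq C (1+|p|)/N$. Since $\sqrt{p^4 + 16\pi \frak{a}_0 p^2} \geq c|p|$ for all $p \in \Lambda^*_+$, a first-order Taylor expansion of the square root gives
\[
\Bigl| \sqrt{F_p^2-G_p^2} - \sqrt{p^4 + 16\pi \frak{a}_0 p^2} \Bigr| \leq \frac{C p^2}{N},
\]
and multiplication by $a_p^* a_p$ followed by summation over $p$ establishes $\pm \widetilde\delta_N \leq C N^{-1}(\cK+1)$.

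For part (i), the strategy will be to use the pointwise bounds (\ref{eq:bdsg-mom}) to replace $\sigma_p$ by $\eta_p$ and $\gamma_p$ by $1$ modulo summable remainders of order $|p|^{-6}$, reducing all terms entering $C_{\cJ_N} + \tfrac12\sum_p [-F_p + \sqrt{F_p^2 - G_p^2}]$ to expressions involving only $\eta_p$ and $\widehat V(p/N)$. I would then invoke the scattering equation (\ref{eq:eta-scat}), multiplied by $\eta_p$ and summed, to organize the resulting quadratic $\eta$-sums. On the quadratic side, adding and subtracting $\tfrac 1 2 [p^2 + 8\pi\frak{a}_0 - (8\pi\frak{a}_0)^2/(2p^2)]$ isolates $E_{\text{Bog}}$ as in (\ref{def:Ebog}); the counterterm $(8\pi\frak{a}_0)^2/(2p^2)$ renders both $E_{\text{Bog}}$ and the remaining sum absolutely convergent. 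The residual quadratic contribution, involving $V_p - 8\pi\frak{a}_0$, together with the $\sigma_p^2 V_p$ and $\sigma_p\gamma_p \widehat V(p/N)$ terms of $C_{\cJ_N}$, generates a correction of size $\cO(N^{-1}\log N)$ once the scattering equation is used to cancel the combination $2 p^2\eta_p + \widehat V(p/N) + N^{-1}(\widehat V(\cdot/N)\ast\eta)_p$. Finally, combining the leading $\tfrac{N-1}{2}\widehat V(0)$ with the Born-type correction sums (via the scattering length identity (\ref{eq:a0})) produces $4\pi(N-1)\frak{a}_0$, while the remaining finite-volume remainder yields $e_\Lambda \frak{a}_0^2$.

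The main obstacle will be the identification of the finite-volume constant $e_\Lambda$ with the explicit lattice sum in (\ref{eq:eLambda}), together with the existence of the limit. Using the explicit formula (\ref{eq:chip}) for $\widehat\chi_\ell(p)$, the finite-volume defect is traced back to sums of the form $N^3 \lambda_\ell \sum_p \widehat\chi_\ell(p)\eta_p$ and $N^2 \lambda_\ell \sum_{p,q}\widehat\chi_\ell(p-q)\eta_p\eta_q$, which by Lemma \ref{3.0.sceqlemma} (i)-(iii) become, in the limit $N \to \infty$, $\ell$-dependent integrals that must combine into an $\ell$-independent constant. This requires offsetting the defect $\int V f_\ell - 8\pi\frak{a}_0 = 12\pi\frak{a}_0^2/(\ell N) + \cO(\frak{a}_0^3/(\ell N)^2)$ against the analogous $\ell$-dependent defect from $N^3\lambda_\ell - 3\frak{a}_0/\ell^3$, and then showing that the resulting only conditionally convergent lattice sum, regularized by the symmetric cutoff $|p_i|\leq M$, equals $2 - 4\sum \cos(|p|)/p^2$ as stated. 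The accumulated error is $\cO(N^{-1}\log N)$, the logarithm originating from the tail $\sum_{|p|\leq\sqrt N}|p|^{-1}\sim\log N$ in the Taylor remainder controlling the quadratic contribution.
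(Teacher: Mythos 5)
Your treatment of part (ii) is sound and takes the same route as the paper: you establish the exact identity $F_p^2 - G_p^2 = p^4 + 2p^2\,(\widehat V(\cdot/N)\ast\widehat f_{\ell,N})_p$, control $(\widehat V(\cdot/N)\ast\widehat f_{\ell,N})_p - 8\pi\frak{a}_0$ by $C(1+|p|)/N$ using the compact support of $V$ and \eqref{eq:Vfa0}, and Taylor-expand the square root. This matches the paper's two-step comparison and yields the stated $C N^{-1}(\cK+1)$ bound.

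Part (i) is where there is a genuine gap. What you have written is a roadmap, not a proof: phrases such as ``the strategy will be'', ``I would then invoke'', and ``the main obstacle will be'' make this explicit, but more importantly the hard content of the lemma is never actually carried out. You correctly name the relevant objects --- the $\sigma_p\to\eta_p$, $\gamma_p\to 1$ replacements, the scattering equation \eqref{eq:eta-scat}, the isolation of $E_{\mathrm{Bog}}$, and the $\ell$-dependent finite-volume sums driven by $N^3\lambda_\ell\sum_p\widehat\chi_\ell(p)\eta_p$ --- but you never prove (a) that the $\ell$-dependent quantity that remains after these manipulations, call it $I_\ell$, is actually independent of $\ell\in(0;1/2)$; (b) that the relevant lattice sum converges when regularized by the symmetric cutoff $|p_i|\leq M$; or (c) that its value is the specific expression $2 - 4\lim_M\sum_{|p_i|\leq M}\cos(|p|)/p^2$. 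In the paper these three points are the technical core of the lemma. They are established via the explicit Fourier formulas \eqref{eq:chip}, \eqref{eq:chi2ellp}, \eqref{eq:chi/x}, an appeal to convergence of square partial Fourier sums at a point (the reference [Ces]), and an explicit computation of $I_{\ell_1} - I_{\ell_2}$ in terms of the Fourier transform of $h(x)=(\chi_{\ell_2}(x)-\chi_{\ell_1}(x))/|x|$, showing it vanishes. None of these steps is a formality: the lattice sum is only conditionally convergent, so the choice of regularization matters, and the $\ell$-independence is a nontrivial cancellation between the $\ell^{-1}$, $\ell^2$, and lattice-sum contributions. As written, your argument asserts the conclusion of these steps without a mechanism to justify them. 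You need to actually perform the scattering-equation substitution to reduce the remaining constant to the explicit sum $-\frac{2}{\ell^3}\sum_{p\in\Lambda_+^*}\widehat\chi_\ell(p)/p^2$ plus known $\ell$-terms, then carry out the Fourier-analytic decomposition of $\widehat\chi_\ell(p)/p^2$ and the $\ell$-invariance argument to nail $e_\Lambda$.
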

\begin{proof} To show $i)$ we recall from (\ref{def:CNQNJN}) that $C_{\cM_{N}} $ is explicitly given by 
		\begin{equation}\label{eq:lmCQJN1}
		\begin{split}
		C_{\cM_{N}} =&\; \frac{N-1}{2}\widehat V(0) - \frac1N\sum_{p \in\L_+^*} \big( \widehat V(./N)\ast \eta\big)_p \gamma_p \sigma_p -  \sum_{p\in \Lambda_+^*} \frac{\big(\widehat{V} (\cdot/N)\ast\widehat{f}_{\ell, N}\big)_p^2}{4p^2}\\
		&\;  +\frac{1}{2N} \sum_{p, q\in \L_+^*}\widehat V((p-q)/N)\sigma_p\gamma_p\sigma_q\gamma_q+\frac1N\sum_{p\in \L_+^*}\bigg[p^2\eta_p^2 +\frac1 {2N} (\widehat V (\cdot/N) \ast \eta)_p \eta_p \bigg] \\
		& \; +E_{ \text{Bog},N }
		\end{split}
		\end{equation}
		with 
        \begin{equation}\label{eq:EbogN} \begin{split} E_{ \text{Bog},N } :=&\; \frac{1}{2}\sum_{p \in \Lambda^*_+}\Bigg[\sqrt{ p^{4} + 2p^2 \big(\widehat{V} (\cdot/N)\ast\widehat{f}_{\ell,N}\big)_p } -p^2  \\
        &\;\hspace{3cm} - \big(\widehat{V} (\cdot/N)\ast\widehat{f}_{\ell,N}\big)_p + \frac{\big(\widehat{V} (\cdot/N)\ast\widehat{f}_{\ell,N}\big)_p^2}{2p^2} \Bigg] \end{split}\end{equation}
  First, we compare $E_{\text{Bog},N}$ with its limiting value  (\ref{def:Ebog}). From (\ref{eq:EbogN}), we write $E_{\text{Bog},N} = - (1/2) \sum_{p \in \Lambda^*_+} e_{N,p}$, with 
\[ e_{N,p} =  p^2   + \big(\widehat{V} (\cdot/N)\ast\widehat{f}_{\ell,N}\big)_p - \sqrt{ p^{4} + 2p^2 \big(\widehat{V} (\cdot/N)\ast\widehat{f}_{\ell,N}\big)_p } - \frac{\big(\widehat{V} (\cdot/N)\ast\widehat{f}_{\ell,N}\big)_p^2}{2p^2} \]
Taylor expanding the square root, we easily check that $|e_{N,P}| \leq C / |p|^4$, for a constant $C > 0$, independent of $N$ and of $p$, if $|p| $ is sufficiently large w.r.t. $\big(\widehat{V} (\cdot/N)\ast\widehat{f}_{\ell,N}\big)_p$. On the other side, replacing $(\widehat{V}(./N) \ast \widehat{f}_{\ell,N})_p$ by $(\widehat{V}(./N) \ast \widehat{f}_{\ell,N})_0$ and then, using Lemma \ref{3.0.sceqlemma}, part ii), by $8\pi \frak{a}_0 $, we produce an error that can be estimated by
\[ \left| e_{N,p} - \left[ p^2 + 8\pi \frak{a}_0  - \sqrt{|p|^4 + 16 \pi \frak{a}_0  p^2} - \frac{(8\pi \frak{a}_0 )^2}{2p^2} \right] \right| \leq  \frac{C}{N|p|^3} \]
 Using this bound for  $|p| < N$ and $|e_{N,p}| \leq C / |p|^4$ for $|p| > N$, we obtain 
 \begin{equation}\label{eq:EBogconv} |E_{\text{Bog},N} - E_\text{Bog}| \leq C N^{-1} \log N \end{equation}	

Hence, let us analyse the remaining terms on the r.h.s. of \eqref{eq:lmCQJN1}. First, using the scattering equation (\ref{eq:eta-scat}) and the approximation (\ref{eq:Vfa0}), we find that
		\[\begin{split}
		&-\frac12\widehat V(0)+\frac1N\sum_{p\in \L^*_+} \bigg[p^2\eta_p^2 + \frac1 {2N} (\widehat{V}(\cdot/N) \ast \eta_\ell)_p \eta_p \bigg]\\
		&\hspace{3cm} = -\frac12\widehat V(0)-\frac1N\sum_{p\in \L_+^*} \widehat{V } (p/N) \eta_p +\mathcal{O}(N^{-1}) = -4\pi \mathfrak{a}_0 +\mathcal{O}(N^{-1})
		\end{split}\]
Moreover, using that $ |\sigma_p\gamma_p -\eta_p|\leq C/p^6$, we have 
        \[\begin{split} &-\frac{1}N\sum_{p\in\L_+^*}\big(\widehat{V} (\cdot/N)\ast{\eta}\big)_p \sigma_p\gamma_p \\
        &\hspace{0.5cm}=\;-\frac{1}{N} \sum_{p,q\in \L_+^*} \widehat V((p-q)/N) \eta_q \sigma_p \gamma_p  - \frac{1}{N} \sum_{p\in \L_+^*} \widehat V(p/N) \eta_p \eta_0 + \cO(N^{-1})
        \end{split}\]
and, writing $\sigma_q \gamma_q \sigma_p \gamma_p = (\sigma_q \gamma_q  - \eta_q + \eta_q) (\sigma_p \gamma_p - \eta_p + \eta_p)$ and expanding the product, 
        \[\begin{split} &\frac1{2N}\sum_{p,q\in\L_+^*}\widehat{V} ((p-q)/N)\sigma_q\gamma_q\sigma_p\gamma_p \\
        &\hspace{1cm}=\; \frac1{N}\sum_{p,q\in\L_+^*}\widehat{V} ((p-q)/N)\eta_q \sigma_p \gamma_p -\frac1{2N}\sum_{p,q\in\L_+^*}\widehat{V} ((p-q)/N)\eta_q\eta_p +\cO (N^{-1})
        \end{split}\]
Summing up the different contributions from above, we arrive at  
		\[
		\begin{split}
		C_{\cM_{N}} = &\;  \frac{N}2\widehat V(0) -4\pi\frak{a}_0 + E_{\text{Bog}} + \cO (N^{-1}\log N)\\
		& -\frac1{2N}\sum_{p\in\L_+^*}(\widehat{V} (./N)\ast \eta)_p \eta_p   -  \sum_{p\in \Lambda_+^*} \frac{(\widehat{V}(\cdot/N) \ast \widehat{f}_{\ell,N})_p^2}{4p^2} -\frac1{2N}\sum_{p\in\L_+^*} \widehat{V} (p/N)\eta_p \eta_0 
		\end{split}\]
Writing (recall from (\ref{eq:fellN}) that $\widehat{f}_{\ell,N} (p) = \delta_{p,0} + N^{-1} \eta_p$) 
\[ -\frac{1}{2N} \sum_{p \in \L_+^*} (\widehat{V} (./N)*\eta)_p \eta_p = -\frac{1}{2N} \sum_{p \in \L_+^*} (\widehat{V} (./N) *\widehat{f}_{\ell,N})_p \eta_p	 + \frac{1}{2} \sum_{p \in \L^*} \widehat{V} (p/N) \eta_p - \frac{1}{2} \widehat{V} (0) \eta_0 \]
and noticing that 
\[ \frac{N}{2} \widehat{V} (0) + \frac{1}{2} \sum_{p \in \L^*} \widehat{V} (p/N) \eta_p = \frac{N}{2} (\widehat{V} (./N) * \widehat{f}_{\ell,N})_0 \]
we arrive at 
\[ \begin{split} 
C_{\cM_{N}}  = \; &\frac{N}{2} (\widehat{V} (./N) * \widehat{f}_{\ell,N})_0 (1- \eta_0/N) - 4\pi \frak{a}_0 + E_\text{Bog} + \cO (N^{-1} \log N) \\ &- \sum_{p \in \L^*_+} \frac{(\widehat{V} (./N) * \widehat{f}_{\ell,N})_p}{2p^2} \left[ p^2 \eta_p + \frac{1}{2} (\widehat{V} (./N) *\widehat{f}_{\ell,N})_p \right] \end{split} 
\]
With Lemma \ref{3.0.sceqlemma}, we compute 
\[ \begin{split}  (\widehat{V} (./N) * \widehat{f}_{\ell,N})_0 &= 8\pi \frak{a}_0 \left( 1 + \frac{3\frak{a}_0}{2\ell N} \right) + \cO (N^{-2}) \\
\eta_0 &= - N^{-2} \widehat{w}_\ell (0) = - N^{-2} \int dx w_\ell (x) dx = - \frac{2}{5} \pi \frak{a}_0 \ell^2 + \cO (N^{-1}) \end{split} 
\]
Hence, with the scattering equation (\ref{eq:eta-scat0}), we obtain 
\begin{equation}\label{eq:CMN} \begin{split} 
C_{\cM_{N}}  = \; &4 \pi\frak{a}_0 (N-1) + E_\text{Bog} + 6 \pi \frak{a}_0^2 \ell^{-1} \left[ 1 + \frac{4\pi}{15} \ell^3 \right] + \cO (N^{-1} \log N) \\ &- \sum_{p \in \L^*_+} N^3 \l_\ell \widehat{\chi}_\ell (p) \frac{(\widehat{V} (./N)* \widehat{f}_{\ell,N})_p}{2p^2} \\ = \; &4 \pi\frak{a}_0 (N-1) + E_\text{Bog} + 6 \pi \frak{a}_0^2 \left[ \frac{1}{\ell} + \frac{4\pi}{15} \ell^2 - \frac{2}{\ell^{3}} \sum_{p\in\L_+^*} \frac{\widehat{\chi}_\ell (p)}{p^2} \right] + \cO (N^{-1} \log N) \end{split} \end{equation}
because $|(\widehat{V} (./N)*\widehat{f}_{\ell,N})_p - 8\pi \frak{a}_0| \leq C |p|/N$ and $|\widehat{\chi}_\ell (p)| \leq C|p|^{-2}$ by \eqref{eq:chip}. To finish the proof of i), we evaluate the expression in the parenthesis, showing that it is actually independent of $\ell$, for $\ell \in (0;1/2)$. Again by \eqref{eq:chip}, we find 
\[ -\frac{2}{\ell^{3}} \sum_{p\in \Lambda_+^*}\frac{\widehat {\chi_\ell}(p)}{p^2} = -8\pi \ell^2 \lim_{M\to \infty} \sum_{\substack{p\in \Lambda_+^*: |p_i|\leq 2\pi M  }}  \frac{ \sin(\ell|p|)- \ell|p|\cos(\ell|p|)}{(|p|\ell)^5} \]
because the sum converges absolutely, since $|\widehat{\chi}_\ell(p)| / |p|^2 \leq C/|p|^4$ for all $ p\in\Lambda^*_+$. With 
\begin{equation}\label{eq:chi2ellp} \begin{split}
\widehat{\big(\chi_\ell |\cdot|^2 \big)}(q) =& \int_{\bR^3} \chi_\ell (x) x^2 e^{-iq x}\; dx \\ = & 4\pi \ell^5 \bigg(  -\frac{6\sin(\ell |q|)}{(\ell |q|)^5}+\frac{6\cos(\ell |q|)}{(\ell q)^4} + \frac{3\sin(\ell |q|)}{(\ell |q|)^3} -  \frac{\cos(\ell |q|)}{(\ell q)^2} \bigg)\end{split}\end{equation}
and (\ref{eq:chip}), we conclude that 
\begin{equation}\label{eq:lmCQJN3}
\begin{split}
-\frac{2}{\ell^{3}} &\sum_{p\in \Lambda_+^*}\frac{\widehat {\chi_\ell}(p)}{p^2} \\ &= -8\pi \ell^2 \lim_{M\to \infty} \sum_{\substack{p\in \Lambda_+^*: |p_i|\leq 2\pi M  }}  \frac{ \sin(\ell|p|)- \ell|p|\cos(\ell|p|)}{(|p|\ell)^5} \\ &= -8\pi \ell^2 \lim_{M\to \infty} \sum_{p \in \L^*_+ : |p_i | \leq 2\pi M} \left[ \frac{1}{3} \frac{\cos (\ell |p|)}{(\ell p)^2} + \frac{1}{2} \left( \frac{\sin (\ell |p|)}{(\ell |p|)^3} - \frac{\cos (\ell |p| )}{(\ell p)^2} \right) \right. \\ & \hspace{2.5cm} \left.- \frac{1}{6} \left( - 6 \frac{\sin (\ell |p|)}{(\ell |p|)^5} + 6 \frac{\cos (\ell |p|)}{ (\ell |p|)^4} + 3 \frac{\sin (\ell |p|)}{(\ell |p|)^3} - \frac{\cos (\ell |p|)}{(\ell p)^2} \right) \right] \\ 
&= -\frac{8\pi \ell^2}{3} \lim_{M \to \infty} \sum_{p \in \L^*_+ : |p_i | \leq 2\pi M}  \frac{\cos (\ell |p|)}{(\ell p)^2}  \\ &\hspace{.5cm} -2 \ell^2 \lim_{M \to \infty} \sum_{p \in \L^*_+ : |p_i | \leq 2\pi M}  \left[ \frac{1}{2\ell^3} 
\widehat{\chi}_\ell (p) - \frac{1}{6\ell^5} \widehat{ \chi_\ell |.|^2} (p) \right] \\ 
&=  -\frac{8\pi}{3} \lim_{M \to \infty} \sum_{p \in \L^*_+ : |p_i | \leq 2\pi M}  \frac{\cos (\ell |p|)}{p^2}  - \frac{1}{\ell}  \left[ \chi_\ell (0) - \widehat{\chi}_\ell (0) \right] - \frac{1}{3\ell^3} \widehat{\chi_\ell |.|^2} (0)
\end{split} \end{equation}
because square partial sums for the Fourier series of $x \to \chi_\ell (x)$ and of $x \to x^2 \chi_\ell (x)$ converge at $x = 0$; see \cite{Ces}. With (\ref{eq:chip}) and (\ref{eq:chi2ellp}), we obtain
\[ \widehat{\chi}_\ell (0) = \frac{4\pi \ell^3}{3}  , \qquad \widehat{\chi_\ell |.|^2} (0) = \frac{4\pi  \ell^5}{5}\] 
Thus
\begin{equation}\label{eq:sumIell} - \frac{2}{\ell^{3}} \sum_{p \in \L^*_+} \frac{\widehat{\chi}_\ell (p)}{p^2} = I_\ell - \frac{1}{\ell} -\frac{4\pi \ell^2}{15} \end{equation}
where
\begin{equation}\label{eq:Ielldef}  I_\ell = \frac{4\pi \ell^2}{3} - \frac{8\pi}{3} \lim_{M \to \infty} \sum_{p \in \L^*_+ : |p_i| \leq 2\pi M} \frac{\cos (\ell |p|)}{p^2} \end{equation}
We claim now that $I_\ell$ is independent of the choice of $\ell \in (0;1/2)$. This implies that, for example,  
\[ I_\ell = I_{1/(2\pi)} = \frac{1}{3\pi} - \frac{2}{3\pi} \lim_{M \to \infty} \sum_{p \in \bZ^3 : |p_i| \leq M} \frac{\cos (|p|)}{p^2} \]
Inserting in (\ref{eq:sumIell}) and then in (\ref{eq:CMN}), we obtain  
\[ C_{\cM_N} = 4\pi \frak{a}_0 (N-1) + E_\text{Bog} + 2 \frak{a}_0^2 \left[ 1 - 2 \lim_{M \to \infty} \sum_{p \in \bZ^3 : |p_i| \leq M} \frac{\cos (|p|)}{p^2} \right] + \cO (N^{-1} \log N) \]
which concludes the proof of part i) (notice that, in particular, our analysis shows the existence of the limit $M \to \infty$ in the parenthesis). It remains to show that $I_\ell$, defined as in (\ref{eq:Ielldef}) is independent of $\ell$. To this end, we observe that 
\begin{equation}\label{eq:chi/x} \widehat{\Big(\frac{\chi_\ell}{|\cdot|}\Big)} (q) = \int_{\bR^3} \chi_\ell(x) \frac{1}{|x|}e^{-iq x}\; dx = 4\pi\bigg( \frac{1}{q^2} -\frac{\cos(\ell |q|)}{q^2} \bigg) \, .
\end{equation}
Hence, for $\ell_1, \ell_2 \in (0;1/2)$, we find 
\[ \begin{split} I_{\ell_1} - I_{\ell_2} &= \frac{4\pi}{3} (\ell_1^2 - \ell_2^2) - \frac{8 \pi}{3} \lim_{M \to \infty} \sum_{p\in \L_+^* : |p_i| \leq 2\pi M} \frac{\cos (\ell_1 |p|) - \cos (\ell_2 |p|)}{p^2} \\ &= \frac{4\pi}{3} (\ell_1^2 - \ell_2^2) -\frac{2}{3} \lim_{M \to \infty} \sum_{p\in \L_+^* : |p_i| \leq 2\pi M} \widehat{h} (p)  \end{split} \]
where $h (x) = (\chi_{\ell_2} (x) - \chi_{\ell_1} (x))/|x|$. By \cite{Ces}, we find 
\[ I_{\ell_1} - I_{\ell_2} = \frac{4\pi}{3} (\ell_1^2 - \ell_2^2) - \frac{2}{3} (h(0) - \widehat{h} (0)) =0  \]
because $h(0) = 0$ and, with (\ref{eq:chi/x}), $\widehat{h} (0) = - 2\pi (\ell_1^2 - \ell_2^2)$; hence $I_{\ell_1} = I_{\ell_2}$, as claimed. 

Finally, we prove part ii). Here, we use the two bounds
        \[\begin{split}\Big| \sqrt{p^4+2p^2\big(\widehat{V} (\cdot/N)\ast\widehat{f}_{\ell,N}\big)_p}- \sqrt{p^4+2p^2\big(\widehat{V} (\cdot/N)\ast\widehat{f}_{\ell,N}\big)_0}\Big|\leq C N^{-1}|p|
        \end{split}\]
as well as
        \[\Big|\sqrt{p^4+2p^2\big(\widehat{V} (\cdot/N)\ast\widehat{f}_{\ell,N}\big)_0}-\sqrt{p^4+16\pi \frak{a}_0  p^2}\Big|\leq C N^{-1}\]
It follows immediately that
         \[ \cQ_{\cM_N} =\sum_{p \in \Lambda^*_+}\sqrt{p^4+2p^2\big(\widehat{V} (\cdot/N)\ast\widehat{f}_{\ell,N}\big)_p} \;a_p^* a_p =\sum_{p \in \Lambda^*_+}\sqrt{p^4+16 \pi \frak{a}_0  p^2}  \;a_p^* a_p + \widetilde{\delta}_N \]
where the operator $\widetilde{\delta}_N$ is bounded by $\pm \widetilde{\delta}_N \leq C N^{-1}(\cK +1)$. 
This concludes the proof of the  lemma.
\end{proof}

Combining Proposition \ref{thm:tGNo1} with the results of the last two sections, we obtain the following corollary, which will be used in the next section to show Theorem \ref{thm:main}.
\begin{cor} 
Let $V \in L^3 (\bR^3)$ be non-negative, compactly supported and spherically symmetric. Then there exists a constant $C > 0$ such that the excitation Hamiltonian 
        \[\cM_{N} = e^{-B(\tau)} e^{-A} e^{-B(\eta)} U H_N U^* e^{B(\eta)}e^{A} e^{B(\tau)} : \cF_+^{\leq N} \to \cF_+^{\leq N}\]
can be written as 
\begin{equation}\label{eq:cor} \begin{split} \cM_{N} = &\;4\pi (N-1) \frak{a}_0+ e_\L \mathfrak{a}_0^2 + \frac{1}{2} \sum_{p \in \Lambda^*_+} \left[ - p^2  - 8\pi \frak{a}_0  + \sqrt{p^4 + 16 \pi \frak{a}_0   p^2} + \frac{(8\pi \frak{a}_0 )^2}{2p^2} \right] \\ &+ \sum_{p \in \Lambda^*_+} \sqrt{p^4 + 16 \pi \frak{a}_0  p^2} \; a_p^* a_p + \cV_N + \cE_{\cM_{N}} \end{split} \end{equation}
with $e_\Lambda$ as in (\ref{eq:eLambda}) and where 
\[ \pm \cE_{\cM_{N}} \leq C N^{-1/4}  [ (\cH_N +1) (\cN_+ + 1)  + (\cN_+ +1)^3]  \]
Furthermore, let $\psi_N \in L^2_s (\bR^{3N})$ with $\| \psi_N \|  = 1$ belong to the spectral subspace of $H_N$ with energies below $E_N + \zeta$, where $E_N$ is the ground state energy of $H_N$ and $\zeta > 0$. In other words, assume that 
\[ \psi_N = {\bf 1}_{(-\infty; E_N + \zeta]} (H_N) \psi_N \]
Let $\xi_N = e^{-B(\tau)}e^{-A} e^{-B(\eta)} U \psi_N \in \cF_+^{\leq N}$ be the excitation vector associated with $\psi_N$. Then there exists a constant $C > 0$ such that 
\begin{equation}\label{eq:HNN-bd} \langle \xi_N , [ (\cH_N + 1) (\cN_+ + 1) + (\cN_+ +1)^3 ]  \xi_N \rangle \leq C (1+\zeta^3) \end{equation}
\end{cor}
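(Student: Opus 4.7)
The proof is essentially an assembly of the preceding lemmas from this section together with Corollary \ref{cor:apri} and Proposition \ref{thm:tGNo1}; I would organize it in two steps, corresponding to the two claims.

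\textbf{Step 1: The decomposition of $\cM_N$.} The plan is to conjugate the identity in Proposition \ref{thm:tGNo1} with $e^{B(\tau)}$. Writing
\[
\cM_N = e^{-B(\tau)} C_{\cJ_N} e^{B(\tau)} + e^{-B(\tau)} \cQ_{\cJ_N} e^{B(\tau)} + e^{-B(\tau)} \cV_N e^{B(\tau)} + e^{-B(\tau)} \cE_{\cJ_N} e^{B(\tau)},
\]
the first (constant) term is unchanged. For the second, I apply Lemma \ref{lm:diago}, which converts $\cQ_{\cJ_N}$ to a diagonal operator in the $a^*,a$-fields with coefficients $\sqrt{F_p^2-G_p^2}$ up to an error $\delta_N$ satisfying $\pm\delta_N \leq CN^{-1}(\cK+1)(\cN_++1)$. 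For the third, I use the bound \eqref{eq:err-VN} of Lemma \ref{lm:err-con} to replace $e^{-B(\tau)}\cV_N e^{B(\tau)}$ by $\cV_N$, at the price of an error controlled by $CN^{-1/2}(\cH_N+1)(\cN_++1)$. Lemma \ref{lm:CQJNorder1} then identifies the constant $C_{\cM_N} = C_{\cJ_N} + \tfrac12\sum_p[-F_p+\sqrt{F_p^2-G_p^2}]$ with the expression appearing in \eqref{eq:cor} plus an error of order $N^{-1}\log N$, and likewise converts $\sum_p \sqrt{F_p^2-G_p^2}\,a_p^*a_p$ into $\sum_p \sqrt{|p|^4+16\pi\mathfrak{a}_0 p^2}\,a_p^*a_p$ up to $\pm\tilde\delta_N \leq CN^{-1}(\cK+1)$.

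\textbf{Step 2: Propagating the error through $e^{B(\tau)}$.} The only non-trivial bookkeeping concerns the remainder $e^{-B(\tau)}\cE_{\cJ_N} e^{B(\tau)}$, where $\cE_{\cJ_N}$ satisfies \eqref{eq:tildeEN}. Writing this as a quadratic form, for any $\xi \in \cF_+^{\leq N}$ I estimate
\[
|\langle \xi, e^{-B(\tau)}\cE_{\cJ_N} e^{B(\tau)}\xi\rangle| \leq CN^{-1/4} \langle e^{B(\tau)}\xi, [(\cH_N+1)(\cN_++1)+(\cN_++1)^3]\, e^{B(\tau)}\xi\rangle,
\]
and then invoke Lemma \ref{lm:err-con} to bound $e^{B(\tau)}(\cH_N+1)(\cN_++1)e^{-B(\tau)} \leq C(\cH_N+1)(\cN_++1)$, together with Lemma \ref{lm:Ngrow} applied with $-\tau$ in place of $\tau$ to get $e^{B(\tau)}(\cN_++1)^3 e^{-B(\tau)} \leq C(\cN_++1)^3$. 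Gathering Step 1 and this estimate, all remainder operators are dominated by $CN^{-1/4}[(\cH_N+1)(\cN_++1)+(\cN_++1)^3]$, which yields the claimed bound on $\cE_{\cM_N}$.

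\textbf{Step 3: A priori bound on $\xi_N$.} Let $\xi'_N := e^{-A}e^{-B(\eta)}U_N\psi_N$, so that $\xi_N = e^{-B(\tau)}\xi'_N$. Corollary \ref{cor:apri} gives
\[
\langle \xi'_N, [(\cN_++1)(\cH_N+1) + (\cN_++1)^3]\,\xi'_N\rangle \leq C(1+\zeta^3).
\]
The bound \eqref{eq:HNN-bd} then follows upon writing $\langle \xi_N, (\cdot)\xi_N\rangle = \langle \xi'_N, e^{B(\tau)}(\cdot)e^{-B(\tau)}\xi'_N\rangle$ and applying Lemma \ref{lm:err-con} and Lemma \ref{lm:Ngrow} (both valid for $\tau \mapsto -\tau$) to transport $(\cN_++1)(\cH_N+1)$ and $(\cN_++1)^3$ through $e^{\pm B(\tau)}$ at the cost of a multiplicative constant.

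No single step is a serious obstacle since the heavy lifting is already done; the only point requiring minor care is verifying that Lemma \ref{lm:err-con} (whose proof is sketched for $e^{-B(\tau)}\cdot e^{B(\tau)}$) applies equally to conjugation in the opposite direction, which is immediate by replacing $\tau$ with $-\tau$ and noting that $F_p,G_p$ enter the definition of $\tau_p$ only through $|\tau_p| \leq C |p|^{-4}$ (Lemma \ref{lm:FpGp}), a symmetric condition.
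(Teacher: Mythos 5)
Your proof is correct and follows the same route as the paper, which simply assembles Prop.~\ref{thm:tGNo1}, Lemma~\ref{lm:err-con}, Lemma~\ref{lm:diago} and Lemma~\ref{lm:CQJNorder1} for the decomposition of $\cM_N$, and Corollary~\ref{cor:apri} for the a priori bound; you supply the bookkeeping that the paper leaves implicit. The only small slip is in Step~2: to control $\langle e^{B(\tau)}\xi, X\, e^{B(\tau)}\xi\rangle = \langle \xi, e^{-B(\tau)} X\, e^{B(\tau)}\xi\rangle$ you need $e^{-B(\tau)} X\, e^{B(\tau)}\leq CX$, which is exactly the direction stated in Lemma~\ref{lm:err-con} (and in Lemma~\ref{lm:Ngrow}), so no sign flip is required there; the $\tau\mapsto -\tau$ replacement is only needed in Step~3, where you correctly use it.
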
        
\begin{proof}
Eq. (\ref{eq:cor}) follows from Prop. \ref{thm:tGNo1}, Lemma \ref{lm:err-con}, Lemma \ref{lm:diago} and Lemma \ref{lm:CQJNorder1}. Eq. (\ref{eq:HNN-bd}) is, on the other hand, a consequence of Corollary \ref{cor:apri}.
\end{proof}
        

\section{Proof of Theorem \ref{thm:main}}
\label{sec:proof}

We define 
\[ E_{\cM_N} := 4\pi (N-1) \frak{a}_0 + e_\Lambda \frak{a}_0^2+  \frac{1}{2} \sum_{p \in \Lambda^*_+} \left[ - p^2  - 8\pi \frak{a}_0  + \sqrt{p^4 + 16 \pi \frak{a}_0  p^2} + \frac{(8\pi \frak{a}_0 )^2}{2p^2}  \right] \]
To prove Theorem \ref{thm:main}, we compare the eigenvalues of $\cM_N - E_{\cM_N}$ below a threshold $\zeta> 0$ with those of the diagonal quadratic operator 
\begin{equation}\label{eq:cDdef} \cD := \sum_{p \in \Lambda^*_+} \eps_p  a_p^* a_p \end{equation}
with the dispersion $\eps_p = (|p|^4 +16 \pi \frak{a}_0  p^2)^{1/2}$ for all $p \in \Lambda^*_+$. For $m \in \bN$, we denote by $\lambda_m$ the $m$-th eigenvalue of $\cM_N - E_{\cM_N}$ and by $\nu_m$ the $m$-th eigenvalue of $\cD$ (in both cases, eigenvalues are counted with multiplicity). To show Theorem \ref{thm:main}, we prove that
\begin{equation}\label{eq:lambdas} |\lambda_m - \nu_m | \leq C N^{-1/4} (1+\zeta^3)  \end{equation}
for all $m \in \bN \backslash \{ 0 \}$ such that $\lambda_m < \zeta$. Using (\ref{eq:lambdas}), Theorem \ref{thm:main} can be proven as follows. Taking the expectation of (\ref{eq:cor}) in the vacuum, we conclude that $\lambda_1 \leq C N^{-1/4}$. Hence, for $N$ large enough, we have $\lambda_1 \leq \zeta$ and we can apply (\ref{eq:lambdas}) to show that $|\lambda_1 - \nu_1| \leq C N^{-1/4}$. Since $\nu_1 = 0$, we conclude that $|\lambda_1| \leq C N^{-1/4}$ and therefore that \begin{equation} \label{eq:ENENJ} |E_N - E_{\cM_N}| \leq C N^{-1/4} \end{equation} where $E_N$ is the ground state energy of $H_N$, as defined in (\ref{eq:Ham0}). This proves (\ref{1.groundstate}). Eq.~(\ref{1.excitationSpectrum}), on the other hand, follows from (\ref{eq:lambdas}), from (\ref{eq:ENENJ}) and from the observation that the eigenvalues of $\cD$ have the form  
\[ \nu_j = \sum_{p \in \Lambda^*_+}  n_p^{(j)} \eps_p  \]
for every $j \in \bN \backslash \{ 0 \}$. Here the coefficients $n_p^{(j)} \in \bN$, for all $j \in \bN$ and all $p \in \Lambda^*_+$. Notice that the eigenvector of $\cD$ associated with the eigenvalue $\nu_j$ is given 
by 
\begin{equation}\label{eq:xij} \xi_j = C_j \prod_{p \in \Lambda^*_+} (a^*_p)^{n_p^{(j)}} \Omega \end{equation}  
for an appropriate normalization constant $C_j > 0$ (if $\nu_j$ is degenerate, the choice of $\xi_j$ is not unique; we will always use eigenvectors of the form (\ref{eq:xij})). 

To show (\ref{eq:lambdas}), we will combine a lower and an upper bound for $\lambda_m$ in terms of $\nu_m$. Since $\cV_N \geq 0$, we can ignore the potential energy operator appearing on the r.h.s. of (\ref{eq:cor}) when proving the lower bound. For the upper bound, on the other hand, we make use of the following lemma, where we control the expectation of $\cV_N$ on low-energy eigenspaces of the quadratic operator $\cD$. 
\begin{lemma}\label{lm:potbnd}
Let $V \in L^3 (\bR^3)$ be non-negative, compactly supported and spherically symmetric and let $\cV_N$ be defined as in (\ref{eq:KcVN}). Let $\zeta >0$ and $m \in \bN$ such that $\nu_m < \zeta$. Let $\xi_1, \dots , \xi_m$ be defined as in (\ref{eq:xij}) ($\xi_j$ is an eigenvector of $\cD$ associated with the eigenvalue $\nu_j$) and $Y^m_{\cD}$ be the subspace spanned by $\xi_1, \dots , \xi_m$. Then there exists $C>0$ such that 
\[ \langle \xi, \cV_N \xi \rangle \leq  \frac{C (\zeta + 1)^{7/2}}{N}\]
for all normalized $\xi \in Y_{\cD}^m$.
\end{lemma}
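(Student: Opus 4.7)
My plan is to combine two structural facts about eigenvectors of $\cD$ with the position-space form of $\cV_N$ and a Bernstein-type inequality, exploiting that all excitations in $Y_\cD^m$ have momenta bounded by $\sqrt{\zeta}$.

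First, since $\eps_p = \sqrt{|p|^4 + 16\pi\frak{a}_0 p^2}\geq \max\{|p|^2,\eps_{2\pi}\}$ for $p\in \Lambda_+^*$, we have $\cD \geq \cK$ and $\cD \geq c_0 \cN_+$ for a constant $c_0 > 0$. Since $[\cD,\cN_+]=0$, the subspace $Y_\cD^m$ is invariant under $\cN_+$, and the bound $\cD|_{Y_\cD^m}\leq\zeta$ gives $\cN_+\leq C\zeta$ on $Y_\cD^m$ as operators; in particular $\langle\xi,\cN_+^2\xi\rangle\leq C\zeta^2$ for normalized $\xi\in Y_\cD^m$. Moreover, each $\xi_j$ of the form (\ref{eq:xij}) satisfies $n_p^{(j)}\geq 1\Rightarrow\eps_p\leq\nu_j<\zeta\Rightarrow|p|\leq\sqrt{\zeta}$, so, viewed as a symmetric $n$-particle wave function, every $\xi\in Y_\cD^m$ has Fourier support contained in the ball $\{|q|\leq\sqrt{\zeta}\}$ with respect to each coordinate separately.

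Next I pass to position space: since $a_0\xi=0$ on $\cF_+^{\leq N}$, the exclusions $r\neq -p,-q$ in the definition of $\cV_N$ remove only terms with vanishing matrix element on $\cF_+^{\leq N}$, giving
\[ \langle\xi,\cV_N\xi\rangle = \tfrac{1}{2}\int_{\Lambda\times\Lambda} N^2 V(N(x-y))\,\|\check a_x\check a_y\xi\|^2\,dx\,dy. \]
Because $\cV_N$ commutes with $\cN_+$, I decompose $\xi=\sum_n \xi_n$ with $\xi_n:=P_n\xi$ and treat each sector independently. Interpreting $\xi_n\in L^2_s(\Lambda^n)$, one has
\[ \|\check a_x\check a_y\xi_n\|^2 = n(n-1)\int_{\Lambda^{n-2}}|\xi_n(x,y,x_3,\ldots,x_n)|^2\,dx_3\cdots dx_n. \]

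The decisive step is Bernstein's inequality in the $y$ variable alone: since $y\mapsto\xi_n(x,y,x_3,\ldots)$ has Fourier support in $\{|q|\leq\sqrt{\zeta}\}$ for arbitrary values of the remaining coordinates, one obtains the pointwise bound $|\xi_n(x,y,\ldots)|^2\leq C\zeta^{3/2}\int_\Lambda|\xi_n(x,y',\ldots)|^2\,dy'$. Combining this with $\int_\Lambda N^2 V(N(x-y))\,dy = N^{-1}\|V\|_1$ and Fubini gives $\langle\xi_n,\cV_N\xi_n\rangle\leq Cn^2\zeta^{3/2}N^{-1}\|\xi_n\|^2$; summing over $n$ and using $\sum_n n^2\|\xi_n\|^2=\langle\xi,\cN_+^2\xi\rangle\leq C\zeta^2$ yields the claimed bound $\langle\xi,\cV_N\xi\rangle\leq C\zeta^{7/2}/N$, the additive $+1$ in $(\zeta+1)^{7/2}$ absorbing the trivial case $\zeta<\eps_{2\pi}$ in which $Y_\cD^m$ is spanned by $\Omega$ alone. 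The only nonroutine ingredient is the Bernstein step, which converts the maximum-momentum constraint $\sqrt{\zeta}$ into a gain of $\zeta^{3/2}$ in the pointwise two-particle density; it is needed to handle general superpositions inside $Y_\cD^m$, since for an individual basis vector $\xi_j$ the density $\|\check a_x\check a_y\xi_j\|^2$ is essentially constant in $(x,y)$ and the naive bound already yields $\zeta^2/N$, but an arbitrary $\xi\in Y_\cD^m$ lacks such pointwise constancy.
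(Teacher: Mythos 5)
Your proof is correct, and the key ingredients are the same as in the paper: the observation that on $Y_\cD^m$ all excitation momenta are confined to $\{|q|\leq\sqrt{\zeta}\}$, giving a factor $\zeta^{3/2}$ from counting lattice points, together with $\cN_+\leq C\cD\leq C(\zeta+1)$ on $Y_\cD^m$, giving a factor $(\zeta+1)^2$ from $\langle\xi,\cN_+^2\xi\rangle$. The execution differs: the paper stays in momentum space, writes $\langle\xi,\cV_N\xi\rangle\leq N^{-1}\sum_{p,q,u}|\widehat V(u/N)|\|a_{q+u}a_p\xi\|\|a_{p+u}a_q\xi\|$, bounds $\widehat V$ by $\|V\|_1$, and uses the constraint $|u|\leq C\sqrt{\zeta}$ (forced by $|p|,|p+u|\leq\sqrt\zeta$) to pick up $\zeta^{3/2}$ directly from the sum over $u$ via Cauchy--Schwarz; you instead pass to the position-space form $\frac12\int N^2V(N(x-y))\|\check a_x\check a_y\xi\|^2$, decompose into $\cN_+$-sectors, and extract $\zeta^{3/2}$ from a one-variable Bernstein inequality before integrating out the potential. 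Both routes are valid and of comparable length; yours makes the band-limited nature of the states slightly more explicit, while the paper's avoids the sector decomposition and the pointwise (rather than integrated) Bernstein step. Your closing remark that Bernstein is ``needed'' to handle superpositions is not quite right -- the paper's momentum-space Cauchy--Schwarz handles arbitrary $\xi\in Y_\cD^m$ without any pointwise-constancy input -- but this does not affect the correctness of your argument.
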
 
\begin{proof}
The bounds $\eps_p \geq p^2$ and $\nu_1 \leq \dots \leq \nu_m  \leq \zeta$ imply that $a_q \xi_j = 0$ for all $q \in \Lambda^*_+$ with $|q| > \zeta^{1/2}$. This also implies that $a_q \xi =0$ for all $\xi \in Y^m_{\cD}$.  Hence
\[ \begin{split} \langle \xi, \cV_N \xi \rangle &\leq \frac{1}{N} \sum_{p,q,u \in \Lambda_+^* }  |\widehat{V} (u/N)| \| a_{q+u} a_p \xi \| \| a_{p+u} a_q \xi \| \\ &\leq \frac{C}{N}  \sum_{p,q,u \in \Lambda_+^* : |p| , |q| , |u| \leq C \zeta^{1/2} }  \| a_{q+u} a_p \xi \| \| a_{p+u} a_q \xi \|  \leq \frac{C \zeta^{3/2}}{N} \| (\cN_+ + 1) \xi \|^2   \end{split} \]
Since $\cN_+ \leq C \cD$, we find 
\[   \langle \xi, \cV_N \xi \rangle \leq 
 \frac{C \zeta^{3/2}}{N} \| (\cD+1) \xi \|^2 \leq \frac{C (\zeta + 1)^{7/2}}{N} \]
 \end{proof}
 
In addition to Lemma \ref{lm:potbnd}, we will need the following result which is an extension of Lemma 7.3 in \cite{BBCS2} to the Gross-Pitaevskii regime.
\begin{lemma}\label{lm:cVN}
Let $V \in L^3 (\bR^3)$ be non-negative, compactly supported and spherically symmetric, let $\cK, \cV_N$ be defined as in (\ref{eq:KcVN}). Then there exists $C > 0$ such that, on $\cF^{\leq N}_+$, 
\[ \cV_N \leq C \cN_+  \cK  \, . \]
\end{lemma}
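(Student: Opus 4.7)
The strategy is to work in position space, where the potential energy takes a manifestly non-negative form that can be handled by H\"older's and Sobolev's inequalities. Introducing the position-space annihilation operators on $L^2_\perp (\Lambda)$,
\[ \tilde a_x \;:=\; \sum_{p \in \Lambda_+^*} e^{-i p \cdot x} a_p, \]
a direct Fourier computation (analogous to the one performed after (\ref{eq:Hmom})) yields the representation
\[ \cV_N \;=\; \frac{1}{2} \int_\Lambda \int_\Lambda dx\, dy\; N^2 V(N(x-y))\; \tilde a_x^* \tilde a_y^* \tilde a_y \tilde a_x, \]
so that for any $\xi \in \cF_+^{\leq N}$ we have $2 \langle \xi, \cV_N \xi \rangle = \int dx\, dy\, N^2 V(N(x-y)) \|\tilde a_y \tilde a_x \xi\|^2$.

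Next, I would fix $x \in \Lambda$ and control the $y$-integration by rescaling $z = N(y-x)$. Since $V$ is compactly supported, for $N$ large the integral in $z$ is concentrated in a fixed ball, and a smooth cut-off $\chi \in C_c^\infty(\bR^3)$ equal to $1$ on the support of $V$ allows one to extend the integrand by zero to all of $\bR^3$. By H\"older's inequality with exponents $(3, 3/2)$ and the standard Sobolev embedding $\dot H^1 (\bR^3) \hookrightarrow L^6(\bR^3)$ applied to the vector-valued function $z \mapsto \chi(z) \tilde a_{x + z/N} \tilde a_x \xi$, a careful tracking of the $N$-scaling (using $\|V(N\cdot)\|_{L^3} = N^{-1} \|V\|_3$ and $\|\nabla_z \cdot\| \sim N^{-1} \|\nabla_y \cdot\|$) produces the pointwise-in-$x$ bound
\[ \int_\Lambda dy\; N^2 V(N(x-y))\; \|\tilde a_y \tilde a_x \xi\|^2 \;\leq\; C \|V\|_3 \int_\Lambda dy\; \|\nabla_y \tilde a_y \tilde a_x \xi\|^2 \;+\; C \|V\|_3 \int_\Lambda dy\; \|\tilde a_y \tilde a_x \xi\|^2, \]
the second term being a remainder arising from the derivative hitting $\chi$.

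It then remains to integrate over $x$ and evaluate both right-hand integrals in momentum space. Using $\int dx\, \tilde a_x^* \tilde a_x = \cN_+$ and $\int dy\, (\nabla_y \tilde a_y)^* (\nabla_y \tilde a_y) = \cK$, together with normal ordering of $a_q^* a_p^* a_p a_q$ for $p, q \in \Lambda_+^*$, one finds
\[ \int dx\, dy\; \|\nabla_y \tilde a_y \tilde a_x \xi\|^2 = \langle \xi, (\cK \cN_+ - \cK) \xi \rangle, \qquad \int dx\, dy\; \|\tilde a_y \tilde a_x \xi\|^2 = \langle \xi, \cN_+(\cN_+ - 1) \xi \rangle. \]
Finally, since $\Lambda_+^* = 2\pi \bZ^3 \setminus \{0\}$ gives $|p|^2 \geq (2\pi)^2$ for every $p \in \Lambda_+^*$, one has $\cN_+ \leq (2\pi)^{-2} \cK$ on $\cF_+^{\leq N}$, and hence both $\cK \cN_+ - \cK \leq \cN_+ \cK$ and $\cN_+^2 \leq C \cN_+ \cK$. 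Putting everything together yields $\langle \xi, \cV_N \xi \rangle \leq C \|V\|_3 \langle \xi, \cN_+ \cK \xi \rangle$, which is the claim.

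The main technical point is the Sobolev/scaling step: the naive application of Sobolev on the torus $\Lambda$ fails to produce the scale-invariant bound one needs (it would generate spurious factors of $N$), so one must genuinely exploit the compact support of $V$ to localize the rescaled problem inside a single fundamental domain and invoke the homogeneous Sobolev inequality on $\bR^3$. The cut-off $\chi$ is what makes this transfer rigorous; the only cost is the harmless $\cN_+^2$ remainder, which is absorbed into $\cN_+ \cK$ thanks to the spectral gap $|p|^2 \geq (2\pi)^2$ on the non-zero modes.
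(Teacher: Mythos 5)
Your overall strategy (a position-space Sobolev argument) is a genuinely different route from the paper's, which works entirely in momentum space: the paper simply writes $\langle\xi,\cV_N\xi\rangle$ as a double sum, applies a weighted Cauchy--Schwarz with weight $(p+u)^2/(q+u)^2$, and bounds $\sup_q N^{-1}\sum_u |\widehat V(u/N)|/(q+u)^2$ by a constant. However, your execution contains a genuine error, and the remark you make to motivate the cut-off is incorrect.

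The problem is the claim that the derivative hitting $\chi$ produces a remainder of the form $C\|V\|_3 \int_\Lambda \|\tilde a_y\tilde a_x\xi\|^2\,dy$. Trace the scaling carefully: the $\nabla\chi$ term contributes, after undoing $z=N(y-x)$ and including the overall $N^{-1}$ prefactor from the change of variables, a quantity of the form
\[
C\,N^{2}\int_{\{R'/N\le |y-x|\le R''/N\}} \|\tilde a_y\tilde a_x\xi\|^2\,dy\,,
\]
not $C\int_\Lambda\|\tilde a_y\tilde a_x\xi\|^2\,dy$. There is no operator-independent inequality $N^2\int_{|y-x|\lesssim 1/N}\|\psi(y)\|^2\,dy \le C\int_\Lambda\|\psi(y)\|^2\,dy$; take $\psi$ to be an $L^2$-normalized bump concentrated at scale $1/N$ near $y=x$ and the left side is $\sim N^2$. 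In fact this leftover is an object of exactly the same type as the one you started from (an interaction kernel of amplitude $N^2$ supported at scale $1/N$), so the cut-off has not gained you anything; it has just reproduced the problem on the error term. Your displayed pointwise-in-$x$ bound happens to be true, but not for the reason you give.

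Moreover, your justification for introducing the cut-off in the first place, namely that Sobolev on the torus "would generate spurious factors of $N$", is also not right; what generates a spurious factor of $N$ is the choice of H\"older exponents $(3,3/2)$, because $\|N^2V(N\cdot)\|_{L^3(\Lambda)} = N\|V\|_{L^3}$. With the dual split $(3/2,3)$ the argument is scale-invariant and works directly on $\Lambda$ with no cut-off at all:
\[
\int_\Lambda N^2 V(N(x-y))\|\tilde a_y\tilde a_x\xi\|^2\,dy \;\le\; \|N^2V(N\cdot)\|_{L^{3/2}(\Lambda)}\,\big\|\|\tilde a_\cdot\tilde a_x\xi\|^2\big\|_{L^3(\Lambda)} \;=\;\|V\|_{L^{3/2}(\bR^3)}\,\|\tilde a_\cdot\tilde a_x\xi\|_{L^6(\Lambda)}^2\,,
\]
since $\|N^2V(N\cdot)\|_{L^{3/2}}$ is exactly $N$-independent. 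Because $V$ is compactly supported and in $L^3$, one has $\|V\|_{3/2}<\infty$. Then the Sobolev embedding $H^1(\Lambda)\hookrightarrow L^6(\Lambda)$ applied to $y\mapsto\|\tilde a_y\tilde a_x\xi\|$ (together with the pointwise bound $|\nabla\|\tilde a_y\tilde a_x\xi\|| \le \|\nabla_y\tilde a_y\tilde a_x\xi\|$ a.e.) and the spectral gap $|p|\ge 2\pi$ on $\Lambda_+^*$ (which makes the $L^2$ part of the $H^1$ norm controlled by the gradient, since $\tilde a_y$ has no zero mode) give directly $\le C\|V\|_{3/2}\int_\Lambda\|\nabla_y\tilde a_y\tilde a_x\xi\|^2\,dy$, with no remainder at all. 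Your final integrations over $x$ and the absorption of $\cN_+^2$ into $\cN_+\cK$ using the spectral gap are then fine. So the plan is rescuable, but you need to drop the rescaling-and-cut-off step and fix the H\"older exponents; as written, the key intermediate inequality is not established.
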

\begin{proof}
We bound
\[ \begin{split} 
\langle \xi, \cV_N \xi \rangle 
\leq \; & \frac{1}{N} \sum_{p,q \in \Lambda^*_+ , u \in \Lambda^* : u \not=-p,-q} |\widehat{V} (u/N)| \|  a_{p+u} a_q \xi \| \|  a_{q+u} a_p \xi \| \\
\leq \; &\frac{1}{N} \sum_{p,q \in \Lambda^*_+ , u \in \Lambda^* : u \not=-p,-q} \frac{|\widehat{V} (u/N)|}{(q+u)^2}  (p+u)^2 \|  a_{p+u} a_q \xi \|^2 \\ \leq \; & \left[ \sup_{q \in \Lambda^*_+} \frac{1}{N} \sum_{u \in \Lambda^* : u \not = -q} \frac{|\widehat{V} (u/N)|}{(u+q)^2}  \right] \| \cK^{1/2} \cN^{1/2} \xi \|^2  \leq C \| \cK^{1/2} \cN^{1/2} \xi \|^2
\end{split}
\]
\end{proof}

With the help of Lemma \ref{lm:potbnd} and Lemma \ref{lm:cVN}, we are now ready to prove (\ref{eq:lambdas}). 

Let us first prove a lower bound for $\lambda_m$, under the assumption that $\lambda_m < \zeta$. 
From the min-max principle, we have 
\[  \lambda_m  = \inf_{\substack{Y \subset 
\cF_+^{\leq N} :\\ \text{dim } Y = m}} \, \sup_{\substack{\xi \in Y : \\ \| \xi \| =1}} \, \langle \xi, (\cM_N - E_{\cM_N}) \xi 
\rangle   \]
From the assumption $\lambda_m < \zeta$ we obtain 
\[  \lambda_m  = \inf_{\substack{Y \subset 
P_\zeta (\cF_+^{\leq N}) :\\ \text{dim } Y = m}} \, \sup_{\substack{\xi \in Y : \\ \| \xi \| =1}} \, \langle \xi, (\cM_N - E_{\cM_N}) \xi 
\rangle   \]
where $P_\zeta$ is the spectral projection of $\cM_N - E_{\cM_N}$ associated with the interval $(-\infty; \zeta]$. Hence, with (\ref{eq:cor}), $\cV_N \geq 0$ and (\ref{eq:HNN-bd}) we find  
\[ \begin{split} \lambda_m  &\geq \inf_{\substack{Y \subset 
P_\zeta (\cF_+^{\leq N}) :\\ \text{dim } Y = m}} 
\, \sup_{\substack{\xi \in Y : \\ \| \xi \| =1}} \langle \xi, \cD \xi \rangle  - C N^{-1/4} (1+\zeta^3) \\ &\geq \inf_{\substack{Y \subset 
\cF_+^{\leq N} :\\ \text{dim } Y = m}} \,
 \sup_{\substack{\xi \in Y : \\ \| \xi \| =1}} \langle \xi, \cD \xi \rangle  - C N^{-1/4} (1+\zeta^3) = \nu_m - C N^{-1/4} (1+\zeta^3) 
\end{split} \]

Let us now prove an upper bound for $\lambda_m$. From the assumption $\lambda_m < \zeta$ and from the lower bound proven above, it follows that $\nu_m \leq \zeta + 1$ (without loss of generality, we can assume $N^{-1/4} \zeta^3 \leq 1$, since otherwise the statement of the theorem is trivially satisfied). The min-max principle implies that 
\begin{equation}\label{eq:minmax-up} \lambda_m = \inf_{\substack{Y \subset \cF_+^{\leq N} : \\ \text{dim }Y=m} } \; \sup_{\substack{\xi \in Y : \\ \| \xi \| = 1}} \langle \xi , (\cM_N - E_{\cM_N}) \xi \rangle \leq \sup_{\substack{\xi \in Y_{\cD}^m :\\ \| \xi \| = 1}} \langle \xi, (\cM_N - E_{\cM_N}) \xi \rangle  
\end{equation}
where $Y_{\cD}^m$ denotes the subspace spanned by the $m$ vectors $\xi_1, \dots , \xi_m$ defined in (\ref{eq:xij}). From Lemma \ref{lm:cVN} and the inequalities $\cN \leq C \cK \leq C \cD \leq C \nu_m \leq C (\zeta + 1)$ on $Y_{\cD}^m$, we find that
\[ \begin{split} 
\langle \xi , [ (\cH_N +1)(\cN_+ + 1) + (\cN_+ +1)^3] \xi \rangle &\leq C \langle \xi, (\cN_+ +1)^2 (\cK+1) \xi \rangle \leq C (1 + \zeta^3) \end{split} \]
for all normalized $\xi \in Y_\cD^m$. 
Inserting the last inequality and the bound from Lemma \ref{lm:potbnd} in (\ref{eq:cor}), we obtain that
\[ \langle \xi, (\cM_N -  E_{\cM_N}) \xi \rangle \leq  \langle \xi, \cD \xi \rangle + C N^{-1/4} (1+\zeta^3) \]
for all $\xi \in Y_{\cD}^m$. From (\ref{eq:minmax-up}), we conclude that 
\[ \lambda_m \leq  \sup_{\xi \in Y_\cD^m : \| \xi \| =1} \langle \xi , \cD \xi \rangle + C N^{-1/4} (1+\zeta^3) \leq \nu_m + C N^{-1/4} (1+\zeta^3) \]

Combining lower and upper bound, we showed that $|\lambda_m - \nu_m| \leq CN^{-1/4} (1+\zeta^3)$, for all $m \in \bN$ such that $\lambda_m < \zeta$. This completes the proof of Theorem \ref{thm:main}. 

\medskip

To conclude this section, we come back to the remark after Theorem \ref{thm:main}, concerning the eigenvectors of the Hamilton operator $H_N$ introduced in (\ref{eq:Ham0}). Theorem \ref{thm:main} shows that the eigenvalues of $H_N$ can be approximated in terms of the eigenvalues of the diagonal quadratic operator $\cD$ defined in (\ref{eq:cDdef}). Following standard arguments one can also approximate the eigenvectors of $H_N$ through the (appropriately transformed) eigenvectors of $\cD$. More precisely, let $\theta_1 \leq \theta_2 \leq \dots $ denote the ordered eigenvalues of $H_N$ (i.e. $\theta_j = \lambda_j + E_{\cM_N}$, with the notation introduced after (\ref{eq:cDdef})) and let 
$0 = \nu_1 \leq \nu_2 \leq \dots$ denote the eigenvalues of the diagonal quadratic operator $\cD$ defined in (\ref{eq:cDdef}). Fix $j \in \bN \backslash \{0 \}$ with $\nu_j < \nu_{j+1}$. From (\ref{eq:lambdas}) we obtain that also $\theta_j < \theta_{j+1}$, if $N$ is large enough. We denote by $P_j$ the spectral projection onto the eigenspace of $H_N$ associated with the eigenvalues $\theta_1 \leq \dots \leq \theta_j$ and by $Q_j$ the orthogonal projection onto the eigenspace of $\cD$ associated with the eigenvalues $0 = \nu_1 \leq \dots \leq \nu_j$. Then, we find 
 \begin{equation}\label{eq:eigvec} \Big\| e^{-B (\tau)} e^{-A} e^{-B(\eta)} U_N P_j U_N^* e^{B(\eta)} e^{A}e^{B(\tau)} - Q_j \Big\|_\text{HS}^2 \leq \frac{C (j+1) (1 + \nu^3_j)}{\nu_{j+1} - \nu_j} N^{-1/4} \end{equation}

In particular, if $\psi_N$ denotes a ground state vector of the Hamiltonian $H_N$, there exists a phase $\omega \in [0;2\pi)$ such that 
\begin{equation}\label{eq:gs-appro} \big\| \psi_N - e^{i\omega} U_N^* e^{B(\eta)}e^{A} e^{B(\tau)} \Omega \big\|^2 \leq \frac{C}{\theta_1 - \theta_0} N^{-1/4} \end{equation}
The proof of (\ref{eq:eigvec}) and (\ref{eq:gs-appro}) can be obtained, using the results of Theorem \ref{thm:main}, analogously as in \cite[Section7]{GS}. We omit the details.


\section{Analysis of $ \cG_N$} \label{sec:GN}

In this section, we prove Proposition \ref{prop:gene}, devoted to the properties of the excitation Hamiltonian $\cG_N$ defined in (\ref{eq:GN}). In particular, we will show part b) of Prop.
\ref{prop:gene}, since part a) was proven already in \cite[Prop. 3.2]{BBCS1}. In fact, the bound 
(\ref{eq:Delta-bd}) is a bit more precise than the estimate appearing in \cite[Prop. 3.2]{BBCS1} but it can be easily obtained, combining the results of Prop. 4.2, Prop. 4.3, Prop. 4.4, 
Prop. 4.5 and Prop.4.7 in \cite{BBCS1}, taking always $\kappa =1$. Notice that, in these propositions from \cite{BBCS1}, the assumption that $\kappa > 0$ is sufficiently small is only used to guarantee that $\| \eta \|$ is small enough, so that we can apply the expansion from \cite[Lemma 2.6]{BBCS1} (which corresponds to (\ref{eq:conv-serie}) above). Here, we do not assume that the size of the potential is small, but nevertheless we make sure that $\| \eta \|$ is small enough by requiring that $\ell \in (0;1/2)$ is sufficiently small (because of the bound (\ref{eq:ellbdeta})).  

As for the bound (\ref{eq:adkN}), it was not explicitly shown in \cite{BBCS1}; however, it follows from the analysis in \cite{BBCS1} by noticing that the commutator $[i\cN_+ , \Delta_N]$ is given by the sum of the same monomials in creation and annihilation operators contributing to $\Delta_N$, multiplied with a constant $\lambda $ (given by the difference between the number of creation and the number of annihilation operators in the monomial). To be more precise, it follows from \cite{BBCS1} that $\Delta_N$ can be written as a sum $\Delta_N=\sum_{k=0}^\infty \Delta_N^{(k)}$ where the errors $\Delta_N^{(k)}, k\in\mathbb N$, are sums of monomials of creation and annihilation operators that satisfy $\pm\Delta_N^{(k)}\leq (C\| \eta \|)^k (\cH_N+1)$ for some constant $C>0$, independent of $N$. Moreover, the commutator of a given monomial in $\Delta_N^{(k)}$ with $\cN_+$ is given by the same monomial, multiplied by some constant $\lambda^{(k)}$ which is bounded by $ |\lambda^{(k)}|\leq (2k+1)\leq C^k$ (if, w.l.o.g., the constant $C$ is sufficiently large). Hence, terms in $[i \cN_+ , \Delta_N]$, and analogously in higher commutators of $\Delta_N$ with $i\cN_+$, can be estimated exactly like terms in $\Delta_N$ (up to an unimportant additional constant), leading to (\ref{eq:adkN}). From now on, we will therefore focus on part b) of Proposition \ref{prop:gene}.

Using (\ref{eq:cLN}), we write 
\begin{equation}\label{eq:cGN-dec} \cG_N = \cG_N^{(0)} + \cG_N^{(2)} + \cG_N^{(3)} + \cG_N^{(4)},\end{equation} with $\cG_N^{(j)} = e^{-B(\eta)} \cL_N^{(j)} e^{B(\eta)}$, for $j=0,2,3,4$. In the rest of this section, we will compute the operators on the r.h.s. of (\ref{eq:cGN-dec}), up to errors that are negligible in the limit of large $N$ (on low-energy states). To quickly discard some of the error terms, it will be useful to have a rough estimate on the action of the Bogoliubov transformation $e^{-B(\eta)}$; this is the goal of the next lemma. 
\begin{lemma}\label{lem:prelmG21} 
Let $B(\eta)$ be defined as in (\ref{eq:defB}), with $\eta$ as in (\ref{eq:ktdef}). 
Let $V \in L^3 (\bR^3)$ be non-negative, compactly supported and spherically symmetric. Let $\cK, \cV_N$ be defined as in (\ref{eq:KcVN}). Then for every $j \in \bN$ there exists a constant $C>0$ such that 
\[\begin{split}  e^{-B(\eta)}\cK(\cN_++1)^je^{B(\eta)} &\leq C \cK (\cN_++1)^j + C N (\cN_++1)^{j+1} \\ 
e^{-B(\eta)}\cV_N(\cN_++1)^je^{B(\eta)} &\leq C \cV_N (\cN_++1)^j + C N(\cN_++1)^j
\end{split}\]
\end{lemma}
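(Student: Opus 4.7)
The natural strategy is a Gronwall argument in the conjugation parameter $s$. For a fixed $\xi \in \cF_+^{\leq N}$, I would set
\[
\varphi^{\cK}_\xi(s) := \langle \xi, e^{-sB(\eta)} \cK (\cN_++1)^j e^{sB(\eta)} \xi\rangle, \qquad \varphi^{\cV}_\xi(s) := \langle \xi, e^{-sB(\eta)} \cV_N (\cN_++1)^j e^{sB(\eta)} \xi\rangle,
\]
and compute the derivatives using the commutators $[\cK(\cN_++1)^j, B(\eta)]$ and $[\cV_N(\cN_++1)^j, B(\eta)]$. If I can show
\[
|\partial_s \varphi^{\cK}_\xi(s)| \leq C\,\varphi^{\cK}_\xi(s) + C\,\langle \xi, e^{-sB(\eta)} N(\cN_++1)^{j+1} e^{sB(\eta)}\xi\rangle,
\]
and an analogous estimate for $\varphi^{\cV}_\xi$, then together with Lemma \ref{lm:Ngrow} (which controls the growth of $(\cN_++1)^{j+1}$ under $e^{sB(\eta)}$ by $C(\cN_++1)^{j+1}$ since $\|\eta\|$ is small) and Gronwall's inequality, the claimed bounds follow after integrating over $s \in [0;1]$.

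The key commutator computations are as follows. First, using the canonical relations \eqref{eq:comm-bp} and the identity $[\cK, b_q^* b_{-q}^*] = 2 q^2 b_q^* b_{-q}^*$, one finds
\[
[\cK, B(\eta)] = \sum_{p \in \Lambda_+^*} p^2 \eta_p \big(b_p^* b_{-p}^* + b_p b_{-p}\big).
\]
Estimating this via Cauchy--Schwarz and $|p^2 \eta_p|\leq C$ (from \eqref{eq:etap}) together with $\sum_p \|b_p\xi\|^2 \leq \|\cN_+^{1/2}\xi\|^2$, one obtains
\[
|\langle \xi', [\cK,B(\eta)](\cN_++1)^j\xi'\rangle| \leq C \langle \xi', \cK(\cN_++1)^j\xi'\rangle + C\langle \xi', (\cN_++1)^{j+1}\xi'\rangle.
\]
The contribution $\cK [(\cN_++1)^j, B(\eta)]$ is handled by expanding the commutator into a sum of lower-order monomials of the form $B_k(\eta)(\cN_++1)^{j-k}$ (since $B(\eta)$ shifts $\cN_+$ by $\pm 2$) and using a similar Cauchy--Schwarz argument, gaining a factor of $\cN_+$ from the shift but no additional unbounded growth. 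For $\cV_N$, one computes $[\cV_N, B(\eta)]$ in position space; the resulting terms are controlled using the pointwise bound $|\check{\eta}(x)| \leq C N$ (see \eqref{eq:checketainf}) combined with the estimates $N^2 V(N(x-y))\geq 0$, leading to contributions bounded by $C(\cV_N + N(\cN_++1))(\cN_++1)^j$, which produces the $N(\cN_++1)^{j+1}$ term on the right-hand side.

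The main obstacle is the bookkeeping of the commutator $\cK[(\cN_++1)^j, B(\eta)]$ for general $j$, since applying $[(\cN_++1)^j, B(\eta)]$ produces a sum of terms of the form $B(\eta)\cdot(\text{polynomial in }\cN_+)$ where each must be Cauchy--Schwarz bounded in a way compatible with the $\cK$ weight sitting on the left. This is handled by splitting $\cK = \cK^{1/2}\cdot\cK^{1/2}$ and moving one factor of $\cK^{1/2}$ past $B(\eta)$ by the same commutator identity (picking up lower-order terms estimated in the same way), which closes the Gronwall estimate. For the $\cV_N$-bound, the analogous obstacle is handled by Lemma \ref{lm:cVN}, which allows one to trade $\cV_N$ for $\cN_+\cK$ when needed. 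Once these estimates are in place, Gronwall yields both inequalities simultaneously.
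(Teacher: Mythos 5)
Your overall strategy (Gronwall in the conjugation parameter, splitting the commutator, and using Lemma \ref{lm:Ngrow} to undo the residual conjugation) is exactly the paper's, but the concrete estimate you use for the kinetic commutator is wrong and the argument breaks down there.

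For the term $[\cK,B(\eta)](\cN_++1)^j$ you claim
\[
|\langle \xi', [\cK,B(\eta)](\cN_++1)^j\xi'\rangle| \leq C \langle \xi', \cK(\cN_++1)^j\xi'\rangle + C\langle \xi', (\cN_++1)^{j+1}\xi'\rangle
\]
using the pointwise bound $|p^2\eta_p|\leq C$ from \eqref{eq:etap} and $\sum_p \|b_p\xi\|^2\leq \|\cN_+^{1/2}\xi\|^2$. This cannot work: after pulling out $p^2|\eta_p|\leq C$ you are left with $\sum_p |\langle \xi', b_p b_{-p}(\cN_++1)^j\xi'\rangle|$, and the pairing forces you to bound something like $\sum_p \|b_p^*\cdot\|\,\|b_{-p}\cdot\|$. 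Since $\sum_p\|b_p^*\zeta\|^2$ diverges (the commutator $[b_p,b_p^*]$ contributes a summand $\sim 1$ for each $p$), the sum over $p$ is uncontrolled. Moreover your claimed intermediate estimate has no $N$ on the right, which would contradict the lemma's statement (and, if true, would allow dropping the $N$ from the lemma altogether). What actually makes the argument close is Cauchy--Schwarz done so as to keep one factor of $\eta_p$: this produces $\bigl(\sum_p p^2|\eta_p|^2\bigr)^{1/2}\leq C\sqrt N$ from \eqref{eq:etapN} (not \eqref{eq:etap}) multiplied by $\|\cK^{1/2}(\cN_++1)^{j/2}\Phi_s\|\,\|(\cN_++1)^{(j+1)/2}\Phi_s\|$, and AM--GM then yields precisely $C\cK(\cN_++1)^j + CN(\cN_++1)^{j+1}$. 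The $N$ is not a typo --- it is forced by the growth $\|\nabla\check\eta\|_2^2\sim N$ coming from the $N^{-1}$-scale of the correlation structure.

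A secondary but real discrepancy: for the $\cV_N$ part you arrive at $C\cV_N(\cN_++1)^j + CN(\cN_++1)^{j+1}$, whereas the lemma asserts the sharper $CN(\cN_++1)^{j}$. The extra power is shaved off in the paper by using $\cN_+\leq N$ on $\cF_+^{\leq N}$ together with the $L^2$ bound $\|\check\eta_y\|_2\leq C$ on the quartic piece of $[\cV_N,B(\eta)]$ (you only invoke the $L^\infty$ bound $\|\check\eta\|_\infty\leq CN$, which is needed for the quadratic piece but is too crude for the quartic one). As written, you would not recover the lemma exactly as stated.
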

\begin{proof}
To prove the bound for the kinetic energy operator, we apply Gronwall's inequality. For $\xi\in\cF_+^{\leq N}$ and $s\in\bR$ we define $\Phi_s = e^{s B(\eta)} \xi$ and we consider  
\begin{equation}\label{eq:GronwallKNj}
\begin{split} 
&\partial_s \langle\Phi_s, \cK(\cN_++1)^j \Phi_s \rangle\\ &=\langle\Phi_s , [\cK(\cN_++1)^j,B(\eta)] \Phi_s \rangle\\
&=\langle \Phi_s , [\cK,B(\eta)](\cN_++1)^j \Phi_s \rangle+\langle \Phi_s , \cK[(\cN_++1)^j,B(\eta)] \Phi_s \rangle 
\end{split}
\end{equation}
With 
\[
  [\cK, B(\eta)]=\sum_{p\in\L^*_+}p^2\eta_p\left(b_pb_{-p}+b^*_pb^*_{-p}\right)
 \]
the first term on the r.h.s. of (\ref{eq:GronwallKNj}) can be bounded by 
\begin{equation}\label{eq:KB}
\begin{split} 
\Big|  \langle\Phi_s , & [\cK,B(\eta)](\cN_++1)^j \Phi_s \rangle \Big| \\ \leq  & \; C \sum_{p\in\L^*_+} p^2|\eta_p| \| b_p (\cN_+ + 1)^{j/2} \Phi_s \| \| (\cN_+ + 1)^{(j+1)/2} \Phi_s \| \\ \leq \; &C \langle \Phi_s , \cK (\cN_+ +1)^j \Phi_s \rangle + C N \langle \xi, (\cN_+ + 1)^{j+1}  \xi \rangle 
\end{split} \end{equation}
Here, we used Cauchy-Schwarz, the estimate (\ref{eq:etapN}) and Lemma \ref{lm:Ngrow} to replace, in the second term on the r.h.s., $\Phi_s$ by $\xi$.  As for the second term on the r.h.s. of (\ref{eq:GronwallKNj}), we have 
\begin{equation*}
\begin{split} 
\langle\Phi_s , &\cK[(\cN_++1)^j,B(\eta)] \Phi_s \rangle\\
&=\sum_{k=0}^{j-1}\sum_{p\in\L^*_+}\eta_p\langle\Phi_s , \cK(\cN_++1)^{j-k-1}(b^*_pb^*_{-p}+b_pb_{-p})(\cN_++1)^k \Phi_s \rangle
\end{split}
\end{equation*}
Writing $\cK = \sum_{q \in \L_+^*} q^2 a_q^* a_q$ and normal ordering field operators, we arrive at
\begin{equation*}
\begin{split} 
\big| &\langle\Phi_s , \cK[(\cN_++1)^j,B(\eta)] \Phi_s \rangle \big |\\
&\leq C \sum_{p,q\in\L^*_+}q^2|\eta_p|\| a_qa_p(\cN_++1)^{(j-1)/2} \Phi_s \|\| a_q(\cN_++1)^{j/2} \Phi_s \|\\
&\hspace{.4cm} +C \sum_{p\in\L^*_+}p^2|\eta_p|\| a_p(\cN_++1)^{(j-1)/2} \Phi_s \|\| (\cN_++1)^{j/2} \Phi_s \|\\
 &\leq C \langle\Phi_s , \cK(\cN_++1)^{j} \Phi_s \rangle 
 + C N\langle\xi,(\cN_++1)^{j} \xi\rangle. \end{split}
\end{equation*}
Inserting the last bound and (\ref{eq:KB}) into the r.h.s. of (\ref{eq:GronwallKNj}) and applying Gronwall, we obtain the bound for the kinetic energy operator. 

To show the estimate for the potential energy operator, we proceed similarly. Using again the notation $\Phi_s = e^{sB(\eta)} \xi$, we compute 
\begin{equation}\label{def:prlmG4D1D2}
\begin{split} 
\partial_s \langle \langle \Phi_s , \cV_N &(\cN_+ +1)^j \Phi_s \rangle \\ &= 
 \langle \Phi_s, \big[\cV_N , B(\eta)\big](\cN_+ +1)^j \Phi_s \rangle +\langle \Phi_s , \cV_N \big[(\cN_+ +1)^j, B(\eta)\big] \Phi_s \rangle \end{split}\end{equation}
Using the identity 
  \[\begin{split} \big[\cV_N , B(\eta)\big] = \; &\frac{1}{2N} \sum_{q \in \Lambda_+^*, r \in \Lambda^* : r \not = -q} \widehat{V} (r/N) \eta_{q+r}  b^*_q b^*_{-q}  \\
        & +\frac{1}{N} \sum_{p,q \in \Lambda_+^* , r \in \Lambda^* : r \not = p,-q} \widehat{V} (r/N) \eta_{q+r}  b_{p+r}^* b_q^* a^*_{-q-r} a_p  + \text{h.c.}   
        \end{split}\]
and switching to position space, we can bound the expectation of the first term on the r.h.s. of (\ref{def:prlmG4D1D2}) by 
  \begin{equation*} \begin{split} \big|  \langle \Phi_s, &\big[\cV_N , B(\eta)\big](\cN_+ +1)^j \Phi_s \rangle \big| \\ \leq  \; & 
   \bigg| \frac{1}{2}\int_{\Lambda^2}dx dy\, N^2V(N(x-y))\check{\eta}(x-y) \big\langle \Phi_s ,  \check{b}^*_x \check{b}^*_y (\cN_+ +1)^j \Phi_s \big\rangle \bigg| \\ &+ \bigg| \int_{\Lambda^2}dx dy\, N^2V(N(x-y))\big\langle \Phi_s , \check{b}^*_x \check{b}^*_y a^*(\check{\eta}_y) \check{a}_x (\cN_+ +1)^j \Phi_s \big\rangle \bigg| \\ \leq \; &     C \int_{\Lambda^2}dx dy\, N^3V(N(x-y))\big\|(\cN_++1)^{j/2}\check{b}_x\check{b}_y\Phi_s \big\| \big\|(\cN_++1)^{j/2}\Phi_s \big\| \\ 
     &+ C \int_{\Lambda^2}dx dy\, N^2V(N(x-y)) \| \check{\eta}_y \|_2 \, \big\|(\cN_++1)^{j/2}\check{b}_x\check{b}_y\Phi_s \big\| \big\| \check{a}_x(\cN_++1)^{(j+1)/2}\Phi_s \big\|\\ 
    \leq \; & C \langle\Phi_s , \cV_N(\cN_++1)^j\Phi_s \rangle + C N \langle\xi, (\cN_++1)^j\xi \rangle
\end{split}\end{equation*}
where we used Cauchy-Schwarz, the bound $\| \check{\eta}_y \|_2 \leq C$, the fact that $\cN_+ \leq N$ on $\cF^{\leq N}_+$ and, in the last step, Lemma \ref{lm:Ngrow} to replace $\Phi_s$ with $\xi$ in the second term. As for the second term on the r.h.s. of (\ref{def:prlmG4D1D2}), it can be controlled similarly, using the identity 
\[\begin{split}\cV_N \big[(\cN_+ +1)^j, B(\eta)\big]  =&    \sum_{k=1}^j  \sum_{p \in \L^*_+} \eta_p (\cN_+ +1)^{j-k-1} \cV_N (b_p b_{-p} + b_p^* b_{-p}^*) (\cN_+ +1)^k\end{split} \]
and expressing $\cV_N$ in position space. We conclude that 
\[  \big| \partial_s \langle \langle \Phi_s , \cV_N (\cN_+ +1)^j \Phi_s \rangle \big|  \leq C \langle \Phi_s , \cV_N(\cN_++1)^j\Phi_s \rangle + C N \langle\xi, (\cN_++1)^j\xi \rangle \]
Gronwall's lemma gives the desired bound.
\end{proof}

\subsection{Analysis of $ \cG_N^{(0)} = e^{-B(\eta)} \cL_{N}^{(0)} e^{B(\eta)}$}

From (\ref{eq:cLNj}), recall that 
\[\cL_{N}^{(0)} =\; \frac{N-1}{2N} \widehat{V} (0) (N-\cN_+ ) + \frac{\widehat{V} (0)}{2N} \cN_+  (N-\cN_+ ) \]
With Lemma \ref{lm:Ngrow}, we immediately obtain that 
\[  \cG_{N}^{(0)} = \frac{(N-1)}{2} \widehat{V} (0) + \cE_{N}^{(0)} \]
where the error operator $\cE_N^{(0)}$ is such that, on $\cF_+^{\leq N}$, 
\begin{equation}\label{eq:GN0} \begin{split} \pm \cE_{N}^{(0)} &\leq \frac{C}{N} (\cN_+ +1)^2  \end{split} \end{equation}


\subsection{Analysis of $\cG_N^{(2)}= e^{-B(\eta)} \cL^{(2)}_N e^{B(\eta)} $}
\label{sec:GN2}

We define the error operator $\cE_N^{(2)}$ by the identity  
\begin{equation}\label{eq:G2-sec} \cG_N^{(2)}  = \cG_N^{(2,K)} + \cG_N^{(2,V)} + \cE_N^{(2)}  \end{equation}
where we set 
\begin{equation}\label{eq:G_N2K} \begin{split} 
\cG_N^{(2,K)} = \; &\cK + \sum_{p \in \Lambda^*_+} \Big[ p^2 \sigma_p^2\left(1+\frac{1}{N}-\frac{\cN_+}{N}\right) + p^2 \sigma_p \gamma_p  \big( b_p b_{-p} + b_p^* b_{-p}^* \big) + 2 p^2 \sigma_p^2 b_p^* b_p \Big]\\
&+\sum_{p\in\L^*_+}\frac{1}{N}p^2\s_p^2\sum_{q\in\L^*_+}\left[\left(\gamma_q^2+\s_q^2\right)b^*_qb_q+\s_q^2\right]\\
&+\sum_{p\in\L^*_+}\frac{1}{N}p^2\s_p^2\sum_{q\in\L^*_+}\left(\gamma_q\s_qb_{-q}b_q+\text{h.c.}\right)\\
&+\sum_{p\in\L^*_+}\left[p^2\eta_p b_{-p}d_p+\text{h.c.}\right]
\end{split} \end{equation}
and $\cG_N^{(2,V)}$ is defined as in
\begin{equation}\label{eq:G_N2V} \begin{split}
\cG_N^{(2,V)}= \; &\sum_{p \in \Lambda^*_+} \left[ \widehat{V} (p/N) \sigma_p^2 +  \widehat{V} (p/N) \sigma_p \gamma_p\left(1-\frac{\cN_+}{N}\right) \right]  \\
&+ \sum_{p \in \Lambda^*_+}  \widehat{V} (p/N) (\gamma_p + \sigma_p)^2 b_p^* b_p \\ &+ \frac{1}{2} \sum_{p \in \Lambda^*_+}  \widehat{V} (p/N) (\gamma_p+\sigma_p)^2 (b_pb_{-p} + b_p^* b_{-p}^*)\\
&+\sum_{p \in \Lambda^*_+}\left[ \frac{1}{2}\widehat{V} (p/N)\left(\g_pb_p+\s_pb^*_{-p}\right)d_p+\frac{1}{2}\widehat{V} (p/N)d_p\left(\g_pb_p+\s_pb^*_{-p}\right)\right]+\text{h.c.}
\end{split} \end{equation}
The goal of this subsection consists in proving the following lemma, where we bound the error 
term ${\cE}_{N}^{(2)}$. 
\begin{lemma} \label{prop:G2} 
Let $\cE_N^{(2)}$ be as defined in (\ref{eq:G2-sec}). Then, under the same assumptions as in Proposition \ref{prop:gene}, we find $C > 0$ such that  
\begin{equation}\label{eq:propG2} \pm \cE_N^{(2)} \leq C N^{-1/2} (\cK+\cN_+^2 + 1) (\cN_+ + 1) \end{equation}
\end{lemma}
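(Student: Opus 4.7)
The plan is to conjugate $\cL_N^{(2)}$ term by term with $e^{B(\eta)}$, match the leading contributions against $\cG_N^{(2,K)}$ and $\cG_N^{(2,V)}$, and collect the remainders into $\cE_N^{(2)}$. It is convenient to split $\cL_N^{(2)} = \cK + \cL_N^{(2,V)}$, where $\cL_N^{(2,V)}$ gathers all $b,b^*$-bilinears with coefficients $\widehat V(p/N)$.

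For $\cL_N^{(2,V)}$, each creation/annihilation operator can be conjugated directly via \eqref{eq:ebe}: write $e^{-B(\eta)} b_p e^{B(\eta)} = \gamma_p b_p + \sigma_p b_{-p}^* + d_p$ and similarly for $b_p^*$. Expanding each of the products $e^{-B(\eta)} b_p^* b_p e^{B(\eta)}$, $e^{-B(\eta)} b_p^* b_{-p}^* e^{B(\eta)}$, and $e^{-B(\eta)} b_p b_{-p} e^{B(\eta)}$ produces nine monomials. The four pure $\gamma,\sigma$-monomials reassemble (after normal-ordering with $[b_p,b_p^*] = (1-\cN_+/N) - a_p^* a_p/N$) into the leading quadratic and constant terms listed in \eqref{eq:G_N2V}, while the four monomials with a single $d_p^{\#}$-factor match the explicit $d_p$-terms appearing in the last line of \eqref{eq:G_N2V}. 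The remaining $d_p^\#\,d_q^\#$ and commutator contributions, together with the $-a_p^* a_p/N$ correction from $\widehat V(p/N)[b_p^* b_p - N^{-1} a_p^* a_p]$, go into $\cE_N^{(2)}$; by Lemma \ref{lm:dp} and the uniform boundedness of $|\widehat V(p/N)|$, they obey $\pm(\cdot) \leq CN^{-1}(\cN_+ + 1)^2$.

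For $\cK$, direct use of \eqref{eq:ebe} is unavailable, so I would invoke the differential identity
\[
e^{-B(\eta)}\cK e^{B(\eta)} = \cK + \int_0^1 ds\, e^{-sB(\eta)}[\cK,B(\eta)]e^{sB(\eta)},
\]
where a direct computation based on $[\cK,b_p^\#] = \pm p^2 b_p^\#$ gives $[\cK,B(\eta)] = \sum_{p\in\Lambda^*_+} p^2\eta_p (b_p^* b_{-p}^* + b_p b_{-p})$. Applying \eqref{eq:ebe} at parameter $s$ (with $\eta$ replaced by $s\eta$, so $\gamma^{(s)}_p = \cosh(s\eta_p)$, $\sigma^{(s)}_p = \sinh(s\eta_p)$ and the corresponding $d_p^{(s)}$) and expanding, the pure $\gamma,\sigma$-pieces integrate via
\[
\int_0^1 p^2\eta_p\cosh(2s\eta_p)\,ds = p^2\sigma_p\gamma_p, \qquad \int_0^1 p^2\eta_p\sinh(2s\eta_p)\,ds = p^2\sigma_p^2,
\]
producing the off-diagonal $p^2\sigma_p\gamma_p (b_p^* b_{-p}^* + b_p b_{-p})$, the diagonal $2 p^2\sigma_p^2 b_p^* b_p$, the constants $p^2\sigma_p^2$, and (after using $[b_p,b_p^*]$ and normal-ordering against $\cN_+$) the quartic $N^{-1}$-corrections present in \eqref{eq:G_N2K}. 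The contributions linear in $d_p^{(s)}$ paired with a factor of $b$ integrate to exactly $\sum_p p^2\eta_p b_{-p} d_p + \text{h.c.}$, which is kept in $\cG_N^{(2,K)}$ (as it is not smaller than $N^{-1/2}\times$the r.h.s.\ of \eqref{eq:propG2}).

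The main obstacle is estimating the remainder from the kinetic conjugation: quadratic-in-$d$ terms and commutator corrections that still carry a factor $p^2$. Here the strategy is to use Lemma \ref{lm:dp} (which applies equally to $d_p^{(s)}$, since $\|s\eta\|\leq\|\eta\|$ is small for $\ell$ small) and the pointwise bound $|\eta_p| \leq Cp^{-2}$ from \eqref{eq:etap} to trade one power of $p^2$ for $\cK^{1/2}$ on the vector and the other for decay in the sum. Concretely, typical remainders take the form
\[
\Bigl|\sum_p p^2\eta_p\langle\xi,\,d_p^{(s)} d_{-p}^{(s)}\xi\rangle\Bigr| \leq \frac{C}{N^2}\sum_p p^2\|(\cN_++1)^{3/2}\xi\|\bigl(|\eta_p|\|(\cN_++1)^{3/2}\xi\|+\|b_p(\cN_++1)\xi\|\bigr),
\]
which by Cauchy–Schwarz and $\sum_p p^2\eta_p^2\leq CN$ (from \eqref{eq:etapN}) yield $CN^{-1/2}(\cK+\cN_+^2+1)(\cN_++1)$. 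Intermediate reductions of the form $e^{-sB(\eta)}\cK e^{sB(\eta)}\leq C(\cK+N\cN_+)$, provided by Lemma \ref{lem:prelmG21}, are used wherever $\cK$ is sandwiched between $e^{\pm sB(\eta)}$. Assembling all these estimates yields \eqref{eq:propG2}.
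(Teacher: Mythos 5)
The $\cL_N^{(2,V)}$ part of your argument is fine and is essentially the same as the paper's, but your treatment of the kinetic part via the Duhamel identity has a genuine gap. The paper does not Duhamel-expand $e^{-B}\cK e^B$; it first decomposes
\[
\cK \;=\; \frac{N-1}{N}\sum_{p} p^2 b_p^*b_p \;+\; \frac{1}{N}\sum_{p,q} p^2 b_p^* b_q^* b_q b_p \;+\; \text{(smaller)},
\]
and then conjugates the quadratic and quartic pieces separately with $e^{B(\eta)}$. The conjugation of the quartic piece is precisely what produces the two lines
\[
\frac{1}{N}\Big(\sum_p p^2\sigma_p^2\Big)\sum_q\big[(\gamma_q^2+\sigma_q^2)b_q^*b_q+\sigma_q^2\big]
\quad\text{and}\quad
\frac{1}{N}\Big(\sum_p p^2\sigma_p^2\Big)\sum_q \gamma_q\sigma_q\big(b_{-q}b_q+\text{h.c.}\big)
\]
of \eqref{eq:G_N2K}, together with the extra constant $\frac{1}{N}\sum_p p^2\sigma_p^2$ (which turns the factor $1-\cN_+/N$ you obtain into the $1+\tfrac1N-\cN_+/N$ appearing in \eqref{eq:G_N2K}). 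Because $\sum_p p^2\sigma_p^2\sim N$ (see \eqref{eq:etapN}), these terms are of size $\cO(\cN_++1)$ and $\cO(1)$, not $\cO(N^{-1/2}(\cdots))$: they \emph{must} be matched by your expansion and cannot go into $\cE_N^{(2)}$.

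Your route does not recover them. Expanding $e^{-sB}[\cK,B(\eta)]e^{sB}$ with the $(\gamma^{(s)},\sigma^{(s)},d^{(s)})$-decomposition and normal-ordering only produces $\sum_p p^2\sigma_p^2\,(1-\cN_+/N)$ and a term $-\tfrac1N\sum_p p^2\sigma_p^2\,a_p^*a_p$; since $\sup_p p^2\sigma_p^2\le C$, the latter is $\cO(\cN_+/N)$, not $\cO(\cN_+)$, so it is far too small to be the quartic lines of \eqref{eq:G_N2K}. The missing contributions are buried inside the $d^{(s)}$-operators --- $d_p$ itself carries a piece of the form $-\tfrac1N\int_0^1 ds\sum_q\eta_q\,e^{-sB}b_q^*a_{-q}^*a_p\,e^{sB}$, cf.\ \eqref{eq:dp-dec} --- and your claim that the $d$-linear pieces integrate exactly to $\sum_p p^2\eta_p\,b_{-p}d_p+\text{h.c.}$ is not justified: the Duhamel integrand carries $d^{(s)}_p$ (built from $s\eta$), and there is no simple integral identity turning this into $d_p$ (built from $\eta$). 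Without accounting for these $d^{(s)}$-contributions the purported error $\cE_N^{(2)}$ contains unbounded (in $N$) terms and \eqref{eq:propG2} does not follow. A smaller issue: your sample bound on $\sum_p p^2\eta_p\langle\xi, d_p^{(s)}d_{-p}^{(s)}\xi\rangle$ drops one factor of $|\eta_p|$ on the right-hand side inside the sum; that factor is actually needed to make the first of the two resulting sums finite via $\sum_p p^2\eta_p^2\le CN$.
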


\begin{proof}
From (\ref{eq:cLNj}), we have $\cL^{(2)}_N = \cK + \cL_N^{(2,V)}$, with 
\begin{equation}\label{eq:L2VN} \cL^{(2,V)}_N =  \sum_{p \in \L^*_+} \widehat{V} (p/N) \left[ b_p^* b_p - \frac{1}{N} a_p^* a_p \right] + \frac{1}{2} \sum_{p \in \L^*_+} \widehat{V} (p/N) \left[ b_p^* b_{-p}^* + b_p b_{-p} \right] \end{equation}
We consider first the contribution of the kinetic energy operator $\cK$. We write 
\[ \begin{split} \cK = \; &\frac{N-1}{N} \sum_{p \in \L^*_+} p^2 b^*_p b_p +\sum_{p \in \L^*_+} p^2 b^*_p b_p \frac{\cN_+}{N} + \cK \, \frac{(\cN_+-1)^2}{N^2} \end{split} \]
Writing $\cN_+ = \sum_{p \in \L^*_+}  \big( b_p^* b_p + N^{-1}\, a^*_p \cN_+ a_p\big)$ in the second term, we find 
\begin{equation}\label{eq:dec-BKB} e^{-B(\eta)}  \cK e^{B(\eta)} = \sum_{p \in \L^*_+}  p^2 e^{-B(\eta)} b_p^* b_p e^{ B(\eta)} + \frac{1}{N} \sum_{p,q \in \L^*_+} p^2 e^{-B(\eta)} b_p^* b_q^* b_q b_p e^{B(\eta)} +  \wt{\cE}_{1} \end{equation}
where, with Lemma \ref{lem:prelmG21},  
\begin{equation}\label{eq:wtE10}\pm \wt{\cE}_{1} \leq CN^{-2}e^{-B(\eta)} \cK(\cN_++1)^2 e^{B(\eta)} \leq C N^{-1}  \cK (\cN_+ + 1)  +  C N^{-1} (\cN_+ + 1)^3  \end{equation}

Next, we study the first term on the r.h.s. of (\ref{eq:dec-BKB}). We claim that
\begin{equation}\label{eq:splitKfinal1}
\begin{split} 
\sum_{p\in \Lambda^*_+}p^2 e^{-B(\eta)}b^*_pb_p e^{B(\eta)}=&\,\cK + \sum_{p \in \Lambda^*_+} p^2 \sigma_p^2\left(1-\frac{\cN_+}{N}\right) \\
&+ \sum_{p \in \Lambda^*_+} \Big[ p^2 \sigma_p \gamma_p  \big( b_p b_{-p} + b_p^* b_{-p}^* \big) + 2 p^2 \sigma_p^2 b_p^* b_p \Big]\\
&+\sum_{p\in\L^*_+}\left[p^2\eta_p b_{-p}d_p+\text{h.c.}\right] + \wt{\cE}_2 
 \end{split} 
\end{equation}
with the error operator $\wt{\cE}_2$ such that 
\begin{equation}\label{eq:cEN2K} \pm \wt{\cE}_{2} \leq C N^{-1/2} (\cK+1) ( \cN_+ + 1)  \end{equation}
To prove (\ref{eq:cEN2K}), we use (\ref{eq:ebe}) to decompose  
\begin{equation*}
\begin{split} 
\sum_{p\in \Lambda^*_+}&p^2 e^{-B(\eta)}b^*_pb_p e^{B(\eta)}=\text{E}_{1}+\text{E}_{2}+\text{E}_{3},
\end{split} 
\end{equation*}
with
\begin{equation*}
\begin{split} 
\text{E}_{1}&:=\sum_{p\in \Lambda^*_+}p^2(\gamma_pb^*_p +\sigma_pb_{-p} )(\gamma_pb_p +\sigma_pb^*_{-p})\\
\text{E}_{2}&:=\sum_{p\in \Lambda^*_+}p^2\Big[(\gamma_pb^*_p +\sigma_pb_{-p} )d_p+d^*_p(\gamma_pb_p +\sigma_pb^*_{-p})\Big]\\
\text{E}_{3}&:=\sum_{p\in \Lambda^*_+}p^2d^*_pd_p
\end{split} 
\end{equation*}
The term $\text{E}_{1}$ can be rewritten as
\begin{equation*}
\begin{split} 
\text{E}_{1} =\,&\cK + \sum_{p \in \Lambda^*_+} p^2 \sigma_p^2\left(1-\frac{\cN_+}{N}\right) + \sum_{p \in \Lambda^*_+} \Big[ p^2 \sigma_p \gamma_p  \big( b_p b_{-p} + b_p^* b_{-p}^* \big) + 2 p^2 \sigma_p^2 b_p^* b_p \Big] +\wt{\cE}_{3},
\end{split} 
\end{equation*}
where
\begin{equation*}
\wt{\cE}_{3} =\,\frac{1}{N}\sum_{p \in \Lambda^*_+} p^2\Big[a^*_p\cN_+a_p+\s_p^2a^*_pa_p\Big]
\end{equation*}
is such that, for any $\xi\in\cF_+^{\leq N}$,
\begin{equation}\label{eq:wtE11}
\begin{split} 
|\langle\xi,\wt{\cE}_{3} \xi\rangle|\leq&\frac{1}{N}\sum_{p \in \Lambda^*_+}\Big[ p^2\|a_p\cN_+^{1/2}\xi\|^2+p^2\s_p^2\|a_p\xi\|^2\Big]\leq CN^{-1}\langle\xi,\cK(\cN_++1)\xi\rangle
\end{split} 
\end{equation}
The term $\text{E}_{2}$ can be split as 
\begin{equation*}
\text{E}_{2}=\sum_{p\in \Lambda^*_+}\left[p^2\eta_p b_{-p}d_p+\text{h.c.}\right] +\wt{\cE}_{4}
\end{equation*}
where
\[ \begin{split} | \langle \xi , \wt{\cE}_4 \xi \rangle | \leq \; &\sum_{p \in \L^*_+}  p^2 |\sigma_p - \eta_p| |\langle \xi , b_{-p} d_p \xi \rangle| +  \sum_{p \in \L^*_+}  p^2 |\gamma_p| | \langle \xi, b_p^* d_p \xi \rangle | \\ \leq \; &\frac{1}{N} \sum_{p \in \L^*_+} p^2 |\eta_p|^3 \| (\cN_+ + 1) \xi \|^2 \\ &+ \frac{1}{N} \sum_{p \in \L^*_+} p^2 \| b_p (\cN_+ + 1)^{1/2} \xi \| \left[ |\eta_p| \| (\cN_+ + 1) \xi \| + \| b_p (\cN_+ + 1)^{1/2} \xi \| \right] \\ \leq \; & C N^{-1/2}  \| \cK^{1/2} (\cN_+ + 1)^{1/2}  \xi \|^2
\end{split} \]  
As for the term $\text{E}_3$, we estimate
\[\begin{split}  |\langle \xi , \text{E}_3 \xi \rangle | &\leq \sum_{p \in \L^*_+} p^2 \| d_p \xi \|^2 \\ &\leq  \frac{C}{N^2}  \sum_{p \in \L^*_+} p^2  \left[ |\eta_p|^2 \| (\cN_+ + 1)^{3/2} \xi \|^2  + \|  b_p (\cN_+ + 1) \xi \|^2 \right] 
\\&\leq C N^{-1} \|  (\cN_+ + 1)^{3/2} \xi \|^2 + C N^{-1} \| (\cK+1)^{1/2} (\cN_+ + 1)^{1/2} \xi \|^2  \end{split} \]
for any $\xi \in \cF^{\leq N}_+$. This concludes the proof of (\ref{eq:splitKfinal1}), with the estimate (\ref{eq:cEN2K}).

Next, we consider the second term on the r.h.s. of (\ref{eq:dec-BKB}). We claim that
\begin{equation}\label{eq:splitKfinal}
\begin{split} 
\frac{1}{N}\sum_{p,q \in \Lambda^*_+} &p^2 e^{-B(\eta)}b^*_pb^*_qb_qb_p e^{B(\eta)}\\ = \; &\frac{1}{N}\sum_{p \in \Lambda^*_+}p^2\s_p^2 +\frac{1}{N}\sum_{p ,q\in \Lambda^*_+}p^2\s_p^2\s_q^2 +\frac{1}{N}\sum_{p,q\in \Lambda^*_+}p^2\s_p^2(\g_q^2+\s_q^2)b^*_qb_q\\
&+\frac{1}{N}\sum_{p,q\in \Lambda^*_+}p^2\s_p^2\g_q\s_q\left(b^*_qb^*_{-q}+\text{h.c.}\right) +\wt{\cE}_5
 \end{split} 
\end{equation}
with an error term $\wt\cE_5$ such that 
\begin{equation}\label{eq:cEK3N}
\pm \wt{\cE}_5 \leq C N^{-1/2}(\cK+\cN_+^2 + 1)(\cN_++1) \, .
\end{equation}
To prove (\ref{eq:cEK3N}), we consider first the operator 
\begin{equation*}\begin{split} 
\text{D} = \; & \sum_{q \in \Lambda^*_+} e^{-B(\eta)}b^*_qb_q e^{B(\eta)}  \\ = \; &\sum_{q \in \L^*_+} (\g_q b_q^* + \s_q b_{-q} +d_q^*) ( \g_q b_q + \s_q b_{-q}^* + d_q) \\  = \; &\sum_{q \in \L^*_+} \left[ (\g_q^2 + \s^2_q) b_q^* b_q + \s_q \g_q (b_q^* b_{-q}^* + b_q b_{-q}) + \sigma_q^2 \right] + \wt{\cE}_6
\end{split} \end{equation*}
where the error $\wt{\cE}_6$ is such that 
\begin{equation}\label{eq:cEK3} \pm \wt{\cE}_6 \leq C N^{-1} (\cN_+ + 1)^2 \end{equation}
as can be easily checked using the commutation relations (\ref{eq:comm-bp}) and the bound (\ref{eq:d-bds}). We go back to (\ref{eq:splitKfinal}), and we 
observe that
\begin{equation}\label{eq:II-deco} \begin{split}   \frac{1}{N}\sum_{p,q \in \Lambda^*_+}p^2 &e^{-B(\eta)}b^*_pb^*_qb_qb_p e^{B(\eta)} \\ = \; &\frac{1}{N} \sum_{p \in \L^*_+} p^2 e^{-B(\eta)} b_p^* e^{B(\eta)} \text{D} e^{-B(\eta)} b_p e^{B(\eta)} \\ = \; & \frac{1}{N} \sum_{p \in \L^*_+} p^2 (\g_p b_p^* + \s_p b_{-p} + d_p^*) \, \text{D} \, (\g_p b_p + \s_p b^*_{-p} + d_p) \\ = \; & \frac{1}{N} \sum_{p \in \L^*_+} p^2 \s_p^2 b_p \text{D} b_p^* + \wt{\cE}_{7} 
\end{split} 
\end{equation}
where 
\[ \begin{split} 
\wt{\cE}_7 = \; &\frac{1}{N} \sum_{p \in \L^*_+} p^2 \s_p (\g_p b_p^* + d_p^*) \text{D} b_{-p}^* + \frac{1}{N}  \sum_{p \in \L^*_+} p^2 \s_p b_{-p} \text{D} (\g_p b_{p} + d_p) 
\\ &+ \frac{1}{N} \sum_{p \in \L_+^*} p^2 (\g_p b_p^* + d_p^*) \text{D} (\g_p b_p + d_p) 
\end{split} \]
can be bounded using (\ref{eq:d-bds}) and the fact that, by Lemma \ref{lm:Ngrow}, 
$D \leq C (\cN_+ + 1)$, by
\[ \begin{split} |\langle \xi , \wt{\cE}_7 \, \xi \rangle | \leq \; & \frac{2}{N} \sum_{p \in \L^*_+} p^2 |\sigma_p | \left[ \| \text{D}^{1/2} b_p \xi \| + \| \text{D}^{1/2} d_p \xi \| \right] \| \text{D}^{1/2} b_p^* \xi \| \\ &+ \frac{1}{N} \sum_{p \in \L^*_+} p^2  \left[ \| \text{D}^{1/2} b_p \xi \| + \| \text{D}^{1/2} d_p \xi \| \right] \left[ \| \text{D}^{1/2} b_p \xi \| + \| \text{D}^{1/2} d_p \xi \| \right] \\ \leq \; & \frac{C}{N} \sum_{p \in \L^*_+} p^2 |\eta_p| \left[ \| b_p (\cN_+ + 1)^{1/2} \xi \| + N^{-1/2} |\eta_p| \| (\cN_+ + 1)^{3/2}  \xi \| \right] \| (\cN_+ + 1) \xi \| \\ &+ \frac{C}{N} \sum_{p \in \L_+*} p^2 \left[ \| b_p (\cN_+ + 1)^{1/2} \xi \|^2 + N^{-1} |\eta_p|^2 \| (\cN_+ + 1)^{3/2} \xi \|^2 \right]  \\ \leq \; &C N^{-1/2} \| (\cK + \cN_+^2 + 1)^{1/2} (\cN_+ + 1)^{1/2} \xi \|^2 \end{split} \]
As for the other term on the r.h.s. of (\ref{eq:II-deco}), we have, by (\ref{eq:cEK3}), 
\begin{equation}\label{eq:last-step2} \begin{split} 
\frac{1}{N} \sum_{p\in \L^*_+} p^2 \s_p^2 b_p \text{D} b_p^* = \; &\frac{1}{N} \sum_{p,q \in \L^*_+} p^2 \s_p^2 (\g_q^2 + \s_q^2) b_p b_q^* b_q b_p^*  + \frac{1}{N} \sum_{p,q \in \L^*_+} p^2 \s_p^2 \s_q^2 b_p b_p^*  \\ &+ \frac{1}{N} \sum_{p,q \in \L^*_+} p^2 \s_p^2 \g_q \s_q b_p (b_q^* b_{-q}^* + \hc) b_p^*+ \wt{\cE}_8 \end{split} \end{equation}
where, using (\ref{eq:cEK3}), it is easy to check that $\pm \wt{\cE}_8 \leq C N^{-1} (\cN_+ + 1)^2$. 
Rearranging the other terms on the r.h.s. of (\ref{eq:last-step2}) in normal order and using the commutator relations (\ref{eq:comm-bp}), we obtain (\ref{eq:splitKfinal}) with an error term satisfying (\ref{eq:cEK3N}). 

Finally, we focus on the contribution of (\ref{eq:L2VN}). We claim that 
\begin{equation}\label{eq:cEV-def} \begin{split}
& \hspace{-.5cm} e^{-B(\eta)} \cL^{(2,V)}_{N}  e^{B(\eta)} \\ = \; &\sum_{p \in \Lambda^*_+} \left[  \widehat{V} (p/N) \sigma_p^2 +  \widehat{V} (p/N) \sigma_p \gamma_p\left(1-\frac{\cN_+}{N}\right) \right]  \\
&+ \sum_{p \in \Lambda^*_+}  \widehat{V} (p/N) (\gamma_p + \sigma_p)^2 b_p^* b_p + \frac{1}{2} \sum_{p \in \Lambda^*_+}  \widehat{V} (p/N) (\gamma_p+\sigma_p)^2 (b_pb_{-p} + b_p^* b_{-p}^*)\\
&+\sum_{p \in \Lambda^*_+}\left[ \frac{1}{2}\widehat{V} (p/N)\left(\g_pb_{-p}+\s_pb^*_p\right)d_p+\frac{1}{2}\widehat{V} (p/N)d_p\left(\g_pb_{-p}+\s_pb^*_p\right)\right]+\text{h.c.}\\
&+ \wt{\cE}_9
\end{split} \end{equation}
where 
\begin{equation}\label{eq:errorV}
  \pm \wt\cE_9 \leq CN^{-1/2} (\cK + \cN_+^2 + 1) (\cN_++1)
\end{equation}
To prove (\ref{eq:cEV-def}), (\ref{eq:errorV}), we start from (\ref{eq:L2VN}) and decompose
\begin{equation}\label{eq:G2-deco} \begin{split} 
e^{-B(\eta)} \cL^{(2,V)}_{N}  e^{B(\eta)}  =\; & \sum_{p \in \Lambda^*_+}  \widehat{V} (p/N) e^{-B(\eta)} b_p^* b_p e^{B(\eta)} - \frac{1}{N} \sum_{p \in \Lambda^*_+} \widehat{V} (p/N) e^{B(\eta)} a_p^* a_p e^{-B(\eta)} \\ &+ \frac{1}{2} \sum_{p \in \Lambda^*_+} \widehat{V} (p/N) e^{-B(\eta)} \big[ b_p b_{-p} + b_p^* b_{-p}^* \big] e^{B(\eta)} \\ =: \; &\text{F}_1 + \text{F}_2 + 
\text{F}_3 \end{split} \end{equation} 
The operators $\text{F}_1$ and $\text{F}_2$ can be handled exactly as in the proof \cite[Prop. 7.6]{BBCS2} (notice that the bounds are independent of $\beta \in (0;1)$ and they can be readily extended to the case Gross-Pitaevskii case $\beta = 1$). We obtain that 
\begin{equation*}
 \text{F}_1 =  \sum_{p \in \L^*_+} \widehat{V} (p/N) [\gamma_p b_p^* + \sigma_p b_{-p} ] [ \gamma_p b_p + \sigma_p b_{-p}^*] + \wt{\cE}_{10} \end{equation*}
where  
\begin{equation*}
\pm \wt{\cE}_{10} \leq C N^{-1} (\cN_+ +1)^2 \end{equation*}
and that 
\begin{equation*}
\pm \text{F}_2 \leq C N^{-1} (\cN_+ + 1) \end{equation*}
 
Let us consider the last term on the r.h.s. of (\ref{eq:G2-deco}). With (\ref{eq:ebe}), we obtain 
\begin{equation*}
\begin{split}
\text{F}_3 = \; &\frac{1}{2} \sum_{p \in \Lambda^*_+} \widehat{V} (p/N) \left[ \gamma_p b_p + \sigma_p b_{-p}^* \right] \left[ \gamma_p b_{-p} + \sigma_p b_p^* \right]   \\ 
&+ \frac{1}{2}  \sum_{p \in \Lambda^*_+} \widehat{V} (p/N) \, \left[ (\g_pb_p+\s_pb^*_{-p}) \, d_{-p} + d_p\, (\g_pb_{-p}+\s_pb^*_{p}) \right]  \\ &+ \wt\cE_{11}  + \hc \end{split} \end{equation*}
where the error term $\wt{\cE}_{11}  = (1/2)  \sum_{p \in \Lambda^*_+} \widehat{V} (p/N)  d_p d_{-p}$  can be bounded, using (\ref{eq:d-bds}), by
\[ \begin{split} |\langle \xi , \wt{\cE}_{11} \xi \rangle | &\leq C \sum_{p \in \L^*_+} |\widehat{V} (p/N)|  \| d_p^* \xi \| \| d_{-p} \xi \| \\ &\leq \frac{C}{N^2} \sum_{p \in \L^*_+}  |\widehat{V} (p/N)|  \| (\cN_+ + 1)^{3/2} \xi \| \left[ |\eta_p| 
\| (\cN_+ + 1)^{3/2} \xi \| + \| b_p (\cN_+ + 1) \xi \| \right] \\ &\leq 
C N^{-1/2}  \|  (\cN_+ + 1)^{3/2} \xi \|^2 \end{split} \] 
since $\| \widehat{V} (./N) \|_2 \leq C N^{3/2}$. This concludes the proof of (\ref{eq:cEV-def}) and (\ref{eq:errorV}). Comparing (\ref{eq:G2-sec}), (\ref{eq:G_N2K}) and (\ref{eq:G_N2V}) with (\ref{eq:dec-BKB}). (\ref{eq:splitKfinal1}), (\ref{eq:splitKfinal}) and (\ref{eq:cEV-def}), we conclude that the bounds (\ref{eq:wtE10}), (\ref{eq:cEN2K}), (\ref{eq:wtE11}), (\ref{eq:cEK3N}) and (\ref{eq:errorV}) imply the desired estimate (\ref{eq:propG2}). 
\end{proof}

\subsection{Analysis of $ \cG_N^{(3)}=e^{-B(\eta)}\cL^{(3)}_N e^{B(\eta)}$}

From (\ref{eq:cLNj}), we have
\[ \cG_N^{(3)} = \frac{1}{\sqrt{N}} \sum_{p,q \in \L^*_+ : p + q \not = 0} \widehat{V} (p/N) e^{-B(\eta)} b^*_{p+q} a^*_{-p} a_q e^{B(\eta)} + \hc  \]
We define the error operator $\cE^{(3)}_N$ through the identity
\begin{equation}\label{eq:GN3-deco} 
\cG_N^{(3)} = \frac{1}{\sqrt{N}} \sum_{p,q \in \L^*_+ : p + q \not = 0} \widehat{V} (p/N) \left[ b_{p+1}^* b_{-p}^* (\gamma_q b_q + \s_q b^*_{-q})  + \hc \right] + \cE^{(3)}_N \end{equation}
The goal of this subsection is to prove the next lemma, where we estimate $\cE^{(3)}_N$. 
\begin{lemma}\label{lm:GN-3}
Let $\cE_N^{(3)}$ be as defined in (\ref{eq:GN3-deco}). Then, under the same assumptions as in Proposition \ref{prop:gene}, we find $C > 0$ such that  
\begin{equation}\label{eq:lm-GN3} \pm \cE_N^{(3)} \leq CN^{-1/2} (\cV_N + \cN_+ + 1) ( \cN_+ + 1) \end{equation}
\end{lemma}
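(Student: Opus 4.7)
The strategy is to switch to position space, factor the conjugation as a product over the three individual field operators, and systematically apply (\ref{eq:ebe}) together with the conversion of $\check{a}$-fields into $\check{b}$-fields. In position space,
\[
\cG_N^{(3)} = \sqrt{N} \int_{\Lambda \times \Lambda} dx\, dy\, V(N(x-y))\, e^{-B(\eta)} \check{b}_x^* \check{a}_y^* \check{a}_x\, e^{B(\eta)} + \hc,
\]
and the conjugation factors as the product of three separate conjugations of $\check{b}_x^*, \check{a}_y^*, \check{a}_x$. Since there is no explicit Bogoliubov rule for the $\check{a}$-fields, I convert them to $\check{b}$-fields through $\check{a}_y^* = \check{b}_y^* + \check{a}_y^*(1 - \sqrt{(N-\cN_+)/N})$, whose remainder is of order $\cN_+/N$, and analogously for $\check{a}_x$.

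Next, I apply the position-space form of (\ref{eq:ebe}) to each $\check{b}$-factor:
\[
e^{-B(\eta)} \check{b}_x^* e^{B(\eta)} = \check{b}_x^* + b^*(\check{r}_x) + b(\check{\sigma}_x) + \check{d}_x^*, \qquad
e^{-B(\eta)} \check{b}_x e^{B(\eta)} = b(\check{\gamma}_x) + b^*(\check{\sigma}_x) + \check{d}_x,
\]
where $\check{r}_x$ is the non-delta part of $\check{\gamma}_x$. The main term in (\ref{eq:GN3-deco}) corresponds, in position space, to $\check{b}_x^* \check{b}_y^* (b(\check{\gamma}_x) + b^*(\check{\sigma}_x))$ and arises from keeping only the $\check{b}_x^*$-piece of the first factor, the $\check{b}_y^*$-piece of the second factor (passing the $\cN_+/N$-discrepancy into the error), and the full Bogoliubov transformation of the third factor after dropping the $\check{d}_x$ tail. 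All remaining contributions are collected in $\cE_N^{(3)}$.

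Each such error has the schematic form
\[
\sqrt{N} \int dx\, dy\, V(N(x-y))\, \langle \xi, X_x Y_y Z_x\, \xi \rangle + \hc,
\]
with at least one of $X, Y, Z$ carrying a ``small'' factor, i.e. $b^*(\check{r}_{\cdot})$, $b(\check{\sigma}_{\cdot})$, $\check{d}^{\sharp}_{\cdot}$, or the $\cN_+/N$-discrepancy from the $\check{a}$-to-$\check{b}$ conversion. I would estimate each by a Cauchy--Schwarz split of the form
\[
\Big(\int dx\, dy\, N^2 V(N(x-y))\, \|\check{a}_x \check{a}_y \xi\|^2\Big)^{1/2} \Big(\int dx\, dy\, V(N(x-y))\, \|\cdots\|^2\Big)^{1/2},
\]
where the first factor produces $\langle\xi,\cV_N \xi\rangle^{1/2}$, and the second is controlled using Lemma \ref{lm:dp}, the uniform bounds $\|\check{\sigma}\|_2 \leq C$ and $\|\check{\eta}\|_\infty \leq CN$ from (\ref{eq:bdsg-x}) and (\ref{eq:checketainf}), and Lemma \ref{lm:Ngrow} to absorb $e^{B(\eta)}$-conjugations of $(\cN_++1)^k$. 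The gain $N^{-1/2}$ in the final bound arises by combining the explicit $1/\sqrt{N}$-prefactor of $\cL_N^{(3)}$ with either the $1/N$-suppression of $\check{d}$, the $\cN_+/N$-factor from the $\check{a}$--$\check{b}$ discrepancy, or the momentum decay of $\sigma_p, r_p$ from (\ref{eq:bdsg-mom}).

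The main obstacle is the combinatorial bookkeeping: expanding the product of three Bogoliubov-transformed factors produces a sizable number of distinct error terms, and each requires a Cauchy--Schwarz split tailored so as to retain the $\cV_N$-structure demanded by the right-hand side of (\ref{eq:lm-GN3}). The most delicate terms are those involving $\check{d}$-operators, since Lemma \ref{lm:dp} trades the $1/N$-gain against an additional $(\cN_++1)^{1/2}$-factor or $\check{a}$-factor; the splits must therefore be chosen carefully to keep the $\cV_N$- and $(\cN_++1)$-norms on the appropriate sides and to avoid producing an unwanted $\cK$-type operator on the low-energy side.
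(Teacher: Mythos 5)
Your plan reproduces the paper's own proof: the same $a^*_{-p}a_q = b^*_{-p}b_q + N^{-1}a^*_{-p}\cN_+a_q$ conversion (your individual $\check{a}\to\check{b}$ replacement with $\cN_+/N$-remainder is equivalent), the same expansion of each conjugated $b$-field via (\ref{eq:ebe}) sorted by the number of $d$-remainders, and the same position-space Cauchy--Schwarz splits against $\cV_N$ combined with Lemma \ref{lm:dp} and Lemma \ref{lm:Ngrow}; whether one switches to position space before or after the momentum-space expansion is cosmetic. The one tool you omit is Lemma \ref{lem:prelmG21}: the $\cN_+/N$-discrepancy term sits inside the Bogoliubov conjugation, so after Cauchy--Schwarz you are left with $\langle\xi,e^{-B(\eta)}\cV_N(\cN_++1)e^{B(\eta)}\xi\rangle$, and Lemma \ref{lm:Ngrow} only controls the $(\cN_++1)^k$-part --- you need the rough $\cV_N$-conjugation estimate of Lemma \ref{lem:prelmG21} to land on the stated right-hand side.
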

\begin{proof} 
With 
\[ a_{-p}^* a_q =  b^*_{-p} b_q + N^{-1} a^*_{-p} \cN_+ a_q \]
we obtain 
\begin{equation}\label{eq:cGN3-wt} \cG_N^{(3)} =  \frac{1}{\sqrt{N}} \sum_{p,q \in \L^*_+ : p + q \not = 0} \widehat{V} (p/N) e^{-B(\eta)} b^*_{p+q} b^*_{-p} b_q e^{B(\eta)} + \wt{\cE}_1 + \hc \end{equation}
where 
\[ \wt{\cE}_1 = \frac{1}{N^{3/2}} \sum_{p,q \in \L^*_+ : p + q \not = 0} 
\widehat{V} (p/N) e^{-B(\eta)} b^*_{p+q} a^*_{-p} a_q e^{B(\eta)} \]
can be bounded, switching to position space, by 
  \[\begin{split}\big|\langle \xi, &\wt{\cE}_1 \xi\rangle  \big| \\ &\leq \; \int_{\Lambda^2}dxdy\;N^{3/2}V(N(x-y))\big\|  \check{a}_x \check{a}_y (\cN_+ + 1)^{1/2} e^{B(\eta)} \xi \big\| \big\| \check{a}_x (\cN_+ + 1)^{1/2} e^{B(\eta)}\xi\big\| \\  
        &\leq \; C N^{-3/2} \langle\xi, e^{-B(\eta)}\cV_N(\cN_++1)e^{B(\eta)}\xi\rangle + CN^{-1/2}\langle\xi, e^{-B(\eta)} (\cN_++1)^2 e^{B(\eta)} \xi\rangle
        \end{split}\]
With Lemma \ref{lm:Ngrow} and Lemma \ref{lem:prelmG21} we conclude that 
        \begin{equation}\label{eq:wtE1} \big|\langle \xi, \wt{\cE}_1 \xi\rangle  \big|\leq CN^{-1/2} \langle \xi,(\cV_N+\cN_+ +1)(\cN_++1)\xi\rangle \end{equation}
        
To control the first term on the r.h.s. of (\ref{eq:cGN3-wt}), we use (\ref{eq:ebe}) to decompose 
        \begin{equation}\label{eq:deco2}   \frac{1}{\sqrt{N}} \sum_{p,q \in \L^*_+ : p + q \not = 0} \widehat{V} (p/N) e^{-B(\eta)} b^*_{p+q} b^*_{-p} b_q e^{B(\eta)}  = \text{M}_0 + \text{M}_1+\text{M}_2+\text{M}_3\end{equation}
where     
         \begin{equation}\label{eq:def-G3M0}\begin{split}
        \text{M}_0:=\;& \frac{1}{\sqrt{N}} \sum_{p,q }^* \widehat{V} (p/N) \big[ \gamma_{p+q}\gamma_pb^*_{p+q}  b^*_{-p} +  \gamma_{p+q}\sigma_pb^*_{p+q}b_{p}+\sigma_{p+q}\sigma_pb_{-p-q}b_{p}\\
        &\hspace{1.5cm}+\sigma_{p+q}\gamma_p b^*_{-p}b_{-p-q}-N^{-1}\sigma_{p+q}\gamma_pa^*_{-p}a_{-p-q}\big]\, \big[\gamma_qb_{q} +\sigma_qb^*_{-q}\big]   \\
        \end{split}\end{equation}
and
\begin{equation*}\begin{split}
        \text{M}_1:=\;& \frac{1}{\sqrt{N}} \sum_{p,q }^* \widehat{V} (p/N) \big[\gamma_{p+q}b^*_{p+q}d^*_{-p} +\sigma_{p+q}b_{-p-q} d^*_{-p}+  \gamma_pd^*_{p+q}b^*_{-p} +\sigma_pd^*_{p+q}b_{p}d^*_{p+q}\big]\\
        &\hspace{3cm}\times \big[\gamma_qb_{q} +\sigma_qb^*_{-q}\big] ; \\
        &\; +\frac{1}{\sqrt{N}} \sum_{p,q }^* \widehat{V} (p/N)  \big[ \gamma_{p+q}\gamma_pb^*_{p+q}  b^*_{-p} +  \gamma_{p+q}\sigma_pb^*_{p+q}b_{p}+\sigma_{p+q}\sigma_pb_{-p-q}b_{p}\\
        &\hspace{3.5cm}+\sigma_{p+q}\gamma_p b^*_{-p}b_{-p-q}-N^{-1}\sigma_{p+q}\gamma_pa^*_{-p}a_{-p-q}\big]d_q ;\\        
        \text{M}_2:=\;& \frac{1}{\sqrt{N}} \sum_{p,q }^* \widehat{V} (p/N) \big[\gamma_{p+q}b^*_{p+q}d^*_{-p} +\sigma_{p+q}b_{-p-q} d^*_{-p}+  \gamma_pd^*_{p+q}b^*_{-p} +\sigma_pd^*_{p+q}b_{p}\big]d_q \\
        &+\frac{1}{\sqrt{N}} \sum_{p,q }^* \widehat{V} (p/N)d^*_{p+q}d^*_{-p} \big[\gamma_qb_{q} +\sigma_qb^*_{-q}\big]  ;\\
        \text{M}_3:=\;& \frac{1}{\sqrt{N}} \sum_{p,q }^* \widehat{V} (p/N) d^*_{p+q}d^*_{-p}d_q 
        \end{split}\end{equation*}
Here, we introduced the shorthand notation $ \sum_{p,q }^*\equiv \sum_{p,q \in \Lambda_+^* , p+q \not = 0}$ and we used the identity $b_{-p-q} b^*_{-p} = b^*_{-p} b_{-p-q}  - N^{-1} a_{-p}^* a_{-p-q}$, for all $q \in \Lambda^*_+$. Notice that the index $i$ in $M_i$ counts the number of $d$-operators it contains. 

Let us start by analysing $\text{M}_3$. Switching to position space we find, using (\ref{eq:dxy-bds}) and the bound $\| \check{\eta} \|_\infty \leq CN$ (as follows from  (\ref{3.0.scbounds1}) since, by the definition (\ref{eq:ktdef}), we have $\check{\eta} (x) = -N w_\ell (Nx)$),    
        \begin{equation} \label{eq:M3} \begin{split} |\langle \xi , \text{M}_3 \xi \rangle | \leq &\; \int dxdy\, N^{5/2} V(N(x-y)) \| (\cN_+ + 1)^{-1} \check{d}_x \check{d}_y \xi \| \| (\cN_+ + 1) \check{d}_x \xi \|  \\ 
        \leq \; & \frac{C}{N^3}  \int dxdy\, N^{5/2} V(N(x-y)) \left[ \| (\cN_+ + 1)^{5/2} \xi \| + \| \check{a}_x (\cN_+ + 1)^2 \xi \| \right] \\  &\hspace{0cm} \times  \left[ N \| (\cN_+ + 1) \xi \| + \| \check{a}_x (\cN_+ +1)^{3/2} \xi \| \right. \\  &\hspace{3cm} \left. + \| \check{a}_y (\cN_+ +1)^{3/2} \xi \| + \| \check{a}_x \check{a}_y (\cN_+ + 1) \xi \| \right] \\
        \leq \; & C N^{-1/2} \langle \xi , (\cV_N + \cN_+^2 + 1) \xi \rangle         
        \end{split} \end{equation}
        
As for $ \text{M}_2$, it reads in position space
        \[\begin{split}
         \text{M}_2=&\; \int dxdy\, N^{5/2} V(N(x-y)) \big[ b^*(\check{\gamma}_x) \check{d}^*_y + b(\check{\sigma}_x) \check{d}^*_y+ \check{d}^*_x b^*(\check{\gamma}_y)+\check{d}^*_x b(\check{\sigma}_y) \big] \check{d}_x \\
         &\; +  \int dxdy\, N^{5/2} V(N(x-y)) \check{d}^*_x \check{d}^*_y \big[ b (\check{\gamma}_x)+b^*(\check{\sigma}_x) \big]  \\ =:&\;  \text{M}_{21} + \text{M}_{22}
        \end{split}\]
To control $ \text{M}_{22}$, we use the bound (\ref{eq:dxy-bds}) to estimate
       \[\begin{split}\big| \langle\xi, \text{M}_{22}\xi\rangle\big| \leq &\;  \int dxdy\, N^{5/2} V(N(x-y))  \big \|(\cN_+ +1)^{-1} \check{d}_y \check{d}_x \xi\Big\|\\
        &\hspace{4cm}\times\Big\|(\cN_++1) \big[ b (\check{\gamma}_x)+b^*(\check{\sigma}_x) \big] \xi\big\|\\
        \leq &\; C N^{-2}  \int dx dy N^{5/2} V(N(x-y)) \left[ \| (\cN_+ + 1)^{3/2} \xi \| + \| \check{a}_x (\cN_+ + 1) \xi \| \right] \\  &\hspace{0cm} \times  \left[ N \| (\cN_+ + 1) \xi \| + \| \check{a}_x (\cN_+ +1)^{3/2} \xi \| \right. \\  &\hspace{3cm} \left. + \| \check{a}_y (\cN_+ +1)^{3/2} \xi \| + \| \check{a}_x \check{a}_y (\cN_+ + 1) \xi \| \right] \\
         \leq \; & C N^{-1} \langle \xi , (\cV_N + \cN_+ + 1) (\cN_+ + 1) \xi \rangle         
        \end{split} \]
With the first and the second bounds in (\ref{eq:dxy-bds}), we can also control 
$\text{M}_{21}$. We find 
\begin{equation}\label{eq:bd-M21} \begin{split} 
|\langle \xi , \text{M}_{21} \xi \rangle| \leq \; &C \int dx dy \, N^{5/2} V(N(x-y))  \|\check{d}_x \xi \|  \\ &\hspace{1cm} \times  \left[ \| \check{d}_y b (\check{\gamma}_x) \xi \| + \| \check{d}_y b^* (\check{\s}_x) \xi \|  +  \| b (\check{\gamma}_y) \check{d}_x \xi \| + \| b^* (\check{\s}_y) \check{d}_x \xi \| \right] \\ 
 \leq \; &C \int dx dy \, N^{1/2} V(N(x-y))  \left[ \| (\cN_+ + 1)^{3/2} \xi \| + \| \check{a}_x (\cN_+ + 1) \xi \| \right]   \\ &\hspace{1cm}\times  \Big[ N \| (\cN_+ + 1) \xi \| +  \| \check{a}_x (\cN_+ + 1)^{3/2} \xi \|+  \| \check{a}_y (\cN_+ + 1)^{3/2} \xi \|  \\ &\hspace{7.6cm} + \| \check{a}_x \check{a}_y (\cN_+ + 1) \xi \| \Big]  \\ 
 \leq \; &C N^{-1} \langle \xi , (\cV_N + \cN_+ + 1) ( \cN_+ + 1) \xi \rangle 
 \end{split} \end{equation}
where we used that $\| \check{\eta} \|_\infty \leq C N$. Hence, we proved that
\begin{equation}\label{eq:M2} |\langle \xi , \text{M}_2 \xi \rangle |  \leq C N^{-1} \langle \xi , (\cV_N + \cN_+ + 1) ( \cN_+ + 1) \xi \rangle \end{equation}

Next, let us consider the operator $ \text{M}_1$. In position space, we find
        \[\begin{split} \text{M}_1= &\;  \int dxdy\, N^{5/2} V(N(x-y)) \big[ b^*(\check{\gamma}_x)\check{d}^{*}_y + b(\check{\sigma}_x) \check{d}^*_y+\check{d}^*_x b^*(\check{\gamma}_y)+\check{d}^*_x b(\check{\sigma}_y) \big]\\
        &\hspace{5cm}\times\big[ b(\check{\gamma}_x) + b^*(\check{\sigma}_x)\big]\\
        &\; +  \int dxdy\, N^{5/2} V(N(x-y))\big[b^*(\check{\gamma}_x)b^*(\check{\gamma}_y)+b^*(\check{\gamma}_x)b(\check{\sigma}_y)+b(\check{\sigma}_x)b(\check{\sigma}_y)  \\
        & \hspace{3cm}+b^*(\check{\gamma}_x)b(\check{\sigma}_y)-N^{-1}a^*(\check{\gamma}_x)a(\check{\sigma}_y)\big] \check{d}_x  \\ =: &\;  \text{M}_{11} + \text{M}_{12}
        \end{split}\]
To estimate $\text{M}_{11}$, we proceed as in (\ref{eq:bd-M21}). With (\ref{eq:dxy-bds}), using again $\| \check{\eta} \|_\infty \leq CN$, we find 
           \[\begin{split}\big| \langle\xi, \text{M}_{11}\xi\rangle\big| \leq &  \int dxdy\, N^{5/2} V(N(x-y))  \left[ \| b (\check{\g}_x) \xi \| + \| b^* (\check{\s}_x) \xi \| \right] 
 \\ &\hspace{1cm} \times \left[ \| \check{d}_y b(\check{\gamma}_x) \xi \| + \|  \check{d}_y b^* (\check{\s}_x) \xi \| + \| b (\check{\gamma}_y)  \check{d}_x \xi \| + \| b^* (\check{\s}_y)  \check{d}_x \xi \| \right]           \\  &\leq C \int dx dy N^{3/2} V (N (x-y)) \Big[ \| (\cN_+ + 1)^{1/2} \xi \| + \|  \check{a}_x \xi \| \Big] \\ &\hspace{1cm} \times  \Big[ N  \| (\cN_+ + 1) \xi \| +  \| \check{a}_x (\cN_+ + 1)^{3/2} \xi \|+  \| \check{a}_y (\cN_+ + 1)^{3/2} \xi \|  \\ &\hspace{7.6cm} + \| \check{a}_x \check{a}_y (\cN_+ + 1) \xi \| \Big]  
           \\ & \leq C N^{-1/2}   \langle \xi , (\cV_N + \cN_+ + 1) ( \cN_+ + 1) \xi \rangle 
\end{split} \] 
As for the term $M_{12}$, we use the bound
\[ \begin{split} \Big\| &(\cN_+ + 1)^{1/2}  \big[ b^*(\check{\gamma}_x) b^*(\check{\gamma}_y)+b^*(\check{\gamma}_x) b(\check{\sigma}_y)+b(\check{\sigma}_x)b(\check{\sigma}_y) 
\\& \hspace{6cm} + b^*(\check{\gamma}_x) b(\check{\sigma}_y)  -N^{-1}a^*(\check{\gamma}_x) a(\check{\sigma}_y)\big] \xi\Big\| \\ &\leq C 
 \Big[ \big\|(\cN_++1)^{3/2} \xi\big\| +\big\| \check{a}_x (\cN_++1)  \xi\big\|+\big\| \check{a}_y (\cN_++1)  \xi\big\|+\big\|\check{a}_x \check{a}_y (\cN_++1)^{1/2} \xi\big\| \Big]
        \end{split}\]
to conclude that 
\[ \begin{split} 
|\langle \xi, & \text{M}_{12} \xi \rangle |\\  \leq \; &C \int dx dy \, N^{3/2} V(N(x-y)) \left[ \| (\cN_+ + 1) \xi \| + \| \check{a}_x (\cN_+ + 1)^{1/2} \xi \| \right] \\ &\hspace{0cm} \times 
 \Big[ \big\|(\cN_++1)^{3/2} \xi\big\| +\big\| \check{a}_x (\cN_++1)  \xi\big\|+\big\| \check{a}_y (\cN_++1)  \xi\big\|+\big\|\check{a}_x \check{a}_y (\cN_++1)^{1/2} \xi\big\| \Big] \\
 \leq \; &C N^{-1/2} \langle \xi , (\cV_N + \cN_+ + 1) ( \cN_+ + 1) \xi \rangle \end{split} \]
Thus, 
\begin{equation}\label{eq:M1} |\langle \xi , \text{M}_1 \xi \rangle | \leq C N^{-1/2}  \langle \xi , (\cV_N + \cN_+ + 1) ( \cN_+ + 1) \xi \rangle  \end{equation}

Finally, we consider (\ref{eq:def-G3M0}). We split $\text{M}_0 = \text{M}_{01} + \text{M}_{02}$, with 
        \[\begin{split}\text{M}_{01}:=\;& \frac{1}{\sqrt{N}} \sum_{p,q }^* \widehat{V} (p/N) \gamma_{p+q}\gamma_pb^*_{p+q}  b^*_{-p}\big[\gamma_qb_{q} +\sigma_qb^*_{-q}\big] ;  \\
        \text{M}_{02}:=\;& \frac{1}{\sqrt{N}} \sum_{p,q }^* \widehat{V} (p/N) \big[  \gamma_{p+q}\sigma_pb^*_{p+q}b_{p}+\sigma_{p+q}\sigma_pb_{-p-q}b_{p}+\sigma_{p+q}\gamma_p b^*_{-p}b_{-p-q}\\
        &\hspace{5cm}-N^{-1}\sigma_{p+q}\gamma_pa^*_{-p}a_{-p-q}\big]  \big[\gamma_qb_{q} +\sigma_qb^*_{-q}\big] 
        \end{split}\]
        Switching to position space, we find 
        \[\begin{split} \big | \langle\xi,\text{M}_{02}\xi \rangle \big|  \leq \;& C \int dxdy\, N^{5/2} V(N(x-y)) \Big[ \| \check{a}_x \xi \| + \| (\cN_+ + 1)^{1/2} \xi \| \Big] \\ & \hspace{1cm} \times 
        \Big[ \| (\cN_+ + 1) \xi \| + \| \check{a}_x (\cN_+ + 1)^{1/2} \xi \| + \| \check{a}_y (\cN_+ + 1)^{1/2} \xi \| \Big] \\ \leq \; &C N^{-1/2}\langle \xi, (\cN_++1)^2\xi\rangle \end{split} \]
        As for $\text{M}_{01}$, we write $\g_p = 1 + (\g_p -1)$ and $\g_{p+q} = 1 + (\g_{p+q}-1)$. Using that $|\g_p - 1| \leq C / p^4$ and $\s_q$ are square summable, it is easy to check that 
        \[ \text{M}_{01} = \frac{1}{\sqrt{N}} \sum^*_{p,q} \widehat{V} (p/N) b_{p+q}^* b^*_{-p} \left[ \g_q b_q + \s_q b^*_{-q} \right] + \wt{\cE}_{2} \]
        where $\wt{\cE}_{2}$ is such that 
        \[ |\langle \xi, \wt{\cE}_{2} \xi \rangle | \leq C N^{-1/2} \langle \xi , (\cN_+ + 1)^2 \xi \rangle \]
        Combining the last bound, with the bounds (\ref{eq:wtE1}), (\ref{eq:M3}), (\ref{eq:M2}), (\ref{eq:M1}) and the decompositions (\ref{eq:cGN3-wt}) and (\ref{eq:deco2}), we obtain (\ref{eq:lm-GN3}). 
        \end{proof}

\subsection{Analysis of $ \cG_N^{(4)}=e^{-B(\eta)}\cL^{(4)}_N e^{B(\eta)}$}

From (\ref{eq:cLNj}) we have
\begin{equation}\label{eq:GN4-b} \cG_N^{(4)} = \frac{1}{2N} \sum_{p,q \in \L^*_+, r \in \L^* : r \not = -p,-q}  \widehat{V} (r/N)  e^{-B(\eta)} a^*_{p+r} a_q^* a_p a_{q+r} e^{B(\eta)}  \end{equation}
We define the error operator $\cE^{(4)}_N$ through the identity 
\begin{equation}\label{eq:GN4-deco}
        \begin{split}  \cG_N^{(4)} = &\;\cV_N+ \frac{1}{2N} \sum_{p,q\in\Lambda_+^*}\widehat{V} ((p-q)/N)\sigma_q\gamma_q\sigma_p\gamma_p\,\big(1+1/N-2\;\cN_+/N\big)\\
        &\; + \frac{1}{2N} \sum_{p\in\Lambda_+^*, q\in\Lambda^*}\widehat{V} ((p-q)/N) \eta_q \\ &\hspace{1cm} \times \Big[ \gamma_p^2 b^*_{p}b^*_{-p}+2 \gamma_p\sigma_{p} b^*_{p}b_p+\sigma_p^2b_pb_{-p}+d_p\big( \gamma_p  b_{-p}+\sigma_p b^*_{p}\big)\\
        &\;\hspace{7cm}+\big(\gamma_p b_p +\sigma_p b^*_{-p}\big)d_{-p} + \text{h.c.}\Big]\\
        &\;+\frac{1}{N^2} \sum_{p,q,u\in\Lambda_+^*}\widehat{V} ((p-q)/N)\eta_p\eta_q \\ &\hspace{1cm} \times \Big[ \gamma_u^2 b^*_u b_u + \sigma_u^2 b^*_u b_u + \gamma_u\sigma_u b^*_u b^*_{-u}+\gamma_u\sigma_u b_u b_{-u}  +\sigma_u^2\Big]\\
        &\;+ \cE_N^{(4)}
        \end{split}
        \end{equation}
The goal of this subsection is to bound the error term $\cE^{(4)}_N$.
\begin{lemma}\label{lm:GN-4}
Let $\cE_N^{(4)}$ be as defined in (\ref{eq:GN4-deco}). Then, under the same assumptions as in Proposition \ref{prop:gene}, we find $C > 0$ such that  
\begin{equation}\label{eq:lmGN4} \pm \cE^{(4)}_N \leq C N^{-1/2}  (\cV_N+\cN_++1)(\cN_++1)\end{equation}
\end{lemma}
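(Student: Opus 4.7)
The proof proceeds by a careful expansion of the conjugation in position space, following the same pattern as Lemma \ref{prop:G2} and Lemma \ref{lm:GN-3}. First I would rewrite
\[
\cG_N^{(4)} = \tfrac{1}{2}\int_{\Lambda^2} dx\,dy\; N^2 V(N(x-y))\, e^{-B(\eta)}\, \check a_x^* \check a_y^* \check a_y \check a_x \, e^{B(\eta)},
\]
with the constraints $p,q\in\Lambda_+^*$, $r\neq -p,-q$ from (\ref{eq:GN4-b}) reintroducing only lower-order projections onto $\ph_0$, handled as in Section \ref{sec:GN2}. Then I would convert each $\check a_x^{\sharp}$ into $\check b_x^{\sharp}$ via $\check a_x = \sqrt{N/(N-\cN_+)}\, \check b_x$, commuting the scalar factors $\sqrt{(N-\cN_+)/N}$ past $e^{B(\eta)}$ at the cost of an error of size $N^{-1}(\cV_N+1)(\cN_+ +1)^2$, controlled by Lemma \ref{lm:Ngrow}.

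Next I would apply (\ref{eq:ebe}) to each of the four factors, $e^{-B(\eta)}\check b_x^{\sharp} e^{B(\eta)} = b(\check\gamma_x) + b^*(\check\sigma_x) + \check d_x^{\sharp}$, and expand the resulting product, classifying the $4^4$ monomials by the number $k\in\{0,1,2,3,4\}$ of $\check d$-factors they contain. Writing $\check\gamma_x(y) = \delta(x-y) + \check r(x-y)$ with $\|\check r\|_2 \leq C$, and $\check\sigma_x = \check\eta_x + (\check\sigma - \check\eta)_x$ with $\|\check\sigma - \check\eta\|_2 \leq C$, the $k=0$ monomials split naturally according to how many $\delta$- and how many $\check\eta$-contractions they use: the pure $\delta^4$ contraction reproduces $\cV_N$; the three-$\delta$-one-$\check\eta$ contractions reproduce the linear-in-$\eta$ operators with coefficients $\gamma_p^2$, $2\gamma_p\sigma_p$, $\sigma_p^2$ in (\ref{eq:GN4-deco}); and the two-$\delta$-two-$\check\eta$ contractions produce both the $\sigma_p\gamma_p\sigma_q\gamma_q$-constant piece and the $\eta_p\eta_q$-weighted quadratic operator. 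The $k=1$ monomials in which the surviving $b^\sharp$ after one $\check\eta$-contraction pairs with a $\check d$-factor account for the explicit $d_p(\gamma_p b_{-p}+\sigma_p b_p^*) + (\gamma_p b_p + \sigma_p b_{-p}^*)d_{-p}$ contributions on the right-hand side of (\ref{eq:GN4-deco}). All other $k=0$ and $k=1$ monomials, together with every monomial of order $k\geq 2$ in $\check d$, go into $\cE_N^{(4)}$.

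The error bounds follow the template of Lemma \ref{lm:GN-3}. The remaining $k=0$ terms contain at least one $\check r$ or $(\check\sigma-\check\eta)$ factor; using $\|\check\sigma\|_\infty \leq CN$, $\|\check r\|_2 \leq C$, $\|\check\sigma-\check\eta\|_2 \leq C$ and positivity of $V$ to pair two fields with the potential into a factor $\cV_N^{1/2}$, they are bounded by $CN^{-1/2}(\cV_N + \cN_+ +1)(\cN_+ +1)$. The remaining $k\geq 1$ terms are controlled by the position-space estimates (\ref{eq:dxy-bds}): each $\check d$-factor comes with a prefactor $N^{-1}$, and a representative one-$\check d$ error term is bounded schematically by
\[
\int dx dy\; N^2 V(N(x-y)) \,\|\check a_x\check a_y(\cN_+ +1)^{1/2}\xi\|\,\|\check d_y b^*(\check\sigma_x)\xi\| \leq C N^{-1/2}\langle\xi,(\cV_N + \cN_+ +1)(\cN_+ +1)\xi\rangle,
\]
after Cauchy--Schwarz and $\|\check\sigma\|_\infty\leq CN$; the $k\geq 2$ terms are smaller by further powers of $N^{-1}$ and controlled analogously. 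Collecting all contributions and the commutator error from the $\check a\to\check b$ step yields (\ref{eq:lmGN4}).

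The principal obstacle is combinatorial rather than analytic: one must verify that, after expanding $e^{-B(\eta)}\check a_x^*\check a_y^*\check a_y\check a_x e^{B(\eta)}$ into its many $\gamma/\sigma/\eta/d$-monomials and normal-ordering when necessary, the precise coefficients $\gamma_p^2,\,2\gamma_p\sigma_p,\,\sigma_p^2,\,\sigma_p\gamma_p\sigma_q\gamma_q,\,\eta_p\eta_q\gamma_u^2,\,\eta_p\eta_q\sigma_u^2,\,\eta_p\eta_q\gamma_u\sigma_u$ and the four explicit $d$-operator terms appearing on the right-hand side of (\ref{eq:GN4-deco}) are all exactly reproduced, with only genuinely subleading remainders left in $\cE_N^{(4)}$. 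The individual analytic estimates are direct adaptations of those already used for $\cG_N^{(2,V)}$ and $\cG_N^{(3)}$.
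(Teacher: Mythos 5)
Your overall strategy --- expand each conjugated field via (\ref{eq:ebe}), classify monomials by the number of $\check d$-factors, use the decompositions $\check\gamma = \delta + \check r$ and $\check\sigma = \check\eta + (\check\sigma-\check\eta)$, pair two fields with the potential to produce $\cV_N^{1/2}$, and invoke (\ref{eq:dxy-bds}) together with $\|\check\eta\|_\infty \leq CN$ --- is indeed the paper's strategy, and the error estimates you sketch for the terms carrying one or more $\check d$-factors are of the right type.

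There is, however, a genuine structural gap in how you account for the $a\to b$ conversion. You propose to write $\check a_x = \sqrt{N/(N-\cN_+)}\,\check b_x$, commute the scalar factors out, and charge the commutation to an error ``of size $N^{-1}(\cV_N+1)(\cN_+ +1)^2$''. Two problems. First, that quantity is not $\leq C N^{-1/2}(\cV_N+\cN_++1)(\cN_++1)$ on $\cF_+^{\leq N}$: the excess power of $(\cN_++1)$ only gives a factor $\cN_+ + 1 \leq N+1$, so your claimed error is actually $\cO(1)$ relative to the target, not $\cO(N^{-1/2})$. Second, and more importantly, what you would really be dropping is not an error at all. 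The scalar factors produce a first-order correction of the form $\check b_x^*\check b_y^*\check b_y\check b_x \cdot (c_1/N + c_2\cN_+/N + \ldots)$; the $\cN_+/N$ piece, upon writing $\cN_+=\sum_u b_u^*b_u$, produces the \emph{sextic} operator $\tfrac{1}{N^2}\sum \widehat V(r/N)\, b^*_{p+r}b^*_q\, b^*_u b_u\, b_p b_{q+r}$, and it is precisely this sextic operator whose conjugation yields the $\tfrac{1}{N^2}\sum_{p,q,u}\widehat V((p-q)/N)\eta_p\eta_q[\gamma_u^2 b_u^*b_u + \ldots]$ term that is \emph{kept} on the right-hand side of (\ref{eq:GN4-deco}). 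That term is $\cO(\cN_+)$ on $\cF_+^{\leq N}$ and cannot be absorbed into $\cE_N^{(4)}$. You do list the coefficient $\eta_p\eta_q\gamma_u^2$ among those to be ``reproduced'', but you attribute its origin to ``two-$\delta$-two-$\check\eta$ contractions'' of the quartic, which is not where it comes from: contractions within the four-field product produce quadratic operators in the momenta $p,q$, never a separately summed mode $u$. The paper avoids this trap by extracting the quartic-in-$b$ and sextic-in-$b$ pieces explicitly --- see the decomposition around (\ref{eq:def-G40G41}) --- and treating them in two separate steps ($\text{V}_0,\dots,\text{V}_4$ and $\text{W}_0,\text{W}_1,\text{W}_2$). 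Your write-up needs to do the same; as it stands, the sextic contribution is either silently dropped or misattributed, and the conversion error as stated is too large.
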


\begin{proof}
First of all, we replace, on the r.h.s. of (\ref{eq:GN4-b}), all $a$-operators by $b$-operators. To this end, we notice that
 \begin{equation*} a^*_{p+r}a^*_{q}a_{p}a_{q+r} =b^*_{p+r}b^*_{q}b_{p}b_{q+r}\bigg(1-\frac3N+\frac{2\,\cN_+}N\bigg)+a^*_{p+r}a^*_{q}a_{p}a_{q+r}\,\Theta_{\cN_+} \end{equation*}
where
        \[\begin{split}\Theta_{\cN_+}:=& \bigg[\frac{\big(N-\cN_++2\big)}N\frac{\big(\cN_+-1\big)}N +\frac{\big(\cN_+-2\big)}N\bigg]^2\\
        &\hspace{2cm} + \bigg[-\frac{\cN_+^2}{N^2}+\frac{3\cN_+}{N^2}-\frac{2}{N^2}\bigg]\bigg[\frac{\big(N-\cN_++2\big)}N\frac{\big(N-\cN_++1\big)}N\bigg]
        \end{split}\]
is such that $ \pm\Theta_{\cN_+} \leq C (\cN_++1)^2/N^2 $ on $\cF_+^{\leq N}$. With Lemma \ref{lem:prelmG21} we conclude that 
 \begin{equation}\label{eq:def-G40G41}\begin{split} \cG_N^{(4)} =&\frac{( N+1)}{2N^2} \sum_{\substack{p,q \in \Lambda_+^*, r \in \Lambda^*:\\ r \not = -p,-q}} \widehat{V} (r/N) e^{-B(\eta)} b_{p+r}^*b_q^*  b_{p} b_{q+r} e^{B(\eta)}\\
         & +\frac{1}{N^2} \sum_{\substack{p,q,u \in \Lambda_+^*, r \in \Lambda^*:\\ r \not = -p,-q}} \widehat{V} (r/N) e^{-B(\eta)} b_{p+r}^*b_q^* b^*_ub_u  b_{p} b_{q+r} e^{B(\eta)} +\wt{\cE}_1          \end{split}\end{equation}
with the error $\wt{\cE}_{1}$ satisfying 
         \[ \pm \wt{\cE}_{1}  \leq C N^{-1} (\cV_N + \cN_+ +1)( \cN_+ +1)  \] 
We split the rest of the proof in two steps, where we analyze separately the two terms on the r.h.s. of (\ref{eq:def-G40G41}). 

{\it Step 1.} The first term on the r.h.s. of (\ref{eq:def-G40G41}) can be written as 
  \begin{equation}\label{eq:G40order1}
        \begin{split} 
        \frac{( N+1)}{2N^2} &\sum_{\substack{p,q \in \Lambda_+^*, r \in \Lambda^*:\\ r \not = -p,-q}} \widehat{V} (r/N) e^{-B(\eta)} b_{p+r}^*b_q^*  b_{p} b_{q+r} e^{B(\eta)}\\ 
        = &\;\cV_N+ \frac{1}{2N} \sum_{p,q\in\Lambda_+^*}\widehat{V} ((p-q)/N)\sigma_q\gamma_q\sigma_p\gamma_p\,\big(1+1/N-2\cN_+/N\big)\\
        &\; + \frac{1}{2N} \sum_{p,q\in\Lambda_+^*}\widehat{V} ((p-q)/N)\sigma_q\gamma_q \Big[ \gamma_p^2 b^*_{p}b^*_{-p}+2 \gamma_p\sigma_{p} b^*_{p}b_p+\sigma_p^2b_pb_{-p}\\
        &\;\hspace{2.5cm}+d_p\big( \gamma_p  b_{-p}+\sigma_p b^*_{p}\big)+\big(\gamma_p b_p +\sigma_p b^*_{-p}\big)d_{-p} + \text{h.c.}\Big]  + \wt{\cE}_2 
        \end{split}
        \end{equation}
where the error $\wt{\cE}_{2}$ is such that 
        \[\pm \wt{\cE}_{2}  \leq CN^{-1/2} (\cV_N+\cN_++1)(\cN_++1)\]

To show (\ref{eq:G40order1}), we write 
\[ \frac{( N+1)}{2N^2} \sum_{\substack{p,q \in \Lambda_+^*, r \in \Lambda^*:\\ r \not = -p,-q}} \widehat{V} (r/N) e^{-B(\eta)} b_{p+r}^*b_q^*  b_{p} b_{q+r} e^{B(\eta)} = \text{V}_0+ \text{V}_1+\text{V}_2+\text{V}_3+\text{V}_4   \]
with 
        \begin{equation}\label{eq:def-G40V0-1}\begin{split}
        \text{V}_0:=\;& \frac{( N+1)}{2N^2}\sum_{p,q,r}^{*}\widehat{V} (r/N) \Big[  \gamma_{p+r}\gamma_qb^*_{p+r}  b^*_q +  \gamma_{p+r}\sigma_qb^*_{p+r}b_{-q}+\sigma_{p+r}\sigma_qb_{-p-r}b_{-q}  \\
        &\hspace{1.2cm}+\sigma_{p+r}\gamma_q\big (b^*_qb_{-p-r}-N^{-1}a^*_qa_{-p-r}\big) \Big] \, \!\Big[\sigma_p\sigma_{q+r}b^*_{-p}b^*_{-q-r} \\
        &\hspace{1.2cm}+ \sigma_p\gamma_{q+r}b^*_{-p}b_{q+r}+ \gamma_p \gamma_{q+r} b_p  b_{q+r} + \gamma_p\sigma_{q+r} \big(b^*_{-q-r}b_p-N^{-1}a^*_{-q-r}a_p\big)\Big] \\
        &+\frac{( N+1)}{2N^2} \sum_{p,q\in\Lambda_+^*}\widehat{V} ((p-q)/N)\sigma_q\gamma_q \Big[ \big(\gamma_p^2 b^*_{p}b^*_{-p} +2 \gamma_p\sigma_{p} b^*_{p}b_p -N^{-1}\gamma_p\sigma_{p}a^*_{p}a_p\\
        &\hspace{6.5cm}+\sigma_p^2b_pb_{-p}\big)\big(1-\cN_+/N\big) +\text{h.c.}\Big]\\
        &+ \frac{( N+1)}{2N^2} \sum_{p,q\in\Lambda_+^*}\widehat{V} ((p-q)/N)\sigma_q\gamma_q\sigma_p\gamma_p\,\big(1-\cN_+/N\big)^2 , \end{split} \end{equation}
\begin{equation}\label{eq:V1}\begin{split} 
        \text{V}_1:=\;& \frac{( N+1)}{2N^2}\sum_{p,q,r}^{*}\widehat{V} (r/N)  \Big[  \gamma_{p+r}\gamma_qb^*_{p+r}  b^*_q +  \gamma_{p+r}\sigma_qb^*_{p+r}b_{-q} \\ & \hspace{3cm} +\sigma_{p+r}\sigma_qb_{-p-r}b_{-q} +\sigma_{p+r}\gamma_q\big (b^*_qb_{-p-r}-N^{-1}a^*_qa_{-p-r}\big) \Big]\!\\ &\hspace{2cm} \times 
        \!  \Big[\big(\gamma_pb_p +\sigma_pb^*_{-p}\big)d_{q+r} +   d_p\big(\gamma_{q+r} b_{q+r} +\sigma_{q+r}b^*_{-q-r}\big) \Big] \\
        &+\frac{( N+1)}{2N^2} \sum_{p,q\in\Lambda_+^*}\widehat{V} ((p-q)/N)\sigma_q\gamma_q \big(1-\cN_+/N\big) \\ & \hspace{2cm} \times \Big[d_p\big( \gamma_p  b_{-p}+\sigma_p b^*_{p}\big)+\big(\gamma_p b_p +\sigma_p b^*_{-p}\big)d_{-p} \Big] +\text{h.c.}\; ,
        \end{split}\end{equation}
and
\begin{equation}\label{eq:def-G40V2-4}\begin{split}
        \text{V}_2:=\;& \frac{( N+1)}{2N^2}\sum_{p,q,r}^{*}\widehat{V} (r/N) \Big[ \big( \gamma_{p+r}b^*_{p+r} +\sigma_{p+r}b_{-p-r}  \big)d^*_q+  d^*_{p+r}\big(\gamma_qb^*_q +\sigma_qb_{-q} \big)\Big]\\
        &\hspace{3.5cm}\times \!\!\Big[\big( \gamma_pb_p +\sigma_pb^*_{-p} \big)d_{q+r}+  d_p\big( \gamma_{q+r}b_{q+r} +\sigma_{q+r}b^*_{-q-r}  \big) \Big] \\
        &+\frac{( N+1)}{2N^2} \sum_{p,q\in\Lambda_+^*}\widehat{V} ((p-q)/N)\sigma_q\gamma_q \Big[d^*_{-p}d^*_p\big(1-\cN_+/N\big)+ \big(1-\cN_+/N\big)d_p d_{-p} \Big]\; , \\
        \text{V}_3:=\;& \frac{( N+1)}{2N^2}\sum_{p,q,r}^{*}\widehat{V} (r/N) \Big[ \big( \gamma_{p+r}b^*_{p+r} +\sigma_{p+r}b_{-p-r}  \big)d^*_q+  d^*_{p+r}\big(\gamma_qb^*_q +\sigma_qb_{-q} \big )\Big]d_pd_{q+r}\\
        &+\text{h.c.}\;;\\
        \text{V}_4:=\;& \frac{( N+1)}{2N^2}\sum_{p,q,r}^{*}\widehat{V} (r/N)d^*_{p+r}d^*_q d_pd_{q+r}
        \end{split}\end{equation}
Here, we used the notation $\sum_{p,q,r}^{*}:=\sum_{p,q\in\Lambda_+^*,r\in\Lambda^*: r\neq -p,-q}$ for simplicity. Notice that the index of $\text{V}_j$ refers to the number of $d$-operators it contains. 

Let us consider $\text{V}_4$. Switching to position space and using (\ref{eq:dxy-bds}), we find 
\[ \begin{split} 
|\langle \xi , \text{V}_4 \xi \rangle | \leq \; &C \int dx dy \, N^2 V (N (x-y)) \| \check{d}_x \check{d}_y \xi \| \, \|  \check{d}_x \check{d}_y \xi \| \\ \leq \; &C \int dx dy \, V(N(x-y)) \Big[  \| (\cN_+ + 1)^2 \xi \| + N \| (\cN_+ + 1) \xi \| \\ & \hspace{2cm} + \| \check{a}_x (\cN_+ +1)^{3/2} \xi \|  + \| \check{a}_y (\cN_ + + 1)^{3/2} \xi \| + \| \check{a}_x \check{a}_y (\cN_+ + 1) \xi \| \Big]^2  \\ \leq \; &C N^{-1} \langle \xi , (\cV_N + \cN_+ + 1) ( \cN_+ + 1) \xi \rangle 
\end{split} \]

Next, we switch to the contribution $ \text{V}_3$, defined in (\ref{eq:def-G40V2-4}). Switching again to position space, using (\ref{eq:dxy-bds}) and the bound $\cN_+ \leq N$, we obtain 
\[  \begin{split} 
| \langle \xi , \text{V}_3 \xi \rangle | \leq \; &C \int dx dy \, N^2 V (N (x-y))  \| \check{d}_x \check{d}_y \xi \| \\ &\hspace{2cm} \times \Big[ \| \check{d}_y b (\check{\gamma}_x) \xi \| + \| \check{d}_y b^* (\check{\s}_x) \xi \| + \| b (\check{\g}_y) \check{d}_x \xi \| + \| b^* (\check{\s}_y) \check{d}_x \xi \| \Big]
 \\ \leq \; &C \int dx dy \, V (N (x-y)) \Big[ \| (\cN_+ + 1)^2 \xi \| + \| \check{a}_x (\cN_+ + 1)^{3/2} \xi \| \\ &\hspace{1cm} +  \| \check{a}_y (\cN_+ + 1)^{3/2} \xi \| + \| \check{a}_x \check{a}_y (\cN_+ + 1) \xi \| + N \| (\cN_+ + 1) \xi \| \Big]^2 \\ 
\leq \; &C N^{-1} \langle \xi , (\cV_N + \cN_+ + 1) ( \cN_+ + 1) \xi \rangle \end{split} \]

Proceeding similarly, $\text{V}_2$ can be bounded, switching to position space, by 
\[ \begin{split} 
| \langle \xi , \text{V}_2 \xi \rangle | \leq \; &C \int dx dy \, N^2 V (N (x-y)) \\ &\hspace{2cm} \times \Big[ \| \check{d}_y b (\check{\gamma}_x) \xi \| + \| \check{d}_y b^* (\check{\s}_x) \xi \| + \| b (\check{\g}_y) \check{d}_x \xi \| + \| b^* (\check{\s}_y) \check{d}_x \xi \| \Big]^2 \\ &+ C \int dx dy \, N^2 V (N (x-y)) |(\check{\s} * \check{\g}) (x-y)| \| (\cN_+ + 1)^{-1} \check{d}_x \check{d}_y \xi \| \| (\cN_+ + 1) \xi \| \\ 
\leq \; & C \int dx dy V(N(x-y))  \Big[ \| (\cN_+ + 1)^2 \xi \| + \| \check{a}_x (\cN_+ + 1)^{3/2} \xi \| \\ &\hspace{1cm} +  \| \check{a}_y (\cN_+ + 1)^{3/2} \xi \| + \| \check{a}_x \check{a}_y (\cN_+ + 1) \xi \| + N \| (\cN_+ + 1) \xi \| \Big]^2 \\ &+ C \int dx dy N V (N (x-y)) \| (\cN_ + +1) \xi \| \Big[ \| (\cN_+ + 1)^2 \xi \| + N \| (\cN_ + +1) \xi \| \\ &\hspace{2cm} + \| \check{a}_x (\cN_ + +1)^{3/2} \xi \| + \| \check{a}_y (\cN_+ + 1)^{3/2} \xi \| + \|  \check{a}_x \check{a}_y (\cN_+ + 1) \xi \| \Big] \\
\leq \; & C N^{-1} \langle \xi , (\cV_N + \cN_+ + 1) ( \cN_+ + 1) \xi \rangle \end{split} \]
Here we used the bound $\| \check{\s} * \check{\g} \|_\infty \leq C N$ from (\ref{eq:bdsg-x}).

Let us now study the term $\text{V}_1$. We write 
\begin{equation}\label{eq:deco-V1} \begin{split} 
\text{V}_1 = \; &\frac{1}{2N} \sum_{p,q\in\Lambda_+^*}\widehat{V} ((p-q)/N)\sigma_q\gamma_q \Big[d_p\big( \gamma_p  b_{-p}+\sigma_p b^*_{p}\big)+\big(\gamma_p b_p +\sigma_p b^*_{-p}\big)d_{-p} \Big] + \text{h.c.} \\ &+ \text{V}_{12} + \text{V}_{13} \end{split} \end{equation}
where $\text{V}_{13}$ denotes the first sum on the r.h.s. of (\ref{eq:V1}) and $\text{V}_{12}$ is the difference between the second term on the r.h.s. of (\ref{eq:V1}) and the term on the r.h.s. of (\ref{eq:deco-V1}). Switching to position space and using (\ref{eq:dxy-bds}), we find easily 
\[ \begin{split} 
|\langle \xi, \text{V}_{12} \xi \rangle | \leq \; &C \int dx dy \, N V (N (x-y)) |(\check{\s} * \check{\g}) (x-y)| \| (\cN_+ + 1) \xi \| \\ &\hspace{.5cm} \times \| \left[ \| \check{d}_x b (\check{\gamma}_y) \xi \| + \| \check{d}_x b^* (\s_y)  \xi \| + \| b (\check{\g}_x) \check{d}_y  \xi \| + \| b^* (\check{\s}_x) \check{d}_y \xi \| \right] 
\\ 
\leq \; &C N^{-1} \langle \xi , (\cV_N + \cN_+ + 1) (\cN_+ + 1) \xi \rangle 
\end{split} \]
and 
\[ \begin{split} |\langle \xi , \text{V}_{13} \xi \rangle | \leq \; &C \int dx dy N^2 V (N (x-y)) \\ & \hspace{.5cm} \times \Big[ \| (\cN_+ + 1) \xi \| + \| a_x (\cN_+ + 1)^{1/2} \xi \| + \| a_y (\cN_+ + 1)^{1/2} \xi \| + \| a_x a_y \xi \| \Big] \\ &\hspace{.5cm}  \times \Big[ \| b (\check{\gamma}_x)  \check{d}_y \xi \| + \| b^* (\check{\s}_x) \check{d}_y \xi \| + \| \check{d}_x b (\check{\gamma}_y) \xi \| + \| \check{d}_y b^* (\check{\s}_y) \xi \| \Big] 
\\ \leq \; &C N^{-1} \langle \xi , (\cV_N + \cN_+ + 1) (\cN_+ + 1) \xi \rangle \end{split} \] 
        
Finally, we analyse $ \text{V}_0$, as defined in (\ref{eq:def-G40V0-1}). We write $ \text{V}_0 = \text{V}_{01}+\text{V}_{02}+\text{V}_{03}$, where
        \begin{equation*}
        \begin{split} \text{V}_{01}:=& \;\frac{( N+1)}{2N^2} \sum_{p,q\in\Lambda_+^*}\widehat{V} ((p-q)/N)\sigma_q\gamma_q\sigma_p\gamma_p\,\big(1-\cN_+/N\big)^2; \\
         \text{V}_{02}:=&\; \frac{( N+1)}{2N^2} \sum_{p,q\in\Lambda_+^*}\widehat{V} ((p-q)/N)\sigma_q\gamma_q \Big[ \big(\gamma_p^2 b^*_{p}b^*_{-p}+2 \gamma_p\sigma_{p} b^*_{p}b_p -N^{-1}\gamma_p\sigma_{p}a^*_{p}a_p\\
        &\hspace{5.5cm}+\sigma_p^2b_pb_{-p}\big)\big(1-\cN_+/N\big) +\text{h.c.}\Big]; \\
        \text{V}_{03}:=&\; \frac{( N+1)}{2N^2}\sum_{p,q,r}^{*}\widehat{V} (r/N) \Big[  \gamma_{p+r}\gamma_qb^*_{p+r}  b^*_q +  \gamma_{p+r}\sigma_qb^*_{p+r}b_{-q}+\sigma_{p+r}\sigma_qb_{-p-r}b_{-q}  \\
        &\hspace{1.2cm}+\sigma_{p+r}\gamma_q\big (b^*_qb_{-p-r}-N^{-1}a^*_qa_{-p-r}\big) \Big]\! \!\Big[\sigma_p\sigma_{q+r}b^*_{-p}b^*_{-q-r} + \sigma_p\gamma_{q+r}b^*_{-p}b_{q+r}\\
        &\hspace{1.2cm} + \gamma_p \gamma_{q+r} b_p  b_{q+r} + \gamma_p\sigma_{q+r} \big(b^*_{-q-r}b_p-N^{-1}a^*_{-q-r}a_p\big)\Big] \\
        \end{split}
        \end{equation*}
Proceeding similarly as above, switching to position space and using (in the estimate for $\wt{\cE}_4$) the bound $\| \check{\s} * \check{\g} \|_\infty \leq C N$, we find that 
        \begin{equation*}
        \begin{split}
        \text{V}_{01} =&\; \frac{1}{2N} \sum_{p,q\in\Lambda_+^*}\widehat{V} ((p-q)/N)\sigma_q\gamma_q\sigma_p\gamma_p\,\big(1+1/N-2\cN_+/N\big) + \wt{\cE}_{3} \\
         \text{V}_{02}= &\; \frac{1}{2N} \sum_{p,q\in\Lambda_+^*}\widehat{V} ((p-q)/N)\sigma_q\gamma_q \Big[ \gamma_p^2 b^*_{p}b^*_{-p}+2 \gamma_p\sigma_{p} b^*_{p}b_p+\sigma_p^2b_pb_{-p} +\text{h.c.} \Big] + \wt{\cE}_4 \\
         \text{V}_{03} = &\; \cV_N + \wt{\cE}_5
        \end{split}
        \end{equation*}
Combining with (\ref{eq:deco-V1}) and with all other bounds for the error terms, we arrive at (\ref{eq:G40order1}). 

           {\it Step 2.} We claim that 
  \begin{equation}\label{eq:G41order1}
        \begin{split}  
        \frac{1}{N^2} &\sum_{\substack{p,q,u \in \Lambda_+^*, r \in \Lambda^*:\\ r \not = -p,-q}} \widehat{V} (r/N) e^{-B(\eta)} b_{p+r}^*b_q^* b^*_ub_u  b_{p} b_{q+r} e^{B(\eta)} \\
         = &\;\frac{1}{N^2} \sum_{p,q,u\in\Lambda_+^*}\widehat{V} ((p-q)/N)\sigma_q\gamma_q\sigma_p\gamma_p \Big[ \gamma_u^2 b^*_u b_u + \sigma_u^2 b^*_u b_u + \gamma_u\sigma_u b^*_u b^*_{-u}+\gamma_u\sigma_u b_u b_{-u}  \Big]\,\\
        &\;+\frac{1}{N^2} \sum_{p,q,u\in\Lambda_+^*}\widehat{V} ((p-q)/N)\sigma_q\gamma_q\sigma_p\gamma_p\sigma_u^2 + \wt{\cE}_6
        \end{split}
        \end{equation}
where the error $\wt{\cE}_6$ is such that, on $ \cF_+^{\leq N}$, 
        \[\pm \cE_{3,N}^{(4)}  \leq CN^{-1/2} (\cV_N+\cN_++1)(\cN_++1)\]

To show (\ref{eq:G41order1}), we split   
\begin{equation*}
\frac{1}{N^2} \sum_{\substack{p,q,u \in \Lambda_+^*, r \in \Lambda^*:\\ r \not = -p,-q}} \widehat{V} (r/N) e^{-B(\eta)} b_{p+r}^*b_q^* b^*_ub_u  b_{p} b_{q+r} e^{B(\eta)} 
= \text{W}_0+ \text{W}_1+\text{W}_2  \end{equation*}
where
        \begin{equation*}
        \begin{split}
        \text{W}_0:=\;&\frac{1}{N^2} \sum_{p,q,u\in\Lambda_+^*}\widehat{V} ((p-q)/N)\sigma_q\gamma_q\sigma_p\gamma_p\,\big[1-\cN_+/N\big]\big[e^{-B(\eta)} b^*_ub_u e^{B(\eta)}\big]\big[1-\cN_+/N\big];\\
        \end{split}\end{equation*}
and
\begin{equation*}
\begin{split}
        \text{W}_1:=\;& \frac{1}{N^2} \sum_{p,q,u\in\Lambda_+^*}\widehat{V} ((p-q)/N)\sigma_q\gamma_q \Big[ \gamma_p^2 b^*_{p}b^*_{-p} +2 \gamma_p\sigma_{p} b^*_{p}b_p -N^{-1}\gamma_p\sigma_{p}a^*_{p}a_p+\sigma_p^2b_pb_{-p}\\
        &\hspace{4.5cm}+\gamma_{p}b^*_{-p}d^*_p +\sigma_{p}b_{p} d^*_p+  \gamma_p d^*_{-p}b^*_p +\sigma_p d^*_{p}b_{p}+d^*_{p+r}d^*_q\Big] \\
        &\hspace{4.5cm}\times   \big[e^{-B(\eta)} b^*_ub_u e^{B(\eta)}\big]\big[1-\cN_+/N\big] +\text{h.c.};\\
        \text{W}_2:=\;& \frac{1}{N^2}\sum_{p,q,r,u}^{*}\widehat{V} (r/N) \Big[  \gamma_{p+r}\gamma_qb^*_{p+r}  b^*_q +  \gamma_{p+r}\sigma_qb^*_{p+r}b_{-q}+\sigma_{p+r}\sigma_qb_{-p-r}b_{-q}\\
        &\hspace{0.5cm}+\sigma_{p+r}\gamma_q\big (b^*_qb_{-p-r}-N^{-1}a^*_qa_{-p-r}\big)+ \big( \gamma_{p+r}b^*_{p+r} +\sigma_{p+r}b_{-p-r}  \big)d^*_q\\
        &\hspace{0.5cm}+  d^*_{p+r}\big(\gamma_qb^*_q +\sigma_qb_{-q} \big)+d^*_{p+r}d^*_q\Big]\times \Big[e^{-B(\eta)} b^*_ub_u e^{B(\eta)}\Big]\times \Big[ \sigma_p\sigma_{q+r}b^*_{-p}b^*_{-q-r}\\
        &\hspace{0.5cm}+ \sigma_p\gamma_{q+r}b^*_{-p}b_{q+r} + \gamma_p \gamma_{q+r} b_p  b_{q+r}+ \gamma_p\sigma_{q+r} \big(b^*_{-q-r}b_p-N^{-1}a^*_{-q-r}a_p\big)  \\
        &\hspace{0.5cm}+ \big( \gamma_pb_p +\sigma_pb^*_{-p} \big)d_{q+r} +  d_p\big( \gamma_{q+r}b_{q+r} +\sigma_{q+r}b^*_{-q-r}  \big)+d_pd_{q+r}\Big]
        \end{split}\end{equation*}
Here, we introduced the notation $\sum_{p,q,r,u}^{*}:=\sum_{p,q,u\in\Lambda_+^*,r\in\Lambda^*: r\neq -p,-q}$ for simplicity. Using Lemma \ref{lm:Ngrow} to get rid of the factor $\sum_{u \in \Lambda_+^*} e^{-B(\eta)} b_u^* b_u e^{B(\eta)} =  e^{-B(\eta)} \cN_+ (1- \cN_+ /N) e^{B(\eta)}$ and then proceeding similarly as in Step 1, we obtain that 
\[ \begin{split} 
|\langle \xi , \text{W}_1 \xi \rangle | &\leq C N^{-1/2} \langle \xi , (\cV_N + \cN_+ + 1) ( \cN_+ + 1) \xi \rangle \\
|\langle \xi , \text{W}_2 \xi \rangle | &\leq C N^{-1} \langle \xi , (\cV_N + \cN_+ + 1) ( \cN_+ + 1) \xi \rangle \end{split} \]

As for $\text{W}_0$, we write 
        \[\begin{split}\text{W}_0 = \; &\frac{1}{N^2} \sum_{p,q,u\in\Lambda_+^*}\widehat{V} ((p-q)/N)\sigma_q\gamma_q\sigma_p\gamma_p \\ &\hspace{2cm} \times \Big[ \gamma_u^2 b^*_u b_u + \sigma_u^2 b^*_u b_u + \gamma_u\sigma_u b^*_u b^*_{-u}+\gamma_u\sigma_u b_u b_{-u} + \s_u^2 \Big] \\ &+ \wt{\cE}_7 
        \end{split}\]
Using (\ref{eq:ebe}) to decompose  
\[  e^{-B(\eta)} b^*_ub_u e^{B(\eta)} = \big(\gamma_u b^*_u + \sigma_ub_{-u} + d^*_u  \big) \big(\gamma_u b_u + \sigma_u b^*_{-u} + d_u\big)  \]
and then the bounds (\ref{eq:d-bds}), it is easy to estimate the remainder operator $\wt{\cE}_7$, on $\cF_+^{\leq N}$, by  
\[ \pm \wt\cE_7 \leq C N^{-1} (\cN_+ + 1) \] 
Hence, we obtain (\ref{eq:G41order1}). 

{\it Step 3. Conclusion of the proof.} Combining (\ref{eq:G40order1}) and (\ref{eq:G41order1}) with (\ref{eq:def-G40G41}), we conclude that 
 \begin{equation}\label{eq:concl-pr}
        \begin{split}  \cG_N^{(4)} = &\;\cV_N+ \frac{1}{2N} \sum_{p,q\in\Lambda_+^*}\widehat{V} ((p-q)/N)\sigma_q\gamma_q\sigma_p\gamma_p\,\big(1+1/N-2\;\cN_+/N\big)\\
        &\; + \frac{1}{2N} \sum_{p,q\in\Lambda_+^*}\widehat{V} ((p-q)/N)\sigma_q\gamma_q \Big[ \gamma_p^2 b^*_{p}b^*_{-p}+2 \gamma_p\sigma_{p} b^*_{p}b_p+\sigma_p^2b_pb_{-p}+d_p\big( \gamma_p  b_{-p}+\sigma_p b^*_{p}\big)\\
        &\;\hspace{5cm}+\big(\gamma_p b_p +\sigma_p b^*_{-p}\big)d_{-p} + \text{h.c.}\Big]\\
        &\;+\frac{1}{N^2} \sum_{p,q,u\in\Lambda_+^*}\widehat{V} ((p-q)/N)\sigma_q\gamma_q\sigma_p\gamma_p \\ & \hspace{2cm} \times \Big[ \gamma_u^2 b^*_u b_u + \sigma_u^2 b^*_u b_u + \gamma_u\sigma_u b^*_u b^*_{-u}+\gamma_u\sigma_u b_u b_{-u}  + \s_u^2  \Big]\\ &+ \wt{\cE}_8
        \end{split}
        \end{equation}
with an error $ \wt{\cE}_8$ such that, on $ \cF_+^{\leq N}$, 
        \[\pm \wt{\cE}_8  \leq CN^{-1/2} (\cV_N+\cN_++1)(\cN_++1) \]
To conclude the proof of (\ref{eq:lmGN4}), we just observe that in the term appearing on the second line on the r.h.s. of (\ref{eq:concl-pr}), we can replace the product $\s_q \g_q$ simply by $\eta_q$. Since $| \s_q \g_q - \eta_q| \leq C |q|^{-6}$ (or, in position space, $\| (\check{\s} * \check{\g}) - \check{\eta} \|_\infty  \leq C$), it is easy to show that the difference can be incorporated in the error term. Similarly, in the term appearing on the third and fourth lines on the r.h.s. of (\ref{eq:concl-pr}), we can replace $\s_p \g_p \s_q \g_q$ by $\eta_p \eta_q$; also in this case, the contribution of the difference is small and can be included in the remainder. This concludes the proof of the lemma.
\end{proof}


\subsection{Proof of Proposition \ref{prop:gene}}

Collecting the results of (\ref{eq:GN0}), Lemma \ref{prop:G2}, Lemma \ref{lm:GN-3} and Lemma \ref{lm:GN-4}, we obtain that 
        \begin{equation} \label{eq:GNpf1}
        \cG_N =  \wt{C}_{\cG_N} + \wt{\cQ}_{\cG_N} + \cD_N+ \cH_N + \cC_N  + \wt{\cE}_{\cG_N} \end{equation}
where $\cC_N$ is the cubic term defined in \eqref{eq:cCN}, $\wt{\cE}_{\cG_N}$ an error term controlled by 
        \[\pm \wt{\cE}_{\cG_N} \leq CN^{-1/2} (\cH_N+\cN_+^2)(\cN_++1),\]
and where $\wt{C}_{\cG_N}$, $\wt{\cQ}_{\cG_N}$ and $\cD_N$ are given by 
        \begin{equation}\label{eq:defC0}
        \begin{split}
        \wt{C}_{\cG_N} =& \;\frac{(N-1)}{2} \widehat{V} (0) +\sum_{p\in\L^*_+}\Big[p^2\sigma_p^2\left(1+\frac{1}{N}\right)+\widehat{V}(p/N)\left(\s_p\g_p +\s_p^2\right)\Big]\\
        &+\frac{1}{2N} \sum_{q\in\Lambda_+^*}\widehat{V} ((p-q)/N)\sigma_q\gamma_q\sigma_p\gamma_p\,\big(1+1/N\big) \\&+\frac{1}{N}\sum_{u\in\L^*_+} \s_u^2 \sum_{p\in\L^*_+}\big[p^2\s_p^2+\frac{1}{N}\sum_{q\in\L^*_+}\widehat{V}((p-q)/N)\eta_p\eta_q\big]
        \end{split}
        \end{equation}
        \begin{equation}\label{eq:defQN1}
        \begin{split}
        \wt{\cQ}_{\cG_N} =& \;\sum_{p\in\L^*_+}b^*_pb_p \big[ 2\s_p^2 p^2 +\widehat{V}(p/N)\left(\g_p+\s_p\right)^2+\frac{2}{N}\g_p\s_p\sum_{q\in\L^*}\widehat{V}((p-q)/N) \eta_q\big]\\
        &+\sum_{p\in\L^*_+}\left(b^*_pb^*_{-p}+ b_pb_{-p}\right)      \\ &\hspace{1cm} \times     \big[p^2\s_p\g_p+\frac{1}{2}\widehat{V}(p/N)(\g_p+\s_p)^2 +\frac{1}{2N}\sum_{q\in\L^*}\widehat{V}((p-q)/N) \eta_q (\g_p^2+\s_p^2)\big]\\
        &-\frac{\cN_+}{N}\sum_{p\in\L^*_+}\big[p^2\s_p^2+\widehat{V}(p/N)\g_p\s_p+\frac{1}{N}\sum_{q\in\L^*_+}\widehat{V}((p-q)/N)\g_p\s_p\g_q\s_q\big]\\
        &+\frac{1}{N}\sum_{u\in\L^*_+}\big[(\g_u^2+\s_u^2)b^*_ub_u+\g_u\s_u(b^*_ub^*_{-u}+b_ub_{-u})\big]\\
        &\hspace{5.7cm}\times\sum_{p\in\L^*_+}\big[p^2\s_p^2+\frac{1}{N}\sum_{q\in\L^*_+}\widehat{V}((p-q)/N)\eta_p\eta_q\big]\\
                \end{split}
        \end{equation}
        and
        \begin{equation}\label{eq:defQN2}
        \begin{split}
        \cD_N =& \;  \sum_{p\in\Lambda_+^*}\Big[p^2\eta_pb_pd_{-p} + \frac{1}{2} \widehat{V}(p/N)b_pd_{-p}+\frac{1}{2N}\big(\widehat{V} (\cdot/N)\ast \eta \big)_p  b_pd_{-p} +\text{h.c.}\Big]\\
         &\;+ \frac{1}{2} \sum_{p\in\Lambda_+^*}  (\widehat{V} (./N) * \widehat{f}_{\ell,N}) (p) 
         \Big[(\gamma_p-1) b_pd_{-p} +\sigma_p b^*_{-p}d_{-p}+\text{h.c.}\Big]\\
        &\;+  \frac{1}{2} \sum_{p\in\Lambda_+^*}  (\widehat{V} (./N) * \widehat{f}_{\ell,N}) (p) 
       \Big[  \gamma_p  d_pb_{-p}+\sigma_p d_pb^*_{p}+\text{h.c.}\Big]\\
        \end{split}
        \end{equation}
with $\widehat{f}_{\ell,N}$ defined as in (\ref{eq:fellN}). Next, we analyse the operator $\cD_N$, which still contains $d$-operators, to extract the important contributions. To this end, we write $\cD_N = \text{D}_1 + \text{D}_2 + \text{D}_3$, where $\text{D}_1, \text{D}_2, \text{D}_3$ denote the operators on the first, second and, respectively, third line on the r.h.s. of (\ref{eq:defQN2}).

Using the relation (\ref{eq:eta-scat}) and the bound (\ref{eq:d-bds}), we find 
\[ \begin{split} 
|\langle \xi , \text{D}_1 \xi \rangle | \leq \; & \sum_{p \in \L^*_+}  |(\widehat{\chi}_\ell \ast \widehat{f}_{\ell,N} ) (p)| \| (\cN_+ + 1) \xi \| \| (\cN_+ + 1)^{-1/2} d_{-p} \xi \| \\ \leq \; &\frac{C}{N}  \| (\cN_+ + 1) \xi \| \sum_{p \in \L^*_+}  \left[ \frac{1}{p^4} \| (\cN_+ + 1) \xi \| + \frac{1}{p^2} \| b_p (\cN_+ + 1)^{1/2} \xi \| \right] \\ \leq \; & \frac{C}{N} \| (\cN_+ + 1) \xi \|^2 \end{split} \]
Similarly, using (\ref{eq:d-bds}), we find 
\[ \pm \text{D}_2 \leq C N^{-1} (\cN_+ + 1)^2 \]
Thus, we switch to $\text{D}_3$. We split $\text{D}_3 = \text{D}_{31} + \text{D}_{32} + \text{D}_{33}$, with 
\[\begin{split}
        \text{D}_{31} :=& \; \frac{1}{2}  \sum_{p\in\Lambda_+^*}  (\widehat{V} (./N) * \widehat{f}_{\ell,N}) (p)  \gamma_p  d_pb_{-p}+\text{h.c.} \\
         \text{D}_{32} :=& \;  \frac{1}{2}    \sum_{p\in\Lambda_+^*}  (\widehat{V} (./N) * \widehat{f}_{\ell,N}) (p)  (\sigma_p-\eta_p) d_pb^*_{p}+\text{h.c.} \\
        \text{D}_{33} :=& \;  \frac{1}{2}   \sum_{p\in\Lambda_+^*} (\widehat{V} (./N) * \widehat{f}_{\ell,N}) (p)  \eta_p  d_pb^*_{p}+\text{h.c.} \, 
        \end{split} \]
Switching to position space and using (\ref{eq:dxy-bds}), we observe that 
\[ \begin{split} 
| \langle \xi , \text{D}_{31} \xi \rangle | \leq \; &C \int dx dy N^3 V(N (x-y)) f_\ell (N(x-y))  \| (\cN_+ + 1) \xi \| \| (\cN_+ + 1)^{-1} \check{d}_x b ( \check{\gamma}_y) \xi \|  \\
\leq \; &C \int dx dy N^2  V(N(x-y)) f_\ell (N(x-y)) \| (\cN_+ + 1) \xi \| \\ &\hspace{.5cm} \times \left[ \| (\cN_+ + 1) \xi \| + \| \check{a}_x (\cN_+ + 1)^{1/2} \xi \| + \| \check{a}_y (\cN_+ + 1)^{1/2} \xi \| + \| \check{a}_x \check{a}_y \xi \| \right] \\
\leq \; &C N^{-1/2}\langle \xi , (\cV_N + \cN_+ + 1) ( \cN_+ + 1) \xi \rangle \end{split} \]
As for $\text{D}_{32}$, we can use the decay of $|\s_p - \eta_p| \leq C |p|^{-6}$ to prove that
\[ \pm \text{D}_{32} \leq C N^{-1} (\cN_+ + 1)^2 \]
We are left with $\text{D}_{33}$; here we cannot apply (\ref{eq:d-bds}) because of the lack of decay in $p$. This term contains contributions that are relevant in the limit of large $N$. To isolate these contributions, it is useful to rewrite the remainder operator $d_p$ as 
\begin{equation}\label{eq:dp-dec} \begin{split} 
d_p = \; &e^{-B(\eta)} b_p e^{B(\eta)} - \g_p b_p - \s_p b_{-p}^* \\ = \; &(1-\g_p) b_p - \s_p b_{-p}^* +\eta_p  \int_0^1 ds \, e^{-sB(\eta)} \frac{N-\cN_+}{N} b_{-p}^* e^{s B(\eta)} \\ &\hspace{4cm}- \frac{1}{N} \int_0^1 ds \sum_{q \in \L^*_+} \eta_q e^{-sB(\eta)} b_q^* a_{-q}^* a_p e^{sB(\eta)} \\ = \; & \eta_p \int_0^1 ds \, d^{(s)*}_{-p}  - \frac{\eta_p}{N} \int_0^1 ds \;e^{-s B(\eta)} \cN_+ b_{-p}^* e^{s B(\eta)}  \\ &\hspace{4cm} - \frac{1}{N} \int_0^1 ds \sum_{q \in \L^*_+} \eta_q e^{-sB(\eta)} b_q^* a_{-q}^* a_p e^{sB(\eta)} \end{split} \end{equation}
where, in the last step, we wrote $e^{-sB(\eta)} b_{-p}^* e^{s B(\eta)} = \g^{(s)}_p b_{-p}^* + \s_p^{(s)} b_p + d_{-p}^{(s)*}$ (the label $s$ indicates that the coefficients $\g_p^{(s)}$, $\s_p^{(s)}$ and the operator $d_{-p}^{(s)*}$ are defined with $\eta$ replaced by $s \eta$, for an $s \in [0;1]$) and we integrated $\g_p^{(s)}$ and $\s_p^{(s)}$ over $s \in [0;1]$. Inserting (\ref{eq:dp-dec}) into $\text{D}_{33}$ and using the additional factor $\eta_p$ appearing in the first two terms on the r.h.s. of (\ref{eq:dp-dec}), we conclude that 
\[ \text{D}_{33} = -\frac{1}{2N} \int_0^1 ds \,  \sum_{p,q \in \L^*_+} (\widehat{V} (./N) \ast \widehat{f}_{\ell,N}) (p) \eta_p \eta_q \Big[e^{-s B(\eta)} b_q^* a_{-q}^* a_p e^{s B(\eta)} b_p^* + \text{h.c.}\Big] + \wt{\cE}_1 \]
with an error operator $\wt{\cE}_1$ such that 
\begin{equation}\label{eq:err-wtce1} \pm \wt{\cE}_1 \leq C N^{-1} (\cN_+ + 1)^2 \end{equation}
We expand 
		\[ -  e^{-sB(\eta)} a^*_{-q} a_p  e^{sB(\eta)}= - a^*_{-q} a_p-\int_0^s dt\; e^{-tB(\eta)} \big(\eta_p b^*_{-q} b^*_{-p} +\eta_q b_{q} b_p \big) e^{tB(\eta)} \]
Again, the contribution containing the additional factor $\eta_p$ is small. Hence, we have 
\[\begin{split}
		\text{D}_{33} = & \;- \frac{1}{2N} \int_0^1ds \; \sum_{p,q\in\Lambda_+^*} \big(\widehat{V} (\cdot/N)\ast \widehat{f}_{\ell,N} \big) (p) \,\Big[  \eta_p\eta_q e^{-sB(\eta)} b^*_q e^{sB(\eta)}\\
		&\hspace{4.5cm}\times \big(a^*_{-q} a_pb^*_{p} + \int_0^s dt \; \eta_q e^{-tB(\eta)}b_{q} b_p e^{tB(\eta)}b^*_{p}\big)+\text{h.c.}\Big] + \wt{\cE}_2 \\
		\end{split}\]
where, similarly to (\ref{eq:err-wtce1}), $\pm \wt{\cE}_2 \leq CN^{-1} (\cN_+ + 1)^2$. In the contribution proportional to $a^*_{-q} a_p b_p^*$ we commute $b_p^*$ to the left. In the other term, we expand $e^{-tB(\eta)} b_p e^{t B(\eta)} = \g^{(t)}_p b_p + \s^{(t)}_p b_{-p}^* + d^{(t)}_p$ using the notation introduced after (\ref{eq:dp-dec}) and we commute the contribution $\g^{(t)}_p b_p$ to the right of $b_p^*$. We obtain $\text{D}_{33} = \text{D}_{331} + \text{D}_{332} + \wt{\cE}_2$, with 
		\begin{equation*}
        \begin{split}
        \text{D}_{331} :=& \; - \frac{1}{2N} \int_0^1ds\; \sum_{p,q\in\Lambda_+^*}
        \big(\widehat{V} (\cdot/N)\ast \widehat{f}_{\ell,N} \big)_p\eta_p\eta_q e^{-sB(\eta)} b^*_qe^{sB(\eta)} b^*_{-q} \\
        &\; - \frac{1}{2N} \int_0^1ds\int_0^s dt\; \sum_{p,q\in\Lambda_+^*}
        \big(\widehat{V} (\cdot/N)\ast \widehat{f}_{\ell,N}\big)_p\eta_p\eta_q^2 e^{-sB(\eta)} b^*_qe^{sB(\eta)} e^{-tB(\eta)} b_{q}e^{tB(\eta)}  \\ &\;+\text{h.c.} , \\
         \text{D}_{332} :=& \;- \frac{1}{2N} \int_0^1ds\; \sum_{p,q\in\Lambda_+^*} 
         \big(\widehat{V} (\cdot/N)\ast \widehat{f}_{\ell,N}\big)_p\eta_p\eta_q e^{-sB(\eta)} b^*_qe^{sB(\eta)} b^*_{-q}a^*_pa_{p}  \\
         &\;- \frac{1}{2N} \int_0^1ds\int_0^s dt\; \sum_{p,q\in\Lambda_+^*}
         \big(\widehat{V} (\cdot/N)\ast \widehat{f}_{\ell,N} \big)_p\eta_p\eta_q^2e^{-sB(\eta)} b^*_qe^{sB(\eta)} e^{-tB(\eta)} b_{q}e^{tB(\eta)} \\
         &\;\hspace{0.5cm}\times\Big[b^*_pb_p - N^{-1}\cN_+ -N^{-1}a^*_pa_p  +\big(\gamma_p^{(t)}-1\big)b_pb^*_p +\sigma_p^{(t)}b^*_{-p}b^*_p + d_p^{(t)}b^*_p \Big] +\text{h.c.}
        \end{split}
        \end{equation*}
Since $(\gamma_p^{(t)}-1)\leq C \eta_p$ and $ \sigma_p^{(t)} \leq C \eta_p$ , we can bound
		\begin{equation*}
\big|  \langle \xi, \text{D}_{332} \xi\rangle  \big| \leq \; CN^{-1} \langle \xi,( \cN_++1)^2\xi \rangle\end{equation*}
We are left with the operator $\text{D}_{331}$, which is quadratic in the $b$-fields. Expanding 
$e^{-sB(\eta)} b_q^* e^{sB(\eta)} = \g_q^{(s)} b_q^* + \s_q^{(s)} b_{-q} + d_q^{(s),*}$ and $e^{-tB(\eta)} b_q e^{tB(\eta)} = \g_q^{(t)} b_q + \s_q^{(t)} b_{-q}^* + d_q^{(t)}$ and using the bounds (\ref{eq:d-bds}), we obtain
		\begin{equation}\label{eq:D331}
		\begin{split}
		\text{D}_{331} = & \; - \frac{1}{2N} \int_0^1ds\; \sum_{p,q\in\Lambda_+^*} \big(\widehat{V} (\cdot/N)\ast \widehat{f}_{\ell,N} \big)_p\eta_p\eta_q \, \Big[\gamma_q^{(s)} b^*_q  b^*_{-q}+\sigma_q^{(s)} b^*_q  b_{q} + \sigma_{q}^{(s)} +\text{h.c.}\Big] \\
		& \; - \frac{1}{2N} \int_0^1ds\int_0^s dt\; \sum_{p,q\in\Lambda_+^*}
		\big(\widehat{V} (\cdot/N)\ast \widehat{f}_{\ell,N} \big)_p\eta_p\eta_q^2 \,
		\Big[\gamma_q^{(s)}\gamma_q^{(t)} b^*_q  b_{q}+\gamma_q^{(s)}\sigma_q^{(t)} b^*_q  b^*_{-q}  \\
		&\;\hspace{4cm}+\sigma_q^{(s)}\gamma_q^{(t)} b_q  b_{q}+\sigma_q^{(s)}\sigma_q^{(t)} b^*_q  b_{q}+\sigma_q^{(s)}\sigma_q^{(t)} +\text{h.c.}  \Big] + \wt{\cE}_2 
		\end{split}
		\end{equation}
where the operator $\wt{\cE}_2$ is such that $\pm \wt{\cE}_2 \leq C N^{-1} (\cN_+ + 1)^2$.
Integrating (\ref{eq:D331}) over $t$ and $s$, we conclude that 
		\begin{equation}\label{eq:cQN2final}\begin{split}
		 \cD_N =&\;  - \frac{1}{2N}\sum_{p\in\Lambda^*,q\in\Lambda_+^*}\big(\widehat{V} (\cdot/N)\ast \widehat{f}_{\ell,N}\big) (p) \, \eta_p\Big[\gamma_q\sigma_q \big(b^*_q  b^*_{-q} +b_q  b_{-q}\big)+(\s^2_q+\g^2_q)b^*_q  b_{q}+\sigma_q^2 \Big] \\
		 &+ \frac{1}{2N}\sum_{p\in\Lambda^*,q\in\Lambda_+^*}\big(\widehat{V} (\cdot/N)\ast \widehat{f}_{\ell,N}\big) (p) \, \eta_p\,b^*_q  b_{q}  +\wt{\cE}_3  
		 \end{split}\end{equation}
where the error $\wt{\cE}_3$ satisfies 
		\begin{equation*} 
		\pm\, \wt{\cE}_3 \leq C N^{-1/2} ( \cV_N+\cN_++1)( \cN_++1) \end{equation*}
Notice that, in (\ref{eq:cQN2final}), we are summing also over the mode $p=0$ (this contribution is small, it can be inserted in the operator $\wt{\cE}_3$).   

Inserting (\ref{eq:cQN2final}) into (\ref{eq:GNpf1}), we obtain the decomposition (\ref{eq:deco2-GN}) of the Hamiltonian $\cG_N$. In fact, combining (\ref{eq:defC0}) with the constant contribution in (\ref{eq:cQN2final}), we find
\begin{equation}\label{eq:comp-const} \begin{split} 
\wt{C}_{\cG_N} - &\frac{1}{2N}\sum_{p\in\Lambda^*,q\in\Lambda_+^*}\big(\widehat{V} (\cdot/N)\ast \widehat{f}_{\ell,N}\big) (p) \, \eta_p \sigma_q^2
\\ = \; &\frac{(N-1)}{2}  \widehat{V} (0) +\sum_{p\in\L^*_+}\Big[p^2 \sigma_p^2 +\widehat{V}(p/N)\left(\s_p\g_p +\s_p^2\right) \Big] \\ & +\frac{1}{2N} \sum_{q\in\Lambda_+^*}\widehat{V} ((p-q)/N)\sigma_q\gamma_q\sigma_p\gamma_p +\frac{1}{N} \sum_{p \in \L_+^*} \left[ p^2 \eta_p^2 + \frac{1}{2N} (\widehat{V} (./N) \ast \eta) (p) \eta_p \right]  \\ &
+\frac{1}{N}\sum_{u \in\L^*_+} \s_u^2 \sum_{p\in\L^*_+}\left[p^2\eta_p^2 - \frac{1}{2} \widehat{V} (p/N) \eta_p + \frac{1}{2N}\sum_{q\in\L^*_+} (\widehat{V} (./N)\ast \eta) (p) \, \eta_p \right]
+ \cO (N^{-1})   \end{split} \end{equation}
(the error $\cO(N^{-1})$ arises from the substitution $\s_p \to \eta_p$ in the terms appearing 
at the end of the third and on the fourth line). Using the relation (\ref{eq:eta-scat0}), we have 
\begin{equation}\label{eq:rel-fin7} p^2 \eta_p^2 -  \frac{1}{2} \widehat{V} (p/N) \eta_p + \frac{1}{2N}\sum_{q\in\L^*_+} (\widehat{V}\ast \eta) (p) \, \eta_p =-  \widehat{V} (p/N) \eta_p + N^3 \lambda_\ell (\widehat{\chi}_\ell * \widehat{f}_{\ell,N}) (p) \eta_p \end{equation}
Since $N^3 \lambda_\ell = \cO (1)$ and $\| (\widehat{\chi}_\ell * \widehat{f}_{\ell,N}) \eta \|_1 \leq \| \widehat{\chi}_\ell * \widehat{f}_{\ell,N} \|_2 \| \eta \|_2 = \| \chi_\ell f_\ell \|_2 \| \eta \|_2 \leq \| \chi_\ell \|_2 \| \eta \|_2 \leq C$, uniformly in $N$, we conclude that the r.h.s. (\ref{eq:comp-const}) coincides with (\ref{eq:CN}), up to errors of order $N^{-1}$. Similarly, combining the quadratic term (\ref{eq:defQN1}) with the quadratic terms on the r.h.s. of (\ref{eq:cQN2final}) (and using again the relation (\ref{eq:rel-fin7}), we obtain (\ref{eq:QN}), up to terms that can be incorporated in the error. We omit these last  details. 


\section{Analysis of the excitation Hamiltonian $\cJ_N $}
\label{sec:wtGN}

In this section we analyse the excitation Hamiltonian
\begin{equation*}
 \cJ_N = e^{-A}e^{-B(\eta)} U_N H_N U^*_N e^{B(\eta)} e^{A} = e^{-A} \cG_N e^A 
\end{equation*}
to show Proposition  \ref{thm:tGNo1}. The starting point is part b) of Proposition \ref{prop:gene}, stating that 
\[
\cG_N = C_{\cG_N} + \cQ_{\cG_N} + \cC_N + \cH_N + \cE_{\cG_N} \,,
\]
where $C_{\cG_N}$, $\cQ_{\cG_N}$ and $\cC_N$ are  defined  in \eqref{eq:CN}, \eqref{eq:QN} and \eqref{eq:cCN}   and the error term $\cE_{\cG_N}$ is such that  
\begin{equation*} 
 \pm \cE_{\cG_N} \leq \frac C {\sqrt{N}}\; \big[ (\cN_++1)(\cH_N +1) + (\cN_++1)^3 \big]\,.
 \end{equation*}
From Proposition \ref{prop:cN} and Proposition \ref{prop:cNcH} we conclude that 
\[ \pm e^{-A} \cE_{\cG_N} e^A \leq \frac{C}{\sqrt{N}} \big[ (\cN_++1)(\cH_N +1) + (\cN_++1)^3 \big] \]
In the following sections we study the action of $e^A$ on $\cQ_{\cG_N}$, $\cC_N$ and $\cH_N$ separately. At the end, in Section \ref{sec:pf-thmA}, we combine these results to prove Theorem \ref{thm:tGNo1}.

\subsection{Analysis of $e^{-A} \cQ_{\cG_N} e^A$.}

The action of $e^A$ on the quadratic operator $\cQ_{\cG_N}$ defined in Prop. \ref{prop:gene} is determined by the next proposition. 
\begin{prop} \label{prop:SQS}
Let  $A$ and $\cQ_{\cG_N}$ be defined as in \eqref{eq:defA} and, respectively, \eqref{eq:QN}. Then, under the assumptions of Proposition~\ref{thm:tGNo1}, we have
\[ \begin{split}  
 e^{-A} \cQ_{\cG_N} \, e^A & = \cQ_{\cG_N}   + \cE^{(\cQ)}_{N} \,,
\end{split}\]
where the error $\cE_{N}^{(\cQ)}$ is such that, for a constant $C>0$, 
\[
 \begin{split} 
\pm\, \cE_{N}^{(\cQ)} &\leq  \frac C {\sqrt N}\, (\cN_++1)^2
\end{split}
\]
\end{prop}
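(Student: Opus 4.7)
The strategy rests on Duhamel's formula,
$$ e^{-A} \cQ_{\cG_N} e^A - \cQ_{\cG_N} = \int_0^1 ds \, e^{-sA} [\cQ_{\cG_N}, A] e^{sA}, $$
so the task reduces to showing that $|\langle \xi, [\cQ_{\cG_N}, A] \xi\rangle| \leq C N^{-1/2} \langle \xi, (\cN_++1)^2 \xi\rangle$ for all $\xi \in \cF_+^{\leq N}$. The factor $e^{sA}$ is then removed using the growth estimate $e^{-sA}(\cN_++1)^2 e^{sA} \leq C (\cN_++1)^2$ from Proposition \ref{prop:cN}, which preserves the bound uniformly in $s \in [0,1]$.

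The commutator $[\cQ_{\cG_N}, A]$ is computed using the commutation relations (\ref{eq:comm-bp}) and the explicit form of $A$. Since $\cQ_{\cG_N}$ is quadratic with coefficients $\Phi_p, \Gamma_p$ and $A$ is cubic in the $b$-fields, each canonical contraction reduces the degree by one and produces a cubic operator; the non-canonical corrections in (\ref{eq:comm-bp}), proportional to $\cN_+/N$ or $a^*_q a_p/N$, carry an additional factor $1/N$ and are of lower order. The leading cubic contributions have the schematic form
$$ \frac{1}{\sqrt N} \sum_{r \in P_H,\, v \in P_L} \eta_r \, (\gamma_v \text{ or } \sigma_v) \, \Xi_{\alpha(r,v)} \, b^{\sharp_1}_{\beta_1} b^{\sharp_2}_{\beta_2} b^{\sharp_3}_{\beta_3}, $$
with $\Xi \in \{\Phi, \Gamma\}$ evaluated at one of $\pm(r+v), \mp r, \pm v$ (the three choices arising from deciding which leg of $A$ is contracted), and indices $\beta_i$ drawn from the same set.

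To control such a cubic term $M$, I will pair the three $b$-operators into a quadratic times a linear piece and apply Cauchy-Schwarz twice. For instance, for an $A_\gamma$-type contribution of the form $M = \sum_{r,v} C_{r,v}\, b^*_{r+v} b^*_{-r} b_v$,
$$ |\langle \xi, M \xi\rangle| \leq \Big(\sum_{r,v} \|b_{-r} b_{r+v}\xi\|^2\Big)^{1/2} \Big(\sum_{r,v} |C_{r,v}|^2 \|b_v \xi\|^2\Big)^{1/2}. $$
The first factor is bounded by $\|(\cN_+ + 1)\xi\|$ via $\sum_{r,p}\|a_{-r} a_p \xi\|^2 \leq \langle \xi, \cN_+^2 \xi\rangle$. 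For the second, the key ingredients are the uniform bounds $|\Phi_p|, |\Gamma_p| \leq C$, which follow from (\ref{eq:phi})--(\ref{eq:gamma}) after using the scattering equation (\ref{eq:eta-scat}) to control the convolution $N^{-1}(\widehat V(\cdot/N)*\eta)(p)$, together with $\|\eta\|_2 \leq C$ from (\ref{eq:ellbdeta}) and $|\gamma_v| \leq C$. This yields $\|C\|_2^2 \leq C/N$, so that the second factor is bounded by $C N^{-1/2} \|(\cN_++1)^{1/2}\xi\|$, and hence $|\langle \xi, M \xi\rangle| \leq C N^{-1/2} \langle \xi, (\cN_++1)^2\xi\rangle$. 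The $A_\sigma$-type contributions with $b^*_{-v}$ in place of $b_v$ are treated analogously, this time exploiting the decay $|\sigma_v| \leq C|v|^{-2}$ (so that $\|\sigma\|_2 \leq C$) to obtain the same $\ell^2$-bound on the coefficient tensor.

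The main difficulty is combinatorial bookkeeping: the commutator $[\cQ_{\cG_N}, A]$ unpacks into a large number of cubic summands, coming from commuting with both $b^*_p b_p$ and $b^*_p b^*_{-p} + b_p b_{-p}$, and from the four pieces $A_\sigma, A_\gamma, A_\sigma^*, A_\gamma^*$. The delicate point in each case is choosing the Cauchy-Schwarz pairing so that the slow decay of $\gamma_v$ (not $\ell^2$-summable by itself) is compensated by an accompanying $b_v$-factor producing a $\cN_+$, while the $\ell^2$-summable $\sigma_v$ is used precisely for those terms containing $b^*_{-v}$ (which cannot furnish a $\cN_+$). The non-canonical commutator corrections, carrying an extra $1/N$, and the pieces arising from the commutators with the off-diagonal quadratic terms $b^*_p b^*_{-p}, b_p b_{-p}$ are bounded by the same method with easier estimates, and all fit within $C N^{-1/2}(\cN_++1)^2$.
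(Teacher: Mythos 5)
Your overall route---Duhamel's formula, removing $e^{sA}$ via Proposition~\ref{prop:cN}, expanding $[\cQ_{\cG_N},A]$ into cubic pieces and estimating them one by one with Cauchy--Schwarz pairings---is the same one the paper uses (Proposition~\ref{prop:SQS} reduces to Lemma~\ref{lm:commQA} plus Proposition~\ref{prop:cN}). However, there is a concrete gap in the ingredient you claim to use: you assert that the uniform bounds $|\Phi_p|\leq C$ and $|\Gamma_p|\leq C$ suffice. For the diagonal part $\sum_p \Phi_p b_p^* b_p$ boundedness of $\Phi$ is indeed enough, but for the off-diagonal part $\tfrac12\sum_p \Gamma_p(b_p^*b_{-p}^*+b_pb_{-p})$ one genuinely needs $\Gamma\in\ell^2(\L_+^*)$, and the paper proves and uses the stronger estimate $|\Gamma_p|\leq C|p|^{-2}$.

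To see why boundedness fails, consider the pure triple-annihilation contribution arising from $[b_pb_{-p},A]$ combined with the annihilation part of $A_\gamma$: it has the schematic form
\[
\frac{1}{\sqrt N}\sum_{r\in P_H,\,v\in P_L}\Gamma_v\,\eta_r\,\gamma_v\; b_{r+v}\,b_{-r}\,b_{-v}\,,
\]
which is $\Upsilon_7$ in Lemma~\ref{lm:commQA}. Here there is no companion $\sigma_v$-factor and no $b_v^*$ that could be traded for $(\cN_++1)$; the only object that can carry the decay in $v$ is $\Gamma_v$ itself. Any admissible Cauchy--Schwarz pairing---e.g.\ $\langle b_{r+v}^*\xi,\,b_{-r}b_{-v}\xi\rangle$---leaves one of the two factors containing $\sum_v|\Gamma_v|^2$, so with only $\|\Gamma\|_\infty\leq C$ the $v$-sum diverges like $|P_L|\sim N^{3/2}$ and the resulting bound is not $C N^{-1/2}(\cN_++1)^2$. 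The same issue appears in $\Upsilon_1$ (coefficient $\Gamma_{r+v}$) and in the quintic corrections $\Upsilon_8,\Upsilon_9$. The correct decay $|\Gamma_p|\leq C|p|^{-2}$ does not follow from simply ``controlling the convolution''; it requires the precise cancellation supplied by the scattering equation \eqref{eq:eta-scat}, exactly as in the proof of Lemma~\ref{lm:FpGp}\,ii) for $G_p$: the non-decaying combination $2p^2\sigma_p\gamma_p+\widehat V(p/N)+N^{-1}(\widehat V(\cdot/N)\ast\eta)_p$ is rewritten via \eqref{eq:eta-scat} in terms of $N^3\lambda_\ell\widehat\chi_\ell(p)+N^2\lambda_\ell(\widehat\chi_\ell\ast\eta)_p$, which is $O(|p|^{-2})$. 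Once this is established, your bookkeeping goes through and matches the paper; as written, however, the proposal does not close.
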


To prove the proposition we use the following lemma. 
\begin{lemma} \label{lm:commQA} Let A be defined as in \eqref{eq:defA} and let $\Phi_p$ and $\G_p$ such that $|\Phi_p| \leq C$ and $|\G_p| \leq C |p|^{-2}$. Then, under the assumptions of Proposition~\ref{thm:tGNo1}, 
\begin{align}
\pm \, \sum_{p\in\L^*_+} \Phi_p\;  [b^*_p b_p , A]  \leq  \frac C {\sqrt N}\, (\cN_++1)^2   \label{eq:commQAd} \\
\pm \, \sum_{p\in\L^*_+} \G_p\;  [(b_p b_{-p} + b^*_p b^*_{-p} ), A]  \leq  \frac C {\sqrt N}\, (\cN_++1)^2  \label{eq:commQAnd}
\end{align}
\end{lemma}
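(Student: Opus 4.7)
The plan is to compute both commutators using the modified canonical commutation relations (\ref{eq:comm-bp}) and to bound the resulting cubic expressions in $b$-fields by Cauchy--Schwarz in momentum space, exploiting the square-integrability of $\eta$ and $\sigma$ together with the boundedness of $\Phi_p$ and the $|p|^{-2}$-decay of $\Gamma_p$.

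First I would decompose $A = A_\sigma + A_\gamma - A_\sigma^* - A_\gamma^*$ as in (\ref{eq:defA}) so that, by linearity and by taking adjoints, it suffices to estimate four separate commutators for each of the two bounds. For (\ref{eq:commQAd}), the commutator $[b_p^* b_p, A_\sigma]$ expands, up to remainders of order $N^{-1}$ arising from the non-canonical part of (\ref{eq:comm-bp}), as
\[ \sum_{p\in\L^*_+} \Phi_p\, [b_p^* b_p, A_\sigma] = \frac{1}{\sqrt N} \sum_{r\in P_H,\,v\in P_L} \bigl(\Phi_{r+v}+\Phi_{-r}+\Phi_{-v}\bigr)\,\eta_r\sigma_v\,b^*_{r+v}b^*_{-r}b^*_{-v} + R_\sigma, \]
with an analogous identity for $A_\gamma$ in which $\sigma_v b^*_{-v}$ is replaced by $\gamma_v b_v$. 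Each resulting cubic operator I would then estimate in quadratic form: using $|\Phi_p|\leq C$, Cauchy--Schwarz gives
\[ \Bigl|\sum_{r,v} \Phi_{r+v}\eta_r\sigma_v \langle\xi, b^*_{r+v}b^*_{-r}b^*_{-v}\xi\rangle\Bigr| \leq C\Bigl(\sum_{r,v}\|b_{r+v}b_{-r}\xi\|^2\Bigr)^{1/2}\Bigl(\sum_{r,v}|\eta_r|^2|\sigma_v|^2\|b_{-v}\xi\|^2\Bigr)^{1/2}, \]
and the two factors I would control by $\|\cN_+\xi\|$ (via $\sum_q\|b_q\zeta\|^2\leq \|\cN_+^{1/2}\zeta\|^2$) and by $C\|(\cN_++1)^{1/2}\xi\|$ respectively, using $\|\eta\|_2,\|\sigma\|_2\leq C$. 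For the $A_\gamma$-contribution, since $\gamma\notin\ell^2$, I would regroup the Cauchy--Schwarz so that the $\ell^2$-norm $\|\eta\|_2$ carries the summation over $r$ while $\sum_v\|b_v\xi\|^2\leq\|\cN_+^{1/2}\xi\|^2$ handles the $v$-sum. The resulting bound $C N^{-1/2}\|(\cN_++1)\xi\|^2$ is the same in both cases.

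For (\ref{eq:commQAnd}) the structure is analogous but more delicate, since $[b_p b_{-p}+b_p^* b_{-p}^*,\,A_{\sigma/\gamma}]$ can in principle produce both single-contraction cubic terms and double-contraction linear terms. The key simplification I would exploit is that all double-contraction contributions vanish: they would require simultaneously matching two of the three indices $r+v,-r,-v$ of $A_{\sigma/\gamma}$ with $\{p,-p\}$, which forces either $r=0$, $v=0$, or $|r|=|v|$; all three cases are excluded by $r\in P_H$, $v\in P_L$ and $P_H\cap P_L=\emptyset$. Only the single-contraction cubic terms then survive, and they can be bounded exactly as above; the extra decay $|\Gamma_p|\leq C|p|^{-2}$ is used to absorb the weight at the surviving momentum $p\in\{\pm(r+v),\pm r,\pm v\}$ into one of the Cauchy--Schwarz factors without losing an additional power of $N$.

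The hard part will not be any single estimate but rather the bookkeeping of the remainder contributions: the correction $-N^{-1}a_q^*a_p$ in (\ref{eq:comm-bp}) and the reorderings needed to bring products of three $b/b^*$-operators into normal form each produce many subleading pieces. All of them, however, come with an explicit factor $N^{-1}$ or $(\cN_++1)/N$ relative to the main cubic contributions, and therefore fit comfortably within the final bound $C N^{-1/2}(\cN_++1)^2$.
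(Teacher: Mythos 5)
Your proposal follows essentially the same route as the paper's proof: decompose $A=A_\sigma+A_\gamma-\hc$, expand the commutator via the modified CCR (\ref{eq:comm-bp}), and bound the resulting cubic contraction terms by Cauchy--Schwarz in $r,v$ using $\|\eta\|_2,\|\sigma\|_2<\infty$, $|\Phi_p|\leq C$, $\Gamma\in\ell^2(\L^*_+)$, and $\sum_q\|b_q\zeta\|^2\leq\|\cN_+^{1/2}\zeta\|^2$; your observation that double-contraction contributions to $[(b_pb_{-p}+b^*_pb^*_{-p}),A]$ vanish because $r\in P_H$, $v\in P_L$ and $P_H\cap P_L=\emptyset$ is correct and nicely explains why no constant terms appear.

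One point worth sharpening: the remainders coming from the non-canonical piece $-N^{-1}a_q^*a_p$ of (\ref{eq:comm-bp}) are not cubic operators carrying a harmless extra factor $N^{-1}$ --- they are \emph{quintic} operators that retain the full free momentum sum $\sum_{p\in\L^*_+}$ (this is the origin of the paper's terms $\D_5,\dots,\D_8$ and $\Upsilon_8,\Upsilon_9$). Controlling them is not just a matter of ``factor $N^{-1}$''; one must close the extra $p$-sum via $\sum_p a_p^*a_p=\cN_+$, and it is only after using $\cN_+\leq N$ on $\cF_+^{\leq N}$ that the net relative gain $\cN_+/N\leq 1$ emerges. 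In addition, for terms like $\D_5$ one first has to reorder (move $b^*_{-r}$ past $a_{r+v}^*a_p$) before the Cauchy--Schwarz closes, which produces a further lower-order cubic piece. Your bookkeeping remark implicitly covers this, but the dismissal as ``subleading with an explicit $N^{-1}$'' slightly understates what has to happen. Also note a small slip in your Cauchy--Schwarz display: for the $A_\sigma$ contribution the inner product produces $\|b^*_{-v}\xi\|$ rather than $\|b_{-v}\xi\|$ (though both are $\leq\|(\cN_++1)^{1/2}\xi\|$, so the conclusion is unaffected).
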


\begin{proof}[Proof of Proposition \ref{prop:SQS}]  We write
\[  \begin{split} 
 e^{-A} \cQ_{\cG_N} \, e^{A}= \cQ_{\cG_N} + \int_0^1 ds \, e^{-s A}\, [\cQ_N, A] e^{s A} \,. 
\end{split}\]
We recall the expression for $\cQ_{\cG_N}$ in \eqref{eq:QN}, where the coefficients of the diagonal and off-diagonal terms are given by $\Phi_p$ in \eqref{eq:phi} and, respectively, by $\Gamma_p$ in \eqref{eq:gamma}. With \eqref{eq:eta-scat}, one can show that $|\Phi_p|\leq C$ and $|\Gamma_p|\leq C |p|^{-2}$. Hence, Proposition \ref{prop:SQS} follows from Lemma \ref{lm:commQA} and Proposition~\ref{prop:cN}. 
\end{proof}

\begin{proof}[Proof of  Lemma \ref{lm:commQA}.]  We start from the proof of \eqref{eq:commQAd}. We use the formula
\be \begin{split}  \label{bpA}
& [b_p^*, A] \\
&=  \frac1{\sqrt N}\sum_{\substack{r \in P_H \\ v \in P_L}}\eta_r \, \Big[ \; (\g_v b_v^* + \s_v b_{-v}) b_{-r} \Big(1-\frac {\cN_+} N \Big)   \d_{p,r+v} \\[-0.5cm]
&  \hskip3cm +(\g_v b_v^* + \s_v b_{-v}) b_{r+v}\Big(1-\frac {\cN_+-1} N \Big)  \d_{p,-r} \\
&  \hskip3cm + \s_v\Big(1-\frac {\cN_+} N \Big) b_{r+v}b_{-r}  \d_{p,-v}  - \g_v  b^*_{r+v}b^*_{-r} \Big(1-\frac {\cN_+} N \Big) \d_{p,v}\; \Big]   \\[0.2cm]
&\hspace{0.5cm}- \frac1{N \sqrt N} \sum_{\substack{r \in P_H \\ v \in P_L }}\eta_r \big[\; (\g_v b_v^* + \s_v b_{-v})  ( b_{-r} a^*_p a_{r+v} + a^*_p a_{-r} b_{r+v})    \\[-0.5cm]
& \hskip 4.2cm + \s_v a^*_p a_{-v} b_{-r}b_{r+v}  -\g_v b^*_{r+v} b^*_{-r} a^*_p a_v \; \big] 
\end{split} \ee
and the fact that $[b_p, A]= [b^*_p, A]^*$ to compute $[b^*_p b_p, A]$. We get:
\[
\sum_{p\in\L^*_+} \Phi_p\;  [b^*_p b_p , A]  = \sum_{j=1}^8 \D_j+\hc
\]
where
\[\begin{split}
\D_{1} =\; & \frac 1 {\sqrt N} \sum_{\substack{r \in P_H,\, v \in P_L }} \Phi_{r+v} \eta_r \left(1-\frac{\cN_+-1} N\right) b^*_{r+v} b^*_{-r} (\g_v b_v +\s_v b^*_{-v})  \\
\D_{2} =\; & \frac 1 {\sqrt N} \sum_{\substack{r \in P_H,\, v \in P_L }} \Phi_{r} \eta_r \left(1-\frac{\cN_+-2} N\right) b^*_{-r} b^*_{r+v} (\g_v b_v +\s_v b^*_{-v}),  \\
\D_{3} =\; & \frac 1 {\sqrt N} \sum_{\substack{r \in P_H,\, v \in P_L}} \Phi_{v} \eta_r  \left(1-\frac{\cN_+-3} N\right)  b^*_{r+v} b^*_{-r}   \s_v  b^*_{-v}  \\
\D_{4} =\; & - \frac 1 {\sqrt N} \sum_{\substack{r \in P_H,\, v \in P_L }} \Phi_{v} \eta_r  \left(1-\frac{\cN_+-2} N\right)     b^*_{-r} b^*_{r+v} \g_vb_{v}    \\
\D_{5} =\; &- \frac 1 {N^{3/2}}  \sum_{p \in \L_+^*} \sum_{\substack{r \in P_H,\, v \in P_L }}  \Phi_p  \eta_r\, b^*_p a^*_{r+v} a_p b^*_{-r} (\g_v b_v +\s_v b^*_{-v})  \\
\D_{6} =\; & - \frac 1 {N^{3/2}}  \sum_{p \in \L_+^*} \sum_{\substack{r \in P_H,\, v \in P_L }}  \Phi_p  \eta_r\, b^*_p b^*_{r+v}a^*_{-r}a_p (\g_v b_v +\s_v b^*_{-v})\\
\D_{7} =\; & - \frac 1 {N^{3/2}}  \sum_{p \in \L_+^*} \sum_{\substack{r \in P_H,\, v \in P_L }}  \Phi_p  \eta_r \s_v\,  b^*_p b^*_{r+v} b^*_{-r} a^*_{-v} a_p  \\
\D_{8} =\; & \frac 1 {N^{3/2}}  \sum_{p \in \L_+^*} \sum_{\substack{r \in P_H,\, v \in P_L }}  \Phi_p  \eta_r \g_v\, b^*_p a^*_v a_p b_{-r} b_{r+v} 
 \end{split}
\]
Using $|\Phi_p| \leq C$, $|\eta_r| \leq C |r|^{-2}$, $|\s_v| \leq C |v|^{-2}$, we estimate, for any $\xi \in \cF_+^{\leq N}$, 
\[
 \begin{split}
  |\langle\xi,\D_1\xi\rangle|&\leq \frac C {\sqrt N} \sum_{\substack{r \in P_H,\, v \in P_L }}|\eta_r|\|b_{-r}b_{r+v}\xi\|\big( |\s_v| \| (\cN_+ + 1)^{1/2} \xi \|+\|b_v\xi\|\big)\\
  &\leq \frac C {\sqrt N} \|(\cN_++1)\xi\|^2
 \end{split}
\]
The terms $\D_j$ with $j=2,3,4$ are bounded in a similar way.  To bound $\D_5$ we first move $b^*_{-r}$ to the left, obtaining
\[ \begin{split}
\D_5=&- \frac 1 {N^{3/2}} \sum_{p\in\L^*_+}  \sum_{\substack{r \in P_H,\, v \in P_L }} \Phi_p \eta_r\, b^*_pb^*_{-r} a^*_{r+v} a_p (\g_v b_v +\s_v b^*_{-v}) \\
&- \frac 1 {N^{3/2}}\sum_{\substack{r \in P_H,\, v \in P_L}}\Phi_r\eta_r b^*_{-r}b^*_{r+v} (\g_v b_v +\s_v b^*_{-v}) =\D_5^{(1)}+\D_5^{(2)} 
\end{split}\]
The cubic term $\D_5^{(2)} $ can be estimated similarly as $\D_{1}$, while
\[
 \begin{split}
  |\langle\xi,\D_5^{(1)}\xi\rangle|&\leq  \frac C {N^{3/2}} \sum_{p\in\L^*_+}  \sum_{\substack{r \in P_H,\, v \in P_L }} |\eta_r|\|a_{r+v}b_{-r}b_p\xi\|\|a_p(\g_vb_v+\s_vb^*_{-v})\xi\|\\
  &\leq  \frac C {N} \Big(  \frac 1 N \sum_{p\in\L^*_+}  \sum_{\substack{r \in P_H,\, v \in P_L }}  \|a_{r+v}b_{-r}b_p\xi\|^2\Big)^{1/2}\\
  & \hskip 2cm \times \Big(\sum_{p\in\L^*_+}  \sum_{\substack{r \in P_H,\, v \in P_L }} |\eta_r|^2 \left[ \|a_p b_v \xi \|^2 + |\s_v|^2 \| a_p b^*_{-v} \xi\|^2  \right] \Big)^{1/2} \\
  &\leq \frac CN \|(\cN_++1)\xi\|^2\,.
 \end{split}
\]
(In the last step, to bound the contribution proportional to $\| a_p b_{-v}^* \xi \|$, we first estimated the sum over $r,p$ with fixed $v$ by $|\s_v|^2 \| \cN_+^{1/2} b_{-v}^* \xi \|^2 \leq |\s_v|^2 \| (\cN_+ +1) \xi \|$ and then we summed over $v$). The terms $\D_j$ with $j=6,7,8$ can be treated as $\D_5^{(1)}$.  Hence for all $j=1, \ldots, 8$ we have
\[
\pm\, \big( \D_j + \hc \big) \leq \frac C {\sqrt N}\, (\cN_++1)^2 \, . 
\]
This concludes the proof of \eqref{eq:commQAd}.

To show \eqref{eq:commQAnd} we use \eqref{bpA} and its conjugate to compute
\[\begin{split}
\sum_{p\in\L^*_+} \Gamma_p\; &  [(b_p b_{-p} +b^*_p b^*_{-p}), A]  = \sum_{j=1}^9 \Upsilon_j +\hc
\end{split}\]
where
\[\begin{split}
\Upsilon_1  =\; & \frac 1 {\sqrt N} \sum_{\substack{r \in P_H,\, v \in P_L}} \Gamma_{r+v}\eta_r\,  \left(1-\frac{\cN_++1} N\right) (b^*_{-r} b_{-r-v} - \frac 1 N a^*_{-r} a_{-r-v}) (\g_v b_v +\s_v b^*_{-v})\\
\Upsilon_2  =\; &   \frac 1 {\sqrt N} \sum_{\substack{r \in P_H,\, v \in P_L}} \Gamma_{r+v}  \eta_r\, \left(1-\frac{\cN_+} N\right) b^*_{-r} (\g_v b_v +\s_v b^*_{-v}) b_{-r-v} \\
\Upsilon_3  =\; & \frac 1 {\sqrt N} \sum_{\substack{r \in P_H,\, v \in P_L }}  \Gamma_{r}  \eta_r \left(1-\frac{\cN_+} N\right) (b^*_{r+v} b_r - \frac 1 N a^*_{r+v}a_r)  (\g_v b_v +\s_v b^*_{-v}) \\
\Upsilon_4  =\; & \frac 1 {\sqrt N} \sum_{\substack{r \in P_H,\, v \in P_L }}  \Gamma_{r}  \eta_r  \left(1-\frac{\cN_+-1} N\right) b^*_{r+v}  (\g_v b_v +\s_v b^*_{-v})b_{r} \\
\Upsilon_5  =\; &  \frac 1 {\sqrt N} \sum_{\substack{r \in P_H,\, v \in P_L }} \Gamma_{v} \eta_r \s_v \,  \left(1-\frac{\cN_+-1} N\right) ( b^*_{r+v} b_v - \frac 1 N a^*_{r+v}a_v)b^*_{-r}   \\
\Upsilon_6 =\; &   \frac 1 {\sqrt N} \sum_{\substack{r \in P_H,\, v \in P_L }} \Gamma_{v} \eta_r \s_v  \left(1-\frac{\cN_+-2} N\right)  b^*_{r+v} b^*_{-r}   b_{v} \\
\end{split}\]
and
\[\begin{split}
\Upsilon_7  =\; &  - \frac 1 {\sqrt N} \sum_{\substack{r \in P_H,\, v \in P_L }} \Gamma_{v} \eta_r \g_v  \left[\left(1-\frac{\cN_+} N\right) +\left(1-\frac{\cN_++1} N\right)  \right]   b_{r+v} b_{-r}b_{-v}  \\
\Upsilon_8  =\; & - \frac 1 {N\sqrt N} \sum_{p\in\L^*_+}  \sum_{\substack{r \in P_H,\, v \in P_L }}  \Gamma_p \eta_r  \big[\;  b_p (a^*_{r+v} a_{-p} b^*_{-r} + b^*_{r+v}a^*_{-r}a_{-p})(\g_v b_v +\s_v b^*_{-v})  \\[-0.5cm]
& \hskip 5.5cm + \s_v b_p b^*_{r+v} b^*_{-r} a^*_{-v} a_{-p} - \g_v b_p a^*_v a_{-p} b_{-r} b_{r+v}\; \big]  \\
\Upsilon_9  =\; &- \frac 1 {N\sqrt N} \sum_{p\in\L^*_+}  \sum_{\substack{r \in P_H,\, v \in P_L }} \Gamma_p\eta_r  \big[ \; (a^*_{r+v} a_p b^*_{-r} + b^*_{r+v}a^*_{-r}a_p)(\g_v b_v +\s_v b^*_{-v}) b_{-p}  \\[-0.5cm]
& \hskip 5.5cm + \s_v  b^*_{r+v} b^*_{-r} a^*_{-v} a_p b_{-p} - \g_v  a^*_v a_p b_{-r} b_{r+v} b_{-p} \;\big]   \\
\end{split}\]
We show now that for all $j=1,\ldots, 9$ we have, on $\cF_+^{\leq N}$, 
\be \label{eq:commQAnd2}
\pm\, \big( \Upsilon_j + \hc \big) \leq \frac C {\sqrt N}\, (\cN_++1)^2 
\ee
We observe that 
\[
 \begin{split}
  |\langle\xi,\Upsilon_{1} \xi\rangle|&\leq \frac 1 {\sqrt N}\sum_{\substack{r \in P_H,\, v \in P_L }}|\Gamma_{r+v}||\eta_r|  \|a_{-r}\xi\| \| a_{-r-v} (\g_v b_v + \s_v b^*_v) \xi\|   \\
  &\leq \frac C {\sqrt N} \Big(\sum_{\substack{r \in P_H,\, v \in P_L }}|\Gamma_{r+v}|^2  \|a_{-r}\xi\|^2  \Big)^{1/2}    \\ &\hspace{2cm} \times \Big(\sum_{\substack{r \in P_H,\, v \in P_L }} |\eta_r|^2 \left[  \| b_v (\cN_+ + 1)^{1/2} \xi \|^2 + |\s_v|^2 \| (\cN_+ + 1) \xi\|^2 \right]\Big)^{1/2}   \\
  &\leq \frac C{\sqrt N}\, \|(\cN_++1)\xi\|^2,
 \end{split}
\]
and similar bounds hold for $\Upsilon_j$ with j=2,3,4. 
Next, we bound 
\[
 \begin{split}
  |\langle\xi,\Upsilon_{5} \xi\rangle|&\leq \frac 1 {\sqrt N}\sum_{\substack{r \in P_H,\, v \in P_L }}|\Gamma_{v}||\eta_r| |\s_v|  \|a_{r+v}\xi\| \| a_{v} b^*_{-r} \xi\|   \\
  &\leq \frac C {\sqrt N} \Big(\sum_{\substack{r \in P_H,\, v \in P_L }}|\Gamma_{v}|^2  \|a_{r+v}\xi\|^2  \Big)^{1/2}    \Big(\sum_{\substack{r \in P_H,\, v \in P_L }} |\eta_r|^2  |\s_v|^2 \| (\cN_++1) \xi\|^2 \Big)^{1/2}   \\
  &\leq \frac C{\sqrt N}\, \|(\cN_++1)\xi\|^2\,.
 \end{split}
\]
and similarly  for $\Upsilon_6$. To bound $\Upsilon_7$ we use
\[
 \begin{split}
  |\langle\xi,\Upsilon_{7} \xi\rangle|&\leq \frac 1 {\sqrt N}\sum_{\substack{r \in P_H,\, v \in P_L }}|\Gamma_{v}||\eta_r| \| b_{r+v}\, \cN_+^{-1/2} b_{-r} b_{-v}\xi\| \| (\cN_++1)^{1/2} \xi\|   \\
  &\leq \frac C {\sqrt N} \Big(\sum_{\substack{r \in P_H,\, v \in P_L }}\| b_{-r} b_{-v}\xi\| ^2  \Big)^{1/2}    \Big(\sum_{\substack{r \in P_H,\, v \in P_L }}  |\G_v|^2  |\eta_r|^2 \| (\cN_++1) \xi\|^2 \Big)^{1/2}   \\
  &\leq \frac C{\sqrt N}\, \|(\cN_++1)\xi\|^2\,.
 \end{split}
\]
We consider now $\Upsilon_{8}$. In the first term we move $b^*_{-r}$ to the left:
\[\begin{split}
\Upsilon_{8}^{(1)}&:= -  \frac 1 {N^{3/2}} \sum_{p\in\L_+^*} \sum_{\substack{r \in P_H,\, v \in P_L}}  \Gamma_p  \eta_r \, b_p a^*_{r+v} a_{-p} b^*_{-r} (\g_v b_v +\s_v b^*_{-v})  \\
&\,= -  \frac 1 {N^{3/2}} \sum_{p\in\L_+^*} \sum_{\substack{r \in P_H,\, v \in P_L }} \Gamma_p   \eta_r \, b_p  b^*_{-r}a^*_{r+v} a_{-p} (\g_v b_v +\s_v b^*_{-v})  \\
&\quad -  \frac 1 {N^{3/2}}  \sum_{\substack{r \in P_H,\, v \in P_L}} \Gamma_r \eta_r \, b_r b^*_{r+v} (\g_v b_v +\s_v b^*_{-v})  =\Upsilon_{8}^{(1a)} + \Upsilon_{8}^{(1b)}
\end{split}\]
To bound the quintic term we use that $|P_L| \leq C N^{3/2}$  and $\sum_{r \in P_H} |\eta_r|^2 \leq C N^{-1/2}$:
\[\begin{split}
|\langle\xi,\Upsilon_{8}^{(1a)}  \xi\rangle|&\leq  \frac 1 {N^{3/2}}  \sum_{p\in\L_+^*} \sum_{\substack{r \in P_H,\, v \in P_L}}   |\Gamma_p|  |\eta_r| \, \| b^*_p a_{r+v} \xi \| \| a_{-p} b^*_{-r} (\g_v b_v +\s_v b^*_{-v})  \xi\|
\\
& \leq  \frac 1 {N}  \Big(  \sum_{p\in\L_+^*}  \sum_{\substack{r \in P_H,\, v \in P_L }} |\G_p|^2  \| (\cN_++1)^{1/2} a_{r+v} \xi \|^2 \Big)^{1/2} \\
& \hspace{.5cm} \times\Big( \frac 1 N  \sum_{p\in\L_+^*} \sum_{\substack{r \in P_H,\, v \in P_L }} |\eta_r|^2 \left[ \| a_{-p} b_{-r}^* b_v \xi \|^2 + |\s_v|^2 \| a_{-p} b_{-r}^* b_{-v}^* \xi \|^2 \right] \Big)^{1/2} \\
&\leq \frac C {\sqrt N} \,\|(\cN_++1)\xi\|^2
\end{split}\]
where, in the last step, to bound the contribution proportional to $\| a_{-p} b_{-r}^* b_v \xi \|^2$, we first estimated the sum over $p$ by $\| \cN_+^{1/2} b_{-r}^* b_v \xi \| \leq \| b_v (\cN_+ +1) \xi \|$ and then we summed over $r$ and $v$ (and similarly for the term proportional to $\| a_{-p} b_{-r}^* b_{-v}^* \xi \|$). As for the cubic term, we have 
\[\begin{split}
|\langle\xi,\Upsilon_{8}^{(1b)}  \xi\rangle|&\leq \frac 1 {N^{3/2}}   \Big(\sum_{\substack{r \in P_H,\, v \in P_L }}|\Gamma_r|^2 \| b^*_r b_{r+v}\xi \|^2   \Big)^{1/2} \\ &\hspace{2cm} \times  \Big( \sum_{\substack{r \in P_H,\, v \in P_L }} |\eta_r|^2 \left[ \| b_v \xi \|^2 + |\s_v|^2 \| (\cN_+ + 1)^{1/2} \xi\|^2 \right] \Big)^{1/2}  \\
&\leq  \frac 1 {N^{3/2}}  \|(\cN_++1)\xi\|^2
\end{split}\]
The remaining terms in $\Upsilon_8$ and $\Upsilon_9$ can be treated with the same arguments shown above, thus concluding the proof of \eqref{eq:commQAnd2}.
\end{proof}

\subsection{Analysis of $e^{-A} \cC_N e^A$.}

In this section, we analyze the action of the cubic exponential on the cubic term $\cC_N$, defined in \eqref{eq:cCN}.

\begin{prop} \label{prop:SCS} Let $A$ be defined as in \eqref{eq:defA} and let $\cC_N$ be defined as in \eqref{eq:cCN}. Then, under the assumptions of Proposition~\ref{thm:tGNo1}, we have
\begin{equation}\label{eq:lm-SCS} \begin{split}
e^{-A} \cC_N\, e^A =\; &\cC_N  + \frac 2 N \sum_{r \in P_H, v \in P_L} \k \big( \widehat V(r/N) +\widehat V((r+v)/N)\big) \eta_r\\ &\hspace{2.5cm} \times  \Big[ \s^2_v + ( \g_v^2 + \s^2_v)\, b^*_v b_{v}  +\g_v \s_v\,  \big( b_v b_{-v} + b_v^* b_{-v}^*\big)\Big] \\
&+ \cE_{N}^{(\cC)}\,,
\end{split}\end{equation}
where the error $\cE_{N}^{(\cC)}$  satisfies
\be \begin{split}\label{eq:propSCS2} 
\pm \cE_{N}^{(\cC)} &\leq \frac C {\sqrt{N}}\; \big[ (\cN_++1)(\cH_N +1) + (\cN_++1)^3 \big]\,.
\end{split}\ee
\end{prop}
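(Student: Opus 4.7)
The strategy is the standard Duhamel plus commutator-analysis approach. Writing
\[
e^{-A}\cC_N e^A = \cC_N + \int_0^1 ds\; e^{-sA}[\cC_N,A]e^{sA},
\]
the whole problem reduces to computing $[\cC_N,A]$ modulo terms that remain negligible after conjugation with $e^{sA}$. Since both $\cC_N$ and $A$ are cubic in the $b^\sharp$--operators and each carries a prefactor $N^{-1/2}$, the commutator comes with prefactor $N^{-1}$ in front of a sextic expression; the canonical relations \eqref{eq:comm-bp} reduce each sextic monomial either to a ``doubly-contracted'' quadratic/constant piece or to quartic/quintic remainders. I would first decompose $\cC_N=\cC_N^\gamma+\cC_N^\sigma+\hc$ with
\[
\cC_N^\gamma := \tfrac{1}{\sqrt N}\sum_{p,q}\widehat V(p/N)\gamma_q b^*_{p+q}b^*_{-p}b_q, \qquad \cC_N^\sigma := \tfrac{1}{\sqrt N}\sum_{p,q}\widehat V(p/N)\sigma_q b^*_{p+q}b^*_{-p}b^*_{-q},
\]
and $A=A_\gamma+A_\sigma-\hc$ as in \eqref{eq:defA}, so that $[\cC_N,A]$ expands into sixteen commutators of sextic monomials.

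The main quadratic/constant contribution comes exclusively from pairings with $A_\gamma$ or $A_\gamma^*$, namely the pieces of $A$ that carry a single annihilation operator $b_v^\sharp$ on a low momentum $v\in P_L$. For instance, in $[\cC_N^{\gamma,*},A_\gamma]$ the creation $b^*_q$ of $\cC_N^{\gamma,*}$ contracts with $b_v$, forcing $q=v$, and one of the remaining two annihilations $b_{-(p+q)}$, $b_p$ contracts with $b^*_{r+v}$ or $b^*_{-r}$, giving $p=r$ or $p=-r-v$. After collecting both momentum choices, one obtains exactly the coefficient $\widehat V(r/N)+\widehat V((r+v)/N)$ multiplying $\eta_r\gamma_v^2 b^*_v b_v$. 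Analogous doubly-contracted pairings of $\cC_N^{\gamma,(*)}$, $\cC_N^{\sigma,(*)}$ against $A_\gamma^{(*)}$ generate the remaining $(\gamma_v^2+\sigma_v^2) b^*_v b_v$, $\gamma_v\sigma_v(b_v b_{-v}+b^*_v b^*_{-v})$ terms, while triple contractions produce the constant $\sigma_v^2$. Summing over $r\in P_H$, $v\in P_L$ reproduces the explicit quadratic/constant correction in \eqref{eq:lm-SCS}. Pairings involving $A_\sigma$ or $A_\sigma^*$ (three creations or three annihilations) admit no double contraction and contribute only to quartic remainders.

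The remaining pieces of $[\cC_N,A]$ are quartic or quintic operators carrying the prefactor $1/N$, together with analogous terms produced by the $\cN_+/N$ and $a^*_q a_p/N$ corrections in \eqref{eq:comm-bp}. I would bound them in the style of Lemma \ref{lm:commHA}: pass to position space whenever $\widehat V(\cdot/N)$ appears, apply Cauchy-Schwarz, and exploit $\|\eta\|_2, \|\sigma\|_2\leq C$, $|\eta_r|\leq C/r^2$, and the high-momentum bound $\sum_{r\in P_H}|\eta_r|^2\leq CN^{-1/2}$. The resulting estimate
\[
\pm\, \text{(remainder)}\leq \tfrac{C}{\sqrt N}\bigl[(\cN_++1)(\cH_N+1)+(\cN_++1)^3\bigr]
\]
is preserved under the $s$-integral of the Duhamel expansion by virtue of Propositions \ref{prop:cN} and \ref{prop:cNcH}. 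For the main quadratic/constant piece $Z$ itself, writing $e^{-sA}Ze^{sA}-Z$ by a further Duhamel and invoking Lemma \ref{lm:commQA} together with Proposition \ref{prop:cN} produces only an additional $CN^{-1}(\cN_++1)^2$ error, comfortably inside \eqref{eq:propSCS2}. The principal obstacle is the combinatorial bookkeeping of the sixteen-term expansion --- in particular, ensuring that quartic remainders proportional to $\widehat V(p/N)$ with \emph{no} $\eta$ factor (which are not a priori $o(1)$) cancel after symmetrization or can be absorbed via the scattering equation \eqref{eq:eta-scat}, which is the sole point where the specific choice of $\eta$ really enters beyond bare commutator identities.
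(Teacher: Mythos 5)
Your overall strategy — Duhamel expansion, computing $[\cC_N,A]$, extracting the main quadratic/constant terms, and bounding the remainders by Cauchy--Schwarz in position/momentum space — is indeed the paper's strategy (Lemma \ref{lm:commCA} is precisely the decomposition $[\cC_N,A]=\sum_{j=0}^{14}\Xi_j+\hc$, with $\Xi_0$ the main term and $\Xi_1,\dots,\Xi_{14}$ the remainders). However, there are two concrete errors in the way you set up the computation.

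First, your claim that ``the main quadratic/constant contribution comes exclusively from pairings with $A_\gamma$ or $A_\gamma^*$'' and that ``pairings involving $A_\sigma$ or $A_\sigma^*$ admit no double contraction and contribute only to quartic remainders'' is false, and if followed would lead you to an incorrect coefficient. The paper's main term $\Xi_0$ is
\[
\frac{\k}{N}\sum_{r\in P_H, v\in P_L}\big(\widehat V(r/N)+\widehat V((r+v)/N)\big)\eta_r\,\big[\s_v^2+(\g_v^2+\s_v^2)b^*_vb_v+\g_v\s_v(b_vb_{-v}+b^*_vb^*_{-v})\big],
\]
and it arises from the combined factor $(\g_vb^*_v+\s_vb_{-v})(\g_vb_v+\s_vb^*_{-v})$ after the double contraction forces the low momentum $q$ in $\cC_N$ to equal $v$. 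The piece $\g_vb^*_v$ traces back to $A_\gamma^*$, but the piece $\s_vb_{-v}$ traces back to $A_\sigma^*$: for instance, the $\s_v^2$ terms come from pairing $\cC_N^\sigma$ against $A_\sigma^*$ (three creations against three annihilations, which \emph{do} admit double and triple contractions --- contrary to your assertion), and the $\g_v\s_v$ off-diagonal terms come from the cross pairings. By excluding $A_\sigma^{(*)}$ from the leading order you would miss these contributions entirely.

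Second, your last paragraph identifies as the ``principal obstacle'' the possibility of quartic remainders proportional to $\widehat V(p/N)$ carrying no $\eta$ factor, to be cancelled via the scattering equation \eqref{eq:eta-scat}. This obstacle does not exist here: \emph{every} monomial in $[\cC_N,A]$ inherits the factor $\eta_r$ (with $r\in P_H$) from $A$, so the improved decay $\sum_{r\in P_H}|\eta_r|^2\leq CN^{-1/2}$ is available in each remainder. Indeed, the paper's proof of Lemma \ref{lm:commCA} does not invoke the scattering equation at all; that relation enters only in the treatment of $[\cH_N,A]$ (Lemma \ref{lm:commHA}), where $2r^2\eta_r$ from $[\cK,A]$ must be recombined with the $\widehat V(\cdot/N)\ast\eta$ terms from $[\cV_N,A]$. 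For $\cC_N$ the bound is purely combinatorial/analytic. Apart from these two points, the structure you describe --- including the role of Propositions \ref{prop:cN} and \ref{prop:cNcH} in transporting the bounds through the $s$-integral, and of Lemma \ref{lm:commQA} in controlling $e^{-sA}\Xi_0e^{sA}-\Xi_0$ --- is correct.
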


To prove the proposition we use the following lemma.
\begin{lemma} \label{lm:commCA}
Let $A$ be defined as in \eqref{eq:defA} and $\cC_N$  be defined as in \eqref{eq:CN}. Then, under the assumptions of Proposition~\ref{thm:tGNo1}, 
\be \label{eq:commCA}
[\cC_N, A] = \sum_{j=0}^{14} \Xi_j + \emph{h.c.}
\ee
where
\[ \begin{split}
\Xi_0 = \; & \frac \k N \sum_{\substack{r \in P_H\\ v \in P_L}} \big( \widehat V(r/N) +\widehat V((r+v)/N)\big) \eta_r\, \Big[ \s^2_v + ( \g_v^2 + \s^2_v)\, b^*_v b_{v}  +\g_v \s_v\,  \big( b_v b_{-v} + b_v^* b_{-v}^*\big)\Big]  \\
%
\Xi_{1} = \; & \frac \k N \sum_{\substack{r \in P_H \\
v \in P_L }}  \widehat V((r+v)/N)  \eta_r\, (\g_v b^*_v + \s_v b_{-v}) \left[\left( 1- \frac {\cal N_+} N \right)^2 -1 \right] (\g_v b_v + \s_v b^*_{-v}) \\
+ \;&   \frac \k N \sum_{\substack{r \in P_H\\ v \in P_L  }}  \widehat V(r/N)  \eta_r\, (\g_v b^*_v + \s_v b_{-v}) \left[\left( 1- \frac {\cN_+ +1} N \right) \left( 1- \frac {\cal N_+} N \right) -1 \right]  (\g_v b_v + \s_v b^*_{-v})   \\
\Xi_2=\;&   \frac \k N \sum_{\substack{r \in P_H,\, v \in P_L }}  \widehat V(r/N)  \eta_r\, \s_v\, \left( 1- \frac {\cN_++1} N \right) \left( 1- \frac {\cN_+} N \right)  b_{r+v}(\g_{r+v} b_{-r-v} + \s_{r+v} b^*_{r+v})  \\
\Xi_3 =\; &  \frac \k N \sum_{\substack{r \in P_H,\, v \in P_L}}  \widehat V((r+v)/N)  \eta_r\,  \s_v \left( 1- \frac {\cal N_+} N \right)^2  b_{-r} (\g_r b_r + \s_r b^*_{-r})\\
\Xi_4= \; & - \frac \k {N^2} \sum_{\substack{r \in P_H,\, v \in P_L}}  \widehat V((r+v)/N)  \eta_r\,  \s_v \left( 1- \frac {\cal N_+} N \right)  b_{-r} (\g_r b_r + \s_r b^*_{-r})  \\
 \Xi_5 =\; &\frac \k N \sum_{\substack{r \in P_H \\ v \in P_L }}    \sum_{\substack{p \in \L_+^*\\ p \neq r+v}} \widehat V(p/N)   \eta_r  (\g_v b^*_v + \s_v b_{-v})   \left( 1- \frac {\cN_++1} N \right) (b^*_{-p} b_{-r} - \frac 1 N a^*_{-p} a_{-r}) \\[-9pt]
 & \hskip 7cm \times (\g_{r+v-p} b_{r+v-p} + \s_{r+v-p} b^*_{p-r-v}) \\
\end{split}\]
and
\[  \begin{split}
 \Xi_6 =\; & \frac \k N \sum_{\substack{r \in P_H \\ v \in P_L }}    \sum_{\substack{p \in \L_+^*\\ p \neq -r}}  \widehat V(p/N)   \eta_r  (\g_v b^*_v + \s_v b_{-v})   \left( 1- \frac {\cN_+} N \right) (b^*_{-p} b_{r+v} - \frac 1 N a^*_{-p} a_{r+v}) \\[-9pt]
 & \hskip 7cm \times  (\g_{r+p} b_{-r-p} + \s_{r+p} b^*_{r+p}) \\
 \Xi_7 =\; & \frac \k N \sum_{\substack{r \in P_H \\ v \in P_L }}    \sum_{\substack{p \in \L_+^*\\ p \neq -v}}  \widehat V(p/N)   \eta_r  \left( 1- \frac {\cal N_+} N \right) \s_v   (b^*_{-p} b_{r+v} - \frac 1 N a^*_{-p} a_{r+v}) b_{-r}  \\[-9pt]
 & \hskip 7cm \times  (\g_{p+v} b_{-p-v} + \s_{p+v} b^*_{p+v}) \\
 \Xi_8 =\;   & -  \frac \k N \sum_{\substack{r \in P_H \\ v \in P_L }}    \sum_{\substack{p \in \L_+^*\\ p \neq v}}   \widehat V(p/N)   \eta_r   \left( 1- \frac {\cN_+-2} N \right) \g_v \,  b^*_{r+v} b^*_{-r} b^*_{-p}(\g_{p-v} b_{-p+v} + \s_{p-v} b^*_{p-v})\\
 \Xi_9 =\; & -\frac \k {N^2} \sum_{\substack{r \in P_H \\ v \in P_L }}    \sum_{\substack{p \in \L_+^*\\ p \neq -v}} \widehat V(p/N)   \eta_r   \left( 1- \frac {\cN_+} N \right)  \s_v\,  a^*_{-p} a_{-r} b_{r+v} \,  (\g_{p+v} b_{-p-v} + \s_{p+v} b^*_{p+v}) \\
\Xi_{10} =\; &  \frac \k {N^2}  \sum_{\substack{r \in P_H \\ v \in P_L }}  \sum_{\substack{p,q \in \L_+^* \\ p \neq -q}} \widehat V(p/N)    \eta_r  (\g_v b_v^* + \s_v b_{-v}) (a^*_{p+q} a_{r+v} b_{-r} + b_{r+v}a^*_{p+q} a_{-r}) b^*_{-p}  \\ &\hspace{10cm} \times    (\g_q b_q + \s_q b^*_{-q}) \\
\end{split}\]
\[ \begin{split} 
\Xi_{11} =\; & \frac \k {N^2}  \sum_{\substack{r \in P_H \\ v \in P_L }}  \sum_{\substack{p,q \in \L_+^* \\ p \neq -q}} \widehat V(p/N)    \eta_r  \big(
 \g_v b^*_{r+v} b^*_{-r} a^*_{p+q} a_v + \s_v a^*_{p+q} a_{-v} b_{-r}b_{r+v} \big) b^*_{-p}  \\ &\hspace{10cm} \times     (\g_q b_q + \s_q b^*_{-q}) \\
\end{split}\]
and
\begin{equation}\label{eq:Xi1214} \begin{split} 
\Xi_{12} =\; & \frac \k {\sqrt N} \sum_{\substack{p, q \in \L_+^* \\p\neq -q}}  \widehat V(p/N)\,  b_{p+q}^*\, [b^*_{-p},A]\, \big( \g_q b_{q} + \s_q b^*_{-q} \big)  \\
\Xi_{13} =\; &  \frac \k {\sqrt N} \sum_{\substack{p, q \in \L_+^* \\p\neq -q}}  \widehat V(p/N)\, \g_q\, b_{p+q}^*\, b^*_{-p}\, [ b_{q},A ]  \\
\Xi_{14} =\; & \frac \k {\sqrt N} \sum_{\substack{p, q \in \L_+^* \\p\neq -q}}  \widehat V(p/N)\, \s_q \, b_{p+q}^*\, b^*_{-p} \,[b^*_{-q} ,A]  
\end{split}\end{equation}
For all $j=1,\ldots, 14$ (but not for $j=0$) we have
\be \label{eq:EC1}
\pm\, \big(\Xi_j + \hc \big) \leq  \frac C {\sqrt N}\, \Big[ (\cN_++1)(\cK+1) + (\cN_++1)^3 \Big]\,.
\ee
Moreover,
\be \label{eq:EC2}
\pm\,  [\Xi_0, A]   \leq \frac C {\sqrt N}\, \big(\cN_++1\big)^2\,.
\ee
\end{lemma}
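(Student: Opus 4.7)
The plan is to compute $[\cC_N, A]$ explicitly via the canonical commutation relations \eqref{eq:comm-bp} combined with the Leibniz rule. Write $\cC_N = \cC_N^{(+)} + (\cC_N^{(+)})^*$ with $\cC_N^{(+)} := N^{-1/2} \sum_{p,q} \widehat V(p/N)\, b^*_{p+q} b^*_{-p}(\gamma_q b_q + \sigma_q b^*_{-q})$ and apply Leibniz to the three-factor product inside $\cC_N^{(+)}$; this produces four summands, corresponding to commuting $A$ with each of $b^*_{p+q}, b^*_{-p}, b_q, b^*_{-q}$. The last three summands are left unresolved and, after obvious relabeling of indices, yield directly the terms $\Xi_{12}, \Xi_{13}, \Xi_{14}$ in \eqref{eq:Xi1214}. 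The remaining summand $[b^*_{p+q}, A]\,b^*_{-p}(\gamma_q b_q + \sigma_q b^*_{-q})$, together with its hermitian conjugate and the analogous contributions arising from $[(\cC_N^{(+)})^*, A]$, is expanded fully using the explicit formula \eqref{bpA}. The four Kronecker-delta contributions $\delta_{p,r+v}$, $\delta_{p,-r}$, $\delta_{p,-v}$, $\delta_{p,v}$ appearing in \eqref{bpA} produce the contraction terms $\Xi_0, \Xi_1, \dots, \Xi_9$, while the four normal-ordering corrections — each carrying an additional $N^{-1}$ factor coming from $[b_p, b_q^*] = (1-\cN_+/N)\delta_{p,q} - N^{-1}a_q^* a_p$ — produce the quintic remainders $\Xi_{10}$ and $\Xi_{11}$. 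Careful bookkeeping of the $\g_v / \s_v$ factors associated with each Kronecker channel recovers exactly the form listed in the statement.

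To prove the bound \eqref{eq:EC1} for $j=1,\dots,14$, each $\Xi_j$ is estimated individually by Cauchy--Schwarz after pulling out the pointwise bounds $|\eta_r|\leq C r^{-2}$, $|\sigma_v|\leq C v^{-2}$, $|\gamma_v|\leq C$ and $|\widehat V(\cdot/N)|\leq C$. The prefactor $N^{-1/2}$ inherited from $\cC_N$, combined either with $\sum_{r\in P_H} |\eta_r|^2 \leq C N^{-1/2}$ (by the definition of $P_H$) or with the extra $N^{-1}$ from normal-ordering corrections, yields the claimed $N^{-1/2}$, provided the remaining momentum sums regroup into standard factors of the form $\|\cK^{1/2}\cN_+^{1/2}\xi\|$, $\|\cN_+^{1/2}\xi\|$ or $\|(\cN_++1)^{3/2}\xi\|$. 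The main technical trick, already used in the proof of Lemma~\ref{lm:commHA}, is to split one momentum sum between the two sides of the Cauchy--Schwarz inequality in the quintic terms $\Xi_5,\dots,\Xi_{11}$, placing a factor of $|\widehat V(p/N)|$ or $|\eta_r|$ on each side. For $\Xi_{12}, \Xi_{13}, \Xi_{14}$ we insert the explicit expansion \eqref{bpA} of $[b^{\sharp}, A]$ and estimate each summand by the same mechanism, exactly as $\Theta_1,\dots,\Theta_9$ were handled in the proof of Lemma~\ref{lm:commHA}.

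For the bound \eqref{eq:EC2}, observe that, after performing the sum over $r\in P_H$, the operator $\Xi_0$ takes the form of a constant plus the quadratic expression
\[
\sum_{v \in P_L} \Bigl[ \Phi_v\, b_v^* b_v \,+\, \tfrac{1}{2}\Gamma_v \bigl( b_v^* b_{-v}^* + b_v b_{-v} \bigr) \Bigr],
\]
with coefficients that satisfy, uniformly in $N$, $|\Phi_v|\leq C$ and $|\Gamma_v|\leq C v^{-2}$. The decay $|\Gamma_v|\leq C v^{-2}$ follows from the factor $\gamma_v\sigma_v$ together with $|\sigma_v|\leq C v^{-2}$, while the uniform bound on $\Phi_v$ uses $|N^{-1}\sum_{r\in P_H}\widehat V(r/N)\eta_r|\leq C$, which in turn follows from the scattering relation \eqref{eq:eta-scat} in the same fashion as in the proof of Proposition~\ref{prop:gene} (using compensation between first and second Born corrections). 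With these two coefficient bounds at hand, Lemma~\ref{lm:commQA} directly yields $\pm [\Xi_0, A] \leq C N^{-1/2}(\cN_++1)^2$. The main obstacle throughout is the sheer combinatorial bookkeeping generated by the Leibniz expansion and by the formula \eqref{bpA}, but each individual estimate is routine and of the same character as those in Lemmas~\ref{lm:commHA} and~\ref{lm:commQA}.
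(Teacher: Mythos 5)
Your proposal follows the paper's strategy essentially step by step: the same Leibniz split isolating the commutator of $A$ with $b^*_{p+q}$ while leaving $\Xi_{12},\Xi_{13},\Xi_{14}$ unexpanded; the same insertion of the explicit formula \eqref{bpA} followed by normal ordering to generate $\Xi_0,\ldots,\Xi_{11}$; Cauchy--Schwarz term by term for \eqref{eq:EC1}, exploiting $\sum_{r\in P_H}|\eta_r|^2\leq CN^{-1/2}$, the kinetic-energy weight, and splitting momentum sums; and the reduction of \eqref{eq:EC2} to Lemma~\ref{lm:commQA} via the coefficient bounds $|\Phi_v|\leq C$, $|\Gamma_v|\leq C|v|^{-2}$. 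This is the paper's proof, not a genuinely different route.

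One cosmetic imprecision is worth flagging: you attribute the bound $\big|N^{-1}\sum_{r\in P_H}\widehat V(r/N)\eta_r\big|\leq C$ to the scattering relation \eqref{eq:eta-scat} and ``compensation between first and second Born corrections.'' In fact no such compensation is needed here: the bound follows directly from the pointwise estimate $|\eta_r|\leq C|r|^{-2}$ (Lemma~\ref{3.0.sceqlemma}~iv) and the fact that, for $V\in L^1\cap L^3$ compactly supported, $\widehat V$ is bounded and belongs to $L^{3/2}$, so that $\sum_{r\in\L^*_+}|\widehat V(r/N)|/r^2\leq CN$ by a Riemann-sum argument. The scattering relation is used in the paper when one needs \emph{cancellation} (e.g.\ for $\Phi_p,\Gamma_p$ in Proposition~\ref{prop:gene}), not for this crude bound. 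A second, even smaller, imprecision is the attribution of $\Xi_{10},\Xi_{11}$ solely to ``normal-ordering corrections'': these two terms are generated by the $N^{-1}a^*_p a_{\cdot}$ part of \eqref{bpA}, which is indeed a descendant of $[b_p,b_q^*]=(1-\cN_+/N)\delta_{p,q}-N^{-1}a_q^*a_p$, but the contraction terms $\Xi_0,\ldots,\Xi_4$ are themselves also normal-ordering corrections (from commuting the $\d$-resolved pieces of \eqref{bpA} past $b^*_{-p}$). Neither point affects the validity of your argument.
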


\begin{proof}[Proof of Prop. \ref{prop:SCS}.] We write
\be  \begin{split} \label{eq:SCS}
 e^{-A} \cC_N\, e^{A}= \cC_N+ \int_0^1 ds \, e^{-s A}\, [\cC_N, A] e^{s A} \,. 
\end{split}\ee
We set $\tl \cE_{N}^{(\cC)}:= [\cC_N, A] - 2 \Xi_0 = \sum_{j=1}^{14} (\Xi_j + \hc)$ and rewrite \eqref{eq:SCS} as $e^{-A}   \cC_N\, e^{A}  =  \cC_N +2 \Xi_0 + \cE_N^{(\cC)}$, with
  \[
  \cE_N^{(\cC)}=   \int_0^1 ds_1 \int_0^{s_1} ds_2 \, e^{-s_2 A}\, \big[ 2 \Xi_0 , A\big] e^{s_2 A}   + \int_0^1 ds \, e^{-s A}\, \tl \cE_{N}^{(\cC)} e^{s A} \,.
\]
Lemma \ref{lm:commCA} together with Proposition \ref{prop:cN} and Proposition \ref{prop:cNcH} 
imply \eqref{eq:propSCS2}; with the definition of $\Xi_0$ we obtain (\ref{eq:lm-SCS}). 

\end{proof}

\begin{proof}[Proof of Lemma \ref{lm:commCA}.]
We have 
\be  \begin{split} \label{eq:calCA1}
[\cC_N, A]   =\; & \frac \k {\sqrt N} \sum_{\substack{p, q \in \L_+^* \\p\neq -q}}  \widehat V(p/N)\,  [b_{p+q}^*,A]\, b^*_{-p} \big( \g_q b_{q} + \s_q b^*_{-q} \big) + \sum_{j=12}^{14}\Xi_j + \hc \\
\end{split}\ee
We use  the formula \eqref{bpA} to compute the first term on the r.h.s. of \eqref{eq:calCA1}. Putting in  normal order the quartic terms (but leaving unchanged the parenthesis $(\g_v b^*_v  + \s_v b_{-v})$ and its conjugate) we obtain
\[  \begin{split} 
& \frac \k {\sqrt N} \sum_{\substack{p, q \in \L_+^* \\p\neq -q}}  \widehat V(p/N)\,  [b_{p+q}^*,A]\, b^*_{-p} \big( \g_q b_{q} + \s_q b^*_{-q} \big)  \\
& =  \frac \k N \sum_{\substack{r \in P_H\\ v \in P_L}}  \widehat V(r/N)  \eta_r\, (\g_v b^*_v + \s_v b_{-v}) \left( 1- \frac {\cN_+ +1} N \right) \left( 1- \frac {\cal N_+} N \right)   (\g_v b_v + \s_v b^*_{-v}) \\
&  \quad + \frac \k N \sum_{\substack{r \in P_H\\ v \in P_L}}  \widehat V((r+v)/N)  \eta_r\, (\g_v b^*_v + \s_v b_{-v}) \left( 1- \frac {\cal N_+} N \right)^2   (\g_v b_v + \s_v b^*_{-v}) + \sum_{j=2}^{11} \Xi_j
\end{split}\]
The first two terms on the r.h.s. of the last equation can be further decomposed as $\Xi_0 + \Xi_1$. Combining \eqref{eq:calCA1} with the last equation we obtain the decomposition \eqref{eq:commCA}. 

Next, we prove the bound \eqref{eq:EC1}. With $\sum_{r \in \L_+^*} \widehat V(r/N) \eta_r \leq C N$ and $ \sum_{v \in P_L} |\s_v|^2  \leq C$, we obtain that $\pm \Xi_1 \leq C N^{-1} (\cN_+ + 1)^2$. As for $\Xi_2$, we find  
\[  \begin{split}
|\bmedia{\xi, \Xi_{2} \xi}|&  \leq  \frac C N \Big(  \sum_{\substack{r \in P_H\\ v \in P_L }} |\eta_r|^2 |\s_v| \| \cN_+^{1/2} \xi \|^2 \Big)^{1/2} \\ &\hspace{1cm} \times \Big(  \sum_{\substack{r \in P_H\\ v \in P_L }} |\s_v| \left[  \| b_{-r-v} \xi \|^2 + |\s_{r+v}|^2  \| (\cN_+ + 1)^{1/2} \xi\|^2 \right]  \Big)^{1/2} \\ &\leq \frac C {\sqrt N} \, \| (\cN_++1)^{1/2} \xi \|^2
\end{split}\]
using $\sum_{v \in P_L} |\s_v| \leq C N^{1/2}$. The terms $\Xi_3$ and $\Xi_4$ can be bounded analogously. As for the term $\Xi_5= \Xi_5^{(1)}+ \Xi_5^{(2)}$ we use that $|P_L| \leq C N^{3/2} $ and $\sum_{r \in P_H} |\eta_r|^2 \leq C N^{-1/2}$, hence 
\[  \begin{split}
& |\bmedia{\xi, \Xi_{5}^{(1)} \xi}|\\
&  \leq  \frac \k N \sum_{\substack{r \in P_H \\ v \in P_L}} \sum_{\substack{p \in \L_+^*\\ p \neq r+v}} | \widehat V(p/N)| |\eta_r|  \| b_v  b_{-p} \xi \| \| b_{-r} (\g_{r+v-p}  b_{r+v-p}  + \s_{r+v-p}  b^*_{p-r-v}) \xi \| \\
& \leq  \frac \k {N}    \, \biggl(\;  \sum_{\substack{r \in P_H \\ v \in P_L}}   \sum_{\substack{p \in \L_+^*\\ p \neq r+v}}    |\eta_r|^2   \|b_v  b_{-p} \xi \|^2\biggr)^{1/2}  \\
& \times \biggl[\biggl(\;  \sum_{\substack{r \in P_H \\ v \in P_L}}   \sum_{\substack{p \in \L_+^*\\ p \neq r+v}}  \|b_{-r}  b_{r+v-p} \xi\|^2  \biggr)^{1/2} +  \biggl(\;  \sum_{\substack{r \in P_H \\ v \in P_L}}   \sum_{\substack{p \in \L_+^*\\ p \neq r+v}}  |\s_{r+v-p}|^2 \|b_{-r}  b_{p-r-v}^* \xi \|^2  \biggr)^{1/2}\; \biggr] \\
& \leq \frac C {\sqrt N}\; \| (\cN_++1) \xi \|^2
\end{split}\]
In the last step, to bound the term proportional to $\| b_{-r} b_{p-r-v}^* \xi \|$, we first shifted $p \to p+r+v$, then we estimated the sum over $r$ by $\| \cN_+^{1/2} b_p^* \xi \| \leq \| (\cN_+ + 1) \xi \|$ and at the end we summed over $v$ and $p$ (using the factor $|\s_p|^2$). The bound for $\Xi_{5}^{(2)}$ and for the terms $\Xi_j$ with $j=6,7,8,9$ can be obtained similarly.  As for the term $\Xi_{10}$ we first move the operator $b^*_p$ to the left, obtaining
\begin{equation}\label{eq:Xi10dec}  \begin{split} 
 \Xi_{10} =\; & \frac \k {N^2} \sum_{\substack{r \in P_H\\v \in P_L }}   \sum_{\substack{p,q \in \L_+^* \\ p \neq -q}}  \widehat V(p/N)   \eta_r \,  (\g_v b_v^* + \s_v b_{-v})\,a^*_{p+q} a_{r+v} \big(b^*_{-p} b_{-r} - \frac 1 N a^*_{-p} a_{-r} \big) \\[-0.5cm] &\hspace{10cm} \times (\g_q b_q + \s_q b^*_{-q})\\
 & + \frac \k {N^2} \sum_{\substack{r \in P_H\\v \in P_L }}   \sum_{\substack{p,q \in \L_+^* \\ p \neq -q}}  \widehat V(p/N)   \eta_r \,  (\g_v b_v^* + \s_v b_{-v})\,b_{r+v} b^*_{-p} a^*_{p+q}  a_{-r}(\g_q b_q + \s_q b^*_{-q})\\
 &  + \frac \k {N^2} \sum_{\substack{r \in P_H\\v \in P_L }}   \sum_{\substack{q \in \L_+^* \\ q \neq -r}}  \widehat V(r/N)   \eta_r (\g_v b_v^* + \s_v b_{-v})\,a^*_{r+q} a_{r+v}  (\g_q b_q + \s_q b^*_{-q}) \\
 &  + \frac \k {N^2} \sum_{\substack{r \in P_H\\v \in P_L }}   \sum_{\substack{q \in \L_+^* \\ q \neq -r}}  \widehat V(r/N)   \eta_r (\g_v b_v^* + \s_v b_{-v})\,b_{r+v} b^*_{r+q}  (\g_q b_q + \s_q b^*_{-q}) \\
 = \; &\sum_{j=1}^4 \Xi_{10}^{(j)}
\end{split}
\end{equation}
To bound $\Xi_{10}^{(1)}$, we commute the operator $a_{r+v}$ to the right of $b_{-p}^*$. we find
\begin{equation}\label{eq:Xi101} \begin{split}  |\langle \xi , \Xi^{(1)}_{10} \xi \rangle | \leq\; & \frac{C}{N^2} \sum_{\substack{p,q \in \L_+^* \\ p \neq -q}} \sum_{\substack{r \in P_H\\v \in P_L }} |\widehat{V} (p/N)| |\eta_r|  \\ &\hspace{2cm} \times \| b_{-p} a_{p+q} (\g_v b_v + \s_v b_{-v}^*) \xi \| \| a_{r+v} b_{-r} (\g_q b_q + \s_q b_{-q}^* ) \xi \| \\ &+ \frac{C}{N^2} \sum_{\substack{p,q \in \L_+^* \\ p \neq -q}} \sum_{\substack{v \in P_L \\ p+v \in P_H} } |\widehat{V} (p/N)| 
|\eta_{p+v}|  \\ &\hspace{2cm} \times  \| b_{p+q} (\g_v b_v + \s_v b_{-v}^*) \xi \| \| b_{p+v} (\g_q b_q + \s_q b_{-q}^*) \xi \|  \end{split} \end{equation}
By Cauchy-Schwarz and using the bound $\sum_{p \in \Lambda^*_+} |\widehat{V} (p/N)|^2 \leq C N^3$, we obtain 
\begin{equation}\label{eq:Xi102} \begin{split} 
& |\langle \xi , \Xi^{(1)}_{10} \xi \rangle | \\
&\leq  \frac{C}{N^2} \Big( \sum_{\substack{p,q \in \L_+^* \\ p \neq -q}} \sum_{\substack{r\in P_H \\v \in P_L }} |\eta_r|^2 \left[  \| a_{-p} a_{p+q} a_v \xi \|^2 + |\s_v|^2 \| a_{-p} a_{p+q} a_{-v}^* \xi \|^2 \right] \Big)^{1/2} \\
&\hspace{1cm} \times \Big( \sum_{\substack{p,q \in \L_+^* \\ p \neq -q}} \sum_{\substack{r \in P_H\\v \in P_L }} |\widehat{V} (p/N)|^2 \left[ \| a_{r+v} a_{-r} a_q \xi \| + |\s_q|^2 \| a_{r+v} a_{-r} a_q^* \xi \|^2 \right] \Big)^{1/2}  \\
&\qquad + \frac{C}{N^2} \Big( \sum_{\substack{p,q \in \L_+^* \\ p \neq -q}} \sum_{v \in P_L } |\eta_{p+v}|^2 \left[ \| a_{p+q}  b_v  \xi \|^2 + | \s_v|^2 \| a_{p+q} b_{-v}^* \xi \|^2 \right]   \Big)^{1/2} \\ 
&\qquad \hspace{1cm} \times \Big( \sum_{\substack{p,q \in \L_+^* \\ p \neq -q}} \sum_{\substack{v \in P_L }} \left[ \| a_{p+v}  b_q  \xi \|^2 + | \s_q|^2 \| a_{p+v} b_{-q}^* \xi \|^2 \right]   \Big)^{1/2}\\
& \leq  \frac{C}{\sqrt{N}} \langle \xi , (\cN_+ + 1)^3 \xi \rangle  
\end{split} \end{equation}
The bound for  $\Xi_{10}^{(2)}$ is similar. As for the quartic operators $\Xi_{10}^{(3)}$ and $\Xi_{10}^{(4)}$, they can be handled like the second term on the r.h.s. of (\ref{eq:Xi101}) (in $\Xi_{10}^{(4)}$ we first commute $b_{r+v}$ and $b^*_{r+q}$). We obtain that 
\[ \pm \Xi_{10} \leq \frac{C}{\sqrt{N}} (\cN_+ + 1)^3 \]

The operator $\Xi_{11}$ can be controlled similarly as $\Xi_{10}$. To estimate $\Xi_{12}$, $\Xi_{13}$ and $\Xi_{14}$, we insert (\ref{bpA}) into (\ref{eq:Xi1214}); this produces several terms. The contributions arising from $\Xi_{12}$ are similar to the terms $\Xi_1, \dots, \Xi_{11}$ considered above and their expectation can be estimated analogously.  
On the other hand, to bound some of the contributions to $\Xi_{13}$ and $\Xi_{14}$ we need to use the kinetic energy operator. To explain this step, let us compute $\Xi_{13}$ explicitly. We find $\Xi_{13} =\sum_{j=1}^6 \Xi_{13}^{(j)}$ with
\[ \begin{split}
\Xi_{13}^{(1)} & = \frac \k {N} \sum_{\substack{r \in P_H\\ v \in P_L }} \sum_{\substack{p \in \L_+^* \\ p \neq -r-v}}   \widehat V(p/N)\,\eta_r\, \g_{r+v} b_{p+r+v}^*\, b^*_{-p}\, \Big(1 - \frac{\cN_+}N \Big) b^*_{-r} (\g_v b_v + \s_v b^*_{-v}) \\
\Xi_{13}^{(2)} & = \frac \k {N} \sum_{\substack{r \in P_H\\ v \in P_L }}   \sum_{\substack{p \in \L_+^* \\ p \neq r}}  \widehat V(p/N)\,\eta_r \, \g_{r} b_{p-r}^*\, b^*_{-p}\, \Big(1 - \frac{\cN_+-1}N \Big) b^*_{r+v} (\g_v b_v + \s_v b^*_{-v}) \\
\Xi_{13}^{(3)} & = \frac \k {N}\sum_{\substack{r \in P_H\\ v \in P_L}}  \sum_{\substack{p \in \L_+^*\\ p \neq v}}  \widehat V(p/N)\,\eta_r\, \g_v\s_v \, b^*_{p-v}\, b^*_{-p}\,  b^*_{r+v} b^*_{-r} \Big(1 - \frac{\cN_+}N \Big) \\
\Xi_{13}^{(4)} & = -\frac \k {N} \sum_{\substack{r \in P_H\\ v \in P_L }}  \sum_{\substack{p \in \L_+^* \\ p \neq -v }}   \widehat V(p/N)\,\eta_r\, \g^2_v \, b^*_{p+v}\, b^*_{-p}\, \Big(1 - \frac{\cN_+}N \Big) b_{-r} b_{r+v} \\
\Xi_{13}^{(5)}& = -\frac \k {N^2} \sum_{\substack{r \in P_H\\ v \in P_L}}  \sum_{\substack{p, q \in \L_+^* \\p\neq -q}}  \widehat V(p/N)\,\eta_r \, \g_{q} b_{p+q}^*\, b^*_{-p}\, ( a^*_{r+v} a_q b^*_{-r} + b^*_{r+v}a^*_{-r}a_q ) (\g_v b_v + \s_v b^*_{-v}) \\
\Xi_{13}^{(6)}& = \frac \k {N^2} \sum_{\substack{r \in P_H\\ v \in P_L}}  \sum_{\substack{p, q \in \L_+^* \\p\neq -q}}  \widehat V(p/N)\,\eta_r \, \g_{q} b_{p+q}^*\, b^*_{-p}(  \g_v a^*_v a_q b_{-r} b_{r+v}  - \s_v b^*_{r+v} b^*_{-r} a^*_{-v} a_q )\,.
\end{split}\]
To bound $\Xi_{13}^{(1)}$ we use Cauchy-Schwarz. We find (with appropriate shifts of the summation variables)  
\[  \begin{split}
& |\bmedia{\xi, \Xi_{13}^{(1)} \xi}|\\
& \leq  \frac C {N}    \, \biggl(\;  \sum_{\substack{r \in P_H \\ v \in P_L}}   \sum_{\substack{p \in \L_+^*\\ p \neq r+v}}   |p|^2  \, \|   b_{-r}  b_{-p} b_{p+r+v}\, (\cN_+ + 1)^{-1/2} \xi \|^2\biggr)^{1/2}  \\
& \qquad \times \biggl(\;  \sum_{\substack{r \in P_H \\ v \in P_L}}   \sum_{\substack{p \in \L_+^*\\ p \neq r+v}} \frac{| \widehat V(p/N)|^2}{|p|^2}\, |\eta_r|^2  \left[  \|  b_{v}\, (\cN_+ + 1)^{1/2} \xi\|^2 + 
 |\s_{v}|^2 \| (\cN_++1) \xi \|^2 \right]  \biggr)^{1/2}  \\
& \leq \frac C {\sqrt N}\; \Big[ \| (\cN_++1)^{1/2} (\cK +1)^{1/2} \xi \|^2 + \| (\cN_++1) \xi \|^2 \Big]\,.
\end{split}\]
The bounds for $\Xi_{13}^{(j)}$ with $j=2,3,4$ can be obtained similarly. As for the terms $\Xi_{13}^{(5)}$ and $\Xi_{13}^{(6)}$, they can be estimated proceeding as we did for $\Xi_{10}$. We conclude that
\begin{equation}\label{eq:Xi13fin} \pm \Xi_{13} \leq \frac{C}{\sqrt{N}} (\cN_+ + 1) (\cK + 1) + (\cN_+ + 1)^3 \end{equation}
Also the term $\Xi_{14}$ can be controlled analogously. To avoid repetitions, we skip the details. 

To conclude the proof of the lemma it remains to show \eqref{eq:EC2}, which follows from Lemma \ref{lm:commQA} since, for any $v \in P_L$, we have 
\[ \begin{split}
 \Big| \frac \k N \sum_{r \in P_H } \big( \widehat V(r/N) +\widehat V((r+v)/N)\big) \eta_r\,  ( \g_v^2 + \s^2_v) \Big | &\leq C, \\
  \Big|  \frac \k N \sum_{r \in P_H} \big( \widehat V(r/N) +\widehat V((r+v)/N)\big) \eta_r\,\g_v \s_v\, \Big | & \leq \frac C {v^2}\,
  \end{split}
\]
\end{proof}


\subsection{Analysis of $e^{-A} \cH_N e^{A}$.}

In this section, we analyse the action of the cubic exponential on $\cH_N =\cK + \cV_N$.

\begin{prop} \label{prop:SHS}
Let $A$ be defined as in \eqref{eq:defA} and $\cH_N$ as defined after (\ref{eq:KcVN}). Then, under the assumptions of Proposition~\ref{thm:tGNo1}, we have
\[ \begin{split} 
 e^{-A} \cH_N\, e^{A}   =\; & \cH_N - \frac 1 {\sqrt{N}} \sum_{\substack{r \in P_H\\ v \in P_L }}  \k \widehat{V}(r/N)\, \Big[ b^*_{r+v} b^*_{-r} \big( \g_v b_v + \s_v b^*_{-v} \big) + \emph{h.c.} \Big] \\
 & - \frac 1 N \sum_{\substack{r \in P_H\\ v \in P_L}} \k  \big( \widehat V(r/N) +\widehat V((r+v)/N) \big)  \eta_r \\ &\hspace{2cm} \times \Big[ \s_v^2 + (\g^2_v + \s^2_v)\, b^*_v b_{v} + \g_v \s_v\,  \big( b_v b_{-v} + b_v^* b_{-v}^*\big)  \Big]\\
&  + \cE_{N}^{(\cH)}\,,
\end{split} \]
where the error $\cE_{N}^{(\cH)}$  satisfies
\[ \begin{split}
\pm \cE_{N}^{(\cH)} &\leq C N^{-1/4}  \Big[(\cN_++1) (\cH_N +1) + (\cN_++1)^3 \Big]\,.
\end{split}\]
\end{prop}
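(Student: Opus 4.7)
The plan is to combine a Duhamel expansion with Lemma \ref{lm:commHA}. Writing
\[ e^{-A} \cH_N e^A = \cH_N + \int_0^1 ds \, e^{-sA}[\cH_N, A] e^{sA} \]
and inserting the decomposition $[\cH_N , A] = \sum_{j=0}^{9}(\Theta_j + \text{h.c.})$ from Lemma \ref{lm:commHA}, the contributions coming from $\Theta_1, \dots, \Theta_9$ satisfy the operator bound (\ref{eq:EH2}), which together with Propositions \ref{prop:cN} and \ref{prop:cNcH} (controlling the growth of $(\cN_++1)(\cH_N+1)$ and $(\cN_++1)^3$ under the action of $e^{\pm sA}$) shows they can be absorbed in $\cE_N^{(\cH)}$, with the required $N^{-1/4}$ bound. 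Thus only the handling of $\Theta_0 + \text{h.c.}$ requires real work.

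The key observation is that $\Theta_0 + \text{h.c.}$ is exactly the cubic operator appearing on the first line of the statement, so it survives unchanged at leading order. To produce the quadratic/constant correction on the second line, I would iterate the Duhamel formula once more:
\[ \int_0^1 ds \, e^{-sA}(\Theta_0 + \text{h.c.}) e^{sA} = (\Theta_0 + \text{h.c.}) + \int_0^1 ds \int_0^s dt \, e^{-tA}[\Theta_0 + \text{h.c.}, A] e^{tA}. \]
The crucial point is that $-(\Theta_0 + \text{h.c.})$ coincides with the restriction of $\cC_N$ in (\ref{eq:cCN}) to momenta $r\in P_H$, $v\in P_L$. Therefore the computation of $[\Theta_0 + \text{h.c.}, A]$ parallels step by step that of $[\cC_N, A]$ carried out in the proof of Lemma \ref{lm:commCA}. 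The leading contribution comes from the same ``contraction'' mechanism that produced $2\Xi_0$ there: commuting one of the $b^*$-operators in $\Theta_0$ past $A$ through the canonical commutation relations produces Kronecker deltas that reduce the cubic expression to a quadratic one, with coefficients
\[ -\frac 1 N \sum_{r\in P_H} \bigl( \widehat V(r/N) + \widehat V((r+v)/N) \bigr) \eta_r \]
times the quadratic/constant structure in $b_v, b_{-v}^*, \sigma_v, \gamma_v$. Using $\int_0^1 ds \int_0^s dt = 1/2$ against the factor $-2$, this yields precisely $-\Xi_0$, which is the second term in the claimed identity.

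The remaining contributions from $[\Theta_0 + \text{h.c.}, A]$ are the analogues of $\Xi_1, \dots, \Xi_{14}$ in Lemma \ref{lm:commCA}; each satisfies a bound of the form $C N^{-1/2} \left[(\cN_++1)(\cK+1) + (\cN_++1)^3\right]$ and, once sandwiched between $e^{\pm tA}$, can be estimated by Propositions \ref{prop:cN} and \ref{prop:cNcH} and absorbed in $\cE_N^{(\cH)}$. The main obstacle is carrying out this second commutator bookkeeping carefully: compared with Lemma \ref{lm:commCA} the sums are now restricted to the high/low momentum regions $P_H$, $P_L$, and the coefficient $\widehat V(p/N)$ is replaced by $\widehat V(r/N)$ with $r\in P_H$. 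These restrictions are actually helpful, since they allow us to exploit $\sum_{r\in P_H}|\eta_r|^2 \leq C N^{-1/2}$ and $\sum_{r\in P_H} |\widehat V(r/N)|^2/r^2 \leq CN$, which is precisely what produces the improved $N^{-1/4}$ factor in the final error estimate. Apart from this bookkeeping, no new idea beyond what is already developed in Lemmas \ref{lm:commHA} and \ref{lm:commCA} is needed.
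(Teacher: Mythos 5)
Your proposal matches the paper's own proof almost step for step: a first Duhamel expansion combined with Lemma \ref{lm:commHA} isolates $\Theta_0 + \text{h.c.}$ as the only non-negligible commutator; a second Duhamel iteration on $\Theta_0 + \text{h.c.}$ plus the observation that $-(\Theta_0 + \Theta_0^*)$ is exactly the $P_H\times P_L$-restriction of $\cC_N$ lets one recycle the bookkeeping of Lemma \ref{lm:commCA} (what the paper packages as Lemma \ref{lm:commT0A}), producing the quadratic term $\Pi_0 = -\Xi_0$ and errors bounded by $CN^{-1/2}[(\cN_++1)(\cK+1)+(\cN_++1)^3]$; Propositions \ref{prop:cN} and \ref{prop:cNcH} then control the conjugated remainders. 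The one point worth flagging in your write-up: the final $N^{-1/4}$ rate is not produced by the high/low-momentum splitting in the second commutator (those pieces, the $\Pi_j$, $j\geq1$, are in fact $O(N^{-1/2})$); it is inherited from the bound \eqref{eq:EH2} on $\Theta_1,\dots,\Theta_9$ in the first-level commutator, which is the bottleneck. Also, to extract $\Pi_0$ exactly from $\int_0^1\int_0^{s}e^{-tA}\,2\Pi_0\,e^{tA}$ one needs a third Duhamel iteration and the bound on $[\Pi_0,A]=-[\Xi_0,A]$ from \eqref{eq:EC2}; your proposal glosses over this but it is precisely the kind of ``remaining contribution'' you already allude to, so the gap is cosmetic rather than structural.
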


To show the proposition we use the following lemma.
\begin{lemma} \label{lm:commT0A}
Let $A$ be defined as in \eqref{eq:defA} and let 
\[ 
\Theta_0 = 
-\frac{1}{\sqrt{N}} \sum_{\substack{r \in P_H,\, v \in P_L }}  \k \widehat{V}(r/N)\, b^*_{r+v} b^*_{-r} \big( \g_v b_v + \s_v b^*_{-v} \big)\, 
\]
as defined in Lemma \ref{lm:commHA}. Then, under the assumptions of 
Proposition~\ref{thm:tGNo1}, 
\[
[\Theta_0 + \Theta_0^*, A] =  \sum_{j=0}^{12} \Pi_j  + \emph{h.c.} 
\]
with 
\[  \begin{split}
\Pi_0= \; & - \frac \k N \sum_{\substack{r \in P_H \\v \in P_L }} \big(\; \widehat V(r/N) +  \widehat V((r+v)/N) \;\big)  \eta_r  \big[\; \s_v^2 + (\g^2_v + \s^2_v) b^*_vb_v + \g_v \s_v (b_v b_{-v} + b^*_v b^*_{-v})\;\big] \\ 
\Pi_1=\; &  - \frac \k N \sum_{\substack{r \in P_H \\ v \in P_L}}  \widehat V((r+v)/N)  \eta_r (\g_v b^*_v + \s_v b_{-v}) \left[ \left(1 - \frac{\cN_+}N \right)^2 -1 \right] (\g_v b_v + \s_v b^*_{-v})  \\
& - \frac \k N \sum_{\substack{r \in P_H \\ v \in P_L }}  \widehat V(r/N)  \eta_r (\g_v b^*_v + \s_v b_{-v})  \left[\left(1 - \frac{\cN_+ +1 }N \right) \left(1 - \frac{\cN_+}N \right) -1 \right] \\ &\hspace{10cm} \times (\g_v b_v + \s_v b^*_{-v})  \\
\Pi_2 =\; & - \frac \k N \sum_{\substack{r \in P_H \\ v \in P_L}} \widehat V(r/N)  \eta_r \s_v \left(1 - \frac{\cN_+ +1 }N \right) \left(1 - \frac{\cN_+}N \right)  b_{r+v}(\g_{r+v} b_{-r-v} + \s_{r+v} b^*_{r+v})  \\
 \Pi_3 =\; & -\frac \k N \sum_{\substack{r \in P_H \\ v \in P_L}}  \sum_{\substack{w \in P_L \\-w +r+v \in P_H}}  \widehat{V}((r+v-w)/N) \eta_r  (\g_v b^*_v + \s_v b_{-v}) \left(1 - \frac{\cN_+ +1}N \right) \\
 & \hskip 4cm \times\big(b^*_{w-r-v} b_{-r} - \frac 1 N a^*_{w-r-v} a_{-r} \big)(\g_w b_w + \s_w b^*_{-w}) \\[9pt]
 \Pi_4 =\; & -\frac \k N \sum_{\substack{r \in P_H \\ v \in P_L}} \sum_{\substack{w \in P_L \\w+r \in P_H}}  \widehat{V}((r+w)/N) \eta_r  (\g_v b^*_v + \s_v b_{-v}) \left(1 - \frac{\cN_+}N \right) \\
 & \hskip 4cm \times\big(b^*_{r+w} b_{r+v} - \frac 1 N a^*_{r+w} a_{r+v} \big)(\g_w b_w + \s_w b^*_{-w})  \\
 \Pi_5 =\; & -\frac \k N \sum_{\substack{r \in P_H \\ v \in P_L}} \sum_{\substack{w \in P_L \\w+v \in P_H}} \widehat{V}((v+w)/N) \eta_r  \s_v  \left(1 - \frac{\cN_+}N \right) \\
 & \hskip 4cm \times\big(b^*_{w+v} b_{r+v} - \frac 1 N a^*_{w+v} a_{r+v} \big) b_{-r} (\g_w b_w + \s_w b^*_{-w})  \\[9pt]
 \Pi_6 =\; & \frac \k {N^2} \sum_{\substack{r \in P_H \\ v \in P_L}} \sum_{\substack{w \in P_L \\w+v \in P_H}} \widehat{V}((v+w)/N) \eta_r  \s_v  \left(1 - \frac{\cN_+}N \right) a^*_{v+w} a_{-r} b_{r+v} (\g_w b_w + \s_w b^*_{-w}) \\
\Pi_7  = \; & \frac \k N  \sum_{\substack{r \in P_H \\ v \in P_L}} \sum_{\substack{w \in P_L \\v-w \in P_H}} \widehat{V}((v-w)/N) \eta_r \g_v b^*_{r+v} b^*_{-r} \left(1 - \frac{\cN_+}N \right)   b^*_{w-v}(\g_w b_w + \s_w b^*_{-w})  \\
\Pi_8  =\; &\frac \k {N^2} \sum_{\substack{r,s \in P_H \\v,w \in P_L }}   \widehat{V}(s/N) \eta_r   (\g_v b^*_v + \s_v b_{-v}) (b_{-r} a^*_{s+w} a_{r+v} + a^*_{s+w} a_{-r} b_{r+v} ) b^*_{-s} (\g_w b_w + \s_w b^*_{-w}) \\
 \Pi_9 =\; &\frac \k {N^2} \sum_{\substack{r,s \in P_H \\v,w \in P_L }}  \widehat{V}(s/N) \eta_r  (\s_v a^*_{s+w} a_{-v} b_{-r} b_{r+v} - \g_v b^*_{r+v} b^*_{-r} a^*_{s+w} a_v \;\Big]\, b^*_{-s}\,(\g_w b_w + \s_w b^*_{-w})    \end{split}\]
and 
 \begin{equation}\label{eq:Pi1012} \begin{split}
\Pi_{10} =\; &  - \frac \k {\sqrt N} \sum_{\substack{r \in P_H,\, v \in P_L }} \widehat{V}(r/N) b^*_{r+v} [b^*_{-r}, A]  \big(  \g_v b_v + \s_v b^*_{-v} \big)  \\
\Pi_{11} =\; &  - \frac \k {\sqrt N} \sum_{\substack{r \in P_H,\, v \in P_L }} \widehat{V}(r/N) \g_v\,  b^*_{r+v} b^*_{-r}\, [ b_v , A]  \\
\Pi_{12}=\; &  - \frac \k {\sqrt N} \sum_{\substack{r \in P_H,\, v \in P_L}}  \widehat{V}(r/N)  b^*_{r+v} b^*_{-r} \s_v\, [ b^*_{-v}, A] 
\end{split} \end{equation}
For all $j=1,\dots , 12$ (but not for $j=0$) we have
\be \label{eq:EP1} 
\pm \big( \Pi_j + \text{h.c.} \big) \leq \frac{C}{\sqrt{N}} \Big[  
(\cN_++1) (\cK+1)+   (\cN_++1)^{3} \Big]  \,.
\ee
\end{lemma}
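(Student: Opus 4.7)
The strategy mirrors that of Lemma~\ref{lm:commCA}, exploiting the fact that $\Theta_0$ has the same cubic structure as the operator $\cC_N$ of \eqref{eq:cCN}, differing only in its coefficient ($\widehat V(r/N)$ in place of $\widehat V(p/N)$) and in the momentum restriction $r\in P_H$, $v\in P_L$. Since $A^* = -A$, one checks directly that $[\Theta_0^*,A]=[\Theta_0,A]^*$, so it suffices to compute $[\Theta_0,A]=\sum_{j=0}^{12}\Pi_j$ and then add the hermitian conjugate. The first step is to apply the Leibniz rule,
\[
[\Theta_0,A] = -\frac{1}{\sqrt N}\sum_{\substack{r\in P_H\\ v\in P_L}}\widehat V(r/N)\Bigl\{[b^*_{r+v},A]\,b^*_{-r}\tilde b_v + b^*_{r+v}\,[b^*_{-r},A]\,\tilde b_v + b^*_{r+v}b^*_{-r}\,[\tilde b_v,A]\Bigr\},
\]
with $\tilde b_v:=\gamma_v b_v+\sigma_v b^*_{-v}$. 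The last two groups are, by definition, $\Pi_{10},\Pi_{11},\Pi_{12}$, since $[\tilde b_v,A]=\gamma_v[b_v,A]+\sigma_v[b^*_{-v},A]$.

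For the first group, I would substitute the explicit formula \eqref{bpA} for $[b^*_{r+v},A]$. This generates four ``contracted'' contributions, selected by the Kronecker deltas $\delta_{r+v,r'+v'}$, $\delta_{r+v,-r'}$, $\delta_{r+v,-v'}$, $\delta_{r+v,v'}$, together with a quartic non-contracted piece arising from the second line of \eqref{bpA}. The diagonal operator $\Pi_0$ emerges from the first two deltas after a further contraction of the surviving factor with $b^*_{-r}\tilde b_v$: the two deltas give, respectively, the coefficients $\widehat V(r/N)$ and (after relabelling $r\to -r-v$) $\widehat V((r+v)/N)$, whose sum is the prefactor of $\Pi_0$, while the resulting operator combination is $\sigma_v^2+(\gamma_v^2+\sigma_v^2)b^*_vb_v+\gamma_v\sigma_v(b_vb_{-v}+b^*_vb^*_{-v})$. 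The $O(\cN_+/N)$ deviations coming from the $(N-\cN_+)/N$ factors in \eqref{bpA} collect into $\Pi_1$. The remaining two deltas and the quartic non-contracted piece, after normal-ordering and a renaming of the summation variable contributed by $A$ (e.g.\ $v\to w$), yield the terms $\Pi_2,\dots,\Pi_9$.

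The bounds \eqref{eq:EP1} for $\Pi_1,\dots,\Pi_9$ follow the Cauchy--Schwarz template of the proof of Lemma~\ref{lm:commCA}, using the decay $|\eta_r|\leq C|r|^{-2}$, $|\sigma_v|\leq C|v|^{-2}$, together with $\sum_{r\in P_H}|\eta_r|^2\leq CN^{-1/2}$, $\sum_{v\in P_L}|\sigma_v|^2\leq C$, $\sum_{v\in P_L}|\sigma_v|\leq CN^{1/2}$ and $|P_L|\leq CN^{3/2}$. The quintic pieces $\Pi_8,\Pi_9$ are handled, exactly as $\Xi_{10}$ in \eqref{eq:Xi10dec}--\eqref{eq:Xi102}, by first commuting the annihilation operators coming from the $a^*a$-pairs in \eqref{bpA} past the outer $b^*$-factors to restore normal order.

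The main obstacle is the analysis of $\Pi_{10},\Pi_{11},\Pi_{12}$. Substituting \eqref{bpA} into \eqref{eq:Pi1012} produces a long list of cubic, quartic, and quintic subterms. Several of them carry the coefficient $\widehat V(r/N)$ without the decaying factor $\eta_r$ to compensate, and a naive Cauchy--Schwarz does not close against $(\cN_++1)^3$. As in the treatment of $\Xi_{13}^{(1)}$ in Lemma~\ref{lm:commCA}, the remedy is to split $|\widehat V(r/N)|=|r|\cdot|\widehat V(r/N)|/|r|$, exploit the uniform bound $\sum_{r}|\widehat V(r/N)|^2/r^2\leq C$ on one side of the Cauchy--Schwarz estimate, and absorb the extra $|r|$ into the kinetic energy $\cK$ on the other side; this is precisely what forces the $(\cN_++1)(\cK+1)$ term in \eqref{eq:EP1}. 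The remaining, $\cN_+$-bounded contributions from $\Pi_{10},\Pi_{11},\Pi_{12}$ are controlled by the same arguments already used for $\Xi_{12},\Xi_{13},\Xi_{14}$.
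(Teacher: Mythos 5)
Your proposal follows the paper's proof essentially step for step: the same Leibniz split isolating $\Pi_{10},\Pi_{11},\Pi_{12}$, the same substitution of \eqref{bpA} for $[b^*_{r+v},A]$ with the four Kronecker deltas and the quartic non-contracted piece giving $\Pi_0,\dots,\Pi_9$, and the same Cauchy--Schwarz/kinetic-energy device for the commutator terms, mirroring the treatment of $\Xi_{13}$ in Lemma~\ref{lm:commCA}. One minor precision: the uniform bound you invoke is $N^{-1}\sum_r |\widehat V(r/N)|^2/r^2\leq C$ (not $\sum_r|\widehat V(r/N)|^2/r^2\leq C$), and the paper treats $\Pi_{10}$ by analogy with $\Pi_1,\dots,\Pi_9$ (no kinetic energy needed, since $[b^*_{-r},A]$ has $-r\in P_H$ and the contractions always produce the compensating factor $\eta_r$), reserving the kinetic-energy argument for $\Pi_{11},\Pi_{12}$; your grouping is slightly coarser but not wrong, since your hedged phrasing (``several of them'') is consistent with that distinction.
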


\vskip 0.5cm

\begin{proof}[Proof of Prop. \ref{prop:SHS}.] To show the proposition we write
\be  \begin{split} \label{eq:SHS}
 e^{-A} \cH_N\, e^{A}= \cH_N+ \int_0^1 ds \, e^{-s A}\, [\cH_N, A] e^{s A}  
\end{split}\ee
From  Lemma \ref{lm:commHA} we know that
\[
[\cH_N, A] = \Theta_0 + \Theta_0^* + \cE^{(\cH)}_{N,1}  
\]
where
\begin{equation}\label{eq:EHN1}
\pm \,\cE^{(\cH)}_{N,1} \leq  C N^{-1/4}\; \Big[(\cN_++1) (\cK +1) + (\cN_++1)^3 \Big]\,.
\end{equation}
Hence, \eqref{eq:SHS} implies that 
\[  \begin{split}
  e^{-A} & \cH_N\, e^{A } \\
%
 =\; &  \cH_N + \Theta_0 + \Theta_0^*\\
& \quad +\int_0^1 ds_1 \int_0^{s_1} ds_2 \, e^{-s_2 A}\, \big[(\Theta_0 + \Theta_0^*), A \big] e^{s_2 A}   + \int_0^1 ds \, e^{-s A}\, \cE_{N,1}^{(\cH)}\, e^{s A} \,.
\end{split}\]
Using Lemma \ref{lm:commT0A} and setting $\cE_{N,2}^{(\cH)} = \sum_{j=1}^{12} \Pi_j + \hc$ we finally obtain 
\[ \begin{split} 
  e^{-A} \cH_N\, e^{A}  =\; & \cH_N  + \Theta_0 + \Theta_0^* + \Pi_0 \\
  & +\int_0^1 ds_1 \int_0^{s_1} ds_2 \int_0^{s_2} d s_3 \, e^{-s_3 A}\, [2 \Pi_0, A] e^{s_3 A}  \\
& +\int_0^1 ds_1 \int_0^{s_1} ds_2 \, e^{-s_2 A}\, \cE_{N,2}^{(\cH)}  e^{s_2 A} + \int_0^1 ds e^{-s A}\, \cE_{N,1}^{(\cH)}\,  e^{s A} \,.
\end{split}\]
Proposition \ref{prop:SHS} now follows combining (\ref{eq:EHN1}) with the estimates (\ref{eq:EP1}) and with the observation that $\Pi_0 = -\Xi_0$, where $\Xi_0$ is defined in Lemma \ref{lm:commCA} and satisfies the bound \eqref{eq:EC2}.
\end{proof}

\begin{proof}[Proof of Lemma \ref{lm:commT0A}]  We write
\[
[\Theta_0, A] =  - \frac 1 {\sqrt N} \sum_{\substack{r \in P_H,\, v \in P_L \\r+v \neq 0}} \k \widehat{V}(r/N)   [b^*_{r+v}, A] b^*_{-r}\big( \g_v b_v + \s_v b^*_{-v}\big)  + \sum_{j=10}^{12} \Pi_{j}
\] 
Using \eqref{bpA} and normal ordering the quartic terms (with the exception of the factor $( \g_v b_v + \s_v b^*_{-v})$) we obtain that
\[ - \frac 1 {\sqrt N} \sum_{\substack{r \in P_H,\, v \in P_L \\r+v \neq 0}} \k \widehat{V}(r/N)   [b^*_{r+v}, A] b^*_{-r}\big( \g_v b_v + \s_v b^*_{-v}\big) = \sum_{j=0}^{9} \Pi_j \]

We now show \eqref{eq:EP1}. The bound for $\Pi_1$ follows from
\[ \begin{split} | \langle \xi, \Pi_1 \xi  \rangle | \leq\; & \frac{C}{N^2} \sum_{r \in P_H, v \in P_L} \frac{|\widehat{V} (r/N)|}{r^2} \,  \left[ \| b_v (\cN_+ +1)^{1/2} \xi \|^2 + |\s_v|^2 \| (\cN_+ + 1) \xi \|^2 \right] \\ \leq \; &\frac{C}{N} \| (\cN_+ + 1) \xi \|^2 \end{split} \]
To bound $\Pi_j$ with j=2,3,4,7 one uses that $|P_L| \leq C N^{3/2}$ and $\sum_{r \in P_H} |\eta_r|^2 \leq C N^{-1/2}$ . Hence
\[ \begin{split}
|\bmedia{\xi, \Pi_2 \xi}|  & \leq   \frac k N \sum_{\substack{r \in P_H \\ v \in P_L}} |\widehat V(r/N)|  |\eta_r| |\s_v|  \| b_{r+v}^* \xi \|  \|(\g_{r+v} b_{-r-v} + \s_{r+v} b^*_{r+v}) \xi\| \\
& \leq  \frac k N \Big(\sum_{\substack{r \in P_H \\ v \in P_L}}  |\eta_r|^2 |\s_v|^2  \| b_{r+v}^* \xi \|^2 \Big)^{1/2} \Big(  \sum_{\substack{r \in P_H \\ v \in P_L}} \|(\g_{r+v} b_{-r-v} + \s_{r+v} b^*_{r+v}) \xi\|^2 \Big)^{1/2} \\
& \leq \frac C {\sqrt N} \|(\cN_+ +1)^{1/2} \xi \|^2
\end{split}\]
Similarly, with Cauchy-Schwarz we can bound $\Pi_3$ by  
\[ \begin{split}
 |\bmedia{\xi, \Pi_3 \xi}|
 & \leq \frac C N \Big( \sum_{\substack{r \in P_H \\ v \in P_L}}  \sum_{\substack{w \in P_L : \\-w +r+v \in P_H}} \hskip -0.5cm  |\eta_r|^2  \left[ \| b_v  b_{w-r-v} \xi \|^2 + |\s_v|^2 \| b_{w-r-v} (\cN_+ + 1)^{1/2} \xi \|^2 \right] \Big)^{1/2} \\ &\hspace{2cm} \times \Big( \sum_{\substack{r \in P_H \\ v \in P_L}}  \sum_{\substack{w \in P_L : \\-w +r+v \in P_H}} \hskip -0.5cm  \left[ \| b_{-r} b_w \xi \|^2 + |\s_w|^2 \| b_{-r} (\cN_+ + 1)^{1/2} \xi \|^2 \right] \Big)^{1/2}  \\
& \leq \frac C {\sqrt N}\;  \|(\cN_++1) \xi \|^2
 \end{split}\]
 The terms $\Pi_4$ and $\Pi_7$ are bounded similarly. It is easy to check that $\Pi_5$ and $\Pi_6$ satisfy \eqref{eq:EP1}. For example, we have  
 \[ \begin{split}
 |\bmedia{\xi, \Pi_5 \xi}|&\leq \frac C N \sum_{\substack{r \in P_H \\ v \in P_L}} \sum_{\substack{w \in P_L \\w+v \in P_H}} |\widehat{V}((v+w)/N)| |\eta_r| | \s_v| \, \| b_{w+v} (\cN_++1)^{1/2} \xi \| \\
 & \hskip 1cm \times  \left[ \| b_{r+v}  b_{-r}  b_w (\cN_+ + 1)^{-1/2} \xi \| + |\s_w| \| b_{r+v}  b_{-r} b^*_{-w} (\cN_+ + 1)^{-1/2} \xi \| \right]  \\
 & \leq \frac C N \Big(\sum_{\substack{r \in P_H \\ v \in P_L}} \sum_{\substack{w \in P_L \\w+v \in P_H}} |\eta_r|^2 | \s_v|^2 \| b_{w+v} (\cN_++1)^{1/2} \xi \|^2  \Big)^{1/2} \\
 & \hskip 1cm \times \bigg(\sum_{\substack{r \in P_H, \\ v \in P_L}} \sum_{\substack{w \in P_L \\w+v \in P_H}} \bigg[ \| b_{r+v}  b_{-r}  b_w (\cN_+ + 1)^{-1/2} \xi \|^2 \\ &\hspace{6cm} + |\s_w| \| b_{r+v}  b_{-r} b^*_{-w} (\cN_+ + 1)^{-1/2} \xi \|^2 \bigg]  \bigg)^{1/2}  \\
 & \leq \frac C N \; \| (\cN_++1) \xi \|^2
  \end{split}\]
(In the term containing the creation operator $b_w^*$, we first sum over $\wt{v} = v+r$ and over $r$. This produces a factor $(\cN_+ + 1)$ which can be moved through $b_w^*$. At this point, we can estimate $b_w^*$ by an additional factor $(\cN_+ + 1)^{1/2}$; with this procedure, we do not have to compute the commutator between $b_w^*$ and the other annihilation operators). The bound for $\Pi_6$ is similar. As for $\Pi_8$, we decompose it as 
  \[ \begin{split}
 \Pi_8 = \; &\frac 1 {N^2} \sum_{\substack{r,s \in P_H \\v,w \in P_L }}  \k \widehat{V}(s/N) \eta_r   (\g_v b^*_v + \s_v b_{-v}) b_{-r} b^*_{-s} a^*_{s+w} a_{r+v} \,(\g_w b_w + \s_w b^*_{-w}) \\
 & +\frac 1 {N^2} \sum_{\substack{r\in P_H \\v,w \in P_L }}  \k \widehat{V}((r+v)/N) \eta_r   (\g_v b^*_v + \s_v b_{-v}) b_{-r} b^*_{-r-v +w}  \,(\g_w b_w + \s_w b^*_{-w}) \\
 & +\frac 1 {N^2} \sum_{\substack{r,s \in P_H \\v,w \in P_L }}  \k \widehat{V}(s/N) \eta_r   (\g_v b^*_v + \s_v b_{-v}) a^*_{s+w} a_{-r} b_{r+v}  b^*_{-s}\,(\g_w b_w + \s_w b^*_{-w}) \\
  = \; &\Pi_8^{(1)} + \Pi_8^{(2)}  +  \Pi_8^{(3)} 
  \end{split}\]
The term $\Pi^{(1)}_8$ can be bounded commuting first the operator $b_{-r}$ to the right, analogously to the estimates (\ref{eq:Xi101}) and (\ref{eq:Xi102}) for the term $\Xi_{10}^{(1)}$ in the proof of Lemma \ref{lm:commCA}. Also the terms $\Pi^{(2)}_8$ (which is similar to  $\Xi_{10}^{(3)}$ in (\ref{eq:Xi10dec})) and $\Pi^{(3)}_8$ can be treated similarly. We conclude that 
\[ \pm \Pi_8 \leq C N^{-1/2} (\cN_+ + 1)^3 \, . \]
The operator $\Pi_9$ can be controlled as $\Pi_8$.
   
Finally, to bound the terms $\Pi_{10}$, $\Pi_{11}$ and $\Pi_{12}$ in (\ref{eq:Pi1012}), we can expand them using (\ref{bpA}). The contributions arising from $\Pi_{10}$ are similar to the terms $\Pi_{1}, \dots, \Pi_{9}$  and can be estimated analogously. On the other hand, the terms arising from $\Pi_{11}$ and $\Pi_{12}$ are similar to those arising from $\Xi_{13}$ and $\Xi_{14}$ in the proof of  Lemma \ref{lm:commCA}, and can be handled proceeding as we did to show (\ref{eq:Xi13fin}), 
making use of the kinetic energy operator. 
\end{proof}
   

\subsection{Proof of Theorem~\ref{thm:tGNo1}}
\label{sec:pf-thmA}

Combining the results of Prop.~\ref{prop:SQS},  Prop.~\ref{prop:SCS} and  Prop.~\ref{prop:SHS} we conclude that
\be \begin{split}  \label{eq:SGS}
\cJ_N = \; &e^{-A} \, \cG_N e^A    \\
   = \; &C_{\cG_N} + \cQ_{\cG_N} + \cH_N \\
 &+ \frac 1 N \sum_{\substack{r \in P_H, v \in P_L }} \k  \big(\widehat V(r/N) + \widehat V((r+v)/N)\big)   \eta_r \\ &\hspace{3cm} \times \Big[ \s^2_v + (\g^2_v + \s^2_v)\, b^*_v b_{v} + \g_v \s_v\,  \big( b_v b_{-v} + b_v^* b_{-v}^*\big)  \Big]\\
 &+\cC_N - \frac 1 {\sqrt{N}} \sum_{\substack{r \in P_H, v \in P_L}}  \k \widehat{V}(r/N)  \Big[ b^*_{r+v} b^*_{-r} \big( \g_v b_v + \s_v b^*_{-v} \big) + \hc \Big] \\
&+ \wt{\cE}_{\cJ_N}\,,
\end{split} \ee
with an error operator $\wt{\cE}_{\cJ_N}$ satisfying 
\[ \begin{split} 
\pm\, \wt{\cE}_{\cJ_N} &\leq  C N^{-1/4} \; \Big[(\cH_N +1)(\cN_++1)  + (\cN_++1)^3 \Big]
\end{split}\]
We show now that the sum of the cubic terms on the fifth line of \eqref{eq:SGS} also contributes to the error term. In fact we have
\[ \begin{split} 
\cC_N -  \frac 1 {\sqrt{N}} & \sum_{\substack{r \in P_H,\, v \in P_L }}  \k \widehat{V}(r/N)\, \Big[ b^*_{r+v} b^*_{-r} \big( \g_v b_v + \s_v b^*_{-v} \big) + \hc \Big]  \\
=\; & \frac 1 {\sqrt{N}} \sum_{\substack{ v \in  P_H, \,r \in \L^*_+ \\ r+v \neq 0}}  \k \widehat{V}(r/N)\, \Big[ b^*_{r+v} b^*_{-r} \big( \g_v b_v + \s_v b^*_{-v} \big) + \hc \Big]  \\
&  + \frac 1 {\sqrt{N}} \sum_{\substack{ v,r \in P_L\\ r+v \neq 0}}  \k \widehat{V}(r/N)\, \Big[ b^*_{r+v} b^*_{-r} \big( \g_v b_v + \s_v b^*_{-v} \big) + \hc \Big]  
\\ = \;& \text{Z}_1 + \text{Z}_2 
\end{split}\] 
To bound $\text{Z}_1$ we use that $|v|^{-1} \leq N^{-1/2}$ for $v \in P_H$ and $\sum_{v \in P_H} |\s_v|^2 \leq C N^{-1/2}$. We find  
\[ \begin{split} |\langle \xi, \text{Z}_1\xi\rangle | \leq \; &\frac{C}{\sqrt{N}} \sum_{\substack{ v \in P_H, \, r \in \L^*_+ \\ r+v \neq 0}} |\widehat{V} (r/N)| \| b_{r+v} b_{-r} \xi \| \left[ \| b_v \xi \| + |\s_v| \| (\cN_+ + 1)^{1/2} \xi \| \right]  \\ \leq \; &\frac{C}{\sqrt{N}} \bigg( \sum_{\substack{ v \in P_H, \, r \in \L^*_+ \\ r+v \neq 0}} \frac{r^2}{v^2} \| b_{r+v} b_{-r} \xi \|^2 \bigg)^{1/2} \bigg( \sum_{\substack{ v \in P_H, \, r \in \L^*_+ \\ r+v \neq 0}} \frac{|\widehat{V} (r/N)|^2}{r^2} v^2 \| b_v \xi \|^2 \bigg)^{1/2} \\ &+ \frac{C}{\sqrt{N}} \bigg( \sum_{\substack{ v \in P_H, \, r \in \L^*_+ \\ r+v \neq 0}} r^2 \| b_{r+v} b_{-r} \xi \|^2 \bigg)^{1/2} \bigg( \sum_{\substack{ v \in P_H, \, r \in \L^*_+ \\ r+v \neq 0}} \frac{|\widehat{V} (r/N)|^2}{r^2}  |\s_v|^2  \bigg)^{1/2}  \\ &\hspace{9cm} \times \| (\cN_+ + 1)^{1/2}  \xi \| \\
\leq \; &\frac{C}{N^{1/4}} \| (\cK+ 1)^{1/2} (\cN_+ + 1)^{1/2} \xi \|^2 
\end{split} \]
The term $\text{Z}_2$ is bounded using Cauchy-Schwarz and the estimate $\sum_{r \in P_L} |r|^{-2} \leq C N^{1/2}$. We obtain 
\[ \begin{split}
|\langle \xi, \text{Z}_2\xi\rangle | \leq \; &  \frac C {\sqrt N}\; \Big( \sum_{\substack{  r,v \in P_L \\ r+v \neq 0}}  r^2\| b_{r+v} b_{-r} \xi \|^2 \Big)^{1/2} \Big( \sum_{\substack{  r,v \in P_L,\\ r+v \neq 0}}  \frac{1}{r^2}  \left[ \| b_v \xi \|^2 + |\s_v|^2 \| (\cN_+ + 1)^{1/2} \xi \|^2 \right]  \Big)^{1/2} \\
\leq \; &\frac C {N^{1/4}}\; \|(\cN_++1)^{1/2} (\cK +1)^{1/2} \xi \|^2\,.
\end{split}\]
Similarly, we can show that, in the term on the third and fourth line in (\ref{eq:SGS}), the restriction $r \in P_H, v \in P_L$ can be removed producing only a negligible error. We conclude that 
\[ \begin{split} 
 \cJ_N =\; &  C_{\cG_N} + \cQ_{\cG_N} + \cH_N \\
 & +\frac 1 N \sum_{\substack{p, q \in \L_+^*}}\k \widehat V((p+q)/N)  \eta_q  \Big[ \s^2_p + (\g^2_p + \s^2_p)\, b^*_p b_{p} + \g_p \s_p\,  \big( b_p b_{-p} + b_p^* b_{-p}^*\big)  \Big]  \\
  & +\frac 1 N \sum_{\substack{p, q \in \L_+^*}}\k \widehat V(q/N) \eta_q  \Big[ \s^2_p + (\g^2_p + \s^2_p)\, b^*_p b_{p} + \g_p \s_p\,  \big( b_p b_{-p} + b_p^* b_{-p}^*\big)  \Big]  \\
 &+ \bar{\cE}_{\cJ_N}
\end{split} \]
with
\[
\pm\, \bar{\cE}_{\cJ_N} \leq  \frac C {N^{1/4}}\; \Big[ (\cH_N+1)(\cN_++1) + (\cN_++1)^3\Big] \,.
\]
Theorem \ref{thm:tGNo1} now follows from the observation that 
\begin{equation*}
  \begin{split}
  &C_{\cG_N}+\frac 1 N \sum_{\substack{p, q \in \L_+^*}}\k  \widehat V((p+q)/N)  \eta_q  \s^2_p +\frac 1 N \sum_{\substack{p, q \in \L_+^*}}\k \widehat V(q/N) \eta_q  \s^2_p  =C_{\cJ_N} + \cO (N^{-1}) 
 \end{split}
\end{equation*}
and that 
\begin{equation}\label{eq:FinalSumQ}
  \begin{split}
  \cQ_N &+ \cK +\frac 1 N \sum_{\substack{p, q \in \L_+^*}}\k  \widehat V((p+q)/N)  \eta_q  \Big[(\g^2_p + \s^2_p)\, b^*_p b_{p} + \g_p \s_p\,  \big( b_p b_{-p} + b_p^* b_{-p}^*\big)  \Big]\\
  &+\frac 1 N \sum_{\substack{p, q \in \L_+^*}}\k  \widehat V(q/N)  \eta_q  \Big[(\g^2_p + \s^2_p)\, b^*_p b_{p} + \g_p \s_p\,  \big( b_p b_{-p} + b_p^* b_{-p}^*\big)  \Big] =\wt{\cQ}_N+ \wt{\cE}_{\cQ_N}, 
 \end{split}
\end{equation}
with 
\begin{equation}\label{eq:errorEQ}
\pm\, \wt{\cE}_{\cQ_N} \leq \frac CN (\cN_++1)\,.
\end{equation}
Here we used the fact that the contribution to $\cQ_N$ arising from the last term in (\ref{eq:phi}) and (\ref{eq:gamma}) cancels with the last sum on the l.h.s. of (\ref{eq:FinalSumQ}) (it is easy to check that the remainder corresponding to the momentum $q=0$ satisfies (\ref{eq:errorEQ})).

\appendix

\section{Condensate Depletion} 
\label{sec:deple}

The goal of this short appendix is to prove the formula (\ref{eq:depl0}) for the number of orthogonal excitations of the condensate, in the ground state of (\ref{eq:Ham0}). 

We start with the observation that 
\begin{equation}\label{eq:deple0} \begin{split} \langle  U_N \psi_N, \cN_+ U_N \psi_N \rangle = \; &\left\langle \left[ U_N \psi_N - e^{i\omega} e^{B(\eta)} e^{A} e^{B(\tau)} \Omega \right] , \cN_+ U_N \psi_N \right\rangle \\ 
&+ \left\langle   e^{i\omega} e^{B(\eta)} e^{A} e^{B(\tau)} \Omega, \cN_+ \left[  U_N \psi_N - e^{i\omega} e^{B(\eta)} e^{A} e^{B(\tau)} \Omega \right] \right\rangle \\ &+ \left\langle e^{B(\eta)} e^{A} e^{B(\tau)} \Omega, \cN_+ e^{B(\eta)} e^{A} e^{B(\tau)} \Omega \right\rangle \end{split} \end{equation}
From (\ref{eq:gs-appro}), Lemma \ref{lm:Ngrow} and Prop. \ref{prop:hpN} we conclude that 
\[ \left|   \langle  U_N \psi_N, \cN_+ U_N \psi_N \rangle - \left\langle e^{B(\eta)} e^{A} e^{B(\tau)} \Omega, \cN_+ e^{B(\eta)} e^{A} e^{B(\tau)} \Omega \right\rangle \right| \leq C N^{-1/8} \]

Proceeding as in Section \ref{sec:GN2} and recalling the notation $\g_p = \cosh \eta_p$, $\s_p = \sinh \eta_p$, we find 
\[
 e^{-B(\eta)} \cN_+  e^{B(\eta)} = \sum_{p \in \Lambda^*_+}  \left[\; (\g_p^2 + \s_p^2) b^*_p b_p + \g_p \s_p (b^*_p b^*_{-p} + b_p b_{-p}) + \s_p^2  \;\right] + \wt{\cE}_1
\]
where $\pm\, \wt{\cE}_1 \leq C N^{-1} (\cN_++1)^2$. By Lemma \ref{lm:commQA} and Proposition \ref{prop:cN} we have
\[ e^{-A}e^{-B(\eta)} \cN_+  e^{B(\eta)} e^{A} = 
 \sum_{p \in \Lambda^*_+}  \left[\; (\g_p^2 + \s_p^2) b^*_p b_p + \g_p \s_p (b^*_p b^*_{-p} + b_p b_{-p}) + \s_p^2  \;\right] + \wt{\cE}_2\]
with $\pm\,\wt{\cE}_2 \leq  C N^{-1/2} (\cN_++1)^2$. Conjugating with the generalized Bogoliubov transformation $e^{B(\tau)}$ and taking the vacuum expectation, we obtain 
\[ \begin{split} &\left\langle e^{B(\eta)} e^{A} e^{B(\tau)} \Omega, \cN_+ e^{B(\eta)} e^{A} e^{B(\tau)} \Omega \right\rangle \\ &\hspace{1cm} = \sum_{p \in \L^*_+} \left[ \s_p^2 + (\s_p^2 + \g_p^2) \sinh^2 \tau_p + 2 \g_p \s_p \sinh (\tau_p) \cosh (\tau_p) \right] + \cO (N^{-1/2}) \end{split} \]
With \eqref{eq:taup}, we find 
\[
2\sinh^2 \tau_p = \frac{F_p}{\sqrt{F_p^2 - G_p^2}} -1\,,  \qquad  2\sinh \tau_p \cosh \tau_p = \frac{-G_p}{\sqrt{F_p^2 - G_p^2}}\,.
\]
Using (\ref{eq:defFpGp}), we arrive at
\[ \begin{split} &\left\langle e^{B(\eta)} e^{A} e^{B(\tau)} \Omega, \cN_+ e^{B(\eta)} e^{A} e^{B(\tau)} \Omega \right\rangle \\ &\hspace{1cm} = \sum_{p \in \L^*_+}
  \frac{p^2   + \big(\widehat{V} (\cdot/N)\ast\widehat{f}_{\ell,N}\big)_p - \sqrt{ p^{4} + 2p^2 \big(\widehat{V} (\cdot/N)\ast\widehat{f}_{\ell,N}\big)_p }}{2 \sqrt{p^{4} + 2p^2 \big(\widehat{V} (\cdot/N)\ast\widehat{f}_{\ell,N}\big)_p }}
+ \cO (N^{-1/2}) \end{split} \]
with $\widehat {f}_{\ell,N}$ as in (\ref{eq:fellN}). Proceeding as in the proof of (\ref{eq:EBogconv}), we conclude that 
\[ \left\langle e^{B(\eta)} e^{A} e^{B(\tau)} \Omega, \cN_+ e^{B(\eta)} e^{A} e^{B(\tau)} \Omega \right\rangle = \sum_{p \in \L^*_+}
  \frac{p^2   + 8\pi \frak{a}_0  - \sqrt{ p^{4} + 16 \pi \frak{a}_0  p^2}}{2\sqrt{p^4 + 16\pi \frak{a}_0  p^2}} + \cO (N^{-1/2}) \]
Eq. (\ref{eq:depl0}) follows by combining (\ref{eq:deple0}) with the last equation, since 
\[ 1- \langle \ph_0, \gamma^{(1)}_N \ph_0 \rangle = N^{-1} \langle U_N \psi_N, \cN_+ U_N  \psi_N
 \rangle \]

\section{Properties of the scattering function}\label{appx:sceq}

In this appendix we give a proof of Lemma \ref{3.0.sceqlemma} containing the basic properties of the solution of the Neumann problem \eqref{eq:scatl}. 

\begin{proof}[Proof of Lemma \ref{3.0.sceqlemma}]
We adapt the proof of \cite[Lemma A.1]{ESY0} and \cite[Lemma 4.1]{BBCS3}. We start showing an upper bound for \eqref{eq:lambdaell}. We consider the solution $f$ of the zero energy scattering equation \eqref{eq:0en} on $\mathbb{R}^3$ with boundary condition $f(x)\to1$ as $|x|\to\infty$ (for the properties of $f$ we refer to \cite[Lemma D.1]{ESY2}). We set $r=|x|$ and $m(r)=rf(r)$. Clearly $m(r)$ satisfies
\begin{equation}\label{eq:zeroEm}
-m''(r) +\frac 1 2 V(r) m(r) =0\, 
\end{equation}
We define 
\begin{equation}\label{eq:upppsi}
 \psi(r)=\sin(km(r))/r
\end{equation}
with $k\in \mathbb{R}$.
From
\[
 \partial_r\psi(r)=\frac{1}{r^2}(m'(r)kr\cos(km(r))-\sin(km(r))
\]
we conclude that $\psi$ satisfies Neumann boundary conditions at $r=N\ell$ if and only if 
\begin{equation}\label{eq:tang}
 kN\ell=\tan(k(N\ell-\frak{a}_0))
\end{equation}
(recall that $m(r)=r-\frak{a}$ for $r$ outside the support of the potential).
We choose $k$ to be the smallest positive real number satisfying equation \eqref{eq:tang}. Expanding the tangent, we find a constant $C>0$ such that
\begin{equation}\label{eq:B0}
\frac{3\frak{a}_0}{(N\ell)^3} \left(1 + \frac 9 5 \frac{\frak{a}_0}{N \ell} - C \frac{\frak{a}_0^2}{(N\ell)^2} \right) \leq k^2 \leq \frac{3\frak{a}_0}{(N\ell)^3} \left(1+ \frac 9 5 \frac{\frak{a}_0}{ N\ell} + C \frac{\frak{a}_0^2}{(N\ell)^2} \right)
\end{equation}
We calculate now $\langle\ps, (-\D+V/2) \ps\rangle$. To this end we compute
\begin{equation}\label{eq:spp}
 -[\sin(k m(r))]'' = k^2 \big(m'(r)\big)^2 \sin(k m(r)) - k m''(r)  \cos(k m(r))
\end{equation}
Using that $\D \psi(r) = \frac 1 r \frac{\dpr^2}{\dpr r^2} (r \psi(r))$ and \eqref{eq:spp} we get
\begin{equation}\label{eq:trialS}
\begin{split}
 \ps (-\D+V/2)\ps  =\; & k^2 \ps^2 \\
& + k^2 ((m'(r))^2 -1) \left( \frac{\sin(k m(r))}{r^2} \right)^2 \\
&+\frac 1 {r^2} \Big[ - k m'' \sin(k m(r)) \cos(k m(r)) + \frac{1}{2} V (\sin(k m(r)))^2 \Big]  
\end{split}
\end{equation}
We denote with $R$ the radius of the support of $V$, so that 
$\text{supp} V\subset\{x\in\mathbb{R}^3:|x|\leq R\}$.
Notice that the terms on the second and  third line in \eqref{eq:trialS} are supported on $|x|\leq R$; the contribution to $\langle\ps, (-\D+V/2)\ps\rangle$ coming from the second line is bounded by 
\[
  4 \pi k^2 \int_0^{R} dr r^2\, \big[(m'(r))^2 -1 \big]   \frac{k^2 m^2(r)}{r^4} \leq  C k^4  
\]
where we used that $m(r)  \leq r$ (and the fact that $m'$ is bounded, as follows from \cite[Lemma D.1]{ESY2}). We consider now the contribution coming from the third line in \eqref{eq:trialS}. Using equation \eqref{eq:zeroEm} we notice that the summands of order $k^2$ in the square brecket cancel. We get
\begin{equation}\nonumber
\begin{split}
2\pi\int_0^{R} dr V(r) \big[ - k m(r) \sin(k m(r)) \cos(k m(r)) +  (\sin(k m(r)))^2 \big]  \leq C k^4 
\end{split}
\end{equation}
We have therefore
\begin{equation}\nonumber
\begin{split}
 \langle\ps, (-\D+V/2) \ps\rangle\leq k^2  \langle\ps, \ps\rangle + Ck^4 
\end{split}
\end{equation}
Using the estimate
\[
\sin(k m(r)) \geq C k m(r) \geq C k r \quad \text{for } \quad  2R < r < N \ell 
\]
(since $m(r)=r -a > r - R$ for $r>R$), we also get the lower bound
\[
\langle\ps,\ps\rangle = 4 \pi \int_0^{N\ell} dr\, \sin^2 (k m(r))   \geq C \int_{2R}^{N\ell} dr \,k^2 r^2 \geq C (N\ell)^3 k^2 \geq C 
\]
by the lower bound in (\ref{eq:B0}). The upper bound in (\ref{eq:B0}) implies therefore that 
\begin{equation}\label{eq:UBlambda}
\begin{split}
\l_\ell \leq \frac{\langle\ps, (-\D+V/2) \ps\rangle }{\langle\ps, \ps\rangle } \leq k^2+C k^4 
\leq  \frac{3\frak{a}_0}{(N\ell)^3} \left(1+ \frac 9 5 \frac{\frak{a}_0}{ N\ell} + C \frac{\frak{a}_0^2}{(N\ell)^2} \right) \end{split}
\end{equation}

We look now for a lower bound for $\l_\ell $. Given any function $\f$ satisfying Neumann boundary conditions at $|x|=N\ell$, we can write it as  $\f(x)=g(x)\ps(x)$, with $\ps(x)$ the trial function defined in (\ref{eq:upppsi}) and $g>0$ satisfying Neumann boundary condition at $|x|=N\ell$, too. From the identity
\begin{equation}\nonumber
\begin{split}
\big(-\D+V/2\big) \f = \left[\big(-\D+V/2\Big)  \ps\right] g - (\D g) \ps - 2 \nabla g \nabla \ps
\end{split}
\end{equation}
we have
\begin{equation}\nonumber
\begin{split}
\int_{|x|\leq N\ell} dx \, \bar{\f} \big(-\D+V/2 \big) \f = \int_{|x|\leq N\ell} dx \, |\nabla g|^2 \ps^2 + \int_{|x|\leq N\ell} dx \, |g|^2 \ps \big(-\D+V/2\big) \ps
\end{split}
\end{equation}
From \eqref{eq:trialS} we see that
\begin{equation}\nonumber
\begin{split}
\Big| \ps\big(-\D+V/2 \big) \ps - k^2 \ps^2 \Big| \leq C k^4 r^{-2} \chi(r \leq R)  
\end{split}
\end{equation}
Therefore
\begin{equation}\nonumber
\begin{split}
\int_{|x|\leq N\ell} dx\, \bar \f \big(-\D+V/2\big) \f \geq k^2\|\f\|^2_2 + \int_{|x|\leq N\ell} dx |\nabla g|^2 \ps^2 - C k^4 \int_{|x|\leq R} dx \frac{|g(x)|^2}{|x|^2} 
\end{split}
\end{equation}
where we indicated with $\|\cdot\|_2$ the $L^2$ norm on $B_{N\ell}=\{x\in\mathbb{R}^3:|x|\leq N\ell\}$.
In the second integral we can bound 
\begin{equation}\nonumber
\begin{split}
\ps(r) \geq \frac{k m(r)}{r} \geq c k 
\end{split}
\end{equation}
being $m(r)\geq c r$, for a constant $c >0$ (see \cite[Lemma D.1]{ESY2}). We get 
\begin{equation}\nonumber
\begin{split}
\int_{|x|\leq N\ell} dx\, \bar \f \big(-\D+V/2\big) \f \geq k^2\|\f\|^2 + c^2 k^2 \int_{|x|\leq N\ell} dx |\nabla g|^2 - C k^4 \int_{|x|\leq R} dx \frac{|g(x)|^2}{|x|^2} 
\end{split}
\end{equation}
The third integral can be bounded using the following finite volume version of the Hardy inequality (see, for example \cite[Lemma 5.1]{ESY0}), so that
\begin{equation}\nonumber
\begin{split}
\int_{|x|\leq R} \frac{|g(x)|^2}{|x|^2} dx \leq C \int_{|x|\leq N\ell} |\nabla g|^2 dx + \frac {C } { (N\ell)^3} \Big(\int_{|x|\leq R} \frac{dx}{|x|^2}\Big) \int_{|x|\leq N\ell} |g|^2 dx
\end{split}
\end{equation}
We have therefore
\[
\int_{|x|\leq N\ell} dx \bar \f \big(-\D+V/2\big) \f \geq k^2\|\f\|^2 + (c^2 k^2 - C  k^4) \int_{|x|\leq N\ell} dx |\nabla g|^2 -  \frac{C k^4}{(N\ell)^3} \int_{|x|\leq N\ell} dx |g|^2
\]
Using again $\ps \geq c k$ we get the lower bound
\begin{equation}\label{eq:LBlambda}
\begin{split}
\int_{|x|\leq N\ell} dx \bar \f \big(-\D+ V/2\big) \f & \geq k^2\|\f\|^2 -  \frac{C k^2}{ (N\ell)^3} \int_{|x|\leq N\ell} dx |g|^2 |\ps|^2  \\
& \geq  k^2\, \Big(1 -\frac{C k^2}{ (N\ell)^3}  \Big) \|\f\|^2     \\
& \geq \frac{3\frak{a}_0}{  (N\ell)^3} \left(1+ \frac 9 5 \frac{\frak{a}_0}{ N\ell} - C \frac{\frak{a}_0}{(N\ell)^2} \right)  \|\f\|^2
\end{split}
\end{equation}
The last estimate, together with inequality \eqref{eq:UBlambda}, proves i). 

We prove now part ii) and Eq. (\ref{eq:intw}) in part iii). The bounds $0\leq f_\ell, w_\ell \leq 1$ have been proved in  \cite[Lemma A.1]{ESY0}. We show \eqref{eq:Vfa0} and \eqref{eq:intw}. We set $r=|x|$ and $m_\ell(r)=rf_\ell(r)$, where $f_\ell(r)$ is now the solution of the Neumann problem \eqref{eq:scatl}. We rewrite \eqref{eq:scatl} as
\begin{equation}\label{eq:mlambda}
-m_\ell''(r)+\frac{1}{2}V(r)m_\ell(r)=\l_\ell m_\ell(r)
\end{equation}
For $r\in(R,N\ell]$, we find
\begin{equation}
 m_\ell(r)=\l_\ell^{-1/2}\sin(\l_\ell^{1/2}(r-N\ell))+N\ell\cos(\l_\ell^{1/2}(r-N\ell))
\end{equation}
Expanding up to order $\l_\ell^2$, we obtain
\begin{equation}\label{eq:mExpanded}
 m_\ell(r)=r-\mathfrak{a}_0+\frac{3}{2}\frac{\frak{a}_0}{N\ell}r-\frac{1}{2}\frac{\frak{a}_0}{(N\ell)^3}r^3+\cO(\frak{a}_0^2(N\ell)^{-1})
\end{equation}
Using the scattering equation we can write
\begin{equation}
\begin{split}
\int V(x) f_\ell (x) dx&=4\pi\int_0^{N\ell} dr\, rV(r)m_\ell(r)\\
&=8\pi\int_0^{N\ell}dr\, (rm_\ell''(r)+\l_\ell rm_\ell(r))\\
\end{split}
\end{equation}
The first contribution on the right-hand-side vanishes due to the boundary conditions. We evaluate the second contribution using expansion \eqref{eq:mExpanded}, to get
\begin{equation}
\begin{split}
8\pi\l_\ell\int_0^{N\ell}dr\,  rm_\ell(r)&= 8\pi\l_\ell\left(\frac{(N\ell)^3}{3}-\frac{\frak{a}_0}{10}(N\ell)^2+\cO(\frak{a}^2N\ell)\right)\\
\end{split}
\end{equation}
With \eqref{eq:lambdaell} we obtain
\begin{equation*}
\begin{split}
8\pi\l_\ell&\int_0^{N\ell}dr\,  rm_\ell(r)\\
&= 8\pi\frac{3\frak{a}_0 }{(\ell N)^3} \left(1 + \frac{9}{5}\frac{\frak{a}_0}{N\ell}+\mathcal{O} \big(\frak{a}_0^2  / (\ell N)^2\big) \right) \left(\frac{(N\ell)^3}{3}-\frac{\frak{a}_0}{10}(N\ell)^2+\cO(a^2N\ell)\right)\\
&= 8\pi \left(\frak{a}_0 + \frac{3}{2}\frac{\frak{a}_0}{N\ell}+\mathcal{O} \big(\frak{a}_0^3  / (\ell N)^2\big) \right) 
\end{split}
\end{equation*}
which proves \eqref{eq:Vfa0}.   Starting from the expansion \eqref{eq:mExpanded} and recalling that $w_\ell=1-f_\ell$, an easy calculation leads to \eqref{eq:intw}. Eq. (\ref{3.0.scbounds1}) in part iii) and part iv) have been shown in \cite[Lemma 4.1]{BBCS3}.
\end{proof}


\end{document}